\documentclass[acmsmall, screen,nonacm]{acmart}

\AtBeginDocument{%
  }

\setcopyright{none}
\renewcommand\footnotetextcopyrightpermission[1]{} 

\usepackage{amsmath}
\usepackage{algorithm}
\usepackage{algpseudocode}
\usepackage{mathtools}
\usepackage{xspace}
\usepackage{hyperref}
\usepackage[capitalize]{cleveref}
\usepackage{braket}
\usepackage{caption}
\usepackage{booktabs}
\usepackage{multirow}
\usepackage{pifont}
\usepackage{pgfplots}

\usepackage{mathpartir}
\usepackage{wrapfig}
\usepackage{changepage}
\usepackage{listings}
\usepackage{tablefootnote}
\usepackage{tikz}
\usetikzlibrary{quantikz2, arrows.meta, positioning}
\usepackage{enumitem}
\usepackage{subcaption}
\usepackage{cancel}
\usepackage{adjustbox}

\newcommand{\fctrl}[1]{%
  \ctrl[style={draw=gray!40, text=gray!40, solid, opacity=0.40}]{#1}%
}
\newcommand{\bluectrl}[1]{%
  \ctrl[style={draw=blue, text=blue, solid}]{#1}%
}

\newcommand{\ftarg}{%
  \targ[
    style={draw=gray!40, text=gray!40, solid, opacity=0.40}
  ]{}%
}
\newcommand{\bluetarg}{%
  \targ[
    style={draw=blue, text=blue, solid}
  ]{}%
}

\newcommand{\fgate}[1]{%
  \gate[
    style={draw=gray!40, text=gray!40, solid, opacity=0.40}
  ]{\textcolor{gray!40}{#1}}%
}

\newcommand{\bAwDep}{\textbf{Aw-Dep}\xspace}

\newcommand{\ctlarrow}{\bullet\!\!\!\to
}
\newcommand{\func}{\mathit{func}}

\newcommand{\Uncomp}{\mathit{Uncomp}}

\newcommand{\RwUn}{\textsf{RwUn}\xspace}

\newcommand{\Rwsynth}{\textsf{NSyn}\xspace}
\newcommand{\NWsynth}{\textsf{NWsynth}\xspace}
\newcommand{\Tpsynth}{\textsf{TpSynth}\xspace}

\newcommand{\Unqompx}{\textsf{Unqomp}\xspace}
\newcommand{\Reqompx}{\textsf{Reqomp}\xspace}
\newcommand{\RwUnx}{\textsf{RwUn}\xspace}
\newcommand{\TpUnx}{\textsf{TpUn}\xspace}

\newcommand{\templateborrow}{\text{template-based borrow}\xspace}
\newcommand{\rewriteborrow}{\text{rewrite-based borrow}\xspace}

\newcommand{\cH}{\mathcal{H}}

\newcommand{\cC}{\mathcal{C}}

\newcommand{\cG}{\mathcal{G}}

\newcommand{\tX}{\mathtt{X}}

\newcommand{\tZ}{\mathtt{Z}}
\newcommand{\tH}{\mathtt{H}}
\newcommand{\tS}{\mathtt{S}}
\newcommand{\tT}{\mathtt{T}}

\newcommand{\tCNOT}{\mathtt{CNOT}}
\newcommand{\tCZ}{\mathtt{CZ}}
\newcommand{\tCS}{\mathtt{CS}}
\newcommand{\tCT}{\mathtt{CT}}
\newcommand{\tToffoli}{\mathtt{CCNOT}}
\newcommand{\tMCX}[1]{\mathtt{C}^{#1}\mathtt{NOT}}
\newcommand{\tMCU}[2]{\mathtt{C}^{#1}(#2)}

\renewcommand{\ket}[1]{|{#1}\rangle}
\renewcommand{\bra}[1]{\langle{#1}|}
\renewcommand{\braket}[2]{\<{#1}|{#2}\>}
\newcommand{\ketbra}[2]{\left|#1\right\rangle\!\!\left\langle #2\right|}

\newcommand{\red}[1]{{\color{purple}#1}}
\newcommand{\blue}[1]{{\color{blue}#1}}
\newcommand{\green}[1]{{\color{teal}#1}}

\newcommand{\olp}{\overline{p}}
\newcommand{\olq}{\overline{q}}
\newcommand{\ola}{\overline{a}}
\newcommand{\old}{\overline{d}}
\newcommand{\olt}{\overline{t}}

\renewcommand{\>}{\rangle}
\newcommand{\<}{\langle}
\newcommand{\true}{\textbf{true}}
\newcommand{\false}{\textbf{false}}

\newcommand{\sem}[1]{\left \llbracket #1 \right \rrbracket}

\newcommand{\pare}[1]{\left ( #1 \right )}

\newcommand{\bluep}[1]{{\blue{[}{#1}\blue{]}}}

\newcommand{\QCa}{\mathbf{QC^a}}
\newcommand{\bskip}{\mathbf{skip}}
\newcommand{\bqif}{\mathbf{qif}}
\newcommand{\bthen}{\mathbf{then}}
\newcommand{\belse}{\mathbf{else}}
\newcommand{\bborrow}{\mathbf{borrow}}
\newcommand{\bstore}{\mathbf{store}}
\newcommand{\buse}{\mathbf{use}}
\newcommand{\qif}[3]{\bqif\ {#1}\ \bthen\ {#2}\ \belse\ {#3}}
\newcommand{\qifT}[2]{\bqif\ {#1}\ \bthen\ {#2}}
\newcommand{\borrowc}[3]{\bborrow\ {#1} := \ket{0}\ \bstore\ {#2}\ \buse\ {#3}}
\newcommand{\borrowd}[3]{\bborrow\ {#1}\ \bstore\ {#2}\ \buse\ {#3}}
\newcommand{\borrowcd}[3]{\bborrow\ {#1}\ \bluep{:=\ket{0}}\ \bstore\ {#2}\ \buse\ {#3}}
\newcommand{\borrowcc}[2]{\bborrow\ {#1} := \ket{0}\ \buse\ {#2}}
\newcommand{\borrowdd}[2]{\bborrow\ {#1}\ \buse\ {#2}}
\newcommand{\borrowccdd}[2]{\bborrow\ {#1}\ \bluep{:=\ket{0}}\ \buse\ {#2}}

\newcommand{\qfree}{{\textsf{\textbf{qfree}}}\xspace}
\newcommand{\const}{{\textsf{\textbf{const}}}\xspace}
\newcommand{\qbf}{{\textsf{\textbf{QBF}}}\xspace}
\newcommand{\bcqbf}{{\textsf{\textbf{CQBF}}}\xspace}

\newcommand{\spform}[2]{\text{{\textsf{\textbf{const}}}}[{#1}]({#2})}
\newcommand{\cqbf}[3]{\bcqbf[{#1}]\{{#2}\}({#3})}

\newtheorem{theorem}{Theorem}[section]
\newtheorem{corollary}{Corollary}[section]
\newtheorem{lemma}{Lemma}[section]
\newtheorem{remark}{Remark}[section]
\newtheorem{definition}{Definition}[section]
\newtheorem{proposition}{Proposition}[section]

\lstdefinelanguage{QCa}{
  morekeywords={skip,borrow,store,use,qif,then,else}, 
  sensitive=true,
  morecomment=[l]{//}, 
  morecomment=[s]{/*}{*/}, 
  morestring=[b]" 
}

\begin{document}

\title{Quantum Uncomputation of Clean and Dirty Ancilla Qubits}

\author{Chenke Liu}
    \orcid{0009-0009-5703-6614}
	\affiliation{
    \institution{Institute of Software, Chinese Academy of Sciences and University of Chinese Academy of Sciences
		}
		\city{Beijing}
		\country{China}
	}
	\email{liuck@ios.ac.cn}

\author{Li Zhou}
\authornote{Corresponding author: Li Zhou.}
  \orcid{0000-0002-9868-8477}
\affiliation{
    \department{Key Laboratory of System Software (Chinese Academy of Sciences) and State Key Laboratory of Computer Science}
    \institution{Institute of Software, Chinese Academy of Sciences}
    \city{Beijing}
    \country{China}
}
\email{zhouli@iscas.ac.cn}
\email{zhou31416@gmail.com}

\author{Boning Meng}
    \orcid{0009-0006-0088-1639}
	\affiliation{
    \institution{Institute of Software, Chinese Academy of Sciences and University of Chinese Academy of Sciences
		}
		\city{Beijing}
		\country{China}
	}
	\email{mengbn@ios.ac.cn}

\begin{abstract}

Automatic uncomputation aims to provide programming-language-level support to facilitate the correct and safe use of ancilla qubits in quantum computing, but efforts have only been made for clean ancillas, leaving dirty ancillas unexplored.  
We present a unified formalization of the uncomputation of both clean and dirty ancillas.  
For the first time, we prove that checking the existence of uncomputation is coNP-hard. 
We introduce two complementary synthesis-oriented existence checking methods: a syntax-directed static reasoning system and a rewrite-based normalization procedure (\RwUnx), together forming a top-down pipeline. 
We implement \RwUnx in Qiskit and Python. Compared to the state-of-the-art \Reqompx~\cite{reqomp}, \RwUnx achieves 100\% coverage on practical complex-dependency benchmarks, twice the coverage on random classical circuits, and about 50\% coverage on random quantum circuits beyond the scope of existing methods, demonstrating broader applicability.

\end{abstract}

\keywords{}

\maketitle

\section{Introduction}\label{sec:introduction}

\textit{Ancillas} are auxiliary qubits introduced as temporary storage and are fundamental to quantum programming languages.
Mirroring the role of temporaries in classical programming, ancillas improve code clarity, modularity, and reusability by allowing programmers to store intermediate results and thereby decompose complex quantum operations into smaller, more readable components.  
In addition, ancillas facilitate quantum resource optimization at the compiler level: they can substantially reduce circuit size and depth~\cite{cuccaro2004new, draper2000addition, takahashi2010, Shende, sun, shoralgorithm, conditionalcleanqubits, riseofconditional, zindorf2025efficientimplementationmulticontrolledquantum, Low2024tradingtgatesdirty, quantumimplementationmininalT}.

However, unlike classical temporaries, incautious disposal of ancillas may destroy information and compromise correctness~\cite[\S4.4]{Nielsen_Chuang_2010}, owing to the fundamental principles of quantum mechanics.
\textit{Uncomputation} is the standard technique for the \emph{safe} use of ancillas~\cite{architectureuncomputation}.  
It applies additional operations to restore ancillas to their initial states, thereby enabling safe disposal without sacrificing quantum information.
While essential, uncomputation is tedious and error-prone when performed manually~\cite{gidney2015}: it can require as many quantum gates as the original computation, and naive, intuition-based attempts often fail.
This motivates the need for \emph{automatic} uncomputation, with two fundamental problems to be resolved:
\begin{enumerate}
    \item \emph{Existence}: deciding whether a given program admits a valid uncomputation, since the no-deleting theorem implies that uncomputation does not always exist;
    \item \emph{Synthesis}: constructing an (efficient) explicit uncomputation when one exists.
\end{enumerate}

\textit{Synthesis-oriented} existence checking additionally targets the gap between these two aspects---\emph{acceptable} programs (uncomputation exists) and \emph{executable} ones (with uncomputation synthesized).
As we will show, checking the existence of uncomputation is coNP-hard (coNP\footnote{coNP-hardness and coNP-completeness are defined analogously to NP-hardness and NP-completeness.} is the class of decision problems whose complements are in NP), and synthesizing an uncomputation can be computationally even more challenging. Accepting all programs that admit uncomputation would impose an unrealistic burden on the compiler.
Addressing this gap as a front-end acceptance criterion cleanly separates concerns and is more practical, while leaving the back-end to focus on code generation and optimization.

Recent work has extensively explored these problems for \emph{clean ancillas}---ancillas initialized in $|0\>$.
Silq~\cite{silq} is a language-level approach that guarantees the existence of uncomputation via type checking, and Qurts~\cite{qurts} utilizes lifetime mechanism inspired by Rust's type system.  
\Unqompx~\cite{unqomp} provides the first automatic synthesis of uncomputation for quantum circuits, while \Reqompx~\cite{reqomp} improves on \Unqompx by extending applicability and supporting recycling and reusing ancillas.  
\cite{modularuncomp} proposes an intermediate representation and modular synthesis algorithm for Silq programs.
These works can be viewed as \emph{synthesis-enabling}: Silq~\cite{silq} and Qurts~\cite{qurts} accept a subclass of circuits whose synthesis is relatively straightforward~\cite{unqomp,reqomp,modularuncomp}.
However, these approaches restrict ancilla operations to classical ones and largely rely on dependency tracking, which may limit their applicability when dependencies are complex and difficult to analyze.

Another key missing piece in existing work is support for \emph{dirty ancillas}---ancilla qubits that may start in an arbitrary, unknown state and can therefore be borrowed from any temporarily idle qubit.
This borrowing flexibility makes dirty ancillas widely used in circuit design and optimization: it can reduce the total qubit count and can also mitigate depth overhead of ancilla recycling~\cite{dirtymanagement}.
However, dirty ancillas are more challenging to use correctly than clean ones---safe uncomputation must restore the ancilla to the \emph{same} unknown input state, not merely to \(|0\rangle\)---so in practice their use is often confined to a small set of expert-crafted template idioms (e.g., toggling-based encodings).
This motivates the study of \emph{automatic} uncomputation for dirty ancillas: automation can make correct use routine, lower the expertise barrier, and enable dirty-ancilla optimizations to be adopted more widely and deliver their full benefits.

\paragraph{This paper} The overall objective of our work is to propose a synthesis-oriented existence-checking framework
supporting automatic uncomputation for both clean and dirty ancillas in quantum circuits. 
To this end, we first present a unified formalization for clean and dirty ancillas.

\paragraph{Definition of uncomputation}
We extend the definition of uncomputation for clean ancillas~\cite{unqomp} to dirty ancillas, as informally described below. Given a quantum circuit, it \textit{computes}\footnote{
Any pure state $\ket{\psi}$ on multiple qubits can be uniquely decomposed w.r.t.\ a qubit $a$ (in the computational basis) as
\(\ket{\psi}=\ket{0}_a\ket{\psi_0}+\ket{1}_a\ket{\psi_1}\), 
where $\ket{\psi_0},\ket{\psi_1}$ (possibly unnormalized) are the states of the remaining qubits in the $\ket{0}_a$ and $\ket{1}_a$ quantum branches.
Here, we consider an arbitrary unitary circuit, so it maps arbitrary input $|x\>_a|\varphi\>$ to another pure state with decomposition $|0\>_a|\varphi'_{0x}\> + |1\>_a|\varphi'_{1x}\>$; in particular, we write $|0\>_a|\varphi'_{00}\> + |1\>_a|\varphi'_{10}\>$ for the decomposition of the output when the input is $|0\>_a|\varphi\>$ (i.e., $a$ is a clean ancilla).
}
arbitrary $|\varphi\>$ to $|0\>_a|\varphi'_{0x}\> + |1\>_a|\varphi'_{1x}\>$ given the ancilla initialized in $\ket{x}_a$ (with $|x\>=|0\>$ corresponding to the clean ancilla case). \textit{Uncompute} refers to disentangling the ancilla from the working qubits, i.e., restoring it to the original state, while preserving the states of the remaining working qubits as much as possible:
\[
{\ket{x}_a\color{purple}\underline{{\color{black}\ket{\varphi}}}} 
\xmapsto{\textit{compute}}
\ket{0}_a\ket{\varphi'_{0x}} + \ket{1}_a\ket{\varphi'_{1x}} 
\xmapsto{\textit{uncompute}}
{\ket{x}_a\color{purple}\underline{{\color{black}\left( \ket{\varphi'_{00}} + \ket{\varphi'_{10}} \right)}}}.
\]
\begin{tikzpicture}[overlay,remember picture]
\color{purple}
\draw[semithick] (0.74+2.335-0.075, 0.5-0.2) to (0.74+2.335+0.075, 0.5-0.2);
\draw[semithick] (0.74+2.335, 0.5-0.2) to (0.74+2.335, 0.5-0.35);
\draw[semithick] (0.523+10.015+0.007-0.07,0.5-0.18-0.07) to (0.523+10.015+0.007, 0.5-0.18);
\draw[semithick] (0.523+10.015-0.007+0.07,0.5-0.18-0.07) to (0.523+10.015-0.007, 0.5-0.18);
\draw[semithick] (0.523+10.015,0.5-0.35) to (0.523+10.015, 0.5-0.2);
\draw[semithick] (0.74+2.335-0.01, 0.5-0.35) to (0.523+10.015+0.01, 0.5-0.35);
\node at (0.48+6.855-0.5,0.5-0.58) {{\footnotesize\textit{functionality}}};
\end{tikzpicture}
~\\[0.5em]
We briefly explain the intuition behind this extension:
\begin{itemize}
    \item \emph{Safety}: As emphasized in prior work (\cite[Thm.~5.3]{su2025borrowingdirtyqubitsquantum}), a dirty ancilla is safely used if and only if it is restored to its original state $|x\>$ after uncomputation;
    \item \emph{Functionality}: As a consequence of safety (i.e., no information is leaked to the working qubits), the functionality of the circuit, identified as the map \(\ket{\varphi}\mapsto \ket{\varphi'_{0x}} + \ket{\varphi'_{1x}} \) (highlighted in purple), is independent of $|x\>$, i.e., \(\ket{\varphi'_{0x}} + \ket{\varphi'_{1x}}\) should be replaced by \(\ket{\varphi'_{0y}} + \ket{\varphi'_{1y}}\) for some constant $|y\>$;
    \item \emph{Consistency}—equivalence to the clean case: It is therefore natural to choose $|0\>$ as the constant $|y\>$, so that programmers can follow the same functional abstraction as in the clean ancilla setting.
\end{itemize}
Building on this extension, we formalize the unified existence problem:
\begin{changemargin}{1.5cm}{1.5cm} 
    \emph{Given a quantum circuit, decide whether uncomputation exists, or equivalently, \\ decide whether the functionality $|\varphi\>\mapsto\ket{\varphi'_{00}} + \ket{\varphi'_{10}}$ is a unitary operator}\footnote{If the functionality is a unitary $V$ and the \textit{compute} step is $U$, then the \textit{uncompute} step can be implemented simply by applying $U^\dagger$ followed by $V$, i.e., $VU^\dagger$.}.
\end{changemargin}

\paragraph{Hardness of the existence problem}
We show that the existence problem for \textit{reversible Boolean circuits}, a subclass of quantum circuits, is equivalent to another important problem: deciding whether a reversible Boolean function remains reversible when fixing some of its inputs and discarding the corresponding outputs. We prove the latter is coNP-complete via a reduction from SAT~\cite{cooktheorem}, which further implies that deciding the existence of uncomputation is coNP-hard.
As a byproduct, we prove that the existence problem is at least as hard as deciding the safety of uncomputation.

For reversible Boolean circuits, we also prove that any uncomputable circuit has a semantically equivalent \emph{normal form}, where operations on working qubits are performed before any modifications to ancillas, yielding a \emph{prefix--suffix} structure and enabling uncomputation by directly inverting the ancilla modifications.

\paragraph{Language and a lightweight static reasoning system}

We introduce $\QCa$, a simple quantum circuit language featuring an explicit \emph{borrow-ancilla} construct. Its denotational semantics is a unitary when uncomputation exists for all borrowed ancillas, and is invalid 
otherwise. Inspired by the type-based disciplines for clean ancillas in Silq~\cite{silq} and Qurts~\cite{qurts}, we develop a syntax-directed static reasoning system that checks validity of a program in $\QCa$ with both ancillas, and also guarantees a regular synthesis if checked.
This is achieved by introducing additional intrinsic properties that support reasoning about dirty-ancilla scopes. In particular, the system captures the toggle-detection pattern--encoding explicit bit information via the occurrence of an ancilla flip--and therefore covers many practical usage patterns. This lightweight method trades applicability for efficiency and serves as a first-pass filter. When it fails, the normal form suggests another more promising path.

\paragraph{Rewrite-based normalization}
Guided by the observation of normal-form structure, we develop a practical and terminating rewrite-based normalization procedure, partly inspired by SPARE~\cite{spare}.
We introduce rewrite rules that, whenever a pair of gates violates the normal-form order, push ancilla-modifying operations rightward to restore the normal-form order locally.
Iterating these local rewrites eventually concentrates all ancilla-modifying gates at the circuit tail, yielding the desired normal-form structure. We further extend the normal form for general quantum circuits in a way that is sufficient--but not necessary--to guarantee existence and enable regular synthesis. This extension comes with additional rewrite rules for $\tH, \tZ, \tS,$ and $\tT$ gates; in particular, non-classical phase gates (single-qubit $Z$-rotations) on ancillas can be handled directly.

For reversible Boolean circuits, the applicability of our method is limited by the presence of \emph{aw-cycles} (i.e., directed cycles between ancilla and working qubits, indicating mutual dependency). For general quantum circuits, our method is further limited by the presence of $\tH$ gates: they are disallowed on ancillas and on working qubits that ancillas depend on.

\paragraph{Implementation and evaluation}
We implemented our rewrite-based normalization as a Qiskit~\cite{qiskit} plugin and compared it with \Reqompx~\cite{reqomp}---the state-of-the-art uncomputation synthesis strategy for circuits.  
We first examine 17 practical cases with complex dependencies: our method succeeds on all of them (100\%), whereas \Reqompx succeeds on 10/17; among the 10 cases where both succeed, our method scales better on 6/10.  
We also test on random circuits: for random reversible Boolean circuits generated by the universal set $\{\tX, \tCNOT, \tToffoli\}$, we solve roughly twice as many instances as \Reqompx; for random quantum circuits generated by the universal gate set $\{\tX, \tCNOT, \tToffoli, \tZ, \tH, \tS, \tT \}$ but forbidding $\tH$ gates on ancillas---beyond the scope of \Reqompx and other existing methods---our approach still succeeds in about 50\% of cases.

\textbf{Organization of the paper and summary of contributions:}
Section~\ref{sec:preliminary} introduces the basic concepts of quantum computing and the role of ancillas. The subsequent sections present our main contributions:

\begin{itemize}
    \item Section~\ref{sec:hardness} formalizes the unified framework for both clean and dirty ancillas and proves the hardness of deciding the existence of uncomputation.
    \item Section~\ref{sec:language} presents our extended quantum circuit language $\QCa$ with a borrow statement, and Section~\ref{sec:template} develops a syntax-directed static reasoning system for $\QCa$.
    \item Section~\ref{sec:rwun} defines the normal form and develops the rewrite-based normalization.
    \item Section~\ref{sec:evaluation} describes our implementation and experimental evaluation.
\end{itemize}
Finally, Section~\ref{relatedwork} discusses related work.

\section{Ancillas in Quantum Computing: Background and Motivation}
\label{sec:preliminary}

We briefly review the background of quantum computing relevant to this work, focusing on the use of ancillas at the circuit level.
The distinction between clean and dirty ancillas and their impact on resource optimization and uncomputation provide the motivation for the techniques developed in subsequent sections.

\subsection{Quantum Computing}

We provide a brief overview of the essential concepts in quantum computing; see \cite{Nielsen_Chuang_2010} for details.

\paragraph{Quantum states.}
A qubit serves as the quantum analogue of a classical bit. A single-qubit state is represented by a unit vector in the two-dimensional Hilbert space $\mathbb{C}^2$, typically expressed in ket notation. The computational basis of $\mathbb{C}^2$ is given by
$
\ket{0} = \bigl(\!\begin{smallmatrix}1\\0\end{smallmatrix}\!\bigr)
$
and
$
\ket{1} = \bigl(\!\begin{smallmatrix}0\\1\end{smallmatrix}\!\bigr)
$.
An arbitrary qubit state can be written as a linear combination of the computational basis:
$
\ket{\psi} = \alpha\ket{0} + \beta\ket{1} = \bigl(\!\begin{smallmatrix} \alpha \\ \beta \end{smallmatrix}\!\bigr),
$
with complex amplitudes $\alpha, \beta \in \mathbb{C}$ satisfying the normalization condition $|\alpha|^2 + |\beta|^2 = 1$.

An $n$-qubit state lives in the tensor-product space $\mathbb{C}^{2^{n}}$, where $n$ is the number of qubits. For example, the Hilbert space of a two-qubit system is $\mathbb{C}^4$, with computational basis
\[
\ket{00} = \ket{0} \otimes \ket{0}, \quad 
\ket{01} = \ket{0} \otimes \ket{1}, \quad 
\ket{10} = \ket{1} \otimes \ket{0}, \quad 
\ket{11} = \ket{1} \otimes \ket{1},
\]
where $\otimes$ denotes the Kronecker product.
It is common to use subscripts to indicate which qubit(s) a state corresponds to; for example, $\ket{\phi}_q\ket{\psi}_{\overline{a}}$ denotes that qubit $q$ is in state $\ket{\phi}$ and qubits $\overline{a}$ (we use an overline to denote a list of distinct qubits) are in state $\ket{\psi}$.

Not all states in the tensor-product space can be expressed as simple tensor products of individual qubit states. Such non-separable states are referred to as \emph{entangled states}. A canonical example is the Bell state
$
\ket{\Phi^+} = \frac{1}{\sqrt{2}}(\ket{00} + \ket{11}).
$

\paragraph{Unitary transformation.}
The evolution of a closed quantum system is described by a \emph{unitary transformation}---a linear operator $U$ satisfying $U^\dagger U = UU^\dagger = I$, where $U^\dagger$ denotes the conjugate transpose.
A linear operator is unitary if and only if it preserves the norm of quantum states.
When a unitary transformation acts on one or more qubits, it is commonly referred to as a \emph{gate}. The following are some common single-qubit and two-qubit gates:
\[
\tX = \begin{pmatrix} 0 & 1 \\ 1 & 0 \end{pmatrix}, \quad 
\tH = \frac{1}{\sqrt{2}}\begin{pmatrix} 1 & 1 \\ 1 & -1 \end{pmatrix}, \quad
\tT = \begin{pmatrix} 1 & 0 \\ 0 & e^{i\pi/4} \end{pmatrix}, \quad
\tCNOT = \left(\begin{smallmatrix} 1 & 0 & 0 & 0 \\ 0 & 1 & 0 & 0 \\ 0 & 0 & 0 & 1 \\ 0 & 0 & 1 & 0 \end{smallmatrix}\right).
\]

\emph{Quantum circuits} formalize the compositional structure of quantum gate applications, providing an intuitive graphical representation of unitary transformations.  
For example, the following circuit models a computation on a two-qubit system initialized in $\ket{00}$: two Hadamard ($\tH$) gates are first applied to the qubits, followed by a two-qubit $\tCNOT$ gate.
\begin{center}
\small
\begin{quantikz}[row sep = 0.2cm]
    \lstick{$\ket{0}$} & \gate{\tH} & \ctrl{1} & \qw \\
    \lstick{$\ket{0}$} & \gate{\tH} & \targ{}  & \qw
\end{quantikz}
\end{center}

\paragraph{Multi-controlled gates.}
An important class of multi-qubit operators is \emph{controlled gates}. 
A multi-controlled gate $\tMCU{k}{U}$, 
where the unitary transformation $U$ on the \emph{target} qubit(s) is conditioned on the states of $k$ \emph{control} qubits, is defined as
\[
\tMCU{k}{U}(\ket{c_1,\dots,c_k}\ket{t}) = 
\begin{cases}
    \ket{c_1,\dots,c_k}(U\ket{t}) & \text{if } c_1 \wedge \dots \wedge c_k = 1,\\
    \ket{c_1,\dots,c_k}\ket{t} & \text{otherwise.}
\end{cases}
\]
When $U$ is fixed as the NOT gate $\tX$, i.e., $\tMCU{k}{\tX}$, the gate is commonly referred to as a \emph{multi-controlled NOT (MCX)} and denoted by $\tMCX{k}$.
It acts on $k$ control qubits $c_1,\dots,c_k$ and a single target qubit $t$:
\[
\tMCX{k} \ket{c_1,\dots,c_k, t} = 
\ket{c_1,\dots,c_k,\, t \oplus (c_1 \wedge \cdots \wedge c_k)},
\]
where $c_1,\dots,c_k,t\in \{0, 1\}$ and $\oplus$ denotes XOR.
In other words, the target qubit flips if and only if all controls are $\ket{1}$. 
The simplest example is the controlled-NOT gate $\tMCX{1}$, usually denoted by $\tCNOT$, whose matrix form is given above, or equivalently, $\tCNOT\ket{c,t} = \ket{c, c\oplus t}$.
The \emph{Toffoli gate} $\tMCX{2}$, commonly denoted by $\tToffoli$, with mapping
\(
\tToffoli \ket{c_1,c_2,t} = \ket{c_1, c_2, t \oplus (c_1 \wedge c_2)},
\)
forms a \emph{universal} gate for reversible Boolean circuits.

\subsection{Safe Use of Ancillas}\label{subsec:safeusecase}

A clean ancilla is an ancilla qubit with a known initial state (by convention, $\ket{0}$), whereas a dirty ancilla is an ancilla qubit with an arbitrary, unknown initial state.  
As discussed earlier, ancillas require additional checks to ensure their correct and safe use compared to classical auxiliary variables.  
Formally, given a circuit that involves ancillas $\overline{a}$ and represents a unitary operator $U$, we say that $U$ \emph{safely uses clean ancillas} $\overline{a}$ if 
\begin{equation}
\label{eq:safe-clean}
  \forall\,\ket{\phi},\ U(\ket{0}_{\overline{a}}\ket{\phi}) = \ket{0}_{\overline{a}}\ket{\psi} \text{ for some } \ket{\psi},
\end{equation}
and that $U$ \emph{safely uses dirty ancillas} $\overline{a}$ if 
\begin{equation}
\label{eq:safe-dirty}
\begin{aligned}
  &\forall\,\ket{x},\ket{\phi}, \ 
  U(\ket{x}_{\overline{a}}\ket{\phi}) = \ket{x}_{\overline{a}}\ket{\psi}
  \text{ for some } \ket{\psi}, \\
  &\text{or equivalently, } 
  U = I_{\ola} \otimes V \text{ for some unitary } V.
\end{aligned}
\end{equation}
For clean ancillas initialized in $\ket{0}_{\ola}$, since only unentangled qubits can be safely discarded, mapping $\ket{0}_{\ola}$ back to $\ket{0}_{\ola}$ ensures safe disposal~\cite[\S4.4]{Nielsen_Chuang_2010}.
For dirty ancillas initialized in an arbitrary state $|x\>_{\ola}$, there are several equivalent formulations of safe use~\cite{su2025borrowingdirtyqubitsquantum}.
Let us briefly explain the idea behind these definitions with a case of implementation of $\tMCX{3}$ gate on $(q_1, q_2, q_3, t)$ using only $\tToffoli$ gates.

\paragraph{Design `compute'---transformation except ancillas}
Since $\tMCX{3}$ maps $\ket{q_1,q_2,q_3}\ket{t}$ (marked in blue) to $\ket{q_1,q_2,q_3}\ket{t \oplus (q_1 \land q_2) \land q_3}$ (marked in red), it is not difficult to implement it by first using an $\tToffoli[q_1,q_2,a]$ to store $q_1\wedge q_2$ in the clean ancilla $a$ (marked in green), and then using another $\tToffoli[a,q_3,t]$ to transform $t$ to $t \oplus (q_1 \land q_2) \land q_3$.
\[
\small
\begin{array}{c@{\qquad}l}
\vcenter{\hbox{
\begin{quantikz}[row sep = 0.2cm]
    \lstick{$q_1$}          & \ctrl{2}   & \qw    & \ghost{}\qw     \\
    \lstick{$q_2$}          & \ctrl{1}   & \qw      & \qw     \\
    \lstick{$a := \ket{0}$} & \targ{}    & \ctrl{2} & \qw     \\
    \lstick{$q_3$}          & \qw        & \ctrl{1} & \qw     \\
    \lstick{$t$}            & \qw        & \targ{}  & \qw
\end{quantikz}
}}
&
\begin{aligned}
&\blue{\ket{q_1,q_2}}\ket{0}_a\blue{\ket{q_3}\ket{t}} \\
\xmapsto{\tToffoli[q_1, q_2, a] }\;&
  \ket{q_1,q_2}\green{\ket{0 \oplus q_1 \land q_2}_a}\ket{q_3}\ket{t} \\
\xmapsto{\tToffoli[a, q_3, t]\ \,}\;&
  \red{\ket{q_1,q_2}}\ket{q_1 \land q_2}_a\red{\ket{q_3}\ket{t \oplus (q_1 \land q_2) \land q_3}}.
\end{aligned}
\end{array}
\]

However, such a step is not safe or correct if the input is in a superposition such as $\frac{1}{\sqrt{2}}(|000\>|0\> + |111\>|1\>)$, which after the compute is in state
\(
\frac{1}{\sqrt{2}}(|000\>|0\>_a|0\> + |111\>|1\>_a|0\>),
\)
and due to implicit measurement~\cite[\S4.4]{Nielsen_Chuang_2010}, discarding $a$ yields state $|000\>|0\>$ with probability $0.5$ and $|111\>|0\>$ with probability $0.5$, a probabilistic combination of $|000\>|0\>$ and $|111\>|0\>$, rather than the superposition state $\tMCX{3}\left(\frac{1}{\sqrt{2}}(|000\>|0\> + |111\>|1\>)\right) = \frac{1}{\sqrt{2}}(|000\>|0\> + |111\>|0\>)$.

\paragraph{Uncompute for clean ancillas}
To make the result correct for superposition, an additional uncomputation $\tToffoli[q_1,q_2,a]$ is required, which restores the ancilla $a$ to the original state $|0\>$ (marked in green):
\[
\small
\begin{array}{c@{\qquad}l}
\vcenter{\hbox{
\begin{quantikz}[row sep = 0.3cm]
    \lstick{$q_1$}          & \ctrl{2}\gategroup[wires=5,steps=2,style={dashed,rounded corners,draw=blue, fill=none, inner xsep=2pt},label style={label position=above,anchor=north,xshift=-0.2cm,yshift=0.4cm}]{{\rm \blue{compute}}}   & \qw      & \ctrl{2}\gategroup[wires=5,steps=1,style={dashed,rounded corners,draw=red, fill=none, inner xsep=2pt},label style={label position=above,anchor=north,xshift=+0.2cm,yshift=0.4cm}]{{\rm \red{uncompute}}} & \ghost{}\qw     \\
    \lstick{$q_2$}          & \ctrl{1}   & \qw      & \ctrl{1} & \qw     \\
    \lstick{$a := \ket{0}$} & \targ{}    & \ctrl{2} & \targ{}  & \qw     \\
    \lstick{$q_3$}          & \qw        & \ctrl{1} & \qw      & \qw     \\
    \lstick{$t$}            & \qw        & \targ{}  & \qw      & \qw
\end{quantikz}
}}
&
\begin{aligned}
&\blue{\ket{q_1,q_2}}\ket{0}_a\blue{\ket{q_3}\ket{t}} \\
\xmapsto{\blue{\tToffoli[q_1, q_2, a]} }\;&
  \ket{q_1,q_2}\ket{0 \oplus q_1 \land q_2}_a\ket{q_3}\ket{t} \\
\xmapsto{\blue{\tToffoli[a, q_3, t]}\ \,}\;&
  \ket{q_1,q_2}\ket{q_1 \land q_2}_a\ket{q_3}\ket{t \oplus (q_1 \land q_2) \land q_3} \\
\xmapsto{\red{\tToffoli[q_1, q_2, a]} }\;&
  \red{\ket{q_1,q_2}}\green{\ket{0}_a}\red{\ket{q_3}\ket{t \oplus (q_1 \land q_2) \land q_3}}.
\end{aligned}
\end{array}
\]
Combining the "compute" and "uncompute" parts finally yields a safe use of clean ancilla $a$.

However, the above circuit is still not satisfied for dirty ancillas since (1) it yields a wrong state when the ancilla is not initialized in $|0\>$, and (2) ancilla is entangled with other qubits if it is initialized in a superposition state. For example, if the input is $|111\>|1\>$ with dirty ancilla $a$ in $|1\>$, the final state is 
$|111\>|1\>_a|1\>$, or $|111\>|1\>$ after discarding ancilla, which is different from
the desired state $\tMCX{3}(|111\>|1\>) = |111\>|0\>$.

\paragraph{Additional compute for dirty ancillas}
To fix this, an additional $\tToffoli[a,q_3,t]$ (marked in green) is required at the beginning of the circuit. The two compute parts together form a \textit{toggle detection} pattern, the standard technique for dirty-ancilla usage, as explained in \cref{sec:language},
which yields a correct state of working qubits and restores the ancilla to the original state without further entanglement:
\[
\small
\begin{array}{c@{\qquad}l}
\vcenter{\hbox{
\begin{quantikz}[row sep = 0.2cm]
    \lstick{$q_1$}          & \qw  & \qw\gategroup[wires=5,steps=1,style={dashed,rounded corners,draw=teal, fill=none, inner xsep=2pt},label style={label position=above,anchor=north,xshift=-0.2cm,yshift=0.4cm}]{{\rm \green{compute}}}        & \ctrl{2}\gategroup[wires=5,steps=2,style={dashed,rounded corners,draw=blue, fill=none, inner xsep=2pt},label style={label position=above,anchor=north,yshift=0.4cm}]{{\rm \blue{compute}}}   & \qw     & \ctrl{2}\gategroup[wires=5,steps=1,style={dashed,rounded corners,draw=red, fill=none, inner xsep=2pt},label style={label position=above,anchor=north,xshift=0.3cm,yshift=0.4cm}]{{\rm \red{uncompute}}} & \ghost{}\qw     \\
    \lstick{$q_2$}          & \qw  & \qw        & \ctrl{1}   & \qw      & \ctrl{1} & \qw     \\
    \lstick{$a$}            & \qw  & \ctrl{2}   & \targ{}    & \ctrl{2} & \targ{}  & \qw     \\
    \lstick{$q_3$}          & \qw  & \ctrl{1}   & \qw        & \ctrl{1} & \qw      & \qw     \\
    \lstick{$t$}            & \qw  & \targ{}    & \qw        & \targ{}  & \qw     & \qw  
\end{quantikz}
}}
&
\hspace{-0.5cm}
\begin{aligned}
&\blue{\ket{q_1,q_2}}\green{\ket{a}_a}\blue{\ket{q_3}\ket{t}}\\
\xmapsto{\green{\tToffoli[a, q_3, t]}\ \, }\;&
  \ket{q_1,q_2}\ket{a}_a\ket{q_3}\ket{t \oplus a \land q_3} \\
\xmapsto{\blue{\tToffoli[q_1, q_2, a]}}\;&
  \ket{q_1,q_2}\ket{a \oplus q_1 \land q_2}_a\ket{q_3}\ket{t \oplus a \land q_3} \\
\xmapsto{\blue{\tToffoli[a, q_3, t]}\ \, }\;&
  \ket{q_1,q_2}\ket{a \oplus q_1 \land q_2}_a\ket{q_3}\ket{t \oplus q_1 \land q_2 \land q_3} \\
\xmapsto{\red{\tToffoli[q_1, q_2, a]} }\;&
  \red{\ket{q_1,q_2}}\green{\ket{a}_a}\red{\ket{q_3}\ket{t \oplus q_1 \land q_2 \land q_3}}\\
\end{aligned}
\end{array}
\]
Therefore, combining the two "compute" and "uncompute" parts gives a safe use of dirty ancilla $a$.

\subsection{Why Dirty Ancillas in Programming Languages}

While dirty ancillas have been widely explored in circuit design and optimization, relatively little attention has been paid from a programming language perspective. To this end, the subsection is devoted to explaining why it is appealing to introduce dirty ancillas in a programming language.

\paragraph{Unique potential of dirty ancillas}
Figure~\ref{fig:dirty-utility} summarizes the costs in circuit design using clean ancillas or dirty ancillas. 
While both types of ancillas can help reduce cost, dirty ancillas offer several advantages over clean ones:
(1) they provide unique flexibility in management while keeping the circuit depth asymptotically at the same order as in the clean case;
(2) they achieve better performance in circuit size and depth when the circuit width is fixed~\cite{quantumimplementationmininalT} or preferentially constrained; and
(3) they enable reductions in both $T$-gate depth and count in unitary synthesis.

\begin{figure}[tbp]
    \centering
    \resizebox{\textwidth}{!}{
        \begin{tabular}{c|c|cccc}
            \cmidrule{1-6}\morecmidrules\cmidrule{1-6}
            \multirow{6}{*}{
            \begin{tabular}{@{}c@{}}
            \textbf{Multi-} \\ \textbf{controlled} \\ \textbf{NOT}\cite{table-n-CNOT}
            \end{tabular}}
                                          &\ \ Ref.\ \            & \# Clean Ancilla      & \# Dirty Ancilla                                                & Circuit Depth                                    &                                                  \\
            \cmidrule{2-5}
                                          & \cite{gidney2015}      & 0                  & 0                                                            & $494n-1413$                                               &                                                  \\
                                          & \cite{elementgates} & 0                  & $n-2$                                                          & $24n-43$                                           &                                                  \\
                                          & \cite{elementgates} & $n-2$                & 0                                                            & $12n-12$                                           &                                                  \\
                                          & \cite{table-n-CNOT}     & 0                  & 1                                                            & $43\log(n)^3-1287$               &                                             \\
                                          & \cite{table-n-CNOT}     & 1                  & 0                                                            & $27\log(n)^3-808$                  &                                                  \\
            \cmidrule{1-6}\morecmidrules\cmidrule{1-6}
            \multirow{6}{*}{
            \begin{tabular}{@{}c@{}}
            \textbf{Constant} \\ \textbf{Adder}\cite{factoring2n+2}
            \end{tabular}}
                                          &\ \ Ref.\ \             & \# Clean Ancilla      & \# Dirty Ancilla                                                & Circuit Depth                                    & Circuit Size                                     \\
            \cmidrule{2-6}
                                          & \cite{draper2000addition}      & 0                  & 0                                                            & $\varTheta(n)$                                                & $\varTheta(n^2)$                            \\
                                          & \cite{cuccaro2004new}     & $n+1$                & 0                                                            & $\varTheta(n)$                                             & $\varTheta(n)$                                                \\
                                          & \cite{takahashi2010}   & $n$                  & 0                                                            & $\varTheta(n)$                                                & $\varTheta(n)$                                               \\
                                          & \cite{factoring2n+2}       & 0                  & 1                                                            & $\varTheta(n)$                                                & $\varTheta(n\log n)$                                        \\
                                          & \cite{Remaud2025}          & 0                  & 1                                                            & $O(\log^3 n)$                                                & $O(n\log^2 n)$                                        \\
            \cmidrule{1-6}\morecmidrules\cmidrule{1-6}
            \multirow{4}{*}{
            \begin{tabular}{@{}c@{}}
            \textbf{Factoring}\cite{factoringn+2}
            \end{tabular}}
                                          &\ \ Ref.\ \             & \# Clean Qubits & \# Dirty Ancilla                                                & Circuit Depth                                    & Circuit Size                                     \\
            \cmidrule{2-6}
                                          & \cite{shoralgorithm}        & $\varTheta(n)$               & 0                                                            & $\varTheta(nM(n))$                                         & $\varTheta(nM(n))$                                         \\
                                          & \cite{zalka}       & $1.5n+O(1)$          & 0                                                            & $\varTheta(n^3\log \frac{1}{\epsilon})$               & $\varTheta(n^3\log \frac{1}{\epsilon}\log \frac{n}{\epsilon})$    \\
                                          & \cite{factoringn+2}      & $n+2$                & $n-1$                                                          & $\varTheta(n^3)$                          & $\varTheta(n^3\log n)$                      \\
            \cmidrule{1-6}\morecmidrules\cmidrule{1-6}           
            \multirow{4}{*}{
            \begin{tabular}{@{}c@{}}
            \textbf{Unitary} \\ \textbf{Synthesis}\cite{Low2024tradingtgatesdirty}
            \end{tabular}} 
                                          &\ \ Ref.\ \             & \# Total Qubits & \# Dirty Ancilla                                                & $T$ Gate Depth                                     & $T$ Gate Count                                     \\
            \cmidrule{2-6}                             
                                          & \cite{Shende}      & $\log n$               & 0                                                            & $n^2\log(\frac{n}{\epsilon})$ & $n^2\log(\frac{n}{\epsilon})$ \\
                                          & \cite{sun}         & $\log n +\lambda$               & $\lambda=O(\frac{n}{\log n\log\log n})\cup\Omega(n)$ & $\frac{n^2}{\log n+\lambda}\log\pare{\frac{n}{\epsilon}}+n\log n\log(\frac{n}{\epsilon})$ & $n^2\log(\frac{n}{\epsilon})$ \\
                                          & \cite{Low2024tradingtgatesdirty}            & $\log n+\lambda$                   &    $\lambda=[\log\frac{K}{\epsilon'},n\log\frac{K}{\epsilon'}]$                                                          &     $\frac{Kn}{\lambda}\log\frac{n}{\epsilon'}+K\log n\log\frac{K\lambda}{\epsilon'}$                                             &        $\lambda K\log\frac{n}{\epsilon}+\frac{Kn}{\lambda}\log\frac{n}{\epsilon'}$                                          \\
            \cmidrule{1-6}\morecmidrules\cmidrule{1-6}
            \end{tabular}}
    \caption{
    Circuit costs with and without dirty ancillas.
    The depths in \cite{table-n-CNOT} are numerically fitted for $10^2$--$10^7$ control qubits, and $M(n)$ in \cite{factoringn+2} is known to be asymptotically at most $n \cdot (\log n)\cdot 2^{O(\log^* n)}$ \cite{faster-integer}.}
    \label{fig:dirty-utility}
    \vspace{-0.4cm}
\end{figure}

\paragraph{Leveraging flexibility in compilation}
Flexibility—borrowed from any idle qubits in the system—is often regarded as the greatest advantage of dirty ancillas.
Recent work~\cite{dirtymanagement} has revealed a unique benefit for optimizing highly parallel high-level circuits (including multi-qubit gates from non-elementary gate sets): when parallelism increases, dirty-ancilla designs can maintain constant depth without requiring additional ancillas, whereas clean-ancilla designs either demand a linear number of ancillas or a linear depth.
Take \cref{fig:why_dirty} as an example of four-way parallelism.
Supporting dirty ancillas at the programming-language level enables automated management of borrowing, which alleviates the programmer's burden, ensures the correctness of qubit reuse, and facilitates efficient scheduling for large-scale circuit synthesis.

\begin{figure}[tbp]
  \centering
  \vspace{-0.5\baselineskip} 
  \includegraphics[width=0.95\textwidth]{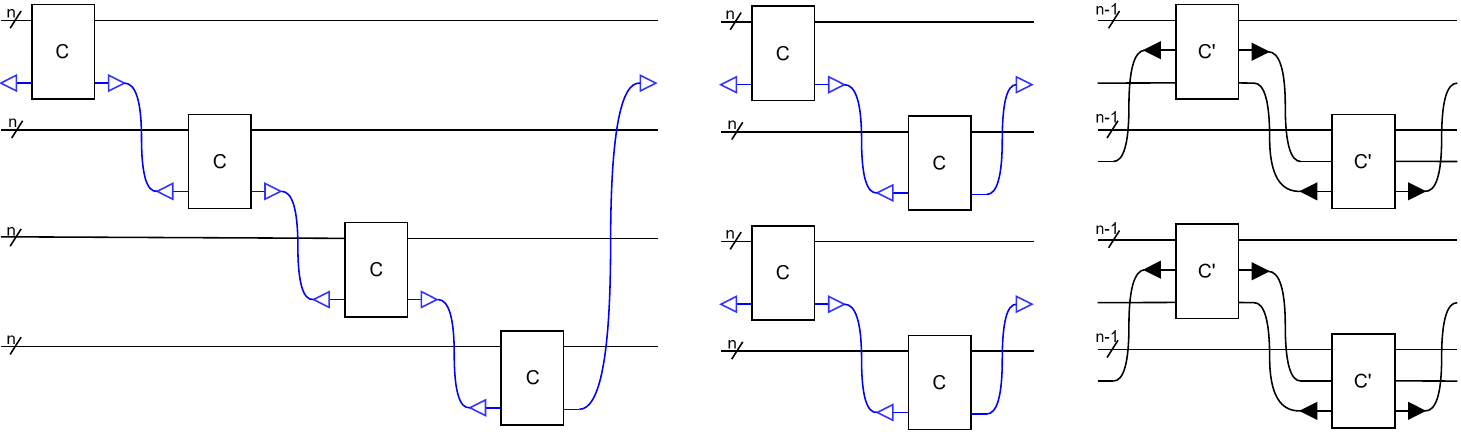}
  \caption{For applying four identical functionalities in parallel on four disjoint $n$-qubit blocks (where $C$ and $C'$ are different implementations using a clean and a dirty ancilla, respectively) (left to right): (i) linear depth with a single clean ancilla (allocation $\triangleleft$, free $\triangleright$), (ii) a linear number of clean ancillas ($2=\frac{1}{2}\times 4$) with constant doubled depth, and (iii) no additional ancilla needed with constant doubled depth (allocation $\blacktriangleleft$, free $\blacktriangleright$). A free followed by an allocation indicates reuse: clean ancillas can only reuse clean ancillas, whereas dirty ancillas may reuse working qubits. }
  \label{fig:why_dirty}
  \vspace{-0.5\baselineskip} 
\end{figure}

\paragraph{Automated checking of safe use of dirty ancillas}
Fully unleashing the advantages of dirty ancillas requires ensuring their safe use---a significant challenge and burden for designers, as previously shown.
Providing language-level support for dirty ancillas, together with synthesis-oriented automated checking, can substantially lower the barrier to their use.
As demonstrated in this work, for most cases, programmers only need to focus on designing the compute part, which makes the use of dirty ancillas as easy as working with temporaries.

\section{Uncomputation Problem and Hardness Results}\label{sec:hardness}

In this section, we will formalize the uncomputation problem, i.e., the existence of uncomputation, for clean and dirty ancillas, and then show that the uncomputation problem is coNP-hard.

\subsection{Existence of Uncomputation}\label{sec:existence-of-uncomputation}

We start by extracting the functionality of a given computation, as illustrated in~\cref{subsec:safeusecase}. 
Let $G$ be a unitary operator acting on $m$ ancilla qubits $\ola$ and $n$ working qubits $\olq$, with the mapping:
\begin{equation}
    \ket{0}_{\ola} \ket{\varphi}_{\olq}
    \xmapsto{G}
    \sum_{i=0}^{2^m-1} \ket{i}_{\ola}
    \ket{\varphi_{i}}_{\olq}.
\end{equation}
We define $\func(G,\ola)$, the functionality of $G$, as the mapping:
\[
    \func(G,\ola) : \ket{\varphi}_{\olq }\mapsto \sum_{i=0}^{2^m-1} \ket{\varphi_{i}}_{\olq}.
\]
Due to no-deleting theorem, the functionality is not always valid, i.e., the mapping $\func(G,\ola)$ might not be a unitary operator. For example, while $\frac{1}{\sqrt{2}} (\ket{0}_a \ket{0}_q + \ket{1}_a \ket{0}_q )$ is a valid quantum state, the state $\ket{0}_a (\tfrac{1}{\sqrt{2}} \ket{0}_q + \frac{1}{\sqrt{2}} \ket{0}_q)$ has norm $\sqrt{2}$ and therefore $\func(G,\ola)$ is not norm-preserving, i.e., not a unitary operator. Formally, we ask:
\begin{definition}[Validity of functionality]
    Decide whether $\func(G,\ola)$ is a unitary operator.
\end{definition}

Deciding the existence for uncomputation is closely related to deciding the validity of functionality, as we require uncomputation exactly achieves this functionality while keeping ancillas unchanged. Formally, we follow the definition first given in \Unqompx \cite{unqomp}:
\begin{definition}[Uncomputation problem \cite{unqomp}]\label{def:cleanuncomp}
    The uncomputation problem asks whether there exists a unitary operator $\cG_{\ola}$, such that restores $\ola$ while preserving the state of $\olq$:
    \begin{equation}
        \forall\,|\varphi\>,\ \ket{0}_{\ola} \ket{\varphi}_{\olq }
        \xmapsto{\cG_{\ola}}
        \ket{0}_{\ola} \left(\sum_{i=0}^{2^m-1} \ket{\varphi_i}_{\olq}\right).
    \end{equation}
    In the remainder of the paper, we simply write $\cG$ instead of $\cG_{\ola}$ whenever $\ola$ is clear from the context.
\end{definition}

Note that $\cG_{\ola}$ may not exist also due to no-deleting theorem, and is \emph{not unique} if it exists. 
We say that $G$ is \emph{uncomputable}---that is, the \emph{uncomputability} of $G$---if $\Uncomp(G, \overline{a})$, i.e., the set of all $\cG_{\ola}$, is non-empty.
It is straightforward to observe that any $\cG_{\ola} \in \Uncomp(G, \ola)$ safely uses clean ancillas $\ola$ (recall~\cref{eq:safe-clean}), and correctly implements the functionality $\func(G,\ola)$ on $\olq$. 

As explained in~\Cref{sec:introduction}, the functionality of $G$ with dirty ancillas $\ola$ should be independent of the initial state of $\ola$, so it is natural to set it as the same as $\func(G,\ola)$. Now, we extend Definition~\ref{def:cleanuncomp} to the dirty case:
\begin{definition}[Uncomputation problem (dirty ancillas)]\label{def:dirtyuncomp}
    The uncomputation problem for dirty ancillas asks whether there exists a unitary operator $\cG_{\ola}^*$, such that for any input $\ket{x}_{\ola}$ the register $\ola$ is restored while the state of $\olq$ is preserved:
    \begin{equation}
        \forall\,\ket{x},\ket{\varphi},\ \ket{x}_{\ola} \ket{\varphi}_{\olq }
        \xmapsto{\cG_{\ola}^*}
        \ket{x}_{\ola} \left(\sum_{i=0}^{2^m-1} \ket{\varphi_i}_{\olq}\right).
    \end{equation}
    We simply write $\cG^*$ instead of $\cG_{\ola}^*$ whenever $\ola$ is clear from the context.
\end{definition}
Note that $\cG_{\ola}^*$ is \emph{unique} if exists. We denote the set of $\cG_{\ola}^*$ by $\Uncomp^*(G, \ola)$ (either empty or a singleton), which is a subset of $\Uncomp(G, \ola)$.
Again, one may realize that any $\cG_{\ola}^* \in \Uncomp^*(G, \ola)$ safely uses dirty ancillas $\ola$ (recall~\cref{eq:safe-dirty}), and correctly implements the functionality $\func(G,\ola)$ on $\olq$. 

If $\func(G,\ola)$ is unitary, then $I_{\ola}\otimes \func(G,\ola)$ is a unitary witness of \cref{def:cleanuncomp};
conversely, a unitary witness $\cG$ restricts to a unitary on $\ket{0}_{\ola}\otimes\mathcal H_{\olq}$, and this restriction corresponds to $\func(G,\ola)$ by definition.
Therefore, the validity of functionality, uncomputation problem for clean and dirty ancillas, are principally equivalent, which leads to a unified formalization of existence of uncomputation:
\begin{proposition}\label{prop:eq-valid-uncomputation}
    $\func(G,\ola)$ is unitary iff
    $\Uncomp(G, \ola) \ne \emptyset$ iff
    $\Uncomp^*(G, \ola) \ne \emptyset$.
\end{proposition}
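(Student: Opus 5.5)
I will prove the cycle of implications: (1) $\func(G,\ola)$ unitary $\Rightarrow$ $\Uncomp^*(G,\ola)\neq\emptyset$; (2) $\Uncomp^*(G,\ola)\neq\emptyset \Rightarrow \Uncomp(G,\ola)\neq\emptyset$; (3) $\Uncomp(G,\ola)\neq\emptyset \Rightarrow \func(G,\ola)$ unitary. Write $F=\func(G,\ola)$. First I note that $F$ is a linear map on $\cH_{\olq}$. The components $\ket{\varphi_i}$ of $G(\ket{0}_{\ola}\ket{\varphi})$ are $(\bra{i}_{\ola}\otimes I_{\olq})G(\ket{0}_{\ola}\otimes I_{\olq})\ket{\varphi}$, so they are linear in $\ket{\varphi}$. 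Hence $F=\sum_i(\bra{i}_{\ola}\otimes I)G(\ket{0}_{\ola}\otimes I)$. In particular $F$ is unitary iff it is norm-preserving, because $\cH_{\olq}$ is finite-dimensional.

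For (1), I set $\cG^*=I_{\ola}\otimes F$. This is unitary as a tensor product of unitaries. By linearity it sends $\ket{x}_{\ola}\ket{\varphi}$ to $\ket{x}_{\ola}F\ket{\varphi}=\ket{x}_{\ola}\sum_i\ket{\varphi_i}$ for every $\ket{x}$, not only basis states. So it satisfies \cref{def:dirtyuncomp}.

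For (2), I take $\ket{x}=\ket{0}$ in \cref{def:dirtyuncomp}, which is exactly the condition in \cref{def:cleanuncomp}. Hence $\Uncomp^*\subseteq\Uncomp$.

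For (3), I take $\cG\in\Uncomp(G,\ola)$. For every $\ket{\varphi}$ we have $\cG(\ket{0}_{\ola}\ket{\varphi})=\ket{0}_{\ola}F\ket{\varphi}$. Since $\cG$ is unitary, $\|F\ket{\varphi}\|=\|\ket{0}_{\ola}F\ket{\varphi}\|=\|\ket{0}_{\ola}\ket{\varphi}\|=\|\ket{\varphi}\|$. Thus $F$ is a norm-preserving linear operator on the finite-dimensional space $\cH_{\olq}$. It is therefore an isometry, $F^\dagger F=I$, and so unitary.

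The main subtlety is step (3). There I must make explicit that $F$ is linear, so that norm preservation yields unitarity, and that an isometry from a finite-dimensional space to itself is surjective. The other steps are immediate. I would also remark on the uniqueness claim made just before the statement: $\cG^*$ is determined on all product states $\ket{x}\ket{\varphi}$, which span the whole space, so $\cG^*=I_{\ola}\otimes F$ is the only element of $\Uncomp^*(G,\ola)$.
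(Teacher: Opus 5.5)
Your proposal is correct and follows essentially the paper's argument. You use the witness $I_{\ola}\otimes\func(G,\ola)$ for one direction, restrict any $\cG\in\Uncomp(G,\ola)$ to $\ket{0}_{\ola}\otimes\cH_{\olq}$ for the converse, and close the chain with $\Uncomp^*(G,\ola)\subseteq\Uncomp(G,\ola)$, merely spelling out the linearity of $\func(G,\ola)$ and the finite-dimensional isometry-to-unitary step that the paper leaves implicit.
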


\subsection{Deciding Existence of Uncomputation is coNP-hard}

This subsection presents our main result on the inherent computational complexity of the uncomputation problem for circuits of polynomial gate size.

We first restrict our attention to \emph{reversible Boolean circuits}, a canonical and well-understood subclass of quantum circuits. 
Reversible Boolean circuits realize \emph{reversible Boolean functions} (RBFs), i.e., bijections $f:\{0,1\}^n \to \{0,1\}^n$, using reversible gates such as the universal Toffoli and Fredkin gates~\cite{reversiblecomputing, conservativelogic}.
Intuitively, checking the validity of functionality for a reversible Boolean circuit $G$ is equivalent to asking: if we fix certain input positions of the corresponding RBF to $0$ and then ignore the outputs on those same positions, does the resulting function remain reversible? 
Formally, we define the $k$-fixed RBF as follows.

\begin{definition}[$k$-fixed RBF]\label{def:KRBF}
    Let $f : \{0,1\}^n \to \{0,1\}^n$ be an RBF, written as 
    $f = (f_1, \dots, f_n)$. 
    For an integer $k \in \{1,\dots, n-1\}$, the \emph{$k$-fixed version} of $f$, denoted by \(f|_k : \{0,1\}^{n-k} \to \{0,1\}^{n-k}\), is the mapping
    \[
        f|_k(b_1, \dots, b_{n-k}) \triangleq (f_{k+1}(0^k, b), \dots, f_n(0^k, b)),
    \]
    where $0^k$ denotes a prefix of $k$ zeros and $b = (b_1,\dots,b_{n-k})$. 
\end{definition}

This leads to the following decision problem.

\begin{definition}[Reversibility of $k$-fixed RBF (R-kRBF)]\label{def:RKF}
    Given a reversible Boolean circuit over $n$ bits of polynomial gate size, constructed from $\tToffoli$ and $\tX$ gates, implementing the RBF $f : \{0, 1\}^n \to \{0, 1\}^n$, and an integer $k \in \{1,\dots, n-1\}$, decide whether the $k$-fixed RBF $f|_k$ is reversible.
\end{definition}

The validity of functionality and R-kRBF are equivalent.

\begin{proposition}\label{prop:juequalrkf}
Let $G$ be a reversible Boolean circuit constructed from $\tToffoli$ and $\tX$ gates, acting on $k$ ancilla qubits $\ola$ and $n-k$ working qubits $\olq$, and let $f$ denote the RBF implemented by $G$. Then $\Uncomp(G, \ola) \ne \emptyset$ if and only if $f|_k$ is reversible.
\end{proposition}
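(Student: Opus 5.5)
By \cref{prop:eq-valid-uncomputation}, $\Uncomp(G,\ola)\neq\emptyset$ is equivalent to $\func(G,\ola)$ being unitary. It therefore suffices to show that $\func(G,\ola)$ is unitary if and only if $f|_k$ is a bijection on $\{0,1\}^{n-k}$. I will identify the first $k$ bit positions of $f$ with the ancillas $\ola$ and the remaining $n-k$ positions with $\olq$, as in Definition~\ref{def:KRBF}.

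The first step is to compute $\func(G,\ola)$ on computational basis states. Since $G$ is built from $\tToffoli$ and $\tX$, it acts as a permutation of basis states: $G\ket{0^k}_{\ola}\ket{b}_{\olq}=\ket{c}_{\ola}\ket{f|_k(b)}_{\olq}$, where $c=(f_1(0^k,b),\dots,f_k(0^k,b))$. In the decomposition $\sum_i\ket{i}_{\ola}\ket{\varphi_i}$, exactly one term is nonzero, namely $\varphi_c=\ket{f|_k(b)}$. Hence $\func(G,\ola)\ket{b}=\ket{f|_k(b)}$. The map $\ket{\varphi}\mapsto\sum_i\ket{\varphi_i}$ is linear, being $G$ followed by the linear map $\sum_i\bra{i}_{\ola}$ applied to the ancilla register. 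So $\func(G,\ola)$ is exactly the linear operator $P_{f|_k}=\sum_b \ket{f|_k(b)}\bra{b}$ on $\mathcal H_{\olq}$.

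It then remains to show that $P_g$ is unitary if and only if $g$ is a bijection. If $g$ is a bijection, $P_g$ is a permutation matrix and hence unitary. Conversely, suppose $g(b)=g(b')$ for some $b\neq b'$. Then the state $(\ket{b}-\ket{b'})/\sqrt2$ is sent to $0$, so $P_g$ fails to preserve the norm and is not unitary. Alternatively, $\|P_g(\ket b+\ket{b'})/\sqrt2\|=\sqrt2$, which mirrors the paper's no-deleting example. Because $g$ is a map from a finite set to itself, injectivity is equivalent to bijectivity, which completes the argument.

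The only delicate point is the linearity and well-definedness of $\func(G,\ola)$. The decomposition $\varphi_i=(\bra{i}_{\ola}\otimes I)G(\ket{0}_{\ola}\ket{\varphi})$ makes this explicit, so the reduction to basis states is justified. Everything else is routine.
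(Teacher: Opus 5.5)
Your proof is correct and follows the route the paper intends. The paper gives no separate written proof; it frames the result as ``validity of functionality is equivalent to R-kRBF'' combined with \cref{prop:eq-valid-uncomputation}, and relies on the same observation you make, that $\func(G,\ola)$ acts on basis states as $\ket{b}\mapsto\ket{f|_k(b)}$ (cf.\ the proof of \cref{prop:qfreenormal}). Your explicit linearity argument and the witness $(\ket{b}-\ket{b'})/\sqrt2\mapsto 0$ fill in details the paper leaves implicit.
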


The following technical lemma gives a polynomial construction from a conjunctive normal form (CNF) formula to a reversible Boolean circuit, 
serving as a key step to bridge the well-known NP-complete problem \emph{Boolean satisfiability} (SAT)~\cite{cooktheorem} to R-kRBF.

\begin{lemma}\label{lem:jordancorollary}
    Given any polynomial-size CNF formula $f$ with $n$ input bits, a circuit of polynomial gate size implementing ``controlled-SWAP conditioned on $f$,'' acting on $n+5$ qubits, can be constructed using $\tToffoli$ and $\tX$ gates in polynomial time.
\end{lemma}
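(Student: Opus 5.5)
The construction is a Toffoli circuit in which a constant number of dirty (borrowed) qubits let the $n$ input bits serve as their own "borrowed" workspace, so no garbage survives. The ingredients are the classical Barenco et al.~\cite{elementgates} multi-controlled-NOT constructions with borrowed qubits, together with the fact that $\{\tToffoli,\tX\}$ computes $\mathrm{AND}$ and, via De Morgan, $\mathrm{OR}$ of literals. Write $f=\bigwedge_{j=1}^{m}C_j$, where each $C_j$ is a disjunction of literals over $x_1,\dots,x_n$ and $m$ is polynomial. The $n+5$ qubits are the inputs $\overline{x}$, two swap targets $s_1,s_2$, and three auxiliary qubits $a,b,c$. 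Since no clean ancillas beyond these three are available, every intermediate result will be stored by toggling and later undone.

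First I would build $\tMCX{k}$ with few extra qubits. Barenco et al.\ show that $\tMCX{k}$ can be implemented with $O(k)$ Toffolis using $k-2$ dirty qubits. They also give a split into two halves: with one extra dirty qubit, $\tMCX{k}$ reduces to four multi-controlled-NOTs with about $k/2$ controls each, and each of these can borrow the idle controls of the other half as its dirty workspace. This yields $\tMCX{k}$ with $O(k)$ Toffolis and a single borrowed qubit. A clause $C_j$ is the negation of an AND of negated literals. So I can toggle a target by $C_j$ by conjugating the literal qubits with $\tX$, applying $\tMCX{|C_j|}$, and then flipping the target unconditionally with $\tX$. This gives a polynomial-size gate "target $\oplus\!= C_j(x)$" that restores all inputs and uses at most one borrowed qubit.

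The next step is the conjunction over clauses, and this is where the main obstacle lies. Storing all $m$ clause values would need $m$ ancillas, but only $O(1)$ are available. I would try to show that a "conditional flip by $\bigwedge_j C_j$" can be built from clause-toggles using a constant number of dirty qubits and polynomially many gates. My first approach would be a recursive toggle-detection identity. If $g$ and $h$ are two Boolean functions with toggle circuits $T_g$ and $T_h$, then the sequence $T_g(a)\,\tToffoli[a,b,t]\,T_h(b)\,\tToffoli[a,b,t]\,T_g(a)\,\tToffoli[a,b,t]\,T_h(b)\,\tToffoli[a,b,t]$ flips $t$ by exactly $g\wedge h$ for arbitrary initial $a,b$. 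Applied naively, this recursion makes the gate count grow exponentially with depth, so balanced recursion over $m$ clauses gives $4^{\log m}=m^2$ repetitions of each clause circuit. That bound is still polynomial, and each level reuses the same qubits $a,b$ as dirty registers. I would check that the nested calls do not conflict: the inner targets must be distinct from the registers currently holding outer partial products. If this fails, I would fall back on the pebbling-style fact that AND of $m$ bits is computable in width $O(1)$, by Barrington-type constant-width branching programs, within polynomial size. Either route produces a circuit $F$ that performs $c\mathrel{\oplus}= f(x)$ using $a,b$ as dirty workspace and $O(m^2\cdot\mathrm{poly}(n))$ Toffolis.

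Finally, the controlled-SWAP conditioned on $f$ is assembled from Fredkin gates, each written as three Toffolis ($\tCNOT$ is a Toffoli with a control set by $\tX$, or is absorbed). Using $c$ dirty, the sequence is $F$, $\mathrm{CSWAP}_c(s_1,s_2)$, $F$, $\mathrm{CSWAP}_c(s_1,s_2)$. This applies the swap $c_0\oplus(c_0\oplus f)=f$ times net, so the overall effect is exactly a swap of $s_1,s_2$ conditioned on $f$, with $c$ returned to its initial value. The total is $n+2+3=n+5$ qubits. The polynomial running time follows because each step is an explicit syntactic expansion of polynomial size.
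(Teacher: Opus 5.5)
Your construction is correct, and it follows a genuinely different route from the paper. The paper's chain is: CNF $\to$ logarithmic-depth formula $\to$ Barrington's width-5 branching program $\to$ Jordan's Fredkin simulation. It removes Jordan's extra sixth bit by conjugating Fredkin controls with $\tX$, then expands each Fredkin into three Toffolis.

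You instead build the circuit directly from three pieces:
\begin{enumerate}
\item a one-borrowed-bit $\tMCX{k}$ for each clause;
\item the four-Toffoli identity $t\leftarrow t\oplus(g\wedge h)$ for the conjunction, which is in effect the Ben-Or--Cleve three-register technique over $\mathrm{GF}(2)$;
\item toggle detection for the final swap.
\end{enumerate}

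The register-allocation worry you leave open dissolves, so no fallback is needed. A toggle circuit $T_\varphi(r)$ restores every non-target bit for \emph{all} inputs, so the roles simply rotate among your three registers. $T_{g\wedge h}(c)$ borrows $a,b$; the inner $T_g(a)$ borrows $b,c$; and $T_h(b)$ borrows $a,c$. Borrowing a register that currently holds an outer partial product is harmless. Note that it is not ``the same $a,b$ at every level.''

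\textbf{What each route buys.} The paper's route relies on two heavy theorems but handles any log-depth formula. Yours is elementary and self-contained, gives an explicit $O(m^2 n)$ Toffoli count, and delivers exactly a controlled SWAP of $s_1,s_2$, with the other three bits untouched for every value.

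The Barrington route cannot literally do that last part. The sign of a permutation branching program's output is a product of single-variable factors. So a program that outputs a transposition exactly when $f$ holds forces $f$ to be affine over $\mathrm{GF}(2)$. As stated, that route really yields a controlled 5-cycle (an even permutation) of the five designated bits. This still suffices for the reduction in Theorem~\ref{thm:RKFcoNPC}, but not for the literal equations for $g$ written there.

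\textbf{One small repair.} Drop the remark that $\tCNOT$ is a Toffoli with a control ``set by $\tX$'': no bit is known to be in $\ket{1}$. You never need $\tCNOT$ for the swap, since
$\mathrm{CSWAP}_c(s_1,s_2)=\tToffoli[c,s_1,s_2];\tToffoli[c,s_2,s_1];\tToffoli[c,s_1,s_2]$.
For unit clauses, use $\tCNOT[p,t]=\tToffoli[p,d,t];\tX[d];\tToffoli[p,d,t];\tX[d]$, borrowing any other bit $d$.
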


This lemma follows as a corollary of Jordan’s proof in~\cite{strongequivalence}, which in turn relies on Barrington’s Theorem~\cite{boundedwidthpolysizebranching} to establish a version acting on $n+6$ bits, with one additional control bit initialized to $\ket{0}$.
We extend this result by introducing the $\tX$ gate to implement the required negatively controlled Fredkin gate, thereby eliminating the need for the extra control bit and making the construction applicable to our setting.

\begin{theorem}\label{thm:RKFcoNPC}
    R-kRBF is coNP-complete.
\end{theorem}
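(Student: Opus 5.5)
Two things have to be shown: R-kRBF lies in coNP, and it is coNP-hard. For hardness we reduce from UNSAT, the complement of SAT, so that a satisfying assignment of the formula becomes a witness that $f|_k$ is not reversible.

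\emph{Membership in coNP.} The complement asks whether $f|_k$ is non-reversible. Since $f|_k$ maps $\{0,1\}^{n-k}$ to itself, it is non-bijective exactly when it is non-injective. So the certificate is a pair $b\neq b'$ with $f|_k(b)=f|_k(b')$. Checking it means simulating the polynomial-size circuit on inputs $(0^k,b)$ and $(0^k,b')$ and comparing the last $n-k$ output bits. This takes polynomial time, so the complement is in NP and R-kRBF is in coNP.

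\emph{coNP-hardness.} Let $\phi$ be a CNF formula on $n$ variables. By \cref{lem:jordancorollary} we build, in polynomial time, a $\tToffoli/\tX$ circuit $C$ on $n+5$ bits realizing ``controlled-SWAP conditioned on $\phi$''. It acts as the identity on the $n$ input bits and swaps two designated target bits $s,t$ iff $\phi(x)=1$. The other bits are workspace that $C$ returns unchanged, as in Jordan's construction.

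Place $s$ in the fixed (ancilla) block, so $k=1$ with $s$ first, and treat $x$, $t$ and the remaining bits as working bits. Fix $s=0$ and discard the $s$ output. The resulting map is
\[
f|_1(x,t,w)=(x,\ t\cdot[\neg\phi(x)],\ w),
\]
since $t$ receives the value $0$ of $s$ when $\phi(x)=1$ and keeps its own value otherwise.
\begin{itemize}
\item If $\phi$ is unsatisfiable, $f|_1$ is the identity, hence reversible.
\item If $\phi(x)=1$ for some $x$, then $(x,0,w)$ and $(x,1,w)$ have the same image, so $f|_1$ is not reversible.
\end{itemize}
Thus $\phi\in$ UNSAT iff $f|_1$ is reversible. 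Since $k=1\le n+4$ the instance is legal, and it is polynomial-size. Together with membership, this gives coNP-completeness.

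\emph{Main obstacle.} The delicate point is what \cref{lem:jordancorollary} actually guarantees. The proof needs the auxiliary bits used by the Barrington-style construction to be restored for \emph{every} input value, not only for a fixed initialization. Otherwise the discarded or working positions could carry garbage that breaks the identity-versus-collision dichotomy. I would check that the $n+5$-bit construction acts as exactly controlled-SWAP on the full space. If it only behaves correctly with some auxiliary bits at $0$, I would move those bits into the fixed block as well. This increases $k$, but they then stay $0$ and are discarded consistently, and the same argument goes through.
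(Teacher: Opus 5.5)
Your proposal is correct and follows the paper's proof essentially step for step. Membership in coNP uses the same collision certificate. Hardness uses the same reduction: you fix one of the two swapped bits of the circuit from \cref{lem:jordancorollary} to $0$ (so $k=1$), which gives the identity when the formula is unsatisfiable and otherwise a collision between the two values of the other swapped bit. The auxiliary-bit concern you raise is exactly what the paper's version of the lemma resolves: it removes Jordan's extra initialized control bit by realizing negatively controlled Fredkin gates with $\tX$ gates on the input control lines, so the $n+5$-bit circuit acts as the controlled swap on the whole space and your fallback is not needed.
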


\begin{proof}
We prove this in two parts: membership in coNP and coNP-hardness.

\paragraph{Step 1: $\textrm{R-kRBF} \in \textrm{coNP}$.}
Given a reversible Boolean circuit of polynomial gate size implementing 
$f : \{0,1\}^n \to \{0,1\}^n$, 
and an integer $k \in \{1,\dots, n-1\}$, 
a \emph{non-reversibility certificate} consists of two distinct inputs 
$\overline{b} \neq \overline{c}$ such that 
$f|_k(\overline{b}) = f|_k(\overline{c})$, 
which is verifiable in polynomial time. 
Hence, $\textrm{R-kRBF} \in \textrm{coNP}$.

\paragraph{Step 2: $\textrm{coNP}$-hardness via reduction from SAT}
Let $f$ be a CNF formula over $n$ variables. 
By Lemma~\ref{lem:jordancorollary}, we can construct, in polynomial time, 
a reversible Boolean circuit $G$ acting on $n+5$ bits (using $\tToffoli$ and $\tX$ gates) 
that is equivalent to\footnote{Qubits $c,d,e$ remain unchanged but are employed in $G$.}
\[
    \begin{quantikz}[row sep=0.4cm]
    \lstick{$a$}    & \qw           & \swap{1}            & \qw & \qw \\
    \lstick{$b$}             & \qw           & \targX{}            & \qw & \qw \\
    \lstick{$\olq$}          & \qwbundle{n}  & \gate{f}\wire[u]{q} & \qw & \qw \\
    \lstick{$c$}             & \qw           & \qw                 & \qw & \qw \\
    \lstick{$d$}             & \qw           & \qw                 & \qw & \qw \\
    \lstick{$e$}             & \qw           & \qw                 & \qw & \qw 
    \end{quantikz}
\]

Let $g$ denote the RBF implemented by $G$. Then $g$ satisfies:
\begin{align*}
&g(0, 0, \overline{q}, c, d, e) = (0, 0, \overline{q}, c, d, e) && \text{if } f(\overline{q})=1,\\
&g(0, 1, \overline{q}, c, d, e) = (1, 0, \overline{q}, c, d, e) && \text{if } f(\overline{q})=1,\\
&g(a, b, \overline{q}, c, d, e) = (a, b, \overline{q}, c, d, e) && \text{if } f(\overline{q})=0.
\end{align*}

For the $1$-fixed version $g|_a$:
\begin{itemize}
  \item If $f$ is satisfiable, there exist inputs $\overline{q}$ such that 
  $g|_a(0, \overline{q}, c, d, e) = g|_a(1, \overline{q}, c, d, e) = (0, \overline{q}, c, d, e)$, 
  implying that $g|_a$ is not reversible.

  \item If $f$ is unsatisfiable, then $g$ is the identity function, and therefore $g|_a$ is also the identity function, which is reversible.
\end{itemize}

Hence, $f$ is satisfiable if and only if $g|_a$ is not reversible, 
which establishes a polynomial-time reduction from SAT to R-kRBF. 
Therefore, R-kRBF is coNP-hard.
\end{proof}

Combining~\cref{prop:eq-valid-uncomputation} and~\cref{prop:juequalrkf}, we obtain the following corollary.

\begin{corollary}
    For reversible Boolean circuits of polynomial gate size composed of $\tToffoli$ and $\tX$ gates, the uncomputation problem is coNP-complete.
\end{corollary}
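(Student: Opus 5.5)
The corollary follows by chaining the equivalences already established with \cref{thm:RKFcoNPC}. The plan is to show that, restricted to reversible Boolean circuits built from $\tToffoli$ and $\tX$ gates, the uncomputation problem and R-kRBF are the same problem under a trivial (identity-like) polynomial-time translation. Both membership in coNP and coNP-hardness then transfer directly.

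\emph{Translation.} Given an instance $(G,\ola)$ of the uncomputation problem, where $G$ acts on $k$ ancillas $\ola$ and $n-k$ working qubits $\olq$, reorder the wires so that $\ola$ occupies the first $k$ positions. Relabelling qubits changes neither gate count nor gate types. This yields an R-kRBF instance $(f,k)$, with $f$ the RBF implemented by $G$. By \cref{prop:juequalrkf}, $\Uncomp(G,\ola)\neq\emptyset$ iff $f|_k$ is reversible. By \cref{prop:eq-valid-uncomputation}, this also matches the dirty-ancilla formulation $\Uncomp^*(G,\ola)\neq\emptyset$ and validity of $\func(G,\ola)$. The statement therefore covers the unified existence problem.

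\emph{Membership.} Run the map above and use the certificate from Step~1 of \cref{thm:RKFcoNPC}. A ``no'' instance is witnessed by two distinct working inputs $\overline{b}\neq\overline{c}$ with $f|_k(\overline{b})=f|_k(\overline{c})$. We check this by simulating $G$ classically on $\ket{0}_{\ola}\ket{\overline{b}}$ and on $\ket{0}_{\ola}\ket{\overline{c}}$, which takes polynomial time in the gate count, and then comparing the $\olq$ outputs. Hence the problem lies in coNP.

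\emph{Hardness.} Conversely, any R-kRBF instance $(f,k)$ given by a circuit is already an uncomputation instance, with the first $k$ bits declared as ancillas. In particular, the reduction from SAT in Step~2 of \cref{thm:RKFcoNPC} produces a circuit $G$ with a single ancilla $a$. That $G$ is uncomputable iff the CNF formula is unsatisfiable, so coNP-hardness follows.

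The only point needing care is that \cref{prop:juequalrkf} is stated with ancillas in the leading $k$ positions, matching \cref{def:KRBF}. The wire permutation must therefore be applied consistently to inputs and outputs, which it is, since it is a relabelling and not an added gate. Beyond this, no real obstacle remains. All substantive work sits in \cref{thm:RKFcoNPC} and \cref{lem:jordancorollary}.
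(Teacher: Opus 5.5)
Your proposal is correct and follows the paper's route: the paper obtains the corollary by combining \cref{prop:eq-valid-uncomputation} and \cref{prop:juequalrkf} with \cref{thm:RKFcoNPC}, so coNP membership and coNP-hardness carry over once the uncomputation problem is identified with R-kRBF by relabelling wires. Your explicit handling of the ancilla ordering, of the coNP certificate, and of the single-ancilla SAT reduction spells out steps the paper leaves implicit.
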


We now turn to general quantum circuits, for which deciding uncomputation is more challenging due to the presence of superposition. 

\begin{theorem}
For any quantum gate set in which both $\tToffoli$ and $\tX$ gates can be constructed in polynomial size, 
deciding whether a given quantum circuit of polynomial gate size with ancillas is uncomputable is coNP-hard.
\end{theorem}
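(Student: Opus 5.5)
The plan is to reduce the reversible Boolean case, already shown coNP-complete, to the general quantum case. Reversible Boolean circuits built from $\tToffoli$ and $\tX$ form a subclass of quantum circuits. Given an instance of the uncomputation problem for such a circuit $G$ with ancillas $\ola$, we map it to an instance over the gate set in question.

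The map replaces every $\tToffoli$ and $\tX$ gate of $G$ by its polynomial-size implementation in that gate set. Since $G$ has polynomially many gates, the resulting circuit $G'$ has polynomial size and is built in polynomial time, acting on the same qubits $\ola$ and $\olq$.

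\textbf{Handling auxiliary qubits.} If the gate-set implementations of $\tToffoli$ and $\tX$ themselves need extra auxiliary qubits, we first restrict to implementations that realize the gates exactly on the same wires. This is how I read ``can be constructed.'' If a construction instead needs auxiliary qubits, I would add them to the ancilla register $\ola$ as clean ancillas, on which the construction acts correctly. Then $\func(G',\ola\cup\overline{c}) = \func(G,\ola)$. I expect this bookkeeping to be the only delicate point: the construction must return the auxiliary qubits to $\ket{0}$ when started there, so that their amplitude contributes nothing extra to the functionality sum.

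\textbf{Correctness of the reduction.} Since $G'$ and $G$ implement the same unitary on the relevant subspace, they yield identical maps $\ket{\varphi}\mapsto\sum_i\ket{\varphi_i}$. Hence $\func(G',\cdot)$ is unitary iff $\func(G,\ola)$ is unitary. By \cref{prop:eq-valid-uncomputation}, this is the same as saying $G'$ is uncomputable iff $G$ is uncomputable, for both clean and dirty ancillas. By \cref{prop:juequalrkf}, the uncomputability of $G$ is in turn equivalent to the reversibility of $f|_k$. So the construction is a polynomial-time reduction from R-kRBF to the quantum uncomputation problem.

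By \cref{thm:RKFcoNPC}, R-kRBF is coNP-hard. Composing with the reduction from SAT used in its proof, deciding uncomputability of quantum circuits over any such gate set is coNP-hard.

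We claim only hardness, not membership. For general quantum circuits, a certificate of non-unitarity need not be a short classical witness, because of amplitudes and superposition.
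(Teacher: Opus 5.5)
Your proposal is correct and follows essentially the paper's route. The paper gives no separate proof: it treats the theorem as an immediate consequence of the coNP-hardness for $\{\tToffoli,\tX\}$ circuits (\cref{thm:RKFcoNPC} together with \cref{prop:juequalrkf,prop:eq-valid-uncomputation}), obtained by compiling those gates into the given gate set. Your reading of ``constructed'' as exact implementation, and your handling of construction-internal auxiliary qubits as extra clean ancillas so that $\func$ is unchanged, make explicit details the paper leaves implicit.
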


In addition, the uncomputation problem is harder than that of deciding the safe use of clean ancillas (recall~\cref{eq:safe-clean}):

\begin{proposition}\label{prop:hardercleansafety}
For general quantum circuits of polynomial gate size, deciding the existence of an uncomputation is at least as hard as deciding the safe use of clean ancillas.    
\end{proposition}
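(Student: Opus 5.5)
We will give a polynomial-time many-one reduction from the safe-use problem for clean ancillas (\cref{eq:safe-clean}) to the uncomputation problem (\cref{def:cleanuncomp}). This is the sense of ``at least as hard''. Let $U$ be a circuit acting on clean ancillas $\ola$ (with $|\ola|=m$) and working qubits $\olq$. We construct a circuit $G$ that is uncomputable iff $U$ safely uses $\ola$.

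\emph{Construction.} Introduce a fresh register $\olq'$ of $m$ working qubits, and let
\[
G = \Bigl(\textstyle\prod_{j=1}^m \tCNOT[a_j, q'_j]\Bigr)\, U .
\]
That is, $G$ runs $U$ and then copies each ancilla, in the computational basis, into $\olq'$. The new ancilla register is $\ola$ and the new working register is $\olq\olq'$. The size of $G$ is $|U|+m$, so the reduction is polynomial.

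\emph{Correctness.} By \cref{prop:eq-valid-uncomputation}, it suffices to decide whether $\func(G,\ola)$ is unitary.

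Suppose first that $U$ is safe. Then $U(\ket{0}_{\ola}\ket{\phi}) = \ket{0}_{\ola}\ket{\psi}$. Since $U$ is unitary, the map $V:\ket{\phi}\mapsto\ket{\psi}$ is linear and isometric on a finite-dimensional space, hence unitary. The CNOTs act trivially on this output, so $\func(G,\ola) = V\otimes I_{\olq'}$, which is unitary.

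Conversely, suppose $\func(G,\ola)$ is unitary. Write $U(\ket{0}_{\ola}\ket{\phi}) = \sum_i \ket{i}_{\ola}\ket{\psi_i}$. Then $G(\ket{0}_{\ola}\ket{\phi}\ket{y}_{\olq'}) = \sum_i \ket{i}_{\ola}\ket{\psi_i}\ket{y\oplus i}_{\olq'}$, so
\[
\func(G,\ola)(\ket{\phi}\ket{y}) = \sum_i \ket{\psi_i}\ket{y\oplus i}.
\]
The states $\ket{y\oplus i}$ are orthogonal for distinct $i$. Hence the squared norm of the output equals $\sum_i\|\psi_i\|^2 = 1$, so norm preservation holds automatically and gives no information. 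We must instead use the full unitarity, in particular linearity together with preservation of inner products.

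Consider two inputs $\ket{\phi}\ket{y}$ and $\ket{\phi}\ket{y'}$ with $y\ne y'$. Their images have inner product
\[
\sum_{i,i'}\langle\psi_i|\psi_{i'}\rangle\,[\,y\oplus i = y'\oplus i'\,] = \sum_i \langle\psi_i|\psi_{i\oplus y\oplus y'}\rangle,
\]
and unitarity forces this to be $0$. This constraint alone does not force $\psi_i=0$ for $i\ne 0$. For example, with $m=1$ the condition reads $2\,\mathrm{Re}\langle\psi_0|\psi_1\rangle = 0$.

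The cleaner route is to use the operator directly. Write $\func(G,\ola) = \sum_i A_i\otimes X^{i}$, where $A_i:\ket{\phi}\mapsto\ket{\psi_i}$ and $X^i$ denotes the Pauli-$X$ string on $\olq'$. By the Pauli-orthogonality argument, unitarity requires $\sum_i A_i^\dagger A_{i\oplus s}=0$ for every $s\neq 0$. Combined with $\sum_i A_i^\dagger A_i = I$, these conditions still admit solutions where several $A_i$ are nonzero, so the plain copy is insufficient.

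I would therefore strengthen the gadget so that unitarity forces $A_i = 0$ for all $i\neq 0$. The plan is to copy with a phase kick: after the CNOTs, apply $\tH$ to each $q'_j$. The output then carries factors $(-1)^{i\cdot z}$, and the functionality becomes $\sum_i A_i\otimes \tH^{\otimes m}X^{i}$. I would then pick inputs showing that linearity across $y$ fails unless only $A_0$ survives. Alternatively, I would use two copies (copy $\ola$ into $\olq'$ and $\olq''$) and exploit the fact that the resulting functionality on $\ket{\phi}\otimes\sum_y\ket{y}\ket{y}$-type inputs behaves like discarding $\ola$. Its unitarity then forces purity, i.e.\ $A_i=0$ for $i\neq0$.

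\emph{Main obstacle.} Choosing a gadget for which unitarity of $\func(G,\ola)$ provably forces $U$ to return the ancillas to $\ket{0}$ is the crux. My first attempt would be the double-copy (GHZ-style) gadget, reducing to the fact that a channel $\rho\mapsto\sum_i A_i\rho A_i^\dagger$ that is unitary must be of Kraus rank one. Even then, concluding that the unique surviving operator is $A_0$ rather than some other $A_i$ needs an extra step, such as a controlled check on $\olq'$.
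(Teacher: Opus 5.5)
\textbf{There is a genuine gap: your proposal never arrives at a working reduction, and neither repair you suggest can work.}

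\emph{Why the repairs fail.} All your gadgets, after $U$, only copy ancilla values into fresh \emph{working} qubits and possibly apply a unitary $W$ to working qubits. Since $\func((I_{\ola}\otimes W)G,\ola)=W\,\func(G,\ola)$, the $\tH$-on-copies variant has exactly the same unitarity status as the plain copy. The double copy gives $\sum_i A_i\otimes X^i\otimes X^i$, which leads to the same conditions $\sum_i A_i^\dagger A_{i\oplus s}=0$ for $s\neq 0$ that you already found insufficient.

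\emph{Explicit counterexample.} Take one ancilla $a$ and $U=\tH[a];\tS[a]$, acting trivially on $\olq$. Then $U\ket{0}_a\ket{\phi}=\tfrac{1}{\sqrt2}(\ket{0}+i\ket{1})_a\ket{\phi}$, so $U$ is unsafe, with $A_0=\tfrac{1}{\sqrt2}I$ and $A_1=\tfrac{i}{\sqrt2}I$. All three of your gadgets give $\func(G,\ola)=I_{\olq}\otimes W\tfrac{1}{\sqrt2}(I+iP)$ with $P\in\{\tX,\ \tX\otimes\tX\}$. This is unitary because $P^2=I$.

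\emph{Why the Kraus-rank intuition fails.} $\func$ is a coherent sum over ancilla branches, not the channel obtained by tracing out $\ola$. Relative phases between branches can therefore make it unitary even though the ancilla is never restored. Copying branch labels into working qubits only \emph{helps} unitarity, since it makes norm preservation automatic. That is the opposite of what the reduction needs.

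\emph{The missing idea.} The unsafe branches must \emph{destroy} working-qubit information by moving it into a fresh clean ancilla, rather than record which branch occurred. The paper does this as follows.
\begin{enumerate}
\item Add a working qubit $p$ and a clean ancilla $\mathit{anc}$; the new ancilla register is $(\mathit{anc},\ola)$.
\item After $U$, apply a SWAP on $(p,\mathit{anc})$ that is open-controlled on $\ola=\ket{\overline{0}}$, followed by an unconditional SWAP on $(p,\mathit{anc})$.
\item In the $\ola=\overline{0}$ branch the two SWAPs cancel. In every other branch, the value of $p$ ends up in $\mathit{anc}$ and $p$ is left in $\ket{0}$.
\end{enumerate}
With $B=\sum_{i\neq 0}A_i$, the new functionality is $\ket{p}\ket{\varphi}\mapsto\ket{p}A_0\ket{\varphi}+\ket{0}B\ket{\varphi}$.

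\emph{Why this forces safety.} Suppose the new functionality is unitary.
\begin{enumerate}
\item The $p=0$ slice $\varphi\mapsto(A_0+B)\varphi$ is an isometry, hence surjective.
\item The images of $\ket{0}\ket{\varphi'}$ and $\ket{1}\ket{\varphi}$ must be orthogonal, which gives $\langle (A_0+B)\varphi' | B\varphi\rangle=0$ for all $\varphi'$. By surjectivity, $B=0$.
\item Then $\|A_0\varphi\|=\|\varphi\|$. Since $\sum_i\|A_i\varphi\|^2=\|\varphi\|^2$, this forces $A_i=0$ for all $i\neq 0$, i.e.\ $U$ is safe.
\end{enumerate}
The direction ``safe implies uncomputable'' is as easy as in your first paragraph. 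The open-controlled SWAP has polynomial size, so this yields the required reduction.
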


\section{Quantum Circuit Language}\label{sec:language}

This section introduces the quantum circuit language $\QCa$, which extends the simple quantum circuit language $\mathbf{QC}$~\cite{ying2023quantumrecursiveprogrammingquantum} with an explicit \emph{borrow-ancilla} construct. 
It provides a semantics with automatic uncomputation of different usage of clean and dirty ancillas.

\subsection{Syntax}
\label{sec:language-syntax}
We start by giving the syntax of quantum circuit language $\QCa$:
\begin{definition}[]
    A quantum circuit $C \in \QCa$ is inductively generated by the syntax:
    \begin{align*}
        C\ ::=\ & \bskip \mid U[\olq] \mid C_1; C_2 \mid \qif{q}{C_1}{C_0} \\
                & \mid \borrowccdd{a}{C}
    \end{align*}
    where items in blue square brackets $\blue{[]}$ are optional.
\end{definition}
The constructs, except the $\bborrow$ statement, are rather standard:
\begin{itemize}
    \item Statement $\bskip$ does nothing and terminates immediately;
    \item Statement $U[\olq]$ applies unitary transformation $U$ on a list of distinct qubits $\olq$;
    \item Sequential composition $C_1; C_2$ executes $C_1$ followed by $C_2$;
    \item Quantum if-statement $\qif{q}{C_1}{C_0}$ models the quantum control flow: if $q$ is in $|1\>$, then it executes $C_1$, and if in $|0\>$, then executes $C_0$.
          It provides an intuitive way to write quantum multiplexor.
\end{itemize}

\emph{$\bborrow$ statement} models scoped ancilla usage.
It allocates an ancilla $a$---clean with the optional initializer $:=\ket{0}$, and dirty otherwise---and the body executes as $C$.
At the end of the scope, $a$ is disposed. We assume that each $\bborrow$ statement introduces a fresh ancilla name.

\begin{definition}[Quantum variables]
For a circuit $C$, we inductively define the set of quantum variables $qv(C)$ as follows:
\begin{enumerate}
    \item If $C$ is $\bskip$, then $qv(C) = \emptyset$;
    \item If $C$ is $U[\olq]$, then $qv(C) = \{\, q \mid q \in \olq \,\}$;
    \item If $C$ is $C_1;C_2$, then $qv(C) = qv(C_1) \cup qv(C_2)$;
    \item If $C$ is $\qif{q}{C_1}{C_0}$, then $qv(C) = \{q\} \cup qv(C_1) \cup qv(C_0)$, with additional requirement that $q \notin qv(C_1) \cup qv(C_0)$;
    \item If $C$ is $\borrowccdd{a}{G}$, then $qv(C) = qv(G) \setminus \{a\}$.
\end{enumerate}
\noindent
Here $qv(C)$ collects the \emph{working} qubits of $C$; in particular, borrowed ancillas are excluded.
\end{definition}

For simplicity, we introduce some syntax sugars of $\bqif$:
$\qifT{q}{C}\triangleq \qif{q}{C}{\bskip}$, 
$\qifT{\mathsf{not}\ q}{C}\triangleq \qif{q}{\bskip}{C}$, and 
for $i \in \mathbb{N}$, we inductively define $\tMCX{0}[t] \triangleq \tX[t]$ and $\tMCX{i+1}[q_1, \dots, q_{i+1}, t] \triangleq \bqif\ {q_1}\ \bthen$ ${ \tMCX{i}[q_2, \dots, q_{i+1}, t] }$.

\subsection{Semantics}

A quantum circuit on a fixed set of qubits denotes a unitary operator.
In $\QCa$, however, ancillas are intended to be borrowed and eventually disposed.
Therefore, the semantics of a program should be the induced transformation on the working qubits,
which we formalize as $\func(\cdot, \cdot)$ in \cref{sec:existence-of-uncomputation}.
To remain consistent with circuit semantics, this induced transformation must itself be unitary.

Under our definition, $\func(\cdot, \cdot)$ may fail to be unitary for some programs $C$ with ancillas $\ola$.
We regard such programs as \emph{invalid} and represent their denotation by a distinguished element $\bot$.
We stipulate that $\bot$ propagates through any composition, e.g., sequential composition $\bot U = U\bot = \bot$ and tensor product $U \otimes \bot = \bot \otimes U = \bot$ for all unitary operators $U$.
Hence, any invalid subprogram renders the whole program invalid.

\begin{definition}\label{def:semantics}
    The denotational semantics of a program $C \in \QCa$, denoted $\sem{C}$, is defined inductively: it is either a unitary operator on qubits $qv(C)$, or $\bot$ if $C$ is invalid, where $\bot$ is absorbing for addition, sequential composition, and tensor product.

    \begin{enumerate}
        \item $\sem{\bskip} = I$;
        
        \item $\sem{U[\olq]} = U_{\olq}$;
        
        \item $\sem{C_1; C_2} = 
        \bigl( \sem{C_2} \otimes I_{qv(C_1) \setminus qv(C_2)} \bigr)
        \bigl( \sem{C_1} \otimes I_{qv(C_2) \setminus qv(C_1)} \bigr)$;
        
        \item $\sem{\qif{q}{C_1}{C_0}} = 
        \ketbra{1}{1}_q \otimes \bigl(\sem{C_1} \otimes I_{qv(C_0) \setminus qv(C_1)}\bigr) +
        \ketbra{0}{0}_q \otimes \bigl(\sem{C_0} \otimes I_{qv(C_1) \setminus qv(C_0)}\bigr)$;

        \item $\sem{\borrowccdd{a}{C}} =
            \begin{cases}
            \func(\sem{C},a), & \text{if } \sem{C}\neq\bot \text{ and } \Uncomp(\sem{C},a)\neq\emptyset,\\
            \bot, & \text{otherwise.}
            \end{cases}$
    \end{enumerate}
\end{definition}

The denotational semantics, except for the $\bborrow$ statement, is the same as~\cite{ying2023quantumrecursiveprogrammingquantum}, which is induced from operational semantics.
For $\borrowccdd{a}{C}$, we define its semantics to be $\func(\sem{C},a)$ when an uncomputation exists, and $\bot$ otherwise.

We further define the semantic equivalence of programs:
\begin{definition}[Semantic equivalence~\cite{lawsofquantumprog}]
    Let $C_1, C_2 \in \QCa$ be two programs. They are called equivalent, written $C_1 \equiv C_2$, if $\sem{C_1} \otimes I_{qv(C_2) \setminus qv(C_1)} = \sem{C_2} \otimes I_{qv(C_1) \setminus qv(C_2)}$. Here equality is lifted to include $\bot$, with the convention that $\bot=\bot$ and $\bot\neq U$ for any unitary operator $U$.
\end{definition}

\subsection{Automatic uncomputation---an overview}

Under the semantics above, checking validity coincides with checking existence of uncomputation.
Beyond existence, we also emphasize \emph{synthesis}. First, existence checking can be computationally expensive; it is therefore desirable to make this effort constructive, returning not only a yes/no answer but also a witness that can be turned into code.
Second, even when an uncomputation exists, synthesizing an explicit uncomputation circuit may be more difficult or costly, preventing the program from actually running.
Accordingly, our goal is \emph{synthesis-oriented existence checking}, which simultaneously checks existence of uncomputation and produces actionable evidence that directly guides the construction of uncomputation.

In this paper, we instantiate this idea at two levels: a high-level, syntax-directed \emph{static reasoning system}
in~\cref{sec:template} and a low-level \emph{rewrite-based normalization} procedure in~\cref{sec:rwun}.
The static system operates on program syntax; it is coarse-grained and lightweight, but applies only when programs follow
the required discipline.
When it fails to check a borrow scope, we fall back to the circuit-level rewrite procedure on the remaining program,
which is finer-grained and more broadly applicable, though not always efficient.
Together, these two components form a top-down pipeline.

\section{A Lightweight Check}\label{sec:template}

Although existence checking is computationally hard in general, a lightweight approach remains attractive in practice. It serves as a sufficient filter: for well-structured programs, it can check existence and enable regular synthesis, while flagging potential misuse early (e.g.,~\cite{silq,qurts}).

\subsection{A Syntax-directed Static Reasoning System}

\paragraph{Type discipline for clean ancillas}
For clean ancillas, we adapt the two key qualifiers (or types) used in Silq~\cite{silq}, which can be understood as capturing two semantic constraints of a program $C$:
(i) $C : \qfree$ asserts that the unitary $\sem{C}$ is a permutation---i.e., a reversible Boolean transformation on the computational basis. 
It is typically used to ensure that the ancilla-writing (information \emph{store}) operation can be inverted regularly (e.g., by appending its inverse circuit);
(ii) $C : \spform{\olq}{\olt}$ with $qv(C)\subseteq\olq\cup\olt$ asserts that the unitary $\sem{C}$ remains unchanged on $\olq$, i.e., $\sem{C}(\sum_i|i\>_{\olq}|\phi_i\>) = \sum_i|i\>_{\olq}|\phi'_i\>$ where $|i\>$ range over the computational basis of $\olq$. 
It expresses dependency invariance: qubits on which the ancilla depends must remain unchanged over the region where their values are later needed to reverse the ancilla.
Together, these qualifiers give rise to our first core rule CLEAN (shown in \cref{fig:selected-rule}), which guarantees the existence of uncomputation for the clean ancilla $a$ and enables a regular synthesis: $\sem{C_s}^\dagger \sem{C_u} \sem{C_s} \in \Uncomp(\sem{C_s;C_u}, a)$. That is, the ``uncompute'' step can be realized by appending the inverse of the $C_s$ part.

\begin{figure}[t]
    \centering
    \small
    \begin{mathpar}
    \inferrule*[right=CLEAN]{ 
        \Gamma \vdash C_s : \spform{\olq}{a} \\ \Gamma \vdash C_u : \spform{\olq, a}{\olt} \\ \Gamma \vdash C_s : \qfree 
    }{ 
        \Gamma \vdash \borrowcc{a}{C_s;C_u} : \spform{\olq}{\olt}
    }\\
    \inferrule*[right=DIRTY]
    { \Gamma \vdash C_s : \spform{\olq}{a} \\ \Gamma \vdash C_u : \cqbf{\olq}{a}{\olt} \\ \Gamma \vdash C_s : \qfree }
    { \Gamma \vdash \borrowdd{a}{C_s;C_u} : \spform{\olq}{\olt} }\\
    \inferrule*[right=DIRTY$^*$]
    { \Gamma \vdash C_s : \spform{\olq}{a} \\ \Gamma \vdash C_u : \cqbf{\olq}{a}{\olt} \\ \Gamma \vdash C_s : \qfree }
    { \Gamma \vdash \borrowd{a}{C_s}{C_u} : \spform{\olq}{\olt} }
\end{mathpar}
    \caption{Selected rules}
    \label{fig:selected-rule}
    \vspace{-0.6cm}
\end{figure}

\paragraph{Challenge: Synthesis for dirty ancillas}

While existence of uncomputation can be certified by a common static discipline, \emph{constructing} correct dirty-ancilla circuits is generally more challenging. Existing synthesis methods for clean ancillas do not directly apply: to the best of our knowledge, there is no systematic way to transform an \emph{arbitrary} clean-ancilla usage into a dirty-ancilla one.
\Cref{subsec:safeusecase} introduces a potential regular construction pattern---the template
$\sem{C_s}^\dagger \sem{C_u} \sem{C_s} \sem{C_u}$, preceded by an additional leading “compute” step $\sem{C_u}$. However, this pattern does not always belong to $\Uncomp^*(\sem{C_s;C_u}, a)$; for example, it can fail when implementing $\tMCU{3}{\tT}$ in the same manner---by replacing $\tToffoli[a,q_3,t]$ with $\tMCU{2}{\tT}[a,q_3,t]$.
This observation suggests that \qfree and \const alone are insufficient to capture dirty-ancilla usages that admit regular synthesis.

\paragraph{Solution: Controlled quantum Boolean functions (\bcqbf)}
Inspired by the explanation in~\cite{conditionalcleanqubits}, we introduce \bcqbf. We write $C : \cqbf{\olp}{\olq}{\olt}$ for a program $C$ if $\sem{C}$ remains unchanged on $\olp$, and acts nontrivially---as a quantum Boolean function~\cite{quantumbooleanfunctions} $U$ (i.e., $U^2 = I$)---on the targets $\olt$ only if the controls $\olq$ are in state $\ket{\overline{1}}$.
By additionally requiring $C_u$ to satisfy \bcqbf, as shown in rule DIRTY in \cref{fig:selected-rule}, we \emph{enable} the above regular construction
$\sem{C_s}^\dagger \sem{C_u} \sem{C_s} \sem{C_u} \in \Uncomp^*(\sem{C_s;C_u}, a)$.
Concretely, the sequence $\sem{C_u}\sem{C_s}\sem{C_u}$ forms a canonical toggling pattern: whether the operation $U$ (the corresponding quantum Boolean function $U$ of $\sem{C_u}$ on the target $\olt$) is executed once or twice is determined by the occurrence of a flip on $a$; executing it once applies the update $U$ on $\olt$, while executing it twice cancels to the identity (since $U^2 = I$).
The $\tMCX{3}$ example in \cref{subsec:safeusecase} with dirty ancillas illustrates this pattern: the use operation $\tToffoli[a,q_3,t]$ conforms to $\cqbf{q_3}{a}{t}$.

By leveraging \bcqbf, our system is able to jointly guarantee sound checking and regular synthesis, leading to the soundness result stated below.
Full details of the system and the synthesis procedure are provided in appendix.

\begin{theorem}[Soundness of checking and synthesis]
Let $G \in \QCa$ be a program with clean ancillas $\ola$, dirty ancillas $\overline{d}$, and working qubits $\olq$ (i.e., $\olq = qv(G)$).
If $G$ is derived to be $\spform{\cdot}{\cdot}$ (the exact parameter is not essential and thus omitted), then $\sem{G} \ne \bot$, and the synthesis procedure outputs a circuit $\cG$ containing no $\bborrow$ statement such that, for all $\ket{x}$ and $\ket{\varphi}$,
\[
    \sem{\cG}\ket{0}_{\ola}\ket{x}_{\overline{d}}\ket{\varphi}_{\olq}
    = \ket{0}_{\ola}\ket{x}_{\overline{d}}(\sem{G}\ket{\varphi}_{\olq}).
\]
\end{theorem}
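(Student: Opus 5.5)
The proof goes by structural induction on the derivation of $G : \spform{\cdot}{\cdot}$, together with a synthesis procedure defined by recursion on the same derivation. The invariant is slightly stronger than the statement. For every derivable judgment $\Gamma \vdash C : \spform{\olq}{\olt}$, with $C$ containing clean ancillas $\ola_C$ and dirty ancillas $\old_C$, we have $\sem{C}\neq\bot$. Moreover, the synthesized borrow-free circuit $\cG_C$ satisfies $\sem{\cG_C}\ket{0}_{\ola_C}\ket{x}_{\old_C}\ket{\varphi} = \ket{0}_{\ola_C}\ket{x}_{\old_C}(\sem{C}\ket{\varphi})$ for all $\ket{x},\ket{\varphi}$. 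The semantic properties behind each judgment also transfer to $\cG_C$ on that subspace: the \const property (unchanged on $\olq$), \qfree (permutation), and \bcqbf. I also need a semantic soundness lemma for each qualifier, stating that a derivation of $C:\qfree$, $C:\spform{\olq}{\olt}$ or $C:\cqbf{\olp}{\olq}{\olt}$ implies the corresponding property of $\sem{C}$. That lemma is itself proved by induction over the (appendix) rules and is routine for $\bskip$, gates, sequencing and $\bqif$.

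The non-borrow cases are straightforward. For $\bskip$ and $U[\olq]$, $\cG$ is the program itself. For $C_1;C_2$ we set $\cG=\cG_{C_1};\cG_{C_2}$. For $\bqif$ we set $\cG=\qif{q}{\cG_{C_1}}{\cG_{C_0}}$, and the invariant follows from Definition~\ref{def:semantics}. In each case the clean ancillas of both branches stay at $\ket{0}$ and the dirty ones return to $\ket{x}$, so the ancilla registers factor out. Here I rely on freshness of borrowed names, so that sub-derivations use disjoint ancillas.

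The core cases are the borrow rules. For CLEAN, let $S=\sem{C_s}$ and $V=\sem{C_u}$, and write $S\ket{0}_a\ket{i}_{\olq}\ket{\phi}=\ket{s(i)}_a\ket{i}_{\olq}\ket{\phi}$. This form holds because $C_s$ is \qfree and $\spform{\olq}{a}$, so on basis states it only writes a Boolean function $s$ of $\olq$ into $a$. Since $C_u$ is constant on $\olq,a$, $V$ acts as $\ket{s(i)}\ket{i}\ket{\phi}\mapsto\ket{s(i)}\ket{i}V_{i,s(i)}\ket{\phi}$. Applying $S^\dagger$ then clears $a$. This shows $S^\dagger V S\in\Uncomp(\sem{C_s;C_u},a)$, so by Proposition~\ref{prop:eq-valid-uncomputation} the semantics is $\func\neq\bot$ and equals the restriction of $S^\dagger VS$. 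The synthesized circuit is $\cG_{C_s};\cG_{C_u};\cG_{C_s}^{-1}$, where the inverse is taken gatewise on the borrow-free circuit. The resulting judgment $\spform{\olq}{\olt}$ is immediate, since $\olq$ is never changed.

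For DIRTY (and DIRTY$^*$), the input is $\ket{x}_a$ with $x$ in the computational basis; extending by linearity handles superposed $\ket{x}$. The claimed uncomputation is $S^\dagger V S V$. The \bcqbf property says $V$ applies a fixed involution $U_i$ on $\olt$ exactly when $a=1$, given $\olq=i$. Running $V$, then $S$, then $V$ applies $U_i$ a number of times equal to $[x=1]+[x\oplus s(i)=1]$. Reduced mod $2$ using $U_i^2=I$, this is $U_i^{s(i)}$, independent of $x$; then $S^\dagger$ restores $a$ to $x$. Comparing with the clean case $x=0$, the result is $\ket{x}_a\,(\func\ket{\varphi})$. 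Hence the candidate lies in $\Uncomp^*$, and the synthesized circuit is $\cG_{C_u};\cG_{C_s};\cG_{C_u};\cG_{C_s}^{-1}$.

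The main obstacle is making this toggling argument rigorous when $\sem{C_u}$ is not diagonal on $a$. I will need the precise \bcqbf semantics: constant on $\olp$ and on $a$, identity when $a=0$, and an involution on $\olt$ when $a=1$. This has to be checked to be preserved by the synthesized (rather than semantic) $\cG_{C_u}$ when $C_u$ itself contains nested borrows. Nested dirty ancillas inside $C_u$ that may be borrowed from $a$'s neighbourhood must be handled by the restriction-to-subspace invariant above. I would first try to state \bcqbf purely as an operator identity $V=\ketbra{0}{0}_a\otimes I+\ketbra{1}{1}_a\otimes\sum_i\ketbra{i}{i}_{\olq}\otimes U_i$. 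The nested induction can then be discharged by that identity alone.
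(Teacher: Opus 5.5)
Your proposal is correct and follows essentially the paper's route. The paper likewise proves soundness of each qualifier family by induction over the rules, and it shows $\sem{C_s}^\dagger\sem{C_u}\sem{C_s}\in\Uncomp(\sem{C_s;C_u},a)$ and $\sem{C_s}^\dagger\sem{C_u}\sem{C_s}\sem{C_u}\in\Uncomp^*(\sem{C_s;C_u},a)$ using the same case split $\sem{C_s}_i\in\{I,\tX\}$, together with $\sem{C_u}$ being the identity when $a=0$ and an involution when $a=1$, which is exactly what your parity count encodes; it then checks the synthesis by structural induction on the program. The obstacle you flag is not a real one: your restriction-to-subspace invariant already makes $\cG_{C_u}$ act as $I\otimes\sem{C_u}$ on the relevant subspace, so only the semantic \bcqbf property of $\sem{C_u}$ is ever used, as in the paper. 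You should also make sure the induction covers borrows typed by the appendix's \qfree/\qbf/\bcqbf borrow rules (e.g.\ a nested borrow inside $C_u$), which carry the same template premises.
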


\paragraph{Efficiency problem: Annotating checkpoints via syntactic sugar}
Notably, in contrast to \qfree and \const, reasoning about \bcqbf is non-compositional: whether a program satisfies \bcqbf cannot be determined from its subprograms alone, but depends on global interactions (e.g., whether sequential components commute). 
As a consequence, a naive (or exhaustive) search over all possible checkpoints---i.e., splitting the borrow body into $C_s$ and $C_u$ and then checking $C_u:\bcqbf$---would be computationally expensive in practice. To address this, we introduce syntactic sugar $\borrowd{a}{C_s}{C_u} \triangleq \borrowdd{a}{C_s; C_u}$ with the rule DIRTY$^*$ shown in \cref{fig:selected-rule}, which allows programmers to explicitly annotate the checkpoint, thereby facilitating the application of the general \textsc{Dirty} rule. 
With the help of this syntactic sugar, our system is sufficient to validate many dirty-ancilla usages that follow the standard toggling pattern, including both the clean and dirty versions of Deutsch--Jozsa~\cite{deutsch}, MultiControlled-$X$, MultiControlled-$Y$ Rotation, IntegerComparator, Incrementer, and HighestBitConstAdder; for Adder, only the clean version is supported.
In the next subsection, we illustrate how to better identify checkpoints through a concrete example.

\subsection{Case Study: Using Dirty Ancillas as Easily as Clean Ones}

We take \emph{HighestBitConstAdder} as an example, showing how our system simplify the use of dirty ancillas. Similar to the clean ancillas, the programmer focuses solely on the functionality, and the necessary checks and uncomputation synthesis are delegated to automated reasoning.

\cref{fig:hbadder-circ} (left) presents an adapted instance of HighestBitConstAdder~\cite{factoring2n+2} with dirty ancillas. The program computes the most significant bit of the integer $(x_3 x_2 x_1 x_0)_2$ after adding the constant $(0101)_2$ and stored it in $x_3$, i.e., transforms $x_3$ from $|x_3\>$ to $|x_3 \oplus x_2 \oplus x_0 x_1 \oplus x_0 x_1 x_2\>$ while leaving other qubits unchanged. At first glance, the program is difficult to interpret: its functionality, design principles, and correctness are far from obvious. Rather than reverse-engineering it, we construct a functionally equivalent program in the right panel of \cref{fig:hbadder-circ}, written in a clean-ancilla style but using dirty ancillas.
Concretely, the compute follows by simple boolean arguments with two (clean) ancillas $a_0$ and $a_1$:
\begin{align*}
    & a_0 := x_0; & \mbox{line 3: }& \mbox{\textit{store} $x_0$ in $a_0$}\\
    & a_1 := x_1 \land a_0; & \mbox{line 5: }& \mbox{\textit{use} $a_0$ as control; \textit{store} temporal value in $a_1$}\\
    & x_3 := ((\neg x_2) \wedge a_1) \oplus x_3; & \mbox{line 10: }& \mbox{\textit{use} $a_1$ as control}\\
    & x_3 := x_2 \oplus x_3; & \mbox{line 14: }& \mbox{no ancilla involved}
\end{align*}
yields $x_3$ in $x_2 \oplus (((\neg x_2) \wedge (x_1 \land x_0)) \oplus x_3) = x_3 \oplus x_2 \oplus x_0 x_1 \oplus x_0 x_1 x_2$. Here, \textit{store} and \textit{use} indicate the partition of commands into $C_s$ and $C_u$, respectively. Principally, a dirty ancilla is preferably used in two steps: a \emph{store} step that modifies only the ancilla (possibly controlled by other qubits), followed by a \emph{use} step that treats the ancilla purely as a control while applying a quantum Boolean function to other qubits, without modifying any qubit that the store phase depends on.

\begin{figure}[tbp]
\centering
\hfill
\begin{minipage}{0.42\linewidth}
\centering
\begin{subfigure}{0.98\linewidth}
    \centering
    \begin{lstlisting}[language=QCa, xleftmargin=0.2\linewidth, framexrightmargin=0.2\linewidth]
    CNOT[a2, x3];
    CNOT[x2, a2];
    X[x2];
    CCNOT[a1,x2,a2];
    CCNOT[a0,x1,a1];
    CNOT[x0,a0];
    CCNOT[a0,x1,a1];
    CCNOT[a1,x2,a2];
    CNOT[a2, x3];
    CCNOT[a1,x2,a2];
    CCNOT[a0,x1,a1];
    CNOT[x0,a0];
    CCNOT[a0,x1,a1];
    CCNOT[a1,x2,a2];
    X[x2];
    CNOT[x2, a2]
\end{lstlisting}
    \caption{Manually designed and uncomputed.}
    \label{fig:manual-uncomp}
\end{subfigure}
\end{minipage}
\hfill
\begin{minipage}{0.54\linewidth}
\centering
\begin{subfigure}{0.98\linewidth}
    \centering
    \begin{lstlisting}[language=QCa, xleftmargin=0.2\linewidth, framexrightmargin=0.2\linewidth]
    borrow a1 store {
        borrow a0 store {
            CNOT[x0, a0]
        } use {
            CCNOT[a0, x1, a1]
        }
    } use {
        qif a1 then {
            qif not x2 then {
                X[x3]
            }
        }
    };
    CNOT[x2, x3]
    \end{lstlisting}
    \caption{Automatically checked.}
    \label{fig:auto-uncomp}
\end{subfigure}
\end{minipage}
\caption{Comparison of implementations for computing $x_3$, the most significant bit of $(x_3 x_2 x_1 x_0)_2  + (0101)_2$, with manual vs.\ automatic uncomputation. 
}
\label{fig:hbadder-circ}
\vspace{-0.4cm}
\end{figure}

\subsection{Limitations and complements}

A lightweight, syntax-directed analysis that recognizes disciplined ancilla usage in well-structured programs can be practically effective---and may cover a broad fragment---but it is inherently far from complete. This limitation is especially pronounced for dirty ancillas: even under disciplined patterns, correctness typically depends on semantic side conditions,
most notably commutativity. Such conditions are non-compositional and often costly to establish, limiting what a lightweight system can achieve. Accordingly, we treat the system developed in this section as an efficient first-pass filter; when it fails, we fall back to the lower-level rewrite-based method in the next section, which targets more flexible usage patterns, including those outside known templates.

\section{Rewrite-based Normalization}\label{sec:rwun}

In this section, we introduce our second synthesis-oriented existence checking method, \emph{rewrite-based normalization}. We present the method in a plain circuit model (without $\bborrow$), rather than restricting it to $\QCa$, to keep the development focused and \emph{method-centric}. It applies to $\QCa$ directly: we eliminate $\bborrow$ statements from the inside out by normalizing each scope body while treating the borrowed qubit as an ancilla, replacing the scope with the synthesized uncomputed circuit upon success, and repeating.

Our approach is structure-driven.
Instead of checking whether a circuit already satisfies certain target properties, we ask whether it can be rewritten into a target structure.
We begin with an intuitive example.

\subsection{Intuition}

Consider a circuit $G$ that is uncomputable but has not yet been uncomputed. Our key insight is that its behavior may be \emph{semantically separated}:
\begin{quote}
The circuit behaves as if it first applies the intended functionality $\cG_a$ to the working qubits (without touching the ancilla),
and only then updates the ancilla (possibly based on the working qubits).
\end{quote}

\cref{fig:normal-form-intuition} illustrates this separation idea.
From left to right: (i) the original circuit $G$, (ii) a semantically equivalent circuit with a structural separation that first applies $\cG_a$ and then updates the ancilla $a$, and (iii) an uncomputation of $a$ obtained by appending to $G$ the inverse of the update suggested by (ii). Therefore, existence checking reduces to finding an equivalent circuit with this separated structure, and synthesis is then straightforward.

\begin{figure}[h]
\centering
\begin{minipage}[c]{0.24\linewidth}
\centering
\begin{quantikz}[row sep=0.2cm]
    \lstick{$q$}          & \gate[wires=2]{G} & \qw \\
    \lstick{$a$} &                   & \qw 
\end{quantikz}
\end{minipage}
\hspace{-0.6em}
$\equiv$
\hspace{-0.6em}
\begin{minipage}[c]{0.34\linewidth}
\centering
\begin{quantikz}[row sep=0.2cm]
    \lstick{$q$} & \gate{\blue{\cG_a}} & \ctrl{1} & \qw \\
    \lstick{$a$} & \qw                      & \gate{U} & \qw
\end{quantikz}
\end{minipage}
\hspace{-0.6em}
$\Rightarrow$
\hspace{0.2em}
\begin{minipage}[c]{0.34\linewidth}
\centering
\begin{quantikz}[row sep=0.2cm]
    \lstick{$q$} & \gate[wires=2]{G} & \ctrl{1} & \qw \\
    \lstick{$a$} &                   & \gate{\red{U^\dagger}} & \qw 
\end{quantikz}
\end{minipage}
\caption{}
\label{fig:normal-form-intuition}
\vspace{-0.6cm}
\end{figure}

\subsection{Normal Form}

We now formally characterize the targeted separated structure.
We consider \qfree circuits generated by the gate set
$S_c \triangleq \{ \tMCX{i} \}_{i \in \mathbb{N}}$ (with $\tMCX{0} \triangleq \tX$),
and general quantum circuits generated by
$S_q \triangleq S_c \cup \{\tZ, \tH, \tS, \tT\}$.

\paragraph{Normal form of \qfree circuits}

We begin with \qfree circuits.
The following proposition characterizes exactly when uncomputation exists for \qfree circuits.

\begin{proposition}\label{prop:complete-qfreenormal}
Let $G$ be a \qfree circuit constructed within $S_c$, acting on ancillas $\ola$ and working qubits $\olq$. Then $G$ is uncomputable if and only if there exists another \qfree circuit $G'$ acting on the same qubits such that the following two conditions hold:
\begin{enumerate}
\item $\sem{G} = \sem{G'}$.
\item For any $\tMCX{i}$ gate in $G'$ whose target is an ancilla in $\ola$ and uses no ancilla as a control qubit, every gate appearing to its right is also a $\tMCX{i}$ gate targeting (possibly different) ancillas in $\ola$.
\end{enumerate}
\end{proposition}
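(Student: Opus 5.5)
The plan is to prove both directions by passing through the permutation $f|_k$ from \cref{prop:juequalrkf} and the equivalence in \cref{prop:eq-valid-uncomputation}. The \emph{if} direction is a direct semantic argument. The \emph{only if} direction is an explicit construction of $G'$ as a prefix $Q$ followed by a suffix $T$.

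\emph{If direction.} Assume condition 2 holds for $G'$. Split $G'$ at the first $\tMCX{i}$ gate that targets an ancilla and has no ancilla control. Call the part before it the prefix $P$ and the part from it on the suffix $T$; by condition 2, every gate of $T$ targets an ancilla. Every gate of $P$ that targets an ancilla has at least one ancilla control. By induction over the gates of $P$, on an input $\ket{0}_{\ola}\ket{x}_{\olq}$ all ancillas remain $0$ throughout $P$. Hence no ancilla-targeting gate of $P$ fires, and $P$ maps $\ket{0}_{\ola}\ket{x}$ to $\ket{0}_{\ola}\ket{\pi(x)}$ for some map $\pi$. Since $P$ is a permutation that maps the subspace with ancillas $0$ into itself, $\pi$ is a bijection. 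The suffix $T$ changes only ancillas, so the working output of $G=G'$ on $\ket{0}_{\ola}\ket{x}$ is $\pi(x)$. Thus $f|_k=\pi$ is reversible, and $G$ is uncomputable by \cref{prop:juequalrkf}.

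\emph{Only if direction.} Assume $G$ is uncomputable, so $\pi:=f|_k$ is a bijection on $\{0,1\}^{n-k}$. Let $h(y)$ denote the ancilla part of $G(0,\pi^{-1}(y))$.
\begin{enumerate}
\item Define $T$ as a sequence of $\tMCX{i}$ gates that are controlled only by working qubits (using negative controls realized by $\tX$ conjugation on working qubits) and target ancillas. It should XOR $h(y)$ into the ancillas, so that $\ket{0}_{\ola}\ket{y}\mapsto\ket{h(y)}\ket{y}$.
\item Set $Q:=T^{-1}G$, so that $G=TQ$. Then $Q$ maps $\ket{0}_{\ola}\ket{x}$ to $\ket{0}_{\ola}\ket{\pi(x)}$. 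Being a permutation, $Q$ also permutes the complementary set of basis states with $a\neq 0$.
\item Factor $Q=(I_{\ola}\otimes\pi)\,R$, where $R$ fixes every state with $a=0$ and permutes the states with $a\neq 0$. Implement $\pi$ by gates acting only on working qubits. This is possible because $\tX$, $\tCNOT$ and $\tMCX{n-k-1}$ generate the full symmetric group; in particular $\tMCX{n-k-1}$ is an odd permutation.
\item Write $R$ as a product of transpositions $(u\,v)$ with $u,v$ both having some ancilla equal to $1$. If $u$ and $v$ share an ancilla $a_j=1$, realize $(u\,v)$ by the standard Gray-code transposition circuit with every gate additionally controlled on $a_j$. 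Any needed $\tX$ on another ancilla $a_i$ is replaced by $\tCNOT[a_j,a_i]$, which acts as $\tX$ whenever $a_j=1$. Otherwise, route through $w=1^k$ on the ancillas via $(u\,v)=(u\,w)(w\,v)(u\,w)$.
\end{enumerate}
Every gate of $Q$ then either targets a working qubit or targets an ancilla with an ancilla control. Hence $G':=TQ$ satisfies condition 2 and $\sem{G'}=\sem{G}$.

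The main obstacle is step 4: synthesizing the part of $Q$ that acts on states with $a\neq0$ without ever using a gate that targets an ancilla and has no ancilla control. I will need to check carefully that the $a_j$-controlled transposition construction does not disturb states with $a_j=0$, and that the routing through $w=1^k$ keeps all intermediate states inside the sector $a\neq 0$.
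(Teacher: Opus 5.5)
Your step~1 breaks condition~(2) as written. You realize the negative controls of $T$ by $\tX$-conjugation on working qubits, which places gates $\tX[q]$, $q\in\olq$, to the right of $\tMCX{i}$ gates that target an ancilla and have no ancilla control. $Q$ contains no gate of the latter kind, so the first such gate of $G'=Q;T$ lies in $T$. Condition~(2) then requires \emph{every} later gate to target an ancilla, and the closing $\tX[q]$'s violate this. No reordering of the commuting blocks of $T$ avoids the problem unless no negative control occurs at all.

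The missing ingredient is that $T$ can be built solely from ancilla-targeting gates with positive working controls. Write each coordinate of $h$ in algebraic normal form over $\mathrm{GF}(2)$, $h_j(y)=\bigoplus_{S}c_{j,S}\bigwedge_{i\in S}y_i$. For each monomial with $c_{j,S}=1$, emit one $\tMCX{|S|}$ gate with controls $\{y_i\}_{i\in S}$ and target $a_j$; the empty monomial gives $\tX[a_j]$. These gates commute, realize $\ket{a}\ket{y}\mapsto\ket{a\oplus h(y)}\ket{y}$, and form a self-inverse block, so steps~2--4 go through unchanged. This is exactly the property the paper requires of its appended uncomputation circuit $C$ (``consisting solely of MCX gates targeting ancilla qubits'').

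With this repair the argument is correct, and the step you flagged is sound. Take a Gray path from $u$ to $v$ that never flips $a_j$, give every gate $a_j$ as a positive control, and negate other ancillas $a_i$ via $\tCNOT[a_j,a_i]$. The resulting circuit is the identity on $a_j=0$ and acts as $(u\,v)$ on $a_j=1$. In the routing case $u,v\neq w$, because a state whose ancillas are all $1$ shares a $1$-ancilla with every state of the $a\neq 0$ sector.

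Compared with the paper, the top-level split is the same: your $T$ is the paper's $C^\dagger$, and your $Q$ is the safe circuit $G;C$. The difference is how the safe-use normal form of $Q$ (\cref{prop:complete-safe-qfreenormal}) is obtained. The paper derives it from Feng and Li's unique canonical form along a snake-shaped Hamiltonian path of the hypercube, whereas you synthesize $Q=(I\otimes\pi)R$ by hand. Your construction is elementary and self-contained. Because $\pi$ acts on $\olq$ alone, it also never needs gates negatively controlled on all ancillas; the paper's tail block needs such gates and must still re-express them with positively controlled gates. The paper's route instead leans on an existing completeness theorem to produce a canonical witness, without an explicit transposition-level synthesis.
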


\noindent
\textit{Intuition for Condition~(2).}
Condition~(2) induces a \emph{prefix--suffix} separation of $G'$.
Let $g$ be the first $\tMCX{i}$ gate in $G'$ that targets an ancilla in $\ola$ and uses no ancilla as a control qubit; we use $g$ as the separation point.

\begin{enumerate}
  \item \emph{Prefix (before $g$).}
  Every gate in the prefix either (i) targets a working qubit, or (ii) targets an ancilla but is controlled by at least one ancilla. A gate of type (ii) is dormant until all of its controlling ancillas (initialized in $\ket{0}$) are flipped to $\ket{1}$, which never happens in a prefix that contains only these two types of gates.
  Hence, the prefix leaves all ancillas unchanged (remaining in $\ket{0}$) and acts only as a unitary transformation on the working qubits---namely, the intended unitary behavior.

  \item \emph{Suffix (from $g$ onward).}
  By Condition~(2), every gate to the right of $g$ is also a $\tMCX{i}$ gate targeting an ancilla in $\ola$.
  Therefore, the suffix (including $g$) never targets any working qubit and can be viewed as an ancilla-update tail.
\end{enumerate}

\noindent
Overall, Condition~(2) formalizes the separated structure in which \emph{ancilla-targeting} operations are confined to a rightmost tail, up to ancilla-controlled ancilla writes that are dormant in the prefix.

The next proposition gives a variant of \cref{prop:complete-qfreenormal} that weakens the equivalence requirement while strengthening the structural separation requirement.

\begin{proposition}\label{prop:qfreenormal}
$G$ is uncomputable if and only if there exists another \qfree circuit $G'$ acting on the same qubits such that the following two conditions hold:
\begin{enumerate}
\item $\sem{G} \ket{0}_{\ola}\ket{e_i}_{\olq} = \sem{G'} \ket{0}_{\ola}\ket{e_i}_{\olq}$ for all computational basis states $\ket{e_i}_{\olq}$ of $\olq$.
\item For any $\tMCX{i}$ gate targeting an ancilla in $\ola$, 
all gates to its right must also be $\tMCX{i}$ gates targeting (possibly different) ancillas in $\ola$. 
\end{enumerate}
\end{proposition}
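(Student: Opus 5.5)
We prove the two directions separately, working entirely at the level of the reversible Boolean function implemented by $G$. By \cref{prop:eq-valid-uncomputation} and \cref{prop:juequalrkf}, uncomputability of $G$ is equivalent to reversibility of $f|_k$. So it suffices to show that $f|_k$ is a bijection if and only if a circuit $G'$ satisfying Conditions (1) and (2) exists.

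\emph{($\Leftarrow$)} Suppose $G'$ satisfies (1) and (2). Let $g$ be the first gate of $G'$ targeting an ancilla. By (2), $G'$ splits as $P;S$: the prefix $P$ targets only working qubits, and the suffix $S$ targets only ancillas. Since $P$ never writes an ancilla, on input $\ket{0}_{\ola}\ket{e_i}_{\olq}$ it outputs $\ket{0}_{\ola}\ket{\pi(e_i)}_{\olq}$ for a permutation $\pi$. Indeed, $P$ restricted to the ancilla-zero slice is a \qfree circuit on $\olq$ whose controls on ancillas are constants. The suffix $S$ leaves the working register unchanged. So the working output of $G'$ on $\ket{0}_{\ola}\ket{e_i}_{\olq}$ is $\pi(e_i)$. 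By (1), the same holds for $G$, so $f|_k=\pi$ is reversible.

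\emph{($\Rightarrow$)} Suppose $f|_k$ is a permutation $\pi$ of $\{0,1\}^{n-k}$. We build $G'$ explicitly as $P;S$:
\begin{itemize}
\item $P$ is any \qfree circuit over $S_c$ on $\olq$ alone implementing $\pi$. It exists because $\{\tMCX{i}\}$ is universal for permutations; we may use working qubits only, since $\tMCX{i}$ with arbitrarily many controls is available and any permutation decomposes into transpositions, each realizable by $\tMCX{i}$ gates without ancillas.
\item $S$ writes into the ancillas the values $h(b)\triangleq$ the ancilla part of $f(0^k,b)$, expressed as a function of the current working value $\pi(b)$. That is, for each ancilla $a_j$ and each basis string $c$ with $[h(\pi^{-1}(c))]_j=1$, we add a $\tMCX{n-k}$ gate with positive and negative controls on $\olq$ matching $c$ and target $a_j$.
\end{itemize}
Negative controls are a problem, because conjugating by $\tX$ on working qubits inside $S$ would violate (2). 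We avoid this by using only positive-control gates. Any Boolean function of $\olq$ can be XORed onto $a_j$ by its algebraic normal form, which is a sum of monomials, each a positive-control $\tMCX{i}$; the empty monomial is $\tX$.

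Then $S$ consists solely of ancilla-targeting $\tMCX{i}$ gates, so (2) holds. On $\ket{0}\ket{e_b}$, $G'$ outputs $\ket{h(b)}\ket{\pi(b)}=\sem{G}\ket{0}\ket{e_b}$, which gives (1).

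The main obstacle is this gate-set bookkeeping. We must ensure the suffix uses only ancilla-targeting gates, which the ANF construction handles. We must also ensure $P$ can be realized without borrowing ancillas: for $n-k\ge 1$, any permutation of $\{0,1\}^{n-k}$ is a product of transpositions of adjacent Gray-code strings, each a single multi-controlled $\tX$ conjugated by $\tX$ gates on working qubits, which is allowed in the prefix. Finally, we must confirm that the weakened Condition (1) on basis inputs with zero ancillas is exactly what the argument needs; it is, because \qfree semantics are determined by basis images.
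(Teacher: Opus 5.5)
Your proof is correct and follows the paper's argument essentially step for step. For ($\Leftarrow$), both you and the paper split $G'$ into a working-target prefix, whose ancilla-controlled gates are dormant on $\ket{0}_{\ola}$, and an ancilla-target suffix; for ($\Rightarrow$), both build $G' = H; F$, with $H$ implementing the bijection $f|_k$ on $\olq$ and $F$ writing the ancilla values using only ancilla-targeting gates.

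Your version is slightly more careful than the paper's on two points the paper glosses over. First, you make explicit that $F$ must compute the ancilla contents from the already-permuted working value, i.e.\ via $h\circ\pi^{-1}$. Second, you justify through the algebraic normal form that $F$ needs only positive-control $\tMCX{i}$ gates, so no $\tX$-conjugation on working qubits appears after the first ancilla-targeting gate to break Condition~(2).
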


Compared to Condition~(2) of \cref{prop:complete-qfreenormal}, Condition~(2) above enforces a cleaner separation:
no gate in the prefix targets an ancilla, so the circuit is separated into a working-target-only prefix followed by an ancilla-target-only suffix.

The next proposition clarifies why \cref{prop:qfreenormal} can weaken the equivalence requirement. By \cref{def:cleanuncomp} and \cref{def:dirtyuncomp}, we directly have:
\begin{proposition}\label{prop:uncompeqwithnormal}
    For any $G'$ satisfying the condition~(1) in \cref{prop:qfreenormal}, $\Uncomp(\sem{G}, \ola) = \Uncomp(\sem{G'}, \ola)$ and $\Uncomp^*(\sem{G}, \ola) = \Uncomp^*(\sem{G'}, \ola)$.
\end{proposition}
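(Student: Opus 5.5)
The plan is to work directly from \cref{def:cleanuncomp} and \cref{def:dirtyuncomp}. Both sets $\Uncomp(\sem{G},\ola)$ and $\Uncomp^*(\sem{G},\ola)$ are determined entirely by the functionality map $\func(\sem{G},\ola)$. In turn, $\func(\sem{G},\ola)$ is determined by the vectors $\sem{G}\ket{0}_{\ola}\ket{\varphi}_{\olq}$, i.e.\ by the restriction of $\sem{G}$ to the subspace $\ket{0}_{\ola}\otimes\mathcal{H}_{\olq}$. So it suffices to show that Condition~(1) forces $\sem{G}$ and $\sem{G'}$ to agree on this whole subspace, not only on basis vectors.

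\emph{Step 1 (extend by linearity).} Condition~(1) gives $\sem{G}\ket{0}_{\ola}\ket{e_i}_{\olq}=\sem{G'}\ket{0}_{\ola}\ket{e_i}_{\olq}$ for every computational basis state $\ket{e_i}$ of $\olq$. Any $\ket{\varphi}_{\olq}$ can be written as $\sum_i\alpha_i\ket{e_i}$, and both $\sem{G}$ and $\sem{G'}$ are linear. Hence $\sem{G}\ket{0}_{\ola}\ket{\varphi}_{\olq}=\sem{G'}\ket{0}_{\ola}\ket{\varphi}_{\olq}$ for all $\ket{\varphi}$.

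\emph{Step 2 (equal functionalities).} Decompose the common output as $\sum_{j}\ket{j}_{\ola}\ket{\varphi_j}_{\olq}$. This decomposition with respect to the computational basis of $\ola$ is unique, so the branch states $\ket{\varphi_j}$ coincide for $G$ and $G'$. Therefore $\func(\sem{G},\ola)=\func(\sem{G'},\ola)$ as maps on $\mathcal{H}_{\olq}$.

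\emph{Step 3 (equal uncomputation sets).} Membership of an operator $\cG$ in $\Uncomp(\cdot,\ola)$ is the condition $\cG\ket{0}_{\ola}\ket{\varphi}=\ket{0}_{\ola}\func(\cdot,\ola)\ket{\varphi}$ for all $\ket{\varphi}$. Membership in $\Uncomp^*(\cdot,\ola)$ is the analogous condition $\cG^*\ket{x}_{\ola}\ket{\varphi}=\ket{x}_{\ola}\func(\cdot,\ola)\ket{\varphi}$ for all $\ket{x},\ket{\varphi}$. Both conditions depend on the circuit only through its functionality, and Step~2 shows the functionalities coincide. Hence both pairs of sets are equal, which is the claim.

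There is no real obstacle here. The only point needing care is Step~2: the functionality must be well defined, meaning the ancilla-branch decomposition is unique. This is exactly what the paper's footnote on decomposing pure states with respect to a qubit guarantees, applied qubit-by-qubit to $\ola$.
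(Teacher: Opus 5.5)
Your proposal is correct and follows the paper's reasoning: the paper only says the result follows directly from \cref{def:cleanuncomp} and \cref{def:dirtyuncomp}, and you make explicit the two steps that this leaves implicit. First, you extend Condition~(1) from basis states to all of $\ket{0}_{\ola}\otimes\cH_{\olq}$ by linearity; second, you note that both $\Uncomp$ and $\Uncomp^*$ depend on the circuit only through $\func(\cdot,\ola)$. Your argument never uses that the circuits are \qfree, so it also justifies the same step in \cref{prop:quantumnormal}.
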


\paragraph{Normal form of quantum circutis}

Unlike the \qfree case, we do not have a structural separation of the above kind that is both necessary and sufficient for the existence of uncomputation for general quantum circuits.
Nevertheless, we can still define a \emph{sufficient} one that guarantees uncomputability.

\begin{proposition}\label{prop:quantumnormal}
    Let $G$ be a quantum circuit constructed within $S_q$, acting on ancillas $\ola$ and working qubits $\olq$. Then $G$ is uncomputable if there exists another quantum circuit $G'$ acting on the same qubits such that the following two conditions hold:

    \begin{enumerate}
    \item $\sem{G}\ket{0}_{\ola}\ket{e_i}_{\olq} = \sem{G'}\ket{0}_{\ola}\ket{e_i}_{\olq}$ for all computational basis states $\ket{e_i}_{\olq}$ of $\olq$.
    \item No $\tH$ gate targets any ancilla. For any gate targetting an ancilla in $\ola$, all gates to its right must also be gates targetting (possibly different) ancillas in $\ola$.
    \end{enumerate}
\end{proposition}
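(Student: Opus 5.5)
We reduce uncomputability of $G$ to that of $G'$ and then check the separated structure of $G'$ directly. By condition~(1), $\sem{G}$ and $\sem{G'}$ agree on all states $\ket{0}_{\ola}\ket{e_i}_{\olq}$, hence by linearity on the whole subspace $\ket{0}_{\ola}\otimes\cH_{\olq}$. Since $\func(\cdot,\ola)$ and the uncomputation sets depend only on this restriction, we get $\func(\sem{G},\ola)=\func(\sem{G'},\ola)$; this is the same observation as \cref{prop:uncompeqwithnormal}, whose argument does not use \qfree-ness. By \cref{prop:eq-valid-uncomputation} it therefore suffices to show that $\func(\sem{G'},\ola)$ is unitary.

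Next we split $G'$ at the first gate that targets an ancilla, writing $G' = P;S$. By condition~(2), every gate of $P$ targets only working qubits, and every gate of $S$ targets only ancillas. For gates in $S_q$ the target is a single qubit, and multi-controlled $\tMCX{i}$ gates have their controls elsewhere. The phase gates $\tZ,\tS,\tT$ are diagonal, so for them the notion of target is unambiguous. A gate of $P$ may still use ancillas as controls. On inputs with $\ola$ in $\ket{0}$, however, no gate of $P$ ever changes an ancilla, since none targets one. So ancilla-controlled gates in $P$ act as the identity, and $\sem{P}(\ket{0}_{\ola}\ket{\varphi}) = \ket{0}_{\ola}(V\ket{\varphi})$ for a unitary $V$ on $\olq$. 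Concretely, $V$ is the circuit $P$ with every ancilla-controlled gate deleted.

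For the suffix, every gate of $S$ is either an $\tMCX{i}$ with an ancilla target, or $\tZ,\tS,\tT$ on an ancilla. No $\tH$ gate can appear, because condition~(2) forbids $\tH$ on ancillas and $S$ contains only ancilla-targeting gates. All gates in $S$ are therefore diagonal or permutation-like on the working register with respect to the computational basis. More precisely, each gate of $S$ maps $\ket{j}_{\ola}\ket{e_i}_{\olq}$ to $\omega\ket{j'}_{\ola}\ket{e_i}_{\olq}$, where $\omega$ is a phase and $j'$ depends on $(j,i)$. The working basis state is never changed. Hence $\sem{S}\ket{0}_{\ola}\ket{e_i}_{\olq} = \omega_i\ket{\pi(i)}_{\ola}\ket{e_i}_{\olq}$ for some unit-modulus $\omega_i$ and ancilla basis index $\pi(i)$.

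Combining the two parts, write $V\ket{\varphi}=\sum_i\alpha_i\ket{e_i}$. Then $\sem{G'}\ket{0}\ket{\varphi} = \sum_i\alpha_i\omega_i\ket{\pi(i)}_{\ola}\ket{e_i}$, and collecting by ancilla index gives
\[
\func(\sem{G'},\ola)\ket{\varphi} = \sum_i \alpha_i\omega_i\ket{e_i} = D V\ket{\varphi}, \qquad D=\textstyle\sum_i\omega_i\ketbra{e_i}{e_i}.
\]
Here $D$ is a diagonal unitary, so $DV$ is unitary, which concludes the proof.

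The main obstacle is the phase subtlety. Unlike the \qfree case, phase gates on ancillas leave a relative phase on the working register: $DV$ generally differs from $V$. We must verify that this phase is still a diagonal unitary and does not destroy norm. The argument depends on the absence of $\tH$ on ancillas, since an $\tH$ would produce superpositions on $\ola$ whose branches could interfere within a single working basis state.
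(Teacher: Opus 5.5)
Your proof is correct and follows the paper's argument. Both split $G'$ into a working-target prefix, which acts as a unitary on $\olq$ because ancilla-controlled gates stay dormant while the ancillas sit in $\ket{0}$, and an ancilla-target suffix of permutation and diagonal gates, which only attaches an ancilla basis state and a phase $e^{i\theta_j}$ to each working basis state. Two steps the paper only asserts are made precise in yours: you justify the reduction from $G$ to $G'$ via condition~(1) and linearity, and you identify the functionality explicitly as $DV$ with $D$ a diagonal unitary, where the paper just checks that each output has norm one.
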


\subsection{Rewrite-based Normalization}

The normal forms above turn synthesis-oriented existence checking into a reachability problem.

For a given circuit $G$, finding a semantically equivalent circuit $G'$ in normal form witnesses the existence of uncomputation. Meanwhile, $G'$ suggests a regular synthesis scheme---namely, one can synthesize an uncomputation either by canceling the ancilla-update tail in $G'$ or by appending the inverse of the tail to $G$.

We take \cref{prop:quantumnormal} as the reference notion of \emph{normal form}, which specializes to \cref{prop:qfreenormal} for \qfree circuits (in both Conditions~(1) and~(2)).
We say that a circuit is \emph{normalizable} if it admits a semantically equivalent circuit in this normal form.
Normalizability implies uncomputability; for \qfree circuits, the converse also holds (recall \cref{prop:qfreenormal}).

Guided by the separated structure captured by Condition~(2), we design a normalization procedure that transforms a circuit into a semantically equivalent one satisfying this condition.
Equivalently, we aim to eliminate all \emph{violating pairs}---an ancilla-targeting gate immediately followed by a working-targeting gate---until none remains, at which point the circuit is in normal form.
A naive swap of such a pair would restore the desired order but may change the semantics; therefore, we insert an additional gate between the reordered (now normal-form) pair when necessary to maintain semantic equivalence.
This insertion also preserves the normal-form order.
Each such commute (and possible insertion) step is equivalence-preserving and is captured by rewrite rules.
Applying these local rewrites repeatedly yields a normalization procedure for reaching the normal form.

\paragraph{Rewriting Rules}

Let $\olq = q_1 \dots q_m$ and $\olp = p_1 \dots p_n$, with $0 \le m$ and $0 \le n$. Let $a$ denote an ancilla and $t$ denote a working qubit. We first list the rewrite rules for \qfree gates:
{
\small
\begin{align}
& \tMCX{m}[\olq, a]; \tMCX{n}[\olp, t]
  \equiv
  \tMCX{n}[\olp, t]; \tMCX{m}[\olq, a]
& \text{if } t \notin \olq \;\land\; a \notin \olp 
\tag{R-1}\label{rule:R-1}\\[2pt]
& \tMCX{m}[\olq, a]; \tMCX{n}[\olp, t]
  \equiv
  \tMCX{n}[\olp, t];
  \mathtt{C^h NOT}[(\olq \cup \olp)\setminus\{a\},\, t];
  \tMCX{m}[\olq, a]
& \text{if } a \in \olp \;\land\; t \notin \olq \tag{R-2}\label{rule:R-2}\\[2pt]
& \tMCX{m}[\olq, a]; \tMCX{n}[\olp, t]
  \equiv
  \tMCX{n}[\olp, t];
  \mathtt{C^h NOT}[(\olq \cup \olp)\setminus\{t\},\, a];
  \tMCX{m}[\olq, a]
& \text{if } a \notin \olp \;\land\; t \in \olq \tag{R-3}\label{rule:R-3}
\end{align}
}

For general quantum gates, we additionally use the following commutation rules,
mainly for phase-type gates $\tZ$, $\tS$, and $\tT$.
\begin{align}
U[a]; V[t]
  \ &\equiv
   V[t]; U[a]
  && \text{if }
    U \in \{\tX,\tZ,\tS,\tT\}
    \;\land\;
    V \in \{\tX,\tZ,\tH,\tS,\tT\}
  \tag{R-4}\label{rule:R-4}\\
U[a]; \tMCX{m}[\olp, t]
  \ &\equiv
   \tMCX{m}[\olp, t]; U[a]
  && \text{if }
    U \in \{\tZ,\tS,\tT\}
  \tag{R-5}\label{rule:R-5}\\
\tMCX{m}[\olp, a]; U[t]
  \ &\equiv
   U[t]; \tMCX{m}[\olp, a]
  && \text{if }
    U \in \{\tZ,\tS,\tT\} 
    \;\lor\; 
    U = \tH \land t \notin \olq
  \tag{R-6}\label{rule:R-6}
\end{align}

\begin{wrapfigure}{r}{7.0cm}
    \centering
    \vspace{-0.5cm}
    \begin{minipage}[t]{0.50\textwidth}
        \begin{algorithm}[H]
        \caption{Rewrite-based normalization.}
        \label{alg:rewrite-normalization}
        \begin{algorithmic}[1]
        \Require A quantum circuit $G$ acting on ancillas $\ola$ and working qubits $\olq$.
        \State \textbf{assert}(No $\tH$ gate acts on any ancilla)
        \State $G' \gets G.\text{copy}()$
        \While{\textbf{true}}
            \State Find the first violating pair.
            \If{Not found}
                \State \Return $G'$
            \Else
                \If{Rewriting rules applicable}
                    \State Rewrite in $G'$.
                \Else
                    \State Report failure.
                \EndIf
            \EndIf
        \EndWhile
        \end{algorithmic}
        \end{algorithm}
    \end{minipage}
    \vspace{-0.6cm}
\end{wrapfigure} 

All rules above eliminate a local violating pair; two of the \qfree rules additionally insert a single compensating gate to preserve semantic equivalence while keeping the reordered pair in normal-form order.

\begin{remark}
    Any violating pair not covered by the rules above is not normalizable.    
\end{remark}

\paragraph{Normalization Algorithm}
The procedure in \cref{alg:rewrite-normalization} scans $G'$ for the first violating pair; if a rewrite rule applies, it rewrites the pair and continues.
Otherwise, it reports failure.
When no violating pair remains, the resulting circuit is in normal form and is returned.

The success stopping condition of \cref{alg:rewrite-normalization} is designed to coincide with
Condition~(2) of \cref{prop:quantumnormal}. Together with the fact that each rewrite step is
equivalence-preserving, this yields the following result.

\begin{theorem}[Termination and Soundness of Existence Checking]\label{thm:normalize-soundness}
Given a quantum circuit $G$ acting on ancillas $\ola$ and working qubits $\olq$, \cref{alg:rewrite-normalization}  always terminates.
If it succeeds and returns a circuit $G'$,
then $\sem{G'}=\sem{G}$ and $G'$ satisfies the conditions in~\cref{prop:quantumnormal}.
Consequently, $\Uncomp(\sem{G}, \ola) \ne \emptyset$.
\end{theorem}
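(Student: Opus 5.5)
The proof splits into three parts: termination, soundness of each successful run, and the final uncomputability claim. Soundness and uncomputability are routine. Termination is the main obstacle, because rules \eqref{rule:R-2} and \eqref{rule:R-3} insert gates and so the circuit can grow.

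\emph{Soundness.} I argue by induction on the number of rewrite steps that $\sem{G'}=\sem{G}$ is invariant. Initially $G'$ is a copy of $G$. Each step replaces a contiguous pair of gates by a sequence that is semantically equivalent by one of \eqref{rule:R-1}--\eqref{rule:R-6}. Sequential composition is compositional in \cref{def:semantics}, so replacing a subsequence with an equivalent one preserves the overall unitary. The individual rules can be checked on computational basis states.
\begin{itemize}
\item For \eqref{rule:R-2}, once $t$ has been updated, the ancilla flip changes the control value of $a$ seen by the gate on $t$. The discrepancy is $t\mathrel{\oplus}= (\bigwedge \olq)\wedge(\bigwedge \olp\setminus\{a\})$, which is exactly the inserted $\mathtt{C^hNOT}$.
\item \eqref{rule:R-3} is symmetric.
\item The phase rules \eqref{rule:R-4}--\eqref{rule:R-6} are commutations of diagonal gates, or of gates on disjoint qubits, with controls. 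For \eqref{rule:R-6} with $\tH$, one uses that $t$ is not a control of the ancilla write.
\end{itemize}
The algorithm returns only when no violating pair exists. That is precisely the second clause of Condition~(2) in \cref{prop:quantumnormal}. The first clause (no $\tH$ on ancillas) holds by the initial assertion together with the fact that no rule creates an $\tH$ on an ancilla. Inserted gates target $t$ or $a$ with $\tMCX{h}$, and $\tH$ only ever moves on working qubits. Condition~(1) holds trivially since $\sem{G'}=\sem{G}$. \Cref{prop:quantumnormal} then yields $\Uncomp(\sem{G},\ola)\neq\emptyset$, and by \cref{prop:eq-valid-uncomputation} the dirty version as well.

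\emph{Termination.} Each iteration either reports failure (and stops) or rewrites, so it suffices to show that only finitely many rewrites occur. A naive measure such as the number of violating (ancilla-target, working-target) inversions does not obviously decrease, because the inserted compensating gate is itself a working- or ancilla-target gate. My plan is to classify gates as W (working target) or A (ancilla target) and use the multiset of positions of A-gates, or equivalently, for each A-gate, the number of W-gates to its right.
\begin{itemize}
\item \eqref{rule:R-1} and \eqref{rule:R-4}--\eqref{rule:R-6} swap an adjacent A/W pair, strictly moving one A-gate rightward past one W-gate.
\item In \eqref{rule:R-2} the inserted gate targets $t$ (a W-gate) and is placed to the \emph{left} of the moved A-gate $\tMCX{m}[\olq,a]$. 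That A-gate thus passes the original W-gate and has no new W-gates to its right.
\item In \eqref{rule:R-3} the inserted gate targets $a$ (an A-gate) and is placed to the right of the W-gate. Every W-gate it has to its right also lay to the right of the moved A-gate, so it has strictly fewer W-gates to its right than the original A-gate had.
\end{itemize}

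The problem is that A-gates lying to the left of the rewritten pair gain a W-gate on their right in \eqref{rule:R-2}. I handle this by processing the first violating pair, as the algorithm does. All gates left of it are in normal-form order, so every A-gate to the left of the pair would already form a violation unless no W-gate follows it. Since the pair is the first violation, there are no A-gates strictly before it that have a W-gate after them. Concretely, the prefix before the first A-gate consists only of W-gates, so the inserted W-gate joins that W-prefix.

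With this observation I take as measure the sequence (length of the maximal W-only prefix, then the number of W-gates after the first A-gate), ordered lexicographically with the first component bounded by the total gate count. To get a bound that survives the insertions, I bound the number of \eqref{rule:R-3} insertions per A-gate by the number of W-gates, and show that W-gates are never inserted after an A-gate. The W count after the first A-block is then monotonically decreasing, which yields a well-founded order. Making this lexicographic measure precise, especially bounding the A-gates spawned by \eqref{rule:R-3} so that the tail is finite, is where I expect the real work to be. I would first try to prove that every A-gate ever present stays within a block that has only a bounded number of W-gates after it, and that this number decreases strictly.
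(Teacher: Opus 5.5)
\paragraph{Soundness is fine; termination has a genuine gap.}
Your soundness half agrees with the paper. Each rule is an equivalence, the stopping condition is exactly Condition~(2) of \cref{prop:quantumnormal}, and no rule creates an $\tH$ on an ancilla. The gap is termination, which is the substantive content of the theorem. You leave it unfinished, and the structural fact you plan to build on is false.

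Just before the first violating pair, the circuit has the shape $W\cdots W;A_0;\dots;A_{m-1}$, followed by the pair $A_m;W$. The gates $A_0,\dots,A_{m-1}$ all have $W$ to their right. Rule \eqref{rule:R-2} produces $A_0;\dots;A_{m-1};W;W';A_m$, so the compensating $W'$ lands \emph{after} $A_0,\dots,A_{m-1}$, not in the W-prefix.

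Concretely, take $\tX[a];\tCNOT[q,a];\tCNOT[a,t]$, whose first violating pair is the last two gates. It rewrites to $\tX[a];\tCNOT[a,t];\tCNOT[q,t];\tCNOT[q,a]$, and $\tX[a]$ goes from one W-gate on its right to two. The consequences are:
\begin{itemize}
\item W-gates are inserted after A-gates.
\item Your per-A-gate counts do decrease in the multiset order under \eqref{rule:R-1} and \eqref{rule:R-3}--\eqref{rule:R-6}, but they increase under \eqref{rule:R-2}.
\item Your lexicographic pair does not decrease either. For $m\ge 1$, \eqref{rule:R-1} leaves both components unchanged, and \eqref{rule:R-2} raises the second component, as in the example.
\item The first component is bounded only by a gate count that itself grows.
\end{itemize}

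\paragraph{The missing idea.}
What you need is something that controls the cascade of insertions. The paper uses no global potential. Leftmost-first scheduling reduces the run to successive instances of ``normal-form circuit followed by one W-gate'' (\cref{lem:termination-one-pair}). That lemma is proved by induction on the number $m$ of A-gates the W-gate must cross.

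With one ancilla, the cases go as follows.
\begin{itemize}
\item Rules \eqref{rule:R-1} and \eqref{rule:R-3}--\eqref{rule:R-6} send $A_m$ (and any inserted A-gate) into a suffix that is never touched again, leaving an instance with $m$ A-gates.
\item Under \eqref{rule:R-2}, leftmost-first normalizes $A_0;\dots;A_{m-1};W$ first, by the induction hypothesis. Then $W'$ has controls $(\olq\cup\olp)\setminus\{a\}$, which no longer include the ancilla. So $W'$ can never trigger \eqref{rule:R-2} or the mutual-control failure again. It simply moves left through the finitely many A-gates of the normalized prefix, one step each, creating no new W-gates.
\end{itemize}
Several ancillas are handled by duality. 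With a single W-gate, reversing the circuit and exchanging the roles of its target and the ancillas swaps \eqref{rule:R-2} and \eqref{rule:R-3}. This reduces the general case to repeated use of the one-ancilla case.

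The fact your argument lacks is that the compensating gate drops the crossed qubit from its controls. Your position-based measure does not see this, and it is what bounds the re-spawning.
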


The normal form suggests a regular synthesis procedure.
For clean ancillas, synthesis from the normal form is uniform for both \qfree and general quantum circuits.
For dirty ancillas, a further but still regular operation is needed; for general quantum circuits, an additional step is required before this operation. We use \RwUnx to denote the overall procedure that first normalizes a circuit and then outputs a synthesized uncomputation; full details are provided in appendix.

\begin{theorem}[Synthesis]
\label{thm:rwun-soundness}
Given a quantum circuit $G$ acting on clean ancillas $\ola$, dirty ancillas $\overline{d}$ and working qubits $\olq$.
If $\RwUnx(G)$ terminates with success, then it outputs a circuit containing no $\bborrow$ statement, and for all $\ket{x}$ and $\ket{\varphi}$,
\[
    \sem{\RwUnx(G)}\ket{0}_{\ola}\ket{x}_{\overline{d}}\ket{\varphi}_{\olq}
    = \ket{0}_{\ola}\ket{x}_{\overline{d}}(\sem{G}\ket{\varphi}_{\olq}).
\]

\end{theorem}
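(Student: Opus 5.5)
The proof follows the structure of $\RwUnx$: it eliminates $\bborrow$ scopes from the inside out, normalizing each scope body with \cref{alg:rewrite-normalization} and replacing the scope by a synthesized, borrow-free circuit. I would therefore argue by induction on the number of $\bborrow$ statements in $G$, equivalently on the number of ancillas. The invariant is that after each successful replacement step, the current program $\widetilde G$ contains one fewer $\bborrow$ statement than before. It also satisfies
\[
\sem{\widetilde G}\ket{0}_{\ola'}\ket{x}_{\old'}\ket{\varphi} = \ket{0}_{\ola'}\ket{x}_{\old'}\bigl(\sem{G}\ket{\varphi}\bigr),
\]
where $\ola',\old'$ are the ancillas already made explicit. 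Borrow-freeness of the final output is then immediate: success means every scope has been replaced, and each replacement is built only from gates of the scope body and inverses of gates in $S_q$.

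For the inductive step, fix an innermost scope with body $C$ on ancilla $a$ and working qubits $\olq$. By the induction hypothesis, $C$ is already borrow-free, so $\sem{C}$ is a unitary. By \cref{thm:normalize-soundness}, a successful run of \cref{alg:rewrite-normalization} returns $G'$ with $\sem{G'}=\sem{C}$. Moreover, $G'$ satisfies Condition~(2) of \cref{prop:quantumnormal}, so it splits as $G' = P;T$. Here no gate of $P$ targets $a$, every gate of $T$ targets $a$, and no $\tH$ acts on $a$.

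\emph{Clean case.} The output is $C;T^\dagger$, which equals $P$ semantically. Since no gate of $P$ targets $a$ and $a$ starts in $\ket{0}$, any $a$-controlled gate in $P$ is dormant or always fires according to its control polarity. Hence $\sem{P}\ket{0}_a\ket{\varphi} = \ket{0}_a V\ket{\varphi}$ for the unitary $V$ that $P$ induces on $\olq$. Because $\sem{C}\ket{0}_a\ket{\varphi} = \sem{P;T}\ket{0}_a\ket{\varphi}$ and $T$ only targets $a$, the amplitudes summed in $\func(\sem{C},a)$ are exactly $V\ket{\varphi}$. The $\tZ,\tS,\tT$ gates on $a$ in $T$ need a separate check: they act diagonally, and their phases are undone by $T^\dagger$. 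By \cref{prop:eq-valid-uncomputation} this gives $V=\func(\sem{C},a)=\sem{\borrowcc{a}{C}}$, which establishes the invariant.

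\emph{Dirty case.} This is where the additional regular operation of $\RwUnx$ enters. I would take it to be: starting from the clean witness $P$, specialize every ancilla control in $P$ to its value at $a=0$. Gates with a positive control on $a$ are deleted, and negative controls on $a$ are dropped. For general quantum circuits, the preceding extra step first moves the diagonal phase gates on $a$, using \eqref{rule:R-5} and \eqref{rule:R-4}, so that the tail is purely $\tMCX{i}$ before specialization. The resulting circuit $P_0$ does not mention $a$ at all, so $\sem{P_0} = I_a\otimes V_0$. Evaluating on $\ket{0}_a$ shows $V_0 = V$. By linearity, $\sem{P_0}\ket{x}_a\ket{\varphi} = \ket{x}_a V\ket{\varphi}$ for every $\ket{x}$, including superpositions, which is exactly the displayed equation. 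This also exhibits the unique element of $\Uncomp^*$ from \cref{def:dirtyuncomp}.

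The main obstacle is the dirty quantum case. I must verify that the appendix's extra step for quantum circuits really produces a prefix whose restriction to $\ket{0}_a$ coincides with the specialized ancilla-free circuit. In particular, phase gates and $\tH$ gates on working qubits that interact with $a$-controlled gates must not leave residual dependence on $a$. I would handle this by checking each rule \eqref{rule:R-4}--\eqref{rule:R-6} against the specialization map, gate by gate.
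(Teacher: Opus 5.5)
Your induction over borrow scopes and your reduction of each scope to a normal form $G'=P;T$ match the paper's structural induction through \NWsynth, and for \qfree bodies your argument goes through. The gap is the claim $\func(\sem{C},a)=V$. It fails as soon as the tail $T$ applies a $\tZ$, $\tS$ or $\tT$ gate to $a$ after $a$ has been flipped in a way that depends on the working qubits. That phase is kicked back onto $\olq$, and $\func$ keeps it, because it sums the branch vectors together with their phases. Concretely, the body $C=\tCNOT[q,a];\tT[a]$ is already in normal form with empty prefix, so $V=I$, but
\[
\sem{C}\ket{0}_a\bigl(\alpha\ket{0}_q+\beta\ket{1}_q\bigr)=\alpha\ket{0}_a\ket{0}_q+e^{i\pi/4}\beta\ket{1}_a\ket{1}_q ,
\]
so $\func(\sem{C},a)=\tT[q]$. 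In general the tail maps $\ket{0}_a\ket{e_j}_{\olq}$ to $e^{i\theta_j}\ket{a_j}\ket{e_j}_{\olq}$, which gives $\func(\sem{C},a)=DV$ with $D$ diagonal on $\olq$. \cref{prop:eq-valid-uncomputation} only tells you this operator is unitary, not that it equals $V$.

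Your clean output $C;T^\dagger\equiv P$ therefore implements $V$ rather than $\sem{\borrowcc{a}{C}}=DV$; in the example it is the identity instead of $\tT[q]$. It is also not what \RwUnx produces. \NWsynth appends the same gate for each $\tX$ or $\tMCX{i}$ on the ancilla, but appends \emph{nothing} for $\tZ,\tS,\tT$, precisely so that $D$ survives. On the example it outputs $\tCNOT[q,a];\tT[a];\tCNOT[q,a]$, which maps $\ket{0}_a\ket{\varphi}$ to $\ket{0}_a\,\tT\ket{\varphi}$. The missing idea, which the paper's proof uses, is that the tail is a permutation composed with a diagonal phase. The diagonal gates do not change the computational-basis trajectory of the ancilla, so inverting only the permutation part restores $\ket{0}_a$ and leaves exactly the phases $e^{i\theta_j}$ on the working register.

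The same error breaks your dirty case. Your $P_0$ is built from the clean prefix and implements $V$, not $DV$: again the identity in the example, whereas the required operator is $I_a\otimes\tT[q]$. Moreover, (R-4) and (R-5) cannot perform the tail reordering you describe. They only commute an ancilla gate past a \emph{working-targeting} gate, and never move a phase on $a$ across an $\tMCX{i}$ that targets $a$.

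The paper's extra step for dirty ancillas is instead the phase-reordering procedure with controlled phases, for example $\tCNOT[p,t];V[t]\equiv V[p];V[t];\mathtt{CV}^\dagger[p,t];\mathtt{CV}^\dagger[p,t];\tCNOT[p,t]$. This brings the tail into diagonal-then-permutation form. These rewrites generate phase gates on working qubits (here $V[p]$) that carry $D$ and must be kept; only the gates touching the ancilla are then deleted. On the example this yields exactly $\tT[q]$. So in both the clean and dirty cases you must treat the ancilla phases as part of the functionality, rather than undoing or discarding them.
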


\subsection{Case Study}

\paragraph{Normalization}
\Cref{fig:rwun-case-normalize} visualizes the execution of \cref{alg:rewrite-normalization} on the running example.
Each subfigure shows the \emph{entire circuit} after completing \emph{one} rewrite step.
We use the following visual conventions throughout:
(i) the dotted box highlights the fragment produced by rewriting the \emph{previous} step's violating pair;
(ii) the dashed box marks the leftmost violating pair \emph{to be rewritten next};
(iii) the solid box indicates the currently established prefix/suffix, which grows monotonically and will never participate in future violating pairs.
At every step, all violating pairs lie in the middle region between the blue prefix and suffix.

\begin{figure}[t]
\centering
\setlength{\tabcolsep}{2pt}
\renewcommand{\arraystretch}{1.0}
\begin{tabular}{cc}
\begin{subfigure}{0.48\linewidth}
\centering
\begin{quantikz}[row sep=0.15cm]
    \lstick{$q_1$} & \qw       & \ctrl{2}\gategroup[wires=5,steps=2,style={dashed,rounded corners,draw=red, fill=none, inner xsep=2pt},label style={label position=above,anchor=north,xshift=-0.1cm,yshift=0.4cm}]{{\rm \red{violating}}}  & \qw       & \ghost{}\qw     \\
    \lstick{$q_2$} & \qw       & \ctrl{1}   & \qw      & \qw     \\
    \lstick{$a$}   & \gate{Z}  & \targ{}    & \ctrl{2} & \qw     \\
    \lstick{$q_3$} & \qw       & \qw        & \ctrl{1} & \qw     \\
    \lstick{$t$}   & \qw       & \qw        & \targ{}  & \qw
\end{quantikz}
\caption{Initialization: the leftmost violating pair to be rewritten (dashed box).\\ }
\end{subfigure}
\ &\ 
\begin{subfigure}{0.48\linewidth}
\centering
\begin{quantikz}[row sep=0.15cm]
    \lstick{$q_1$} & \qw\gategroup[wires=5,steps=2,style={dashed,rounded corners,draw=red, fill=none, inner xsep=2pt},label style={label position=above,anchor=north,xshift=-0.1cm,yshift=0.4cm}]{{\rm \red{violating}}}  & \qw\gategroup[wires=5,steps=3,style={dotted,rounded corners, fill=none, inner sep=1pt},label style={label position=above,anchor=north,xshift=-0.1cm,yshift=0.4cm}]{}      & \ctrl{4} & \ctrl{2}\gategroup[wires=5,steps=1,style={solid,rounded corners,draw=blue, fill=none, inner xsep=2pt},label style={label position=above,anchor=north,xshift=-0.1cm,yshift=0.4cm}]{{\rm \blue{suffix}}}    & \ghost{}\qw     \\
    \lstick{$q_2$} & \qw      & \qw      & \ctrl{3} & \ctrl{1}    & \qw     \\
    \lstick{$a$}   & \gate{Z} & \ctrl{2} & \qw      & \targ{}     & \qw     \\
    \lstick{$q_3$} & \qw      & \ctrl{1} & \ctrl{1} & \qw         & \qw     \\
    \lstick{$t$}   & \qw      & \targ{}  & \targ{}  & \qw         & \qw
\end{quantikz}
\caption{After one rewrite by~\cref{rule:R-2}: the new leftmost violating pair (dashed box); the rewritten fragment (dotted box); the current suffix (solid box).}
\end{subfigure}
\\[2pt]
\begin{subfigure}{0.48\linewidth}
\centering
\begin{quantikz}[row sep=0.15cm]
    \lstick{$q_1$} & \qw\gategroup[wires=5,steps=1,style={solid,rounded corners,draw=blue, fill=none, inner xsep=2pt},label style={label position=above,anchor=north,xshift=-0.1cm,yshift=0.4cm}]{{\rm \blue{prefix}}}\gategroup[wires=5,steps=2,style={dotted,rounded corners, fill=none, inner sep=1pt},label style={label position=above,anchor=north,xshift=-0.1cm,yshift=0.4cm}]{}      & \qw\gategroup[wires=5,steps=2,style={dashed,rounded corners,draw=red, fill=none, inner xsep=2pt},label style={label position=above,anchor=north,xshift=-0.1cm,yshift=0.4cm}]{{\rm \red{violating}}}      & \ctrl{4} & \ctrl{2}\gategroup[wires=5,steps=1,style={solid,rounded corners,draw=blue, fill=none, inner xsep=2pt},label style={label position=above,anchor=north,xshift=-0.1cm,yshift=0.4cm}]{{\rm \blue{suffix}}}    & \ghost{}\qw     \\
    \lstick{$q_2$} & \qw      & \qw      & \ctrl{3} & \ctrl{1}    & \qw     \\
    \lstick{$a$}   & \ctrl{2} & \gate{Z} & \qw      & \targ{}     & \qw     \\
    \lstick{$q_3$} & \ctrl{1} & \qw      & \ctrl{1} & \qw         & \qw     \\
    \lstick{$t$}   & \targ{}  & \qw      & \targ{}  & \qw         & \qw
\end{quantikz}
\caption{After another rewrite by~\cref{rule:R-5}: updated violating pair (dashed box); newly rewritten fragment (dotted box); current prefix/suffix (solid boxes).}
\end{subfigure}
\ &\ 
\begin{subfigure}{0.48\linewidth}
\centering
\begin{quantikz}[row sep=0.15cm]
    \lstick{$q_1$} & \qw\gategroup[wires=5,steps=2,style={solid,rounded corners,draw=blue, fill=none, inner xsep=2pt},label style={label position=above,anchor=north,xshift=-0.1cm,yshift=0.4cm}]{{\rm \blue{prefix}}}                & \ctrl{4}\gategroup[wires=5,steps=2,style={dotted,rounded corners, fill=none, inner sep=1pt},label style={label position=above,anchor=north,xshift=-0.1cm,yshift=0.4cm}]{} & \qw\gategroup[wires=5,steps=2,style={solid,rounded corners,draw=blue, fill=none, inner xsep=2pt},label style={label position=above,anchor=north,xshift=-0.1cm,yshift=0.4cm}]{{\rm \blue{suffix}}}     & \ctrl{2}    & \ghost{}\qw     \\
    \lstick{$q_2$} & \qw      & \ctrl{3} & \qw      & \ctrl{1}    & \qw     \\
    \lstick{$a$}   & \ctrl{2} & \qw      & \gate{Z} & \targ{}     & \qw     \\
    \lstick{$q_3$} & \ctrl{1} & \ctrl{1} & \qw      & \qw         & \qw     \\
    \lstick{$t$}   & \targ{}  & \targ{}  & \qw      & \qw         & \qw
\end{quantikz}
\caption{After the final rewrite by~\cref{rule:R-5}: $G'$ is in normal form, with a prefix--suffix separation (solid boxes).\\ }
\end{subfigure}
\vspace{-0.4cm}
\end{tabular}
\caption{Rewrite-based normalization on a running example.
Each panel is the circuit after one rewrite step.
Dotted box highlights the rewrite result of the \emph{previous} panel's violating pair; dashed box marks the next leftmost violating pair.
The prefix/suffix grows monotonically, visualizing progress toward the normal form.}
\label{fig:rwun-case-normalize}
\vspace{-0.5cm}
\end{figure}

\paragraph{Synthesis}
\Cref{fig:rwun-case-synth} illustrates how the normal form suggests regular synthesis.
Since $\sem{G'}=\sem{G}$, we can synthesize an uncomputation for $G$ \emph{by operating on $G'$ only} and directly output the resulting circuit as the uncomputed version of $G$ (without modifying $G$ itself).

For clean ancillas, this is immediate from the normal form: we append, in reverse order, the inverses of all \qfree gates in the suffix that target the ancilla.
(In this example, for simplicity, this reduces to removing the last $\tToffoli$ gate, shown in gray.)
For dirty ancillas, informally, we eliminate \emph{all} gates that touch the ancillas.

\begin{figure}[t]
\centering
\begin{subfigure}{0.48\linewidth}
\centering
\begin{quantikz}[row sep=0.15cm]
    \lstick{$q_1$} & \qw\gategroup[wires=5,steps=2,style={solid,rounded corners,draw=blue, fill=none, inner xsep=2pt},label style={label position=above,anchor=north,xshift=-0.1cm,yshift=0.4cm}]{{\rm \blue{prefix}}}                & \ctrl{4} & \qw\gategroup[wires=5,steps=2,style={solid,rounded corners,draw=blue, fill=none, inner xsep=2pt},label style={label position=above,anchor=north,xshift=-0.1cm,yshift=0.4cm}]{{\rm \blue{suffix}}}                                      & \fctrl{2}   & \ghost{}\qw     \\
    \lstick{$q_2$} & \qw      & \ctrl{3} & \qw      & \fctrl{1}   & \qw     \\
    \lstick{$a$}   & \ctrl{2} & \qw      & \gate{Z} & \ftarg      & \qw     \\
    \lstick{$q_3$} & \ctrl{1} & \ctrl{1} & \qw      & \qw         & \qw     \\
    \lstick{$t$}   & \targ{}  & \targ{}  & \qw      & \qw         & \qw
\end{quantikz}
\caption{Clean uncomputation.\\ }
\end{subfigure}
\hfill
\begin{subfigure}{0.48\linewidth}
\centering
\begin{quantikz}[row sep=0.15cm]
    \lstick{$q_1$} & \qw\gategroup[wires=5,steps=2,style={solid,rounded corners,draw=blue, fill=none, inner xsep=2pt},label style={label position=above,anchor=north,xshift=-0.1cm,yshift=0.4cm}]{{\rm \blue{prefix}}}                 & \ctrl{4} & \qw\gategroup[wires=5,steps=2,style={solid,rounded corners,draw=blue, fill=none, inner xsep=2pt},label style={label position=above,anchor=north,xshift=-0.1cm,yshift=0.4cm}]{{\rm \blue{suffix}}}                                         & \fctrl{2}   & \ghost{}\qw     \\
    \lstick{$q_2$} & \qw       & \ctrl{3} & \qw        & \fctrl{1}   & \qw     \\
    \lstick{$a$}   & \fctrl{2} & \qw      & \fgate{Z}  & \ftarg      & \qw     \\
    \lstick{$q_3$} & \fctrl{1} & \ctrl{1} & \qw        & \qw         & \qw     \\
    \lstick{$t$}   & \ftarg    & \targ{}  & \qw        & \qw         & \qw
\end{quantikz}
\caption{Dirty uncomputation: further remove all ancilla-touching gates.}
\end{subfigure}
\caption{Regular synthesis once $G'$ is in normal form.
Faded gates indicate removal.}
\label{fig:rwun-case-synth}
\end{figure}

In the end, only $\tMCX{3}[q_1,q_2,q_3,t]$ remains, which is exactly the intended effect on the working qubits.
We treat $\tMCX{i}$ as an atomic gate here; its decomposition into $\tToffoli$ gates is orthogonal and has been extensively studied~\cite{gidney2015, conditionalcleanqubits, riseofconditional, zindorf2025efficientimplementationmulticontrolledquantum}.

\subsection{Discussion: Limitations and Potential}\label{sec:rwun-limits}

\paragraph{Limits of \RwUnx and failure conditions.}
The normal form captures a broader class of circuits than those reachable by \RwUnx.
\RwUnx is a leftmost-first, \emph{pairwise} normalization: it succeeds whenever every leftmost violating pair it encounters is locally normalizable, in which case repeated local progress reaches the global normal form.
However, global normalizability does not imply this pairwise \emph{continuity}: a circuit may have a normal-form witness, yet \RwUnx can get stuck on a leftmost pair that is not locally normalizable, even though a wider-scope rewrite could bypass it. For clarity, we list the failure conditions below.
\begin{enumerate}
  \item \emph{Uncovered violating pairs.}
  \begin{enumerate}
    \item \emph{Mutual control:}
    \[
      \tMCX{m}[\olq, a];\ \tMCX{n}[\olp, t]
      \quad\text{with}\quad
      a \in \olp \ \land\ t \in \olq.
    \]
    \Cref{fig:rwun-fail-circuit} shows an example.
    This local pattern alone does not rule out uncomputation; however, in many practical circuits it correlates with non-uncomputability.

    \item \emph{Commuting through $\tH$:}
    \[
      \tMCX{m}[\olq, a];\ \tH[t] \quad\text{with}\quad t \in \olq,
    \]
    for which no semantically equivalent circuit exists that satisfies the normal form in \cref{prop:quantumnormal}.
  \end{enumerate}

  \item \emph{Incompleteness of the normal form.}
  A $\tH$ gate targeting an ancilla is directly excluded by \cref{prop:quantumnormal}.
\end{enumerate}

\begin{figure}[tbp]
  \centering
  \begin{subfigure}[t]{0.31\textwidth}
    \begin{minipage}[t][1.7cm][t]{\linewidth}
      \vspace{0pt}
      \centering
      \begin{quantikz}[row sep=0.2cm]
        \lstick{$q$}  & \ctrl{1}  & \targ{}   & \qw \\
        \lstick{$a$}  & \targ{}   & \ctrl{-1} & \qw 
      \end{quantikz}
    \end{minipage}
    \caption{}
    \label{fig:rwun-fail-circuit}
  \end{subfigure}\hfill
  \begin{subfigure}[t]{0.31\textwidth}
    \begin{minipage}[t][1.7cm][t]{\linewidth}
      \vspace{0pt}
      \centering
      \includegraphics[height=1.6cm]{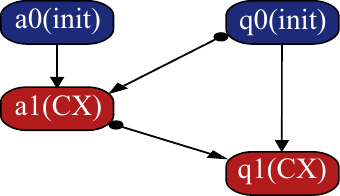}
    \end{minipage}
    \caption{}
    \label{fig:rwun-fail-graph-acyclic}
  \end{subfigure}\hfill
  \begin{subfigure}[t]{0.31\textwidth}
    \begin{minipage}[t][1.7cm][t]{\linewidth}
      \vspace{0pt}
      \centering
      \includegraphics[height=1.6cm]{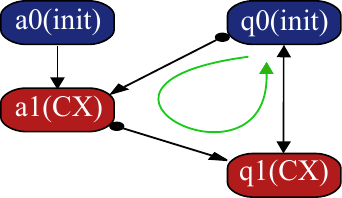}
    \end{minipage}
    \caption{}
    \label{fig:rwun-fail-graph-cycle}
  \end{subfigure}
  \caption{From left to right: a mutual-control circuit $C$ between an ancilla $a$ and a working qubit $q$, its circuit graph $G$, and the modified graph $G^\leftrightarrow$ where all target edges on working qubits are made bidirected; $G^\leftrightarrow$ contains an aw-cycle, highlighted in green.}
  \label{fig:rwun-failure-example}
  \vspace{-0.4cm}
\end{figure}

\paragraph{A sufficient success condition of \RwUnx.}

\RwUnx succeeds only if (i) the input contains no uncovered violating pair and (ii) rewriting never \emph{introduces} an uncovered one; the latter is subtle because compensating-gate insertions may create new violating pairs and cause the procedure to get stuck.
We therefore seek a sufficient condition that strengthens the input restriction so that ``no uncovered pair initially'' implies ``no uncovered pair will be introduced.''

We focus on \qfree circuits, where the two \qfree rules \cref{rule:R-2} and \cref{rule:R-3} form the core of \RwUnx and may insert a compensating gate.
We then capture the mutual-control pattern via the \emph{circuit graph}---acyclic directed graph---from~\cite{unqomp}.
If we make all target edges ($\to$) of working qubits bidirected ($\leftrightarrow$), mutual control corresponds to a directed cycle that visits both ancilla and working-qubit nodes (cf.~\cref{fig:rwun-failure-example}). Call such a cycle an \emph{aw-cycle} if it uses only target and control edges ($\ctlarrow$) (anti-dependency edges can be ignored here). Since normalization gets stuck only at an aw-cycle, by proving that \cref{rule:R-1,rule:R-2,rule:R-3} cannot introduce an aw-cycle when none exists originally, we can show that normalization never gets stuck and hence succeeds, which yields the following sufficient condition.

\begin{proposition}\label{prop:sufficient_rwun_success}
  Given a \qfree circuit $C$ and its circuit graph $G$, let $G^\leftrightarrow$ be obtained from $G$ by making all target edges on working qubits bidirected.
  If $G^\leftrightarrow$ contains no aw-cycle, then \RwUnx succeeds on $C$.
\end{proposition}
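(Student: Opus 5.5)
The argument is an invariant-preservation proof over the run of \cref{alg:rewrite-normalization}, using \cref{thm:normalize-soundness} for termination. Since $C$ is \qfree and built from $S_c$, no $\tH$ gates occur. The only way the algorithm can report failure is therefore to meet a leftmost violating pair that none of \cref{rule:R-1,rule:R-2,rule:R-3} covers. A violating pair $\tMCX{m}[\olq,a];\tMCX{n}[\olp,t]$ is covered unless $a\in\olp$ and $t\in\olq$, which is the mutual-control case. So it suffices to prove the following invariant for every intermediate circuit $C_k$ produced by the run:
\[
(\ast)\qquad G_k^\leftrightarrow \text{ contains no aw-cycle},
\]
where $G_k$ is the circuit graph of $C_k$. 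Combined with termination, $(\ast)$ gives success.

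\textbf{Step 1: mutual control forces an aw-cycle.} Take a pair with $a\in\olp$ and $t\in\olq$. The first gate contributes a control edge from the node of $t$ into the gate node and a target edge to the new node of $a$. The second gate contributes a control edge from that $a$-node and a target edge on $t$. Because target edges on working qubits are bidirected in $G^\leftrightarrow$, we can walk back from $t$'s new node to $t$'s old node. This closes a directed cycle that visits both an ancilla node and a working node, as in \cref{fig:rwun-failure-example}. Hence $(\ast)$ rules out uncovered pairs. The base case $k=0$ is exactly the hypothesis.

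\textbf{Step 2: each rule preserves $(\ast)$.} For \cref{rule:R-1}, the two gates act on overlapping wires only as controls. Swapping them changes at most the order of control edges along shared wires, so no new dependency path between ancilla and working nodes appears.

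For \cref{rule:R-2} and \cref{rule:R-3}, a compensating gate is inserted with controls $(\olq\cup\olp)\setminus\{a\}$ targeting $t$, or $(\olq\cup\olp)\setminus\{t\}$ targeting $a$. I would argue by contraposition. Suppose $G_{k+1}^\leftrightarrow$ has an aw-cycle. Every edge incident to the inserted gate can be simulated by a directed path in $G_k^\leftrightarrow$ through the original pair. For \cref{rule:R-2}, a control $q\in\olq$ of the new gate reaches $t$ in $G_k$ via $q\to a\to t$, since $a\in\olp$ controls $t$. For \cref{rule:R-3}, a control $p\in\olp$ reaches $a$ only through $t\in\olq$, via $p\to t\leftrightarrow t\to a$. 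This path uses the bidirected working-target edge, which is exactly why the bidirection was introduced. Replacing each new edge by its simulating path yields a closed walk in $G_k^\leftrightarrow$ that still meets both an ancilla and a working node, and such a walk contains an aw-cycle. This contradicts $(\ast)$ for $C_k$.

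\textbf{Main obstacle.} The hard part is making Step 2 rigorous given how wire nodes are versioned in the circuit graph of \cite{unqomp}. Reordering gates renames the intermediate nodes on $a$ and $t$, so the simulating paths must be stated up to this renaming. I would handle this by working with a quotient graph in which consecutive versions of a working wire are merged, which is harmless because their connecting edges are already bidirected. Ancilla versions are kept distinct. I would then check that an aw-cycle in the quotient corresponds to one in $G^\leftrightarrow$, and that the simulation argument above is valid in the quotient.
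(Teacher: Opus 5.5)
Your proposal is correct and follows the paper's proof. The paper uses the same three ingredients: a stuck pair must be a mutual-control pair, which already forms an aw-cycle in $G^\leftrightarrow$; the invariant is kept by \cref{lem:rule-donot-introduce-cycle}, which says none of \cref{rule:R-1,rule:R-2,rule:R-3} creates an aw-cycle; and termination comes from \cref{thm:normalize-soundness}. The only difference is that the paper proves that lemma through a sequence of elementary graph edits (drop the control edge from $a$, swap, re-attach it at $a_0$, insert the new node on $t$, add its control edges), each justified by the same path-replacement argument you apply in one step, so in your version also simulate the re-pointed control edge of the moved gate in \cref{rule:R-2} (from $a_0$ into $t$, via $a_0\to a_1\to t$), not only the edges of the inserted gate.
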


\paragraph{Potential and Challenge.}

\Cref{prop:complete-qfreenormal} suggests that uncomputation for \qfree circuits could be solved by finding an equivalent witness circuit in the required target form, from which both existence and synthesis follow directly. This motivates the design of a stronger rewriting procedure for discovering such a witness, potentially informed by complete rewriting systems (possibly with rules in~\cite{completeReversibleCircuitRules,completeQuantumRules,ExtAndSimpcompleteQuantumRules,minimalCompleteQuantumRules}) developed for other circuit tasks. However, important challenges remain: in many such settings, rewriting is guided by an explicit target or by two concrete circuits that must be transformed into a common form, whereas in our setting only a single circuit is given, and the desired witness structure is itself unknown.

\section{Evaluation}\label{sec:evaluation}

We implement \RwUnx as an algorithmic uncomputation procedure in Qiskit \cite{qiskit}. It takes a circuit as input and either outputs an uncomputed version upon success, or reports an error upon failure.
We use Qiskit’s built-in \texttt{AncillaRegister} type to explicitly mark ancillas. Our evaluation focuses on the following research questions (RQs):

\begin{itemize}
    \item \textbf{RQ1 (Applicability)}: Is our method applicable to a wide range of circuits?
    \item \textbf{RQ2 (Scalability)}: Is the method scalable enough for realistic circuit sizes?
\end{itemize}

\subsection{Evaluation setting}

\paragraph{Implementation Details}

We implement \RwUnx with four heuristic strategies for deciding the uncomputation order of ancillas. \textsc{Sequential} uncomputes the ancillas in their original order. \textsc{Reverse} uncomputes them in the reverse order. \textsc{Jointly} uncomputes all ancillas jointly, most closely matching the procedure in \cref{alg:rewrite-normalization} and aligning with \cref{prop:sufficient_rwun_success}. \textsc{Lifetime} selects the ancilla with the shortest remaining lifetime to uncompute first, aiming to be more efficient. The first three strategies are designed for clean uncomputation, while \textsc{Lifetime} is applied to dirty uncomputation for \qfree circuits.
Moreover, \RwUnx cancels adjacent pairs of inverse gates and can reorder commuting gates (\cref{rule:R-1}) to make such cancellations possible.

\paragraph{Benchmarks}
\begin{itemize}
    \item \textit{Benchmark 1: practical circuits}. We partially select circuit cases with complex dependency from \Unqompx~\cite{unqomp} and \Reqompx~\cite{reqomp} and further provide dirty implementations without uncomputation. 
    We introduce new benchmark circuits with complex dependency: Incrementer, HighestBitConstAdder~\cite{factoring2n+2}, Gidney's Incrementer~\cite{gidney2015}, ConditionalMCS~\cite{conditionalcleanqubits}, and RiseConditionalMCS~\cite{riseofconditional}.
    \item \textit{Benchmark 2: randomly generated \qfree circuits}. 
    To further evaluate the applicability and scalability of our method on \qfree circuits, we generate random batches of \qfree circuits (seed 42). 
    Each batch contains 100 \emph{uncomputable} circuits with fixed width and gate count, using gates from $\{\tX, \tCNOT, \tToffoli\}$. 
    For a given width and gate count, we sample each gate uniformly from this set and then sample its operand qubits uniformly at random (respecting the gate arity). 
    We use a SAT solver to decide whether a sampled circuit is uncomputable based on \cref{prop:juequalrkf}, and keep sampling until we collect 100 uncomputable circuits for the batch.
    \item \textit{Benchmark 3: randomly generated quantum circuits}. 
    We also generate random batches of general quantum circuits (seed 42) using the same sampling strategy, with gates from $\{\tX, \tZ, \tH, \tS, \tT, \tCNOT, \tToffoli\}$. 
    We additionally enforce that $\tH$ never acts on an ancilla. 
    Unlike Benchmark~2, we do not filter these circuits to be uncomputable, since deciding uncomputability for general quantum circuits is substantially more expensive.
\end{itemize}

\paragraph{Baseline}
Our approach is most closely related to \Unqompx~\cite{unqomp} and \Reqompx~\cite{reqomp}, which also ensure safety for synthesized outputs. 
Since \Reqompx has been shown to offer broader applicability than \Unqompx~\cite{reqomp}, we adopt it as our baseline for comparison.
Note that, \RwUnx is synthesis-oriented and operates at a fine-grained level, whereas Silq~\cite{silq} is a high-level language without any guidance for practical synthesis, making it not the preferred baseline for our evaluation.

\paragraph{Metrics}

To evaluate the applicability of \RwUnx, we introduce a dependency-complexity metric, \bAwDep, defined on the circuit graph (see details in appendix). 
Intuitively, \bAwDep captures how difficult it is to uncompute an ancilla (in the following, $a$) by measuring how often the qubits it depends on (here, $r$ and $q$) are modified \emph{in a way that can causally propagate back to the ancilla}.
The propagation-back requirement is crucial. A downstream modification on a depended qubit can be irrelevant if it lies in a region that is independent of the ancilla and thus can be safely ignored during uncomputation (e.g., the $\tX[r]$ outside the solid box). In contrast, a modification matters when it can later influence the ancilla again: in the figure, the $\tCNOT[r,q]$ may affect $q$, and the subsequent $\tCNOT[q,a]$ (dashed box) can feed this changed state back to $a$. 
Accordingly, the $\tCNOT[r,q]$ contributes $+1$ to \bAwDep, whereas the $\tX[r]$ contributes $+0$.

\begin{wrapfigure}{r}{6.5cm}
    \centering
    \vspace{-0.7cm}
    \begin{minipage}[t]{0.40\textwidth}
        \begin{quantikz}[row sep=0.15cm]
            \lstick{$r$}          & \ctrl{2}\gategroup[wires=4,steps=5,style={solid,rounded corners,draw=blue, fill=none, inner xsep=1pt},label style={label position=above,anchor=north,xshift=-0.1cm,yshift=0.4cm}]{}  & \qw       & \bluectrl{1}\gategroup[wires=1,steps=1,style={solid,rounded corners,draw=none, fill=none, inner xsep=2pt},label style={label position=above,anchor=north,xshift=0.0cm,yshift=-1.35cm}]{{\rm \blue{+1}}}    & \qw             &  \qw      & \gate[style={draw=blue, text=blue, solid}]{\blue{X}}\gategroup[wires=1,steps=1,style={solid,rounded corners,draw=none, fill=none, inner xsep=2pt},label style={label position=above,anchor=north,xshift=0.0cm,yshift=0.0cm}]{{\rm \blue{+0}}}   & \qw \\
            \lstick{$q$}          & \qw       & \ctrl{1}  & \bluetarg       & \ctrl{1}\gategroup[wires=2,steps=1,style={dashed,rounded corners,draw=red, fill=none, inner xsep=2pt},label style={label position=above,anchor=north,xshift=-0.1cm,yshift=0.4cm}]{}        &  \qw      & \qw              & \qw \\
            \lstick{$a$}          & \targ{}   & \targ{}   & \qw             & \targ{}         &  \ctrl{1} & \qw              & \qw \\
            \lstick{$p$}          & \qw       & \qw       & \qw             & \qw             &  \targ{}  & \qw              & \qw
        \end{quantikz}
    \end{minipage}
    \caption{Illustration of \bAwDep computation.}
    \label{fig:aw-dep-case}
    \vspace{-1.0cm}
\end{wrapfigure}

We group all reported metrics as follows:

\begin{itemize}
    \item \textbf{Aw-Dep}: the maximum dependency-complexity among all ancillas of a circuit, since non-uncomputability is typically driven by the “hardest” ancilla.
    \item \textbf{Aw-Cyc}: whether an aw-cycle exists in a circuit, as described in \cref{prop:sufficient_rwun_success}.
    \item \textbf{Circuit Width} and \textbf{Size}.
\end{itemize}

\paragraph{Execution Environment} All experiments are conducted on a Linux environment under WSL2, running on an Intel Core i9-14900HX CPU with 64\,GB of RAM.

\subsection{Results and Analysis}

\begin{table}[tb]
    \centering
    \small
    \begin{adjustbox}{max width=\textwidth}
        \begin{tabular}{lccccccc}
        \toprule
        \multirow{2}{*}{\textbf{Circuit}} & \multirow{2}{*}{\textbf{Param}($n$)} & \multirow{2}{*}{\bAwDep} & \multicolumn{4}{c}{\textbf{\RwUnx}} & \multirow{2}{*}{\textbf{\Reqompx}} \\
        \cmidrule(lr){4-7}
                                &        &    & \textsc{Sequential} & \textsc{Reverse} & \textsc{Jointly} & \textsc{Lifetime} \\
        \midrule
        Clean IntegerComparator & controls       & 0                     & 10               & 100             & 20             & 20               & \textbf{170}       \\
        Clean MCX               & controls       & 0                     & 30               & 460             & 210            & \textbf{1000+}   & 170                \\
        MCRY($\pi / 2$)         & controls       & 0                     & \textbf{1000+}   & \textbf{1000+}  & \textbf{1000+} & \textbf{1000+}   & 14                 \\
        Clean Incrementer       & operand        & 0                     & 7                & 100              & 16             & \textbf{750}     & 190                \\
        Deutsch-Jozsa           & controls       & 0                     & \ding{55}        & \ding{55}       & \ding{55}      & \textbf{1000+}   & 170                \\
        Grover's algorithm~\cite{grover} & controls       & 0                     & \ding{55}        & \ding{55}       & \ding{55}      & \textbf{15}      & 11*               \\
        Dirty MCX               & controls       & 1                     & 70               & 70              & 100            & \textbf{1000+}   & 40*               \\
        Clean Adder             & per operand    & 1                     & 6                & 70              & 9              & 6                & \textbf{170}       \\  
        Dirty IntegerComparator & controls       & 4                     & 13               & 13              & 19             & 15               & \textbf{30*}       \\
        HighestBitConstAdder    & operand        & 4                     & 9                & 8               & 12             & 9                & \textbf{20*}       \\
        RiseConditionalCleanMCS & controls       & 5                     & \textbf{11}      & 9               & 8              & 9                & \ding{55}          \\
        ConditionalCleanMCS     & controls       & 6                     & \textbf{1000+}   & \textbf{1000+}  & \textbf{1000+} & \textbf{1000+}   & \ding{55}          \\
        RiseConditionalDirtyMCS & controls       & 7                     & \textbf{7}       & \textbf{7}      & \textbf{7}     & \textbf{7}       & \ding{55}          \\
        ConditionalDirtyMCS     & controls       & 11                    & \textbf{1000+}   & \textbf{1000+}  & \textbf{1000+} & \textbf{1000+}   & \ding{55}          \\
        Gidney's Incrementer    & operand        & $61^{\boldsymbol{\circlearrowright}}$    & \ding{55}       & \ding{55}      & \textbf{39}      & \ding{55}        & \ding{55}    \\
        Dirty Adder             & per operand    & 601                   & 6                & 5               & 5              & \textbf{9}       & \ding{55}          \\
        Dirty Incrementer       & operand        & 803                   & 9                & 9               & 9              & \textbf{50}      & \ding{55}          \\
        \bottomrule
    \end{tabular}
    \end{adjustbox}
    \caption{Largest scale parameter ($n$) of circuits with ancillas uncomputed by \RwUnx and \Reqompx within 30 seconds (\textbf{RQ2}), sorted in increasing order of \bAwDep. The concrete parameters are reported in appendix. $61^{\boldsymbol{\circlearrowright}}$ indicates existence of aw-cycle. We stop scaling at $n=1000$. $n^*$ indicates that \Reqompx fails by hitting the recursion-depth limit before reaching the 30-second budget. \ding{55} indicates failure; fewer failures imply better applicability (\textbf{RQ1})}
    \label{tab:practical-circuits}
    \vspace{-0.6cm}
\end{table}

\begin{figure}[t]
    \centering
    \vspace{-0.5cm}
    \includegraphics[width=1.0\linewidth]{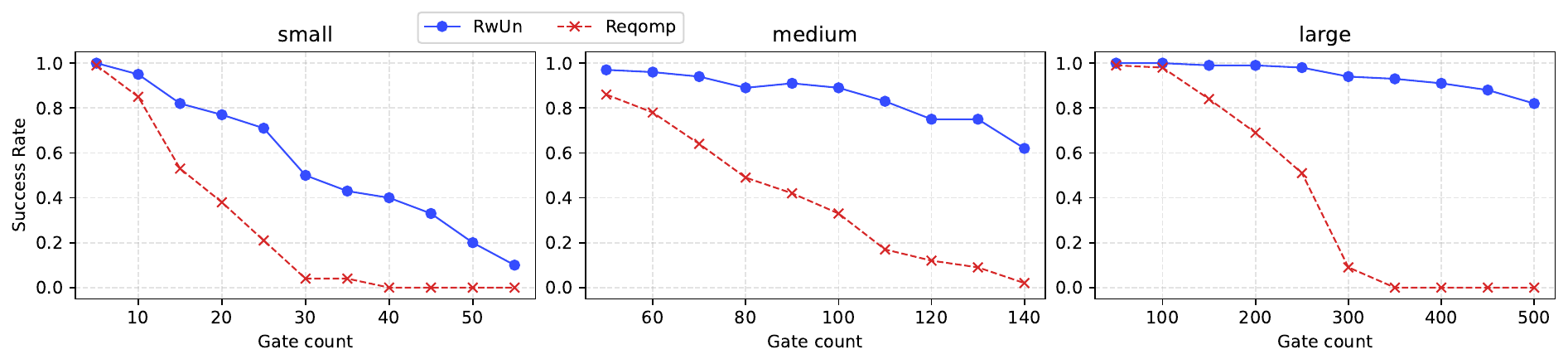}
    \includegraphics[width=1.0\linewidth]{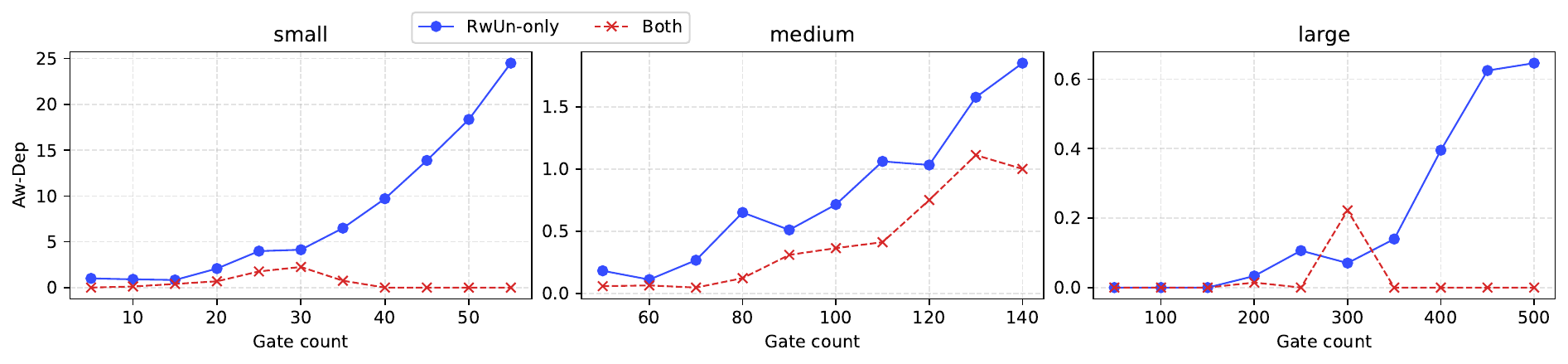}
    \includegraphics[width=1.0\linewidth]{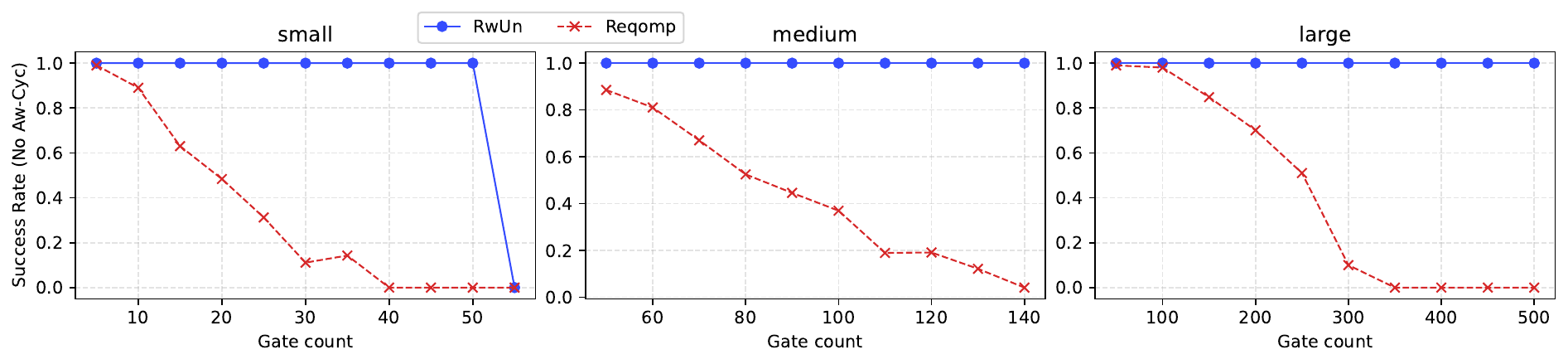}
    \includegraphics[width=1.0\linewidth]{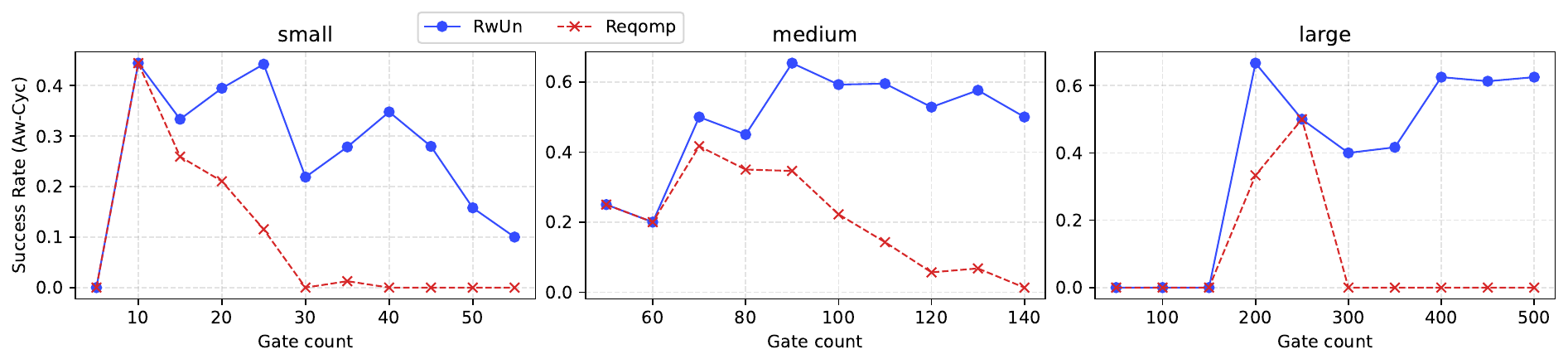}
    \vspace{-0.5cm}
    \caption{
    Uncomputation results of \RwUnx (\textsc{Jointly}) and \Reqompx on random \qfree circuits.
    The three columns correspond to increasing circuit width: small (5 working qubits + 5 ancillas), medium (40+40), and large (200+200).
    The four rows (top to bottom) report: (1) overall success rate (number of successful runs out of 100 uncomputable circuits);
    (2) \bAwDep, reported as the average \bAwDep over successful runs;
    (3) success rate on the subset of these 100 circuits without aw-cycles;
    and (4) success rate on the subset of these 100 circuits with aw-cycles.
    In Row~2, the solid line with dot markers counts circuits successfully uncomputed by \emph{both} tools, while the dashed line with x markers counts circuits uncomputed \emph{only} by \RwUnx (there are no instances where only \Reqompx succeeds).
    In the other three rows, solid lines denote \RwUnx and dashed lines denote \Reqompx.
    Higher values indicate better applicability (\textbf{RQ1}).
    }
    \label{fig:random-qfree}
    \vspace{-0.6cm}
\end{figure}

\begin{figure}[tb]
    \centering
    \includegraphics[width=1.0\linewidth]{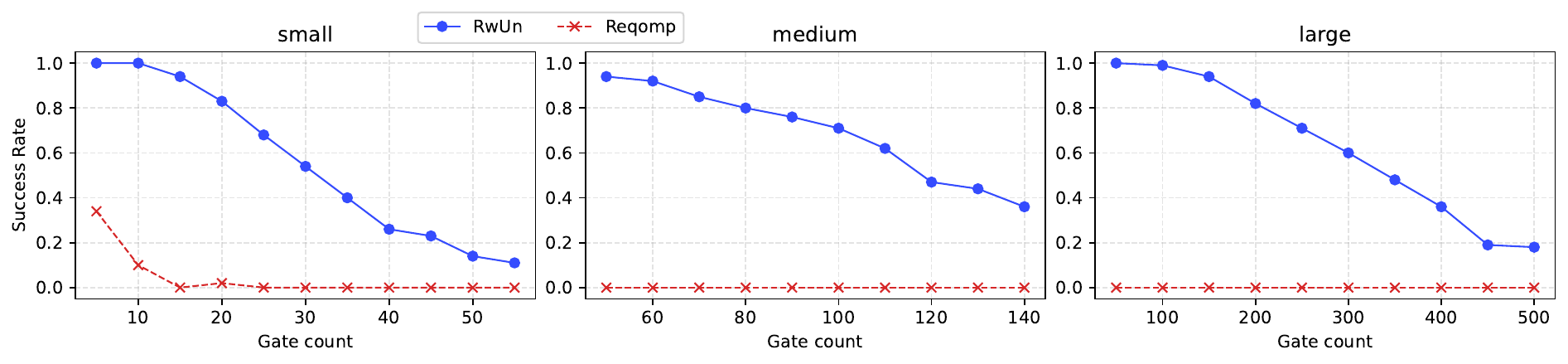}
    \vspace{-0.9cm}
    \caption{
    Success rate of \RwUnx (\textsc{Lifetime}) and \Reqompx on random quantum circuits (\textbf{RQ1}).
    }
    \label{fig:random-quantum}
    \vspace{-0.6cm}
\end{figure}

\paragraph{Applicability}

Table~\ref{tab:practical-circuits} summarizes the applicability of \RwUnx in comparison with \Reqompx on Benchmark~1. With its full set of heuristics, \RwUnx succeeds on all cases, whereas \Reqompx fails on 7/17 of them. We also observe \Reqompx fails on all cases with \bAwDep \(\ge 5\).

Figure~\ref{fig:random-qfree} summarizes the applicability of \RwUnx (\textsc{Jointly}) and \Reqompx on Benchmark~2, and shows a clear overall advantage of \RwUnx (Row~1).
Row~2 helps explain this gap: circuits that are uncomputed only by \RwUnx typically have higher dependency complexity than those successfully handled by both tools.
Moreover, \RwUnx (\textsc{Jointly}) covers \qfree circuits without aw-cycles (cf.~\cref{prop:sufficient_rwun_success}), which is precisely reflected in Row~3: \RwUnx succeeds on all no-aw-cycle instances, and its only zero point occurs when all 100 circuits in that batch contain aw-cycles.
Even on the complementary subset with aw-cycles (Row~4), \RwUnx still achieves a generally higher success rate than \Reqompx. 

\Cref{fig:random-quantum} illustrates the applicability of \RwUnx (\textsc{Lifetime}) in comparison with \Reqompx on Benchmark~3, where phase-type gates are allowed to act on ancillas—an extension that is outside the scope of \Reqompx.

The failure conditions are discussed in \cref{sec:rwun-limits}: the presence of aw-cycles for \qfree circuits, and $\tH$ gates additionally for general quantum circuits. However, the cancellation-based optimization in \RwUnx can sometimes eliminate an aw-cycle by cancelling some of its constituent gates, thereby breaking the cycle. This explains cases such as Gidney’s Incrementer in \cref{tab:practical-circuits}, and also why \RwUnx achieves a higher success rate on random \qfree circuits than what would be expected if it were strictly constrained by the aw-cycle existence rate.

Overall, our evaluation demonstrates a clear applicability advantage of \RwUnx over \Reqompx in two respects. First, on \qfree circuits, \RwUnx is largely sensitive to aw-cycles, whereas \Reqompx not only fails in the presence of aw-cycles but also fails on additional instances without aw-cycles---an effect captured by \bAwDep. Second, on general quantum circuits, \RwUnx can handle a broad range of phase-type gates, which are outside the scope of \Reqompx.

\paragraph{Scalability}

As a reference, \Reqompx completes all its benchmarks except Clean MCX within 5 seconds, and requires about 30 seconds for Clean MCX. Therefore, we report the largest circuit scale that can be executed within 30 seconds for Benchmark 1.

From the results for Benchmark 1 in \cref{tab:practical-circuits}, \RwUnx with the \textsc{Lifetime} heuristic is scalable in many cases. 
\RwUnx becomes inefficient for instances where too many intermediate gates are introduced while rewriting. 
\Reqompx, on the other hand, tends to be consistently scalable in most cases where it is applicable and outperforms \RwUnx on four cases.

For Benchmarks~2 and~3, the average uncomputation time at each batch point is computed only over successful runs (failures are excluded).
On Benchmark~2, the maximum average time is 3\,s for \Reqompx and 430\,s for \RwUnx (\textsc{Jointly}).
On Benchmark~3, the averages are below 1\,s in most cases, and the maximum is 1.8\,s for \RwUnx (\textsc{Lifetime}).

The main scalability bottleneck of \RwUnx comes from the intermediate gates introduced by the two \qfree rules. These rules increase the circuit size and, in the worst case, can lead to an exponential blow-up (informally).
In practice, \RwUnx scales well on circuits where these two rules are applied infrequently, and its simple optimization can also reduce the intermediate circuit size.

\section{Related Work}\label{relatedwork}

\paragraph{Early attempts for automatic uncomputation}
Quipper~\cite{quipper} provides convenient constructs for inserting uncomputation blocks into user-defined \qfree circuits. Q\#~\cite{qpunch} extends this idea by allowing programmers to explicitly declare the use of dirty ancillas. SQUARE~\cite{square} focuses on the allocation and reclamation of clean ancillas to enable qubit reuse and reduce resource costs. However, these early attempts do not guarantee the correctness or safety of uncomputation. 

\paragraph{Safe automatic uncomputation for clean ancillas}
Silq \cite{silq} is the first high-level quantum programming language that automatically guarantees the existence of uncomputation via type checking. Silq also has intuitive semantics that significantly reduce programmer effort.
More recently, Qurts \cite{qurts} utilizes the lifetime mechanism inspired by Rust's type system, to enforce that qubits controlling clean ancillas remain constant before uncomputation through type checking. Qurts supports flexible uncomputation strategies by integrating the reversible pebble game, though it currently lacks compiler support.

\Unqompx~\cite{unqomp} is the first approach to automatically synthesize uncomputation within quantum circuits. It represents circuits as graphs and identifies appropriate positions to insert uncomputation gates. \Reqompx~\cite{reqomp} extends this by introducing a well-valued circuit graph, which more precisely tracks the state of clean ancillas and supports recycling. \Reqompx also focuses on recycling and reusing ancillas, achieving significant reductions in ancilla counts.
Qrisp~\cite{qrisp} builds upon \Unqompx to provide a high-level programming framework that supports automatic uncomputation. \cite{modularuncomp} proposes an intermediate representation (IR) and a modular algorithm for uncomputation synthesis.

\paragraph{Reduce uncomputation cost}
Measurement-based uncomputation (MBU)~\cite{Gidney2018halvingcostof, gidney2019windowedquantumarithmetic, gidney2019approximateencodedpermutationspiecewise, luongo2024measurementbaseduncomputationquantumcircuits, Kornerup2025tightboundsspooky} reduces cost probabilistically through measurement. However, MBU requires intricate circuit design and is unsuitable for dirty ancillas due to measurement collapse.
Sharma et al.~\cite{spare} propose SPARE, a rewrite-based optimizer that targets compute–uncompute sequences of clean ancillas. By restructuring these sequences and leveraging ancilla-state information, SPARE significantly reduces ancilla count, gate count, and circuit depth. Many studies have explored qubit reuse techniques to minimize qubit count~\cite{wirerecycling, complierqubitreuse, qubitreusecompilation, quantumcircuitresizing}. Jiang~\cite{qubitrecycling} introduces qubit dependency graphs (QDGs) as an abstraction for qubit recycling, proves the NP-hardness of the problem, and proposes a heuristic solver that achieves near-optimal recycling performance.

\paragraph{Verification of safe uncomputation}
REVS~\cite{revs} and ReVerC~\cite{reverc} are reversible circuit compilers that support automatic uncomputation for clean ancillas, with the latter being formally verified.
ReQwire~\cite{reqwire} presents methods for verifying that a manual uncomputation of clean ancillas is safe.
Su et al. \cite{su2024bibasedreasoningquantumprograms} model the allocation and deallocation of dirty ancillas as heap manipulations. They first develop a quantum BI-style logic to characterize heap operations, and then build on it to present a quantum separation logic that enables reasoning about the correct usage of dirty ancillas. Su et al. \cite{su2025borrowingdirtyqubitsquantum} formalize the semantics of dirty-qubit borrowing in quantum programming languages and introduce a notion of safe uncomputation for dirty qubits. They also propose an efficient algorithm for verifying safe uncomputation of \qfree circuits. 
Recently, Li et al. \cite{Li2026AncillaSafety} verify clean and dirty ancilla safety through Pauli-X/Z commutativity checks.

\section*{Data-Availability Statement}
The implementation and benchmarks described in \cref{sec:evaluation} were developed and evaluated as part of this work.
The appendices and artifacts, including source code and experimental scripts, have been submitted as Supplementary Materials and are currently under review.
They will be released publicly after the review process is completed.

\section*{Acknowledgements}

We thank the anonymous reviewers for their constructive feedback, which significantly helped us improve the paper.
We are grateful to Bonan Su for initial discussions and assistance with the submission, and to Minbo Gao and Qifan Huang for helpful technical discussions.
This work was supported in part by the Beijing Major Science and Technology Project under Contract no. Z251100008125035. This work was supported by Beijing Academy of Artificial Intelligence (BAAI).

\bibliography{references}

@techreport{deutsch,
author = {Deutsch, D and Jozsa, R},
title = {Rapid Solution of Problems by Quantum Computation},
year = {1992},
publisher = {University of Bristol},
address = {GBR}
}

@article{grover,
  title = {Quantum Mechanics Helps in Searching for a Needle in a Haystack},
  author = {Grover, Lov K.},
  journal = {Phys. Rev. Lett.},
  volume = {79},
  issue = {2},
  pages = {325--328},
  numpages = {0},
  year = {1997},
  month = {Jul},
  publisher = {American Physical Society},
  doi = {10.1103/PhysRevLett.79.325},
  url = {https://link.aps.org/doi/10.1103/PhysRevLett.79.325}
}

@article{shoralgorithm,
author = {Shor, Peter W.},
title = {Polynomial-Time Algorithms for Prime Factorization and Discrete Logarithms on a Quantum Computer},
journal = {SIAM Journal on Computing},
volume = {26},
number = {5},
pages = {1484-1509},
year = {1997},
doi = {10.1137/S0097539795293172},

URL = { 
    
        https://doi.org/10.1137/S0097539795293172
    
    

},
eprint = { 
    
        https://doi.org/10.1137/S0097539795293172
    
    

}
}

@article{factoring2n+2,
author = {H\"{a}ner, Thomas and Roetteler, Martin and Svore, Krysta M.},
title = {Factoring using 2n + 2 qubits with Toffoli based modular multiplication},
year = {2017},
issue_date = {June 2017},
publisher = {Rinton Press, Incorporated},
address = {Paramus, NJ},
volume = {17},
number = {7–8},
issn = {1533-7146},
journal = {Quantum Info. Comput.},
month = jun,
pages = {673–684},
numpages = {12}
}

@misc{factoringn+2,
      title={Factoring with n+2 clean qubits and n-1 dirty qubits}, 
      author={Craig Gidney},
      year={2018},
      eprint={1706.07884},
      archivePrefix={arXiv},
      primaryClass={quant-ph},
      url={https://arxiv.org/abs/1706.07884}, 
}

@article{Low2024tradingtgatesdirty,
  doi = {10.22331/q-2024-06-17-1375},
  url = {https://doi.org/10.22331/q-2024-06-17-1375},
  title = {Trading {T} gates for dirty qubits in state preparation and unitary synthesis},
  author = {Low, Guang Hao and Kliuchnikov, Vadym and Schaeffer, Luke},
  journal = {{Quantum}},
  issn = {2521-327X},
  publisher = {{Verein zur F{\"{o}}rderung des Open Access Publizierens in den Quantenwissenschaften}},
  volume = {8},
  pages = {1375},
  month = jun,
  year = {2024}
}

@InProceedings{quantumimplementationmininalT,
author="Huang, Zhenyu
and Zhang, Fuxin
and Lin, Dongdai",
editor="Fehr, Serge
and Fouque, Pierre-Alain",
title="Constructing Quantum Implementations with the Minimal T-depth or Minimal Width and Their Applications",
booktitle="Advances in Cryptology -- EUROCRYPT 2025",
year="2025",
publisher="Springer Nature Switzerland",
address="Cham",
pages="155--185",
isbn="978-3-031-91107-1"
}

@article{elementgates,
  title = {Elementary gates for quantum computation},
  author = {Barenco, Adriano and Bennett, Charles H. and Cleve, Richard and DiVincenzo, David P. and Margolus, Norman and Shor, Peter and Sleator, Tycho and Smolin, John A. and Weinfurter, Harald},
  journal = {Phys. Rev. A},
  volume = {52},
  issue = {5},
  pages = {3457--3467},
  numpages = {0},
  year = {1995},
  month = {Nov},
  publisher = {American Physical Society},
  doi = {10.1103/PhysRevA.52.3457},
  url = {https://link.aps.org/doi/10.1103/PhysRevA.52.3457}
}

@misc{gidney2015,
  author    = {Craig Gidney},
  title     = {Constructing Large Controlled Nots},
  howpublished = {\url{https://algassert.com/circuits/2015/06/05/Constructing-Large-Controlled-Nots.html}},
  note      = {Accessed: 2025-09-05},
  year      = {2015}
}

@misc{conditionalcleanqubits,
      title={Quantum circuit for multi-qubit Toffoli gate with optimal resource}, 
      author={Junhong Nie and Wei Zi and Xiaoming Sun},
      year={2024},
      eprint={2402.05053},
      archivePrefix={arXiv},
      primaryClass={quant-ph},
      url={https://arxiv.org/abs/2402.05053}, 
}

@article{riseofconditional,
   title={Rise of conditionally clean ancillae for efficient quantum circuit constructions},
   volume={9},
   ISSN={2521-327X},
   url={http://dx.doi.org/10.22331/q-2025-05-21-1752},
   DOI={10.22331/q-2025-05-21-1752},
   journal={Quantum},
   publisher={Verein zur Forderung des Open Access Publizierens in den Quantenwissenschaften},
   author={Khattar, Tanuj and Gidney, Craig},
   year={2025},
   month=may, pages={1752} 
}

@misc{zindorf2025efficientimplementationmulticontrolledquantum,
      title={Efficient Implementation of Multi-Controlled Quantum Gates}, 
      author={Ben Zindorf and Sougato Bose},
      year={2025},
      eprint={2404.02279},
      archivePrefix={arXiv},
      primaryClass={quant-ph},
      url={https://arxiv.org/abs/2404.02279}, 
}

@software{qiskit,
  author       = {Gadi Aleksandrowicz and
                  Thomas Alexander and
                  Panagiotis Barkoutsos and
                  Luciano Bello and
                  Yael Ben-Haim and
                  David Bucher and
                  Francisco Jose Cabrera-Hernández and
                  Jorge Carballo-Franquis and
                  Adrian Chen and
                  Chun-Fu Chen and
                  Jerry M. Chow and
                  Antonio D. Córcoles-Gonzales and
                  Abigail J. Cross and
                  Andrew Cross and
                  Juan Cruz-Benito and
                  Chris Culver and
                  Salvador De La Puente González and
                  Enrique De La Torre and
                  Delton Ding and
                  Eugene Dumitrescu and
                  Ivan Duran and
                  Pieter Eendebak and
                  Mark Everitt and
                  Ismael Faro Sertage and
                  Albert Frisch and
                  Andreas Fuhrer and
                  Jay Gambetta and
                  Borja Godoy Gago and
                  Juan Gomez-Mosquera and
                  Donny Greenberg and
                  Ikko Hamamura and
                  Vojtech Havlicek and
                  Joe Hellmers and
                  Łukasz Herok and
                  Hiroshi Horii and
                  Shaohan Hu and
                  Takashi Imamichi and
                  Toshinari Itoko and
                  Ali Javadi-Abhari and
                  Naoki Kanazawa and
                  Anton Karazeev and
                  Kevin Krsulich and
                  Peng Liu and
                  Yang Luh and
                  Yunho Maeng and
                  Manoel Marques and
                  Francisco Jose Martín-Fernández and
                  Douglas T. McClure and
                  David McKay and
                  Srujan Meesala and
                  Antonio Mezzacapo and
                  Nikolaj Moll and
                  Diego Moreda Rodríguez and
                  Giacomo Nannicini and
                  Paul Nation and
                  Pauline Ollitrault and
                  Lee James O'Riordan and
                  Hanhee Paik and
                  Jesús Pérez and
                  Anna Phan and
                  Marco Pistoia and
                  Viktor Prutyanov and
                  Max Reuter and
                  Julia Rice and
                  Abdón Rodríguez Davila and
                  Raymond Harry Putra Rudy and
                  Mingi Ryu and
                  Ninad Sathaye and
                  Chris Schnabel and
                  Eddie Schoute and
                  Kanav Setia and
                  Yunong Shi and
                  Adenilton Silva and
                  Yukio Siraichi and
                  Seyon Sivarajah and
                  John A. Smolin and
                  Mathias Soeken and
                  Hitomi Takahashi and
                  Ivano Tavernelli and
                  Charles Taylor and
                  Pete Taylour and
                  Kenso Trabing and
                  Matthew Treinish and
                  Wes Turner and
                  Desiree Vogt-Lee and
                  Christophe Vuillot and
                  Jonathan A. Wildstrom and
                  Jessica Wilson and
                  Erick Winston and
                  Christopher Wood and
                  Stephen Wood and
                  Stefan Wörner and
                  Ismail Yunus Akhalwaya and
                  Christa Zoufal},
  title        = {Qiskit: An Open-source Framework for Quantum
                   Computing
                  },
  month        = jan,
  year         = 2019,
  publisher    = {Zenodo},
  version      = {0.7.2},
  doi          = {10.5281/zenodo.2562111},
  url          = {https://doi.org/10.5281/zenodo.2562111},
}

@article{reqwire,
   title={ReQWIRE: Reasoning about Reversible Quantum Circuits},
   volume={287},
   ISSN={2075-2180},
   url={http://dx.doi.org/10.4204/EPTCS.287.17},
   DOI={10.4204/eptcs.287.17},
   journal={Electronic Proceedings in Theoretical Computer Science},
   publisher={Open Publishing Association},
   author={Rand, Robert and Paykin, Jennifer and Lee, Dong-Ho and Zdancewic, Steve},
   year={2019},
   month=jan, pages={299–312} }

@article{quipper,
author = {Green, Alexander S. and Lumsdaine, Peter LeFanu and Ross, Neil J. and Selinger, Peter and Valiron, Beno\^{\i}t},
title = {Quipper: a scalable quantum programming language},
year = {2013},
issue_date = {June 2013},
publisher = {Association for Computing Machinery},
address = {New York, NY, USA},
volume = {48},
number = {6},
issn = {0362-1340},
url = {https://doi.org/10.1145/2499370.2462177},
doi = {10.1145/2499370.2462177},
journal = {SIGPLAN Not.},
month = jun,
pages = {333–342},
numpages = {10}
}

@inproceedings{qpunch,
author = {Svore, Krysta and Geller, Alan and Troyer, Matthias and Azariah, John and Granade, Christopher and Heim, Bettina and Kliuchnikov, Vadym and Mykhailova, Mariia and Paz, Andres and Roetteler, Martin},
title = {Q\#: Enabling Scalable Quantum Computing and Development with a High-level DSL},
year = {2018},
isbn = {9781450363556},
publisher = {Association for Computing Machinery},
address = {New York, NY, USA},
url = {https://doi.org/10.1145/3183895.3183901},
doi = {10.1145/3183895.3183901},
articleno = {7},
numpages = {10},
location = {Vienna, Austria},
series = {RWDSL2018}
}

@INPROCEEDINGS{square,
  author={Ding, Yongshan and Wu, Xin-Chuan and Holmes, Adam and Wiseth, Ash and Franklin, Diana and Martonosi, Margaret and Chong, Frederic T.},
  booktitle={2020 ACM/IEEE 47th Annual International Symposium on Computer Architecture (ISCA)}, 
  title={SQUARE: Strategic Quantum Ancilla Reuse for Modular Quantum Programs via Cost-Effective Uncomputation}, 
  year={2020},
  volume={},
  number={},
  pages={570-583},
  doi={10.1109/ISCA45697.2020.00054}}

@inproceedings{silq,
author = {Bichsel, Benjamin and Baader, Maximilian and Gehr, Timon and Vechev, Martin},
title = {Silq: a high-level quantum language with safe uncomputation and intuitive semantics},
year = {2020},
isbn = {9781450376136},
publisher = {Association for Computing Machinery},
address = {New York, NY, USA},
url = {https://doi.org/10.1145/3385412.3386007},
doi = {10.1145/3385412.3386007},
booktitle = {Proceedings of the 41st ACM SIGPLAN Conference on Programming Language Design and Implementation},
pages = {286–300},
numpages = {15},
location = {London, UK},
series = {PLDI 2020}
}

@inproceedings{unqomp,
author = {Paradis, Anouk and Bichsel, Benjamin and Steffen, Samuel and Vechev, Martin},
title = {\Unqompx: synthesizing uncomputation in Quantum circuits},
year = {2021},
isbn = {9781450383912},
publisher = {Association for Computing Machinery},
address = {New York, NY, USA},
url = {https://doi.org/10.1145/3453483.3454040},
doi = {10.1145/3453483.3454040},
pages = {222–236},
numpages = {15},
location = {Virtual, Canada},
series = {PLDI 2021}
}

@article{reqomp,
  doi = {10.22331/q-2024-02-19-1258},
  url = {https://doi.org/10.22331/q-2024-02-19-1258},
  title = {\Reqompx: {S}pace-constrained {U}ncomputation for {Q}uantum {C}ircuits},
  author = {Paradis, Anouk and Bichsel, Benjamin and Vechev, Martin},
  journal = {{Quantum}},
  issn = {2521-327X},
  publisher = {{Verein zur F{\"{o}}rderung des Open Access Publizierens in den Quantenwissenschaften}},
  volume = {8},
  pages = {1258},
  month = feb,
  year = {2024}
}

@article{modularuncomp,
author = {Venev, Hristo and Gehr, Timon and Dimitrov, Dimitar and Vechev, Martin},
title = {Modular Synthesis of Efficient Quantum Uncomputation},
year = {2024},
issue_date = {October 2024},
publisher = {Association for Computing Machinery},
address = {New York, NY, USA},
volume = {8},
number = {OOPSLA2},
url = {https://doi.org/10.1145/3689785},
doi = {10.1145/3689785},
journal = {Proc. ACM Program. Lang.},
month = oct,
articleno = {345},
numpages = {28}
}

@article{qurts,
author = {Hirata, Kengo and Heunen, Chris},
title = {Qurts: Automatic Quantum Uncomputation by Affine Types with Lifetime},
year = {2025},
issue_date = {January 2025},
publisher = {Association for Computing Machinery},
address = {New York, NY, USA},
volume = {9},
number = {POPL},
url = {https://doi.org/10.1145/3704842},
doi = {10.1145/3704842},
journal = {Proc. ACM Program. Lang.},
month = jan,
articleno = {6},
numpages = {28}
}

@misc{su2024bibasedreasoningquantumprograms,
      title={BI-based Reasoning about Quantum Programs with Heap Manipulations}, 
      author={Bonan Su and Li Zhou and Yuan Feng and Mingsheng Ying},
      year={2024},
      eprint={2409.10153},
      archivePrefix={arXiv},
      primaryClass={quant-ph},
      url={https://arxiv.org/abs/2409.10153}, 
}

@inproceedings{su2025borrowingdirtyqubitsquantum, author = {Su, Bonan and Zhou, Li and Feng, Yuan and Ying, Mingsheng}, title = {Borrowing Dirty Qubits in Quantum Programs}, year = {2026}, isbn = {9798400723599}, publisher = {Association for Computing Machinery}, address = {New York, NY, USA}, url = {https://doi.org/10.1145/3779212.3790134}, doi = {10.1145/3779212.3790134}, booktitle = {Proceedings of the 31st ACM International Conference on Architectural Support for Programming Languages and Operating Systems, Volume 2}, pages = {274–289}, numpages = {16}, location = {USA}, series = {ASPLOS '26} }

@article{qubitrecycling,
author = {Jiang, Hanru},
title = {Qubit Recycling Revisited},
year = {2024},
issue_date = {June 2024},
publisher = {Association for Computing Machinery},
address = {New York, NY, USA},
volume = {8},
number = {PLDI},
url = {https://doi.org/10.1145/3656428},
doi = {10.1145/3656428},
month = jun,
articleno = {198},
numpages = {24}
}

@article{spare,
author = {Sharma, Ritvik and Achour, Sara},
title = {Optimizing Ancilla-Based Quantum Circuits with SPARE},
year = {2025},
issue_date = {June 2025},
publisher = {Association for Computing Machinery},
address = {New York, NY, USA},
volume = {9},
number = {PLDI},
url = {https://doi.org/10.1145/3729253},
doi = {10.1145/3729253},
journal = {Proc. ACM Program. Lang.},
month = jun,
articleno = {154},
numpages = {25}
}

@book{Nielsen_Chuang_2010, place={Cambridge}, title={Quantum Computation and Quantum Information: 10th Anniversary Edition}, publisher={Cambridge University Press}, author={Nielsen, Michael A. and Chuang, Isaac L.}, year={2010}}

@inproceedings{reversiblecomputing,
author = {Toffoli, Tommaso},
title = {Reversible Computing},
year = {1980},
isbn = {3540100032},
publisher = {Springer-Verlag},
address = {Berlin, Heidelberg},
booktitle = {Proceedings of the 7th Colloquium on Automata, Languages and Programming},
pages = {632–644},
numpages = {13}
}

@article{conservativelogic,
  author  = {Edward Fredkin and Tommaso Toffoli},
  title   = {Conservative Logic},
  journal = {International Journal of Theoretical Physics},
  volume  = {21},
  number  = {3-4},
  pages   = {219--253},
  year    = {1982}
}

@inproceedings{cooktheorem,
author = {Cook, Stephen A.},
title = {The complexity of theorem-proving procedures},
year = {1971},
isbn = {9781450374644},
publisher = {Association for Computing Machinery},
address = {New York, NY, USA},
url = {https://doi.org/10.1145/800157.805047},
doi = {10.1145/800157.805047},
booktitle = {Proceedings of the Third Annual ACM Symposium on Theory of Computing},
pages = {151–158},
numpages = {8},
location = {Shaker Heights, Ohio, USA},
series = {STOC '71}
}

@article{strongequivalence,
author = {Jordan, Stephen P.},
title = {Strong equivalence of reversible circuits is coNP-complete},
year = {2014},
issue_date = {November 2014},
publisher = {Rinton Press, Incorporated},
address = {Paramus, NJ},
volume = {14},
number = {15–16},
issn = {1533-7146},
journal = {Quantum Info. Comput.},
month = nov,
pages = {1302–1307},
numpages = {6}
}

@article{quantumbooleanfunctions,
  title={Quantum boolean functions},
  author={Ashley Montanaro and Tobias J. Osborne},
  journal={Chic. J. Theor. Comput. Sci.},
  year={2008},
  volume={2010},
  url={https://api.semanticscholar.org/CorpusID:15370057}
}

@techreport{boundedwidthpolysizebranching,
author = {Barrington, David A. M},
title = {Bounded Width Polynominal Size Branching Programs Recognize Exactly Those},
year = {1987},
publisher = {University of Massachusetts},
address = {USA}
}

@article{Gidney2018halvingcostof,
  doi = {10.22331/q-2018-06-18-74},
  url = {https://doi.org/10.22331/q-2018-06-18-74},
  title = {Halving the cost of quantum addition},
  author = {Gidney, Craig},
  journal = {{Quantum}},
  issn = {2521-327X},
  publisher = {{Verein zur F{\"{o}}rderung des Open Access Publizierens in den Quantenwissenschaften}},
  volume = {2},
  pages = {74},
  month = jun,
  year = {2018}
}

@misc{gidney2019approximateencodedpermutationspiecewise,
      title={Approximate encoded permutations and piecewise quantum adders}, 
      author={Craig Gidney},
      year={2019},
      eprint={1905.08488},
      archivePrefix={arXiv},
      primaryClass={quant-ph},
      url={https://arxiv.org/abs/1905.08488}, 
}

@misc{gidney2019windowedquantumarithmetic,
      title={Windowed quantum arithmetic}, 
      author={Craig Gidney},
      year={2019},
      eprint={1905.07682},
      archivePrefix={arXiv},
      primaryClass={quant-ph},
      url={https://arxiv.org/abs/1905.07682}, 
}

@INPROCEEDINGS{luongo2024measurementbaseduncomputationquantumcircuits,
  author={Luongo, Alessandro and Miti, Antonio Michele and Narasimhachar, Varun and Sireesh, Adithya},
  booktitle={2025 62nd ACM/IEEE Design Automation Conference (DAC)}, 
  title={Measurement-based uncomputation of quantum circuits for modular arithmetic}, 
  year={2025},
  volume={},
  number={},
  pages={1-7},
  doi={10.1109/DAC63849.2025.11132981}}

@article{Kornerup2025tightboundsspooky,
  doi = {10.22331/q-2025-02-18-1636},
  url = {https://doi.org/10.22331/q-2025-02-18-1636},
  title = {Tight {B}ounds on the {S}pooky {P}ebble {G}ame: {R}ecycling {Q}ubits with {M}easurements},
  author = {Kornerup, Niels and Sadun, Jonathan and Soloveichik, David},
  journal = {{Quantum}},
  issn = {2521-327X},
  publisher = {{Verein zur F{\"{o}}rderung des Open Access Publizierens in den Quantenwissenschaften}},
  volume = {9},
  pages = {1636},
  month = feb,
  year = {2025}
}

@inproceedings{architectureuncomputation,
author = {Schuchman, Ethan and Vijaykumar, T. N.},
title = {A program transformation and architecture support for quantum uncomputation},
year = {2006},
isbn = {1595934510},
publisher = {Association for Computing Machinery},
address = {New York, NY, USA},
url = {https://doi.org/10.1145/1168857.1168889},
doi = {10.1145/1168857.1168889},
booktitle = {Proceedings of the 12th International Conference on Architectural Support for Programming Languages and Operating Systems},
pages = {252–263},
numpages = {12},
location = {San Jose, California, USA},
series = {ASPLOS XII}
}

@article{wirerecycling,
  title = {Wire recycling for quantum circuit optimization},
  author = {Paler, Alexandru and Wille, Robert and Devitt, Simon J.},
  journal = {Phys. Rev. A},
  volume = {94},
  issue = {4},
  pages = {042337},
  numpages = {8},
  year = {2016},
  month = {Oct},
  publisher = {American Physical Society},
  doi = {10.1103/PhysRevA.94.042337},
  url = {https://link.aps.org/doi/10.1103/PhysRevA.94.042337}
}

@inproceedings{complierqubitreuse,
author = {Hua, Fei and Jin, Yuwei and Chen, Yanhao and Vittal, Suhas and Krsulich, Kevin and Bishop, Lev S. and Lapeyre, John and Javadi-Abhari, Ali and Zhang, Eddy Z.},
title = {CaQR: A Compiler-Assisted Approach for Qubit Reuse through Dynamic Circuit},
year = {2023},
isbn = {9781450399180},
publisher = {Association for Computing Machinery},
address = {New York, NY, USA},
url = {https://doi.org/10.1145/3582016.3582030},
doi = {10.1145/3582016.3582030},
booktitle = {Proceedings of the 28th ACM International Conference on Architectural Support for Programming Languages and Operating Systems, Volume 3},
pages = {59–71},
numpages = {13},
location = {Vancouver, BC, Canada},
series = {ASPLOS 2023}
}

@article{qubitreusecompilation,
  title = {Qubit-Reuse Compilation with Mid-Circuit Measurement and Reset},
  author = {DeCross, Matthew and Chertkov, Eli and Kohagen, Megan and Foss-Feig, Michael},
  journal = {Phys. Rev. X},
  volume = {13},
  issue = {4},
  pages = {041057},
  numpages = {22},
  year = {2023},
  month = {Dec},
  publisher = {American Physical Society},
  doi = {10.1103/PhysRevX.13.041057},
  url = {https://link.aps.org/doi/10.1103/PhysRevX.13.041057}
}

@misc{quantumcircuitresizing,
      title={Quantum Circuit Resizing}, 
      author={Movahhed Sadeghi and Soheil Khadirsharbiyani and Mahmut Taylan Kandemir},
      year={2022},
      eprint={2301.00720},
      archivePrefix={arXiv},
      primaryClass={cs.ET},
      url={https://arxiv.org/abs/2301.00720}, 
}

@inbook{Remaud2025,
   title={Ancilla-Free Quantum Adder with Sublinear Depth},
   ISBN={9783031970634},
   ISSN={1611-3349},
   url={http://dx.doi.org/10.1007/978-3-031-97063-4_11},
   DOI={10.1007/978-3-031-97063-4_11},
   booktitle={Reversible Computation},
   publisher={Springer Nature Switzerland},
   author={Remaud, Maxime and Vandaele, Vivien},
   year={2025},
   pages={137–154} }

@article{takahashi2010,
author = {Takahashi, Yasuhiro and Tani, Seiichiro and Kunihiro, Noboru},
title = {Quantum addition circuits and unbounded fan-out},
year = {2010},
issue_date = {September 2010},
publisher = {Rinton Press, Incorporated},
address = {Paramus, NJ},
volume = {10},
number = {9},
issn = {1533-7146},
month = sep,
pages = {872–890},
numpages = {19}
}

@article{lawsofquantumprog,
author = {Ying, Mingsheng and Zhou, Li and Barthe, Gilles},
title = {Laws of Quantum Programming},
year = {2025},
publisher = {Association for Computing Machinery},
address = {New York, NY, USA},
issn = {1049-331X},
url = {https://doi.org/10.1145/3765903},
doi = {10.1145/3765903},
note = {Just Accepted},
journal = {ACM Trans. Softw. Eng. Methodol.},
month = sep
}

@inproceedings{qrisp,
author = {Seidel, Raphael and Tcholtchev, Nikolay and Bock, Sebastian and Hauswirth, Manfred},
title = {Uncomputation in the Qrisp High-Level Quantum Programming Framework},
year = {2023},
isbn = {978-3-031-38099-0},
publisher = {Springer-Verlag},
address = {Berlin, Heidelberg},
url = {https://doi.org/10.1007/978-3-031-38100-3_11},
doi = {10.1007/978-3-031-38100-3_11},
pages = {150–165},
numpages = {16},
location = {Giessen, Germany}
}

@misc{ying2023quantumrecursiveprogrammingquantum,
      title={Quantum Recursive Programming with Quantum Case Statements}, 
      author={Mingsheng Ying and Zhicheng Zhang},
      year={2023},
      eprint={2311.01725},
      archivePrefix={arXiv},
      primaryClass={cs.PL},
      url={https://arxiv.org/abs/2311.01725}, 
}

@article{table-n-CNOT,
	author = {Claudon, Baptiste and Zylberman, Julien and Feniou, C{\'e}sar and Debbasch, Fabrice and Peruzzo, Alberto and Piquemal, Jean-Philip},
	date = {2024/07/13},
	doi = {10.1038/s41467-024-50065-x},
	id = {Claudon2024},
	isbn = {2041-1723},
	journal = {Nature Communications},
	number = {1},
	pages = {5886},
	title = {Polylogarithmic-depth controlled-NOT gates without ancilla qubits},
	url = {https://doi.org/10.1038/s41467-024-50065-x},
	volume = {15},
	year = {2024}}

@misc{draper2000addition,
      title={Addition on a Quantum Computer}, 
      author={Thomas G. Draper},
      year={2000},
      eprint={quant-ph/0008033},
      archivePrefix={arXiv},
      primaryClass={quant-ph},
      url={https://arxiv.org/abs/quant-ph/0008033}, 
}

@misc{cuccaro2004new,
      title={A new quantum ripple-carry addition circuit}, 
      author={Steven A. Cuccaro and Thomas G. Draper and Samuel A. Kutin and David Petrie Moulton},
      year={2004},
      eprint={quant-ph/0410184},
      archivePrefix={arXiv},
      primaryClass={quant-ph},
      url={https://arxiv.org/abs/quant-ph/0410184}, 
}

@misc{zalka,
      title={Shor's algorithm with fewer (pure) qubits}, 
      author={Christof Zalka},
      year={2006},
      eprint={quant-ph/0601097},
      archivePrefix={arXiv},
      primaryClass={quant-ph},
      url={https://arxiv.org/abs/quant-ph/0601097}, 
}

@inproceedings{Shende, author = {Shende, Vivek V. and Bullock, Stephen S. and Markov, Igor L.}, title = {Synthesis of quantum logic circuits}, year = {2005}, isbn = {0780387376}, publisher = {Association for Computing Machinery}, address = {New York, NY, USA}, url = {https://doi.org/10.1145/1120725.1120847}, doi = {10.1145/1120725.1120847}, booktitle = {Proceedings of the 2005 Asia and South Pacific Design Automation Conference}, pages = {272–275}, numpages = {4}, location = {Shanghai, China}, series = {ASP-DAC '05} }

@article{sun, author = {Sun, Xiaoming and Tian, Guojing and Yang, Shuai and Yuan, Pei and Zhang, Shengyu}, title = {Asymptotically Optimal Circuit Depth for Quantum State Preparation and General Unitary Synthesis}, year = {2023}, issue_date = {Oct. 2023}, publisher = {IEEE Press}, volume = {42}, number = {10}, issn = {0278-0070}, url = {https://doi.org/10.1109/TCAD.2023.3244885}, doi = {10.1109/TCAD.2023.3244885}, journal = {Trans. Comp.-Aided Des. Integ. Cir. Sys.}, month = oct, pages = {3301–3314}, numpages = {14} }

@article{faster-integer,
author = {F\"{u}rer, Martin},
title = {Faster Integer Multiplication},
journal = {SIAM Journal on Computing},
volume = {39},
number = {3},
pages = {979-1005},
year = {2009},
doi = {10.1137/070711761},

URL = { 
    
        https://doi.org/10.1137/070711761
    
    

},
eprint = { 
    
        https://doi.org/10.1137/070711761
    
    

}
}

@article{dirtymanagement,
  title={Circuit optimization via managing dirty ancillas},
  author={Anonymous Author(s)},
  year={In preparation}
}

@InProceedings{reverc,
author="Amy, Matthew
and Roetteler, Martin
and Svore, Krysta M.",
editor="Majumdar, Rupak
and Kun{\v{c}}ak, Viktor",
title="Verified Compilation of Space-Efficient Reversible Circuits",
booktitle="Computer Aided Verification",
year="2017",
publisher="Springer International Publishing",
address="Cham",
pages="3--21",
isbn="978-3-319-63390-9"
}

@misc{revs,
      title={Reversible circuit compilation with space constraints}, 
      author={Alex Parent and Martin Roetteler and Krysta M. Svore},
      year={2015},
      eprint={1510.00377},
      archivePrefix={arXiv},
      primaryClass={quant-ph},
      url={https://arxiv.org/abs/1510.00377}, 
}

@ARTICLE{completeReversibleCircuitRules,
  author={Feng, Shiguang and Li, Lvzhou},
  journal={IEEE Transactions on Computer-Aided Design of Integrated Circuits and Systems}, 
  title={A complete set of transformation rules for reversible circuits}, 
  year={2025},
  volume={},
  number={},
  pages={1-1},
  doi={10.1109/TCAD.2025.3641533}}

@INPROCEEDINGS{completeQuantumRules,
  author={Clément, Alexandre and Heurtel, Nicolas and Mansfield, Shane and Perdrix, Simon and Valiron, Benoît},
  booktitle={2023 38th Annual ACM/IEEE Symposium on Logic in Computer Science (LICS)}, 
  title={A Complete Equational Theory for Quantum Circuits}, 
  year={2023},
  volume={},
  number={},
  pages={1-13},
  doi={10.1109/LICS56636.2023.10175801}}

@InProceedings{ExtAndSimpcompleteQuantumRules,
  author =	{Cl\'{e}ment, Alexandre and Delorme, No\'{e} and Perdrix, Simon and Vilmart, Renaud},
  title =	{{Quantum Circuit Completeness: Extensions and Simplifications}},
  booktitle =	{32nd EACSL Annual Conference on Computer Science Logic (CSL 2024)},
  pages =	{20:1--20:23},
  series =	{Leibniz International Proceedings in Informatics (LIPIcs)},
  ISBN =	{978-3-95977-310-2},
  ISSN =	{1868-8969},
  year =	{2024},
  volume =	{288},
  editor =	{Murano, Aniello and Silva, Alexandra},
  publisher =	{Schloss Dagstuhl -- Leibniz-Zentrum f{\"u}r Informatik},
  address =	{Dagstuhl, Germany},
  URL =		{https://drops.dagstuhl.de/entities/document/10.4230/LIPIcs.CSL.2024.20},
  URN =		{urn:nbn:de:0030-drops-196639},
  doi =		{10.4230/LIPIcs.CSL.2024.20}
}

@inproceedings{minimalCompleteQuantumRules,
author = {Cl\'{e}ment, Alexandre and Delorme, No\'{e} and Perdrix, Simon},
title = {Minimal Equational Theories for Quantum Circuits},
year = {2024},
isbn = {9798400706608},
publisher = {Association for Computing Machinery},
address = {New York, NY, USA},
url = {https://doi.org/10.1145/3661814.3662088},
doi = {10.1145/3661814.3662088},
booktitle = {Proceedings of the 39th Annual ACM/IEEE Symposium on Logic in Computer Science},
articleno = {27},
numpages = {14},
location = {Tallinn, Estonia},
series = {LICS '24}
}

@inproceedings{Li2026AncillaSafety,
  author    = {Li, Jiqi and Mei, Jingyi and Fang, Wang and Guan, Ji},
  editor    = {Darulova, Eva and Lin, Anthony W. and R{\"u}mmer, Philipp},
  title     = {Formal Verification of {Quantum Ancilla Safety}},
  booktitle = {Computer Aided Verification},
  series    = {Lecture Notes in Computer Science},
  volume    = {16684},
  pages     = {326--348},
  year      = {2026},
  publisher = {Springer Nature Switzerland},
  address   = {Cham},
  doi       = {10.1007/978-3-032-32537-2_16},
  isbn      = {978-3-032-32537-2},
  url       = {https://doi.org/10.1007/978-3-032-32537-2_16}
}
\citestyle{acmauthoryear}

\newpage
\appendix

{\centering\LARGE\textbf{Appendices}}

\section{Store--Use Templates with a Rule-based Static Reasoning System}\label{app:tpun-full}

A central insight of~\cite{silq,qurts} is that the existence of uncomputation can be enforced by disciplined structural templates.
We extend this insight to enable synthesis-oriented existence checking, where establishing existence also supports regular uncomputation synthesis. 

\subsection{Template Usage of Ancillas}

We begin by introducing the following pieces of syntactic sugar, which factor the body of a borrow statement into two explicit parts and expose a commonly used template for ancillas:
\[
  \borrowcd{a}{C_s}{C_u} \triangleq \borrowccdd{a}{C_s; C_u}
\]

The clean-ancilla instance corresponds to the store--use template body $C_s; C_u$ shown in \cref{fig:storeuse-template}, and the dirty-ancilla instance is obtained by prepending an additional $C_u$, so that the first three components $C_u; C_s; C_u$ form a toggle-detection template.
In both cases, uncomputation can be carried out regularly by appending $C_s^\dagger$.

\begin{figure}[tbp]
  \centering
  \includegraphics[width=0.7\linewidth]{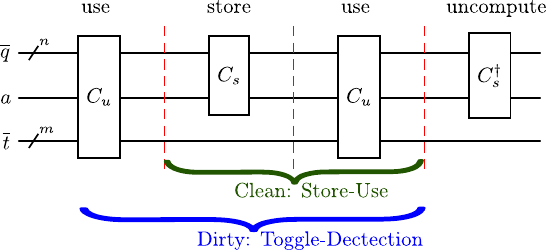}
  \label{fig:storeuse-template}
\end{figure}

The clean-ancilla template captures the pattern in which an ancilla temporarily stores an intermediate value for later use. 
This mirrors the clean-ancilla case discussed in \cref{subsec:safeusecase}:  
the first $\tToffoli[q_1, q_2, a]$ corresponds to $C_s$, which stores an intermediate result in the ancilla,  
and the second $\tToffoli[a, q_3, t]$ corresponds to $C_u$, which uses this intermediate value. The final 
$\tToffoli[q_1, q_2, a]$ serves as $C_s^\dagger$ that 
safely restores the ancilla to its original state.

The dirty-ancilla template specializes to the toggle-detection usage pattern, where the ancilla acts as a temporary flag that is conditionally toggled by intermediate computation results.
Specifically, the construction parallels the dirty-ancilla case discussed in \cref{subsec:safeusecase}, 
where the first and third $\tToffoli[a, q_3, t]$ gates use the intermediate result stored in the ancilla twice—once before and once after it is written—while the second $\tToffoli[q_1, q_2, a]$ corresponds to $C_s$, which performs the storage.  
In this pattern, the intermediate result $1$ is represented by flipping the ancilla qubit:  
if the ancilla flips, only one instance of $C_u$ applies; if it does not, the two applications of $C_u$ cancel each other.  
The final $\tToffoli[q_1, q_2, a]$ serves as $C_{s}^\dagger$ that safely restores the ancilla to its original state.

\subsection{Formal Template}\label{spform}

In this subsection, we formally define the two templates by imposing semantic constraints on $C_s$ and $C_u$ that guarantee (i) uncomputation exists, and (ii) it can be synthesized regularly by appending $C_s^\dagger$ after the corresponding body. 

\subsubsection{Clean-Ancilla Template}

We begin by presenting a core definition. Intuitively, if an ancilla is affected by certain qubits, those qubits should remain invariant afterward, 
as their states are required to uncompute the ancilla. 
The keyword \const in Silq and the lifetime mechanism in Qurts both serve this purpose. 
We extend this intuition as follows.

\begin{definition}\label{def:spform}
    Let $\olq$ and $\olt$ be disjoint $n$ and $m$ qubits ($\olq \cap \olt = \emptyset$). 
    A unitary operator $U$ on $\cH_{\olq} \otimes \cH_{\olt}$ is said to be \emph{constant on $\olq$}, written $U:\spform{\olq}{\olt}$, if there exists a family of unitaries $\{U_i\}_{i=0}^{2^n-1}$ on $\cH_{\olt}$ such that, for all $i \in \{0, \dots, 2^n-1\}$ and all $\ket{\varphi} \in \cH_{\olt}$,
    \[
    U\bigl(\ket{e_i}_{\olq}\otimes \ket{\varphi}_{\olt}\bigr)
    =
    \ket{e_i}_{\olq}\otimes U_i\ket{\varphi}_{\olt}.
    \]
    
    Equivalently, $U=\sum_{i=0}^{2^n-1}\ketbra{e_i}{e_i}_{\olq}\otimes U_i$.
    
\end{definition}

Conceptually, \const corresponds to a quantum case statement: if the state of $\olq$ is $\ket{e_i}_{\olq}$, then apply $U_i$ to $\olt$.
We write invariant qubits in square brackets and potentially varying qubits in parentheses. For example, $\tCNOT[q,a] : \spform{q}{a}$, whereas $U[\olq] : \spform{}{\olq}$ for any unitary operator $U[\olq]$. This notation is chosen deliberately: square brackets convey rigidity, while parentheses suggest flexibility. 

The term \qfree was originally introduced in~\cite{silq}. Here we use it to denote unitaries that implement reversible Boolean functions. Throughout, when an operator is applied in a larger context, we implicitly extend it to the full space by tensoring identities on qubits it does not act on.

\begin{definition}[Clean-ancilla Store–Use Template]\label{def:cleantemplate}
Let $S \equiv \borrowc{a}{C_s}{C_u}$. Let $\olq$ and $\olt$ be disjoint $n$ and $m$ qubits such that $qv(S)=\olq \cup \olt$ (Recall $a \notin qv(S)$). We say that $S$ forms a clean-ancilla Store--Use template if the following conditions hold:

\begin{enumerate}
    \item \emph{Store phase.}  
    $qv(C_s) \subseteq \olq \cup \{a\}$, and
    $\sem{C_s} : \spform{\olq}{a}$ and is $\sem{C_s}$ a \qfree unitary.
    \item \emph{Use phase.}  
    $\sem{C_u} : \spform{\olq, a}{\olt}$.
\end{enumerate}

\end{definition}

The above definition formalizes the clean-ancilla template in \cref{fig:storeuse-template}.
The \const constraints enforce that the store phase preserves $\olq$ and the use phase preserves both $\olq$ and $a$, while the \qfree condition ensures the store phase is invertible in a regular manner.
Together, these constraints witness the existence of uncomputation and enable a synthesis: appending $C_s^\dagger$.

\begin{proposition}\label{prop:cleantempuncomp}
    $\sem{C_s; C_u}$ is uncomputable and $\sem{C_s; C_u; C_s^\dagger} \in \Uncomp(\sem{C_s; C_u}, a)$.
\end{proposition}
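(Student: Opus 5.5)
We verify the claim by a direct computation on computational basis states of $\olq$, then extend by linearity. By Definition~\ref{def:cleanuncomp} it suffices to show that $U \triangleq \sem{C_s;C_u;C_s^\dagger} = \sem{C_s}^\dagger\sem{C_u}\sem{C_s}$ is unitary (immediate, as a product of unitaries) and satisfies
\[
U\bigl(\ket{0}_a\ket{\varphi}_{\olq\olt}\bigr) = \ket{0}_a\,\func(\sem{C_s;C_u},a)\ket{\varphi}_{\olq\olt}
\]
for all $\ket{\varphi}$. Uncomputability then follows, since $U$ is a witness and $\Uncomp\neq\emptyset$. It also follows via \cref{prop:eq-valid-uncomputation} that $\func(\sem{C_s;C_u},a)$ is unitary.

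\emph{Step 1 (store phase on basis inputs).} Since $\sem{C_s}:\spform{\olq}{a}$ and $\sem{C_s}$ is \qfree, for each basis state $\ket{e_i}_{\olq}$ the unitary $U_i$ on $a$ is a permutation of $\{\ket{0},\ket{1}\}$. Thus $\sem{C_s}\ket{e_i}_{\olq}\ket{0}_a = \ket{e_i}_{\olq}\ket{f(i)}_a$ for a Boolean function $f$. This is where \qfree matters: without it $U_i\ket{0}$ could be a superposition. The register $\olt$ is untouched because $qv(C_s)\subseteq\olq\cup\{a\}$.

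\emph{Step 2 (use phase).} By $\sem{C_u}:\spform{\olq,a}{\olt}$ there are unitaries $V_{i,b}$ on $\olt$ with
\[
\sem{C_u}\ket{e_i}_{\olq}\ket{b}_a\ket{\psi}_{\olt}=\ket{e_i}_{\olq}\ket{b}_a V_{i,b}\ket{\psi}_{\olt}.
\]
Hence, for $\ket{\varphi}=\sum_i\ket{e_i}_{\olq}\ket{\psi_i}_{\olt}$,
\[
\sem{C_s;C_u}\ket{0}_a\ket{\varphi}=\sum_i\ket{f(i)}_a\ket{e_i}_{\olq}V_{i,f(i)}\ket{\psi_i}.
\]
Grouping by the value of $a$ and reading off the definition of $\func$ gives
\[
\func(\sem{C_s;C_u},a)\ket{\varphi}=\sum_i\ket{e_i}_{\olq}V_{i,f(i)}\ket{\psi_i}.
\]
This equals $W\ket{\varphi}$ with $W=\sum_i\ketbra{e_i}{e_i}_{\olq}\otimes V_{i,f(i)}$, which is unitary.

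\emph{Step 3 (uncompute).} Applying $\sem{C_s}^\dagger$ acts on $\olq\cup\{a\}$ only and, by Step~1, maps $\ket{e_i}_{\olq}\ket{f(i)}_a$ back to $\ket{e_i}_{\olq}\ket{0}_a$; it does not touch $\olt$. Each term therefore becomes $\ket{0}_a\ket{e_i}_{\olq}V_{i,f(i)}\ket{\psi_i}$, and summing gives $\ket{0}_a W\ket{\varphi}$, as required.

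The main point of care is Step~1: \const alone gives only an arbitrary $U_i\ket{0}$, and it is the \qfree hypothesis that makes $a$ a classical basis value. That in turn makes the branches of $a$ orthogonal with $\olq$ unchanged, so $\func$ is norm-preserving. The remaining steps are bookkeeping of tensor identities, with $\olq$ kept basis-diagonal throughout.
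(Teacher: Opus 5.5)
Your proof is correct and takes essentially the paper's route. Both arguments compute $\sem{C_u}\sem{C_s}$ on inputs with $\olq$ in a basis state. Both then use the \const decomposition to see that $a$ holds a classical bit while $\olt$ undergoes $V_{i,f(i)}$, and note that appending $\sem{C_s}^\dagger$ returns $a$ to $\ket{0}$ without touching $\olt$. The only differences are cosmetic: the paper shows uncomputability by observing that the images $\ket{e_i}_{\olq}V_{i,f(i)}\ket{j}_{\olt}$ form an orthonormal basis, whereas you read it off the explicit witness and the block-diagonal unitary $W$, and you make explicit the role of \qfree, which the paper uses only tacitly.
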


Furthermore, a Store--Use template itself is \const on some qubits so that it can serve as a component within another instance of the template.

\begin{proposition}\label{prop:cleantempspform}
    $\sem{C_s; C_u; C_s^\dagger} : \spform{\olq, a}{\olt}$.
\end{proposition}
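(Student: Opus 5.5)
We argue directly from \cref{def:spform}, expanding each of the three factors as a quantum case statement on $\olq$ and then composing. The data available are these. By the store-phase condition of \cref{def:cleantemplate}, $\sem{C_s}:\spform{\olq}{a}$, so
\[
\sem{C_s}=\sum_i \ketbra{e_i}{e_i}_{\olq}\otimes S_i
\]
for unitaries $S_i$ on $a$. Since $\sem{C_s}$ is \qfree, each $S_i$ permutes $\{\ket{0},\ket{1}\}$, so $S_i\in\{I,\tX\}$; in fact only unitarity of $S_i$ is needed below. Taking adjoints term by term gives $\sem{C_s^\dagger}=\sum_i \ketbra{e_i}{e_i}_{\olq}\otimes S_i^\dagger$. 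By the use-phase condition, expanding over the joint computational basis of $\olq,a$,
\[
\sem{C_u}=\sum_{i}\sum_{b\in\{0,1\}}\ketbra{e_i}{e_i}_{\olq}\otimes\ketbra{b}{b}_a\otimes U_{i,b}
\]
with $U_{i,b}$ unitary on $\olt$. Throughout we tensor with identities on absent qubits, as the paper stipulates, and use $qv(C_s)\subseteq\olq\cup\{a\}$ so that $\sem{C_s}$ acts trivially on $\olt$.

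Next we compose $\sem{C_s^\dagger}\sem{C_u}\sem{C_s}$. All three factors are block diagonal in the $\olq$ basis with the same projectors $\ketbra{e_i}{e_i}_{\olq}$, so the product is $\sum_i \ketbra{e_i}{e_i}_{\olq}\otimes W_i$, where
\[
W_i=(S_i^\dagger\otimes I_{\olt})\Bigl(\sum_b \ketbra{b}{b}_a\otimes U_{i,b}\Bigr)(S_i\otimes I_{\olt}).
\]
This already shows the product is constant on $\olq$. It remains to show that each $W_i$ is constant on $a$ in the sense of \cref{def:spform}.

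This last step is the only real obstacle. In general, conjugating an $a$-diagonal block operator by a unitary on $a$ need not stay $a$-diagonal, and this is exactly where the \qfree hypothesis is used. Because $S_i$ is a permutation, $S_i\in\{I,\tX\}$. If $S_i=I$, then $W_i=\sum_b\ketbra{b}{b}\otimes U_{i,b}$. If $S_i=\tX$, then $W_i=\sum_b\ketbra{b\oplus1}{b\oplus1}\otimes U_{i,b}$, since $\tX\ketbra{b}{b}\tX=\ketbra{b\oplus 1}{b\oplus1}$. In both cases $W_i=\sum_b \ketbra{b}{b}_a\otimes U'_{i,b}$ with $U'_{i,b}$ unitary, namely $U'_{i,b}=U_{i,b\oplus s_i}$ where $S_i=\tX^{s_i}$. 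Reassembling gives
\[
\sem{C_s;C_u;C_s^\dagger}=\sum_{i,b}\ketbra{e_i}{e_i}_{\olq}\otimes\ketbra{b}{b}_a\otimes U'_{i,b},
\]
which is precisely $\spform{\olq,a}{\olt}$ by \cref{def:spform}.
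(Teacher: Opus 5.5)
Your proof is correct and is essentially the paper's argument. The paper's one-line proof tracks basis states $\ket{i}_{\olq}\ket{0}_a\ket{j}_{\olt}$ through $\sem{C_s}^\dagger\sem{C_u}\sem{C_s}$ (with \qfree guaranteeing that $\sem{C_s}$ sends basis states to basis states) and then applies Definition~\ref{def:spform}; this is your block-diagonal computation, except that the paper only writes out the $\ket{0}_a$ input, whereas your reindexing $U'_{i,b}=U_{i,b\oplus s_i}$ also covers $\ket{1}_a$, which \const on $(\olq,a)$ actually requires.

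One correction: delete the aside that ``only unitarity of $S_i$ is needed below.'' It is false and contradicts your own later step. For example, $S_i=\tH$ with $U_{i,0}=I$ and $U_{i,1}=\tX$ gives $W_i=\ketbra{+}{+}\otimes I+\ketbra{-}{-}\otimes\tX$, which is not diagonal in $a$.
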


\subsubsection{Dirty-Ancilla Template}

While the overall templates in~\cref{fig:storeuse-template} are similar for clean and dirty ancillas, the dirty-ancilla template requires additional constraints to enable this regular synthesis.

We have explained that the key intuition behind using dirty ancillas is the principle of \emph{toggle detection}: 
two identical operations cancel each other if no toggle occurs on the dirty ancilla, 
whereas one operation remains if a toggle does occur. 
This behavior enables conditional activation of functionality without assuming the ancilla is initialized to $\ket{0}$.
To capture this property, we introduce the following two notions.

\begin{definition}[Quantum Boolean Function (\qbf)]~\cite{quantumbooleanfunctions}
    A Quantum Boolean Function (\qbf) is a unitary operator $U$ such that $U^2 = I$.
\end{definition}

\begin{definition}[Controlled-QBF (\bcqbf)]
Let $\overline p,\overline q,\overline t$ be disjoint qubits.
Let $U$ be a unitary operator acting on these qubits. We say that $U$ is a \emph{$\cqbf{\overline p}{\overline q}{\overline t}$} if the following conditions hold:

\begin{enumerate}
  \item \emph{\qbf and \const:}
  $U$ is a \qbf and $U : \spform{\olp, \olq}{\olt}$.

  \item \emph{Controlledness:}
  When $\olq \ne \emptyset$, $U$ acts non-trivially on $\olt$ only if all qubits in $\olq$ are in $\ket{\overline{1}}$.
\end{enumerate}
\end{definition}

\begin{definition}[Dirty-ancilla Store-Use Template]\label{def:dirtytemplate}
    Let $S \equiv \borrowd{a}{C_s}{C_u}$. Let $\olq$ and $\olt$ be disjoint $n$ and $m$ qubits such that $qv(S)=\olq \cup \olt$ (Recall $a \notin qv(S)$). We say that $S$ forms a dirty-ancilla Store--Use template if the following conditions hold:

    \begin{enumerate}
        \item \emph{Store phase.}  
        $qv(C_s) \subseteq \olq \cup \{a\}$, and
        $\sem{C_s} : \spform{\olq}{a}$ and $\sem{C_s}$ is a \qfree unitary.
        \item \emph{Use phase.}  
        $\sem{C_u} : \cqbf{\olq}{a}{\olt}$.
    \end{enumerate}
\end{definition}

The above definition formalizes the dirty-ancilla template in~\cref{fig:storeuse-template}.
The \const and the \qfree constraints are simliar with the clean-ancilla case. The additional \bcqbf constraint captures the toggle-detection pattern. Together, these constraints enable a toggle-based construction.

\begin{proposition}\label{prop:dirtytempuncomp}
    $\sem{C_s; C_u}$ is uncomputable and $\sem{C_u; C_s; C_u; C_s^\dagger} \in \Uncomp^*(\sem{C_s; C_u}, a)$.
\end{proposition}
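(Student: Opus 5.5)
The plan is to compute $\sem{C_u; C_s; C_u; C_s^\dagger}$ explicitly in a block form indexed by the computational basis $\{\ket{e_i}\}$ of $\olq$. I will then check that it equals $I_a \otimes W$ for a unitary $W$ on $\olq\cup\olt$, and that $W = \func(\sem{C_s;C_u},a)$. By \cref{def:dirtyuncomp} this places the operator in $\Uncomp^*(\sem{C_s;C_u},a)$. Uncomputability of $\sem{C_s;C_u}$ then follows from \cref{prop:eq-valid-uncomputation}, since $\Uncomp^*\neq\emptyset$.

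\emph{Step 1 (shape of the store phase).} By \cref{def:spform}, $\sem{C_s}:\spform{\olq}{a}$ gives $\sem{C_s}=\sum_i \ketbra{e_i}{e_i}_{\olq}\otimes S_i$ with $S_i$ unitary on the single qubit $a$. Because $\sem{C_s}$ is \qfree, it permutes computational basis states. Hence each $S_i$ is a permutation of $\{\ket0,\ket1\}$, that is, $S_i = \tX^{f(i)}$ for some Boolean $f(i)\in\{0,1\}$.

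\emph{Step 2 (shape of the use phase).} By $\sem{C_u}:\cqbf{\olq}{a}{\olt}$, the operator is \const on $\olq,a$, so $\sem{C_u}=\sum_{i,b}\ketbra{e_i}{e_i}_{\olq}\otimes\ketbra{b}{b}_a\otimes W_{i,b}$. Controlledness forces $W_{i,0}=I$. Write $V_i:=W_{i,1}$; the \qbf condition $\sem{C_u}^2=I$ then gives $V_i^2=I$. Compactly, on $\ket{e_i}\ket{b}_a\ket{\psi}_{\olt}$ the use phase acts as $V_i^{\,b}$ on $\olt$.

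\emph{Step 3 (the toggle computation).} Fix $\ket{e_i}_{\olq}\ket{b}_a\ket{\psi}_{\olt}$ and track the four stages. The first $C_u$ produces $\ket{e_i}\ket{b}V_i^{b}\ket\psi$. Then $C_s$ produces $\ket{e_i}\ket{b\oplus f(i)}V_i^{b}\ket\psi$. The second $C_u$ produces $\ket{e_i}\ket{b\oplus f(i)}V_i^{b\oplus f(i)}V_i^{b}\ket\psi$. Finally $C_s^\dagger$ produces $\ket{e_i}\ket{b}V_i^{b\oplus f(i)}V_i^{b}\ket\psi$.

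A four-case check on $(b,f(i))$, using $V_i^2=I$, shows $V_i^{b\oplus f(i)}V_i^{b}=V_i^{f(i)}$ independently of $b$. Hence the composite equals $I_a\otimes W$ with $W:=\sum_i\ketbra{e_i}{e_i}_{\olq}\otimes V_i^{f(i)}$, which is unitary. By linearity this extends to arbitrary $\ket{x}_a$ (including superpositions) and arbitrary $\ket\varphi$.

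\emph{Step 4 (matching the functionality).} Apply $\sem{C_s;C_u}$ to $\ket{0}_a\ket{e_i}\ket\psi$. This gives $\ket{f(i)}_a\ket{e_i}V_i^{f(i)}\ket\psi$, so exactly one ancilla branch is nonzero. Summing the branches yields $\func(\sem{C_s;C_u},a)\colon\ket{e_i}\ket\psi\mapsto\ket{e_i}V_i^{f(i)}\ket\psi$, which is $W$ on a basis and hence everywhere by linearity. This shows $\sem{C_u;C_s;C_u;C_s^\dagger}\ket{x}_a\ket\varphi=\ket{x}_a\,\func(\sem{C_s;C_u},a)\ket\varphi$, as required.

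\emph{Main obstacle.} The delicate point is Step 1: deducing from ``\qfree'' together with ``\const on $\olq$'' that each $S_i$ is exactly $\tX^{f(i)}$, with no stray phases. I will argue it directly from the fact that a permutation matrix that is block-diagonal in $\olq$ has permutation blocks. A secondary point is stating precisely that controlledness means $W_{i,0}=I$ in the block decomposition.
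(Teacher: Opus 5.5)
Your proposal is correct and follows essentially the same route as the paper: both evaluate $\sem{C_s}^\dagger\sem{C_u}\sem{C_s}\sem{C_u}$ on basis states $\ket{e_i}_{\olq}\ket{b}_a\ket{\psi}_{\olt}$ and split on whether the store block is $I$ or $\tX$. Both then use $\sem{C_u}^2=I$ and controlledness (the $a=\ket{0}$ block of $\sem{C_u}$ is $I$) to obtain $V_i^{f(i)}$ independently of $b$.

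Your write-up is more explicit than the paper's. The paper asserts uncomputability by analogy with the clean case and leaves the match with $\func$ implicit, whereas you derive $S_i\in\{I,\tX\}$ from \qfree plus \const, check $W=\func(\sem{C_s;C_u},a)$ directly, and obtain uncomputability from $\Uncomp^*\subseteq\Uncomp$.
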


\begin{proposition}\label{prop:dirtytempspform}
    $\sem{C_u; C_s; C_u; C_s^\dagger} : \spform{\olq, a}{\olt}$.
\end{proposition}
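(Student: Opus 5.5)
We prove that $\sem{C_u; C_s; C_u; C_s^\dagger}$ is \const on $\olq, a$ by computing this operator explicitly on the basis $\ket{e_i}_{\olq}\ket{b}_a$, $b\in\{0,1\}$, with an arbitrary state $\ket{\varphi}_{\olt}$ on the targets. We show that it leaves $\ket{e_i}_{\olq}\ket{b}_a$ untouched and applies a unitary $W_i$ on $\olt$ that depends only on $i$. Definition~\ref{def:spform} then gives the claim directly, with the family $\{I_a\otimes W_i\}_{i}$ indexed by the basis of $\olq$; equivalently, with a family indexed by $(i,b)$ that does not depend on $b$.

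\emph{Step 1: normal form of the store phase.} By Definition~\ref{def:dirtytemplate}, $qv(C_s)\subseteq\olq\cup\{a\}$ and $\sem{C_s}:\spform{\olq}{a}$. Hence $\sem{C_s}=\sum_i\ketbra{e_i}{e_i}_{\olq}\otimes P_i$ for single-qubit unitaries $P_i$ on $a$, tensored with $I_{\olt}$.

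Since $\sem{C_s}$ is \qfree, it is a permutation matrix on the computational basis. Each block $P_i$ is therefore a permutation of $\{\ket0,\ket1\}$ with no phases. So $P_i=\tX^{f(i)}$ for a Boolean function $f$, and $\sem{C_s}\ket{e_i}\ket{b}\ket{\varphi}=\ket{e_i}\ket{b\oplus f(i)}\ket{\varphi}$. Consequently $\sem{C_s^\dagger}=\sem{C_s}$.

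This step needs the most care: we must use exactly what \qfree means here, namely permutation matrices with no phase freedom. If phases were allowed, $P_i$ could be $\tX$ times a diagonal phase, but the argument below would survive that because $C_s^\dagger$ undoes it.

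\emph{Step 2: normal form of the use phase.} By the definition of \bcqbf with parameters $\olp:=\olq$ (invariant), control $a$ and targets $\olt$, we have $\sem{C_u}:\spform{\olq,a}{\olt}$. Hence $\sem{C_u}=\sum_{i,b}\ketbra{e_i}{e_i}\otimes\ketbra{b}{b}_a\otimes U_{i,b}$.

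Controlledness gives $U_{i,0}=I$. The \qbf condition $\sem{C_u}^2=I$ gives blockwise $U_{i,1}^2=I$. Write $V_i:=U_{i,1}$, so that $\sem{C_u}\ket{e_i}\ket b\ket\varphi=\ket{e_i}\ket b V_i^{\,b}\ket\varphi$.

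\emph{Step 3: track a basis input through the four stages.} Starting from $\ket{e_i}\ket b\ket\varphi$:
\begin{itemize}
\item the first $C_u$ gives $\ket{e_i}\ket bV_i^{b}\ket\varphi$;
\item $C_s$ gives $\ket{e_i}\ket{b\oplus f(i)}V_i^{b}\ket\varphi$;
\item the second $C_u$ gives $\ket{e_i}\ket{b\oplus f(i)}V_i^{\,b\oplus f(i)}V_i^{b}\ket\varphi$;
\item $C_s^\dagger$ gives $\ket{e_i}\ket{b}V_i^{\,b\oplus f(i)}V_i^{b}\ket\varphi$.
\end{itemize}

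Because $V_i^2=I$, the product $V_i^{\,b\oplus f(i)}V_i^{\,b}$ equals $V_i^{\,f(i)}$ in both cases: for $b=0$ it is $V_i^{f(i)}$, and for $b=1$ it is $V_i^{1\oplus f(i)}V_i=V_i^{f(i)}$. Thus the whole operator is $\sum_i\ketbra{e_i}{e_i}_{\olq}\otimes I_a\otimes V_i^{f(i)}$. This has the required form for $\spform{\olq,a}{\olt}$ with the unitaries $W_i:=V_i^{f(i)}$, independent of $b$, which completes the proof.
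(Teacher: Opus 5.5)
Your proof is correct and follows the paper's route. The paper also tracks a basis input $\ket{i}_{\olq}\ket{a}_a\ket{j}_{\olt}$ through $C_u; C_s; C_u; C_s^\dagger$, using its equation for $\sem{C_s}^\dagger\sem{C_u}\sem{C_s}\sem{C_u}$ from the proof of the dirty-template uncomputation proposition. It then splits on whether $\sem{C_s}_i$ is $I$ or $\tX$ and reads off the \const form from the definition. You make explicit what the paper leaves implicit: that \qfree forces $P_i=\tX^{f(i)}$, that controlledness gives $U_{i,0}=I$, and that $V_i^2=I$.
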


\subsection{Static Reasoning System}

The templates already capture a semantic structure for the existence of uncomputation and regular synthesis.
Therefore, checking whether a program conforms to the templates--that is, whether $C_s$ and $C_u$ satisfy the corresponding constraints--realizes  synthesis-oriented existence checking.

Prior work such as~\cite{silq,qurts} adopts a type-system discipline to automatically check conformance to their templates. However, \bcqbf for dirty template is a relatively heavy semantic property to encode in a type system, which motivates a more direct and focused approach in our setting. To automate template conformance checking,
we present a syntax-directed, rule-based static reasoning system for deriving the relevant properties.

\paragraph{Reasoning system.}

Formally, the reasoning system works on a circuit $G \in \QCa$ and operates within a context $\Gamma$, which is a set of circuit semantics judgments accumulated during reasoning:
\begin{itemize}
    \item $C : \spform{\olq}{\olt}$ indicates that $\sem{C} : \spform{\olq}{\olt}$;
    \item $C : \qfree$ indicates that $\sem{C}$ is \qfree;
    \item $C : \qbf$ indicates that $\sem{C}$ is a QBF;
    \item $C : \cqbf{\olp}{\olq}{\olt}$ indicates that $\sem{C}$ is a $\cqbf{\olp}{\olq}{\olt}$;
    \item $[C_1, C_2]$ indicates that $\sem{C_1}$ and $\sem{C_2}$ commute.
\end{itemize}

We design a set of syntactic rules that infer semantic properties of a circuit $C$ based on its syntactic structure.
Each property, such as \const, \qfree, \qbf, and \bcqbf, captures a specific semantic constraint defined in the templates introduced earlier.  
The key correctness guarantee is that every rule inferring a property for a $\borrowcd{a}{C_s}{C_u}$ statement 
explicitly requires $\sem{C_s}$ and $\sem{C_u}$ to satisfy the corresponding template constraints.  
For example:
\[
    \inferrule*[right=CONST-BORROW CLEAN]{ 
        \Gamma \vdash C_s : \spform{\olq}{a} \\ \Gamma \vdash C_u : \spform{\olq, a}{\olt} \\ \Gamma \vdash C_s : \qfree 
    }{ 
        \Gamma \vdash \borrowc{a}{C_s}{C_u} : \spform{\olq}{\olt}
    }
\]

Hence, only programs that use templates correctly can be inferred to possess these semantic properties.  
Since the overall program has \const property if and only if every $\bborrow$ statement has \const property, 
our reasoning system infers a \const property for the whole circuit to ensure its semantic validity.
For the complete set of rules, see Fig.~\ref{fig:sp-rule}--\ref{fig:comm-rule}.  
Formal proofs of their soundness are provided in Appendix~\ref{app:proof-soundreasoning}.

\begin{theorem}[Soundness]\label{thm:reasoning-soundness}
If a circuit $G \in \QCa$ is assigned a \const property by the static reasoning system, 
then for every borrow substatement $B$ in $G$, $\sem{B} \ne \bot$; moreover, $\sem{G} \ne \bot$.
\end{theorem}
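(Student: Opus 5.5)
The plan is to prove a stronger statement by structural induction on derivations: every judgment derivable from $\Gamma$ is semantically valid, and every subprogram that receives a judgment has semantics different from $\bot$. Concretely, I will call a context $\Gamma$ \emph{sound} if each judgment in it holds semantically: $\sem{C}:\spform{\olq}{\olt}$, $\sem{C}$ is \qfree, $\sem{C}$ is a \qbf, $\sem{C}$ is a $\cqbf{\olp}{\olq}{\olt}$, or $\sem{C_1}\sem{C_2}=\sem{C_2}\sem{C_1}$, and in every case $\sem{C}\neq\bot$. I will then show, for each rule in Fig.~\ref{fig:sp-rule}--\ref{fig:comm-rule}, that sound premises yield a sound conclusion. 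Note that each property (\const, \qfree, \qbf, \bcqbf, commutation) is only meaningful for a unitary, so the semantic reading of any judgment already forces $\sem{C}\ne\bot$. Hence $\neq\bot$ is carried along automatically.

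The rules fall into three groups.
\begin{enumerate}
\item \emph{Atomic rules.} For $\bskip$ and $U[\olq]$, soundness follows from direct checks, e.g.\ $\tCNOT[q,a]:\spform{q}{a}$, or the fact that $\tMCX{i}$ is \qfree and a \qbf.
\item \emph{Compositional rules.} For sequence and $\bqif$, I will use Definition~\ref{def:semantics}. \const is preserved because block-diagonal operators $\sum_i\ketbra{e_i}{e_i}\otimes U_i$ are closed under products and under quantum case. \qfree is preserved because permutations compose. \qbf and \bcqbf for sequences are preserved using the commutation premise $[C_1,C_2]$, since $(U_1U_2)^2=U_1^2U_2^2=I$ when the factors commute and each squares to $I$. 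Controlledness is preserved because a product of operators that act trivially off $\ket{\overline 1}_{\olq}$ also acts trivially there.
\item \emph{Borrow rules} (CONST-BORROW CLEAN, DIRTY, DIRTY$^*$). Here the premises say exactly that the body is a clean (resp.\ dirty) Store--Use template, in the sense of Definitions~\ref{def:cleantemplate} and~\ref{def:dirtytemplate}. Proposition~\ref{prop:cleantempuncomp} (resp.\ Proposition~\ref{prop:dirtytempuncomp}) then gives $\Uncomp(\sem{C_s;C_u},a)\neq\emptyset$, using $\Uncomp^*\subseteq\Uncomp$ in the dirty case. Since $\sem{C_s;C_u}\neq\bot$ by the induction hypothesis, clause~5 of Definition~\ref{def:semantics} gives $\sem{B}=\func(\sem{C_s;C_u},a)$, which is unitary by Proposition~\ref{prop:eq-valid-uncomputation}.
\end{enumerate}

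It remains to show that $\sem{B}:\spform{\olq}{\olt}$. By Proposition~\ref{prop:cleantempspform} (resp.\ Proposition~\ref{prop:dirtytempspform}), the synthesized $W=\sem{C_s;C_u;C_s^\dagger}$ (resp.\ $\sem{C_u;C_s;C_u;C_s^\dagger}$) is $\spform{\olq,a}{\olt}$ and belongs to $\Uncomp$. Its restriction to $\ket{0}_a\otimes\cH_{\olq\olt}$ equals $\ket{0}_a\otimes\func$. Writing $W=\sum_{i,b}\ketbra{e_i}{e_i}_{\olq}\otimes\ketbra{b}{b}_a\otimes W_{i,b}$ and restricting to $b=0$ gives $\func=\sum_i\ketbra{e_i}{e_i}\otimes W_{i,0}$, which is \const on $\olq$.

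With the rule-by-rule lemma in hand, the theorem follows. Any derivation of $G:\spform{\cdot}{\cdot}$ contains, for each borrow substatement $B$, a subderivation concluding a property of $B$. Every such $B$ is a subterm reached through the syntax-directed rules, because borrow has no rule other than the borrow rules. Soundness then gives $\sem{B}\ne\bot$ and $\sem{G}\neq\bot$.

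The main obstacle is the rules for \bcqbf and commutation. These are non-compositional, and the qubit sets in the premises and conclusion (for instance $\olp$ versus $\olq$, and the padding identities $I_{qv(C_1)\setminus qv(C_2)}$) must be tracked carefully so that commutation and controlledness really lift to the tensored operators. I would first handle the $\bqif$ rule for \bcqbf, adding the control qubit to $\olq$ and checking that $\ketbra{0}{0}\otimes I$ is trivial, and then the sequential rule using $[C_1,C_2]$. A secondary subtlety is confirming that the qubit partition $\olq,\olt$ used in the borrow rules matches the one required by Definitions~\ref{def:cleantemplate} and~\ref{def:dirtytemplate}, in particular that $qv(C_s)\subseteq\olq\cup\{a\}$ follows from $C_s:\spform{\olq}{a}$.
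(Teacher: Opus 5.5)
Your overall route is the paper's. You prove each rule sound, discharge the borrow rules with Propositions~\ref{prop:cleantempuncomp}--\ref{prop:dirtytempspform}, and use the fact that a \const derivation reaches every borrow subterm.

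\emph{The invariant is too strong.} You build $\sem{C}\neq\bot$ into the soundness of \emph{every} judgment, commutation included, and claim that sound premises give sound conclusions for all rules in Figures~\ref{fig:sp-rule}--\ref{fig:comm-rule}. That claim fails for three rules:
\begin{itemize}
\item COMM-SELF and COMM-SKIP have no premises. They yield $[C,C]$ and $[\bskip,C]$ for an arbitrary $C$, including one that contains an invalid borrow.
\item QBF-CONJUGATE concludes $C_2;C_1;C_2^\dagger:\qbf$ with no premise on $C_2$.
\end{itemize}
The repair is this: no \const, \qfree, or \bcqbf rule has a \qbf or commutation premise. So any derivation of a \const judgment lies entirely in the fragment of \const, \qfree, and \bcqbf judgments, and you should run the induction over that fragment only. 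Your closing claim, that a \const derivation covers every borrow subterm, is true, but it needs the same observation. ``Borrow has no rule other than the borrow rules'' is not enough, because a rule like QBF-CONJUGATE concludes something about a program containing $C_2$ without any premise covering $C_2$. Inside the fragment, every rule with a compound subject has premises on all its immediate subprograms (CQBF-QIF fixes $C_0=\bskip$), and that is what your last step uses.

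\emph{Missing borrow cases.} Within the fragment, your case list skips the borrow rules whose content goes beyond the template propositions. You cannot skip them: the use part of a dirty borrow may itself contain borrows, and their \bcqbf judgment can only come from CQBF-BORROW CLEAN or CQBF-BORROW DIRTY.

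For the dirty rule you need the toggle argument. Write $\func=\sum_i\ketbra{e_i}{e_i}\otimes\sem{C_u}_{i,s_i}$, where $s_i$ is the ancilla value after $C_s$ on input $\ket{e_i}\ket{0}_a$.
\begin{itemize}
\item Controlledness of $C_u$ on $a$ gives $\sem{C_u}_{i,0}=I$, so $\func$ acts nontrivially only when $s_i=1$.
\item The \bcqbf premise on $C_s$ allows $s_i=1$ only if $D,E,F$ are all $1$.
\item The premise on $C_u$ additionally forces $B,E,H$ to be $1$.
\end{itemize}
Together these give exactly the conclusion's control set $\{B,D,E,F,H\}$. QFree-BORROW likewise needs $\func$ to be a permutation.

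Finally, the sequential \bcqbf rule you single out as the main obstacle is not in the system.
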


\begin{figure}[]
    \centering
    \begin{mathpar}
    
    \inferrule*[right=CONST-SKIP]
    { }
    { \Gamma \vdash \bskip : \spform{}{} }
    
    \inferrule*[right=CONST-UNITARY]
    { }
    { \Gamma \vdash U[q] : \spform{}{q} }
    
    \inferrule*[right=CONST-IDENTITY]
    { U = I}
    { \Gamma \vdash U[q] : \spform{q}{} }

    \inferrule*[right=CONST-SEQ]{
      \Gamma \vdash C_1 : \spform{\overline{q_1}}{\overline{t_1}} \\
      \Gamma \vdash C_2 : \spform{\overline{q_2}}{\overline{t_2}} \\
      \olq = \overline{q_1} \cup \overline{q_2} \setminus (\overline{t_1} \cup \overline{t_2}) \\
      \olt = (\overline{q_1} \cup \overline{q_2} \cup \overline{t_1} \cup \overline{t_2}) \setminus \olq
    }{
      \Gamma \vdash C_1 ; C_2 : \spform{\olq}{\olt}
    }

    \inferrule*[right=CONST-QIF]{ 
        \Gamma \vdash C_1 : \spform{\overline{q_1}}{\overline{t_1}} \\ 
        \Gamma \vdash C_0 : \spform{\overline{q_2}}{\overline{t_2}} \\ 
        x \notin \overline{q_1} \cup \overline{q_2} \cup \overline{t_1} \cup \overline{t_2} \\
        \olq = \overline{q_1} \cup \overline{q_2} \setminus \overline{t_1} \cup \overline{t_2} \\
        \olt = \overline{q_1} \cup \overline{q_2} \cup \overline{t_1} \cup \overline{t_2} \setminus \olq 
    }{ 
        \Gamma \vdash \qif{x}{C_1}{C_0} : \spform{x, \olq}{\olt} 
    }

    \inferrule*[right=CONST-BORROW CLEAN]{ 
        \Gamma \vdash C_s : \spform{\olq}{a} \\ \Gamma \vdash C_u : \spform{\olq,a}{\olt} \\ \Gamma \vdash C_s : \qfree 
    }{ 
        \Gamma \vdash \borrowc{a}{C_s}{C_u} : \spform{\olq}{\olt}
    }

    \inferrule*[right=CONST-BORROW DIRTY]
    { \Gamma \vdash C_s : \spform{\olq}{a} \\ \Gamma \vdash C_u : \cqbf{\olq}{a}{\olt} \\ \Gamma \vdash C_s : \qfree }
    { \Gamma \vdash \borrowd{a}{C_s}{C_u} : \spform{\olq}{\olt} }

    \inferrule*[right=CONST-CQBF]{
        \Gamma \vdash C : \cqbf{\olp}{\olq}{\olt}
    } {
        \Gamma \vdash C : \spform{\olp, \olq}{\olt}
    }

    \inferrule*[right=CONST-NEW SQUARE]
    { \Gamma \vdash C : \spform{\olq}{\olt} \\
    a \notin \olq \cup \olt}
    { \Gamma \vdash C : \spform{a, \olq}{\olt} }

    \inferrule*[right=CONST-MOVE SQUARE]
    { \Gamma \vdash C : \spform{\olq,a}{\olt} }
    { \Gamma \vdash C : \spform{\olq}{a,\olt} }

\end{mathpar}
    \caption{\const rules}
    \label{fig:sp-rule}
\end{figure}

\begin{figure}[]
    \centering
    \begin{mathpar}
        \inferrule*[right=QFree-SKIP]
        { }
        { \Gamma \vdash \bskip : \qfree }    
        
        \inferrule*[right=QFree-BASIC]
        { U = I \lor U = X}
        { \Gamma \vdash U[q] : \qfree }    
        
        \inferrule*[right=QFree-SEQ]
        { \Gamma \vdash C_1 : \qfree \\ \Gamma \vdash C_2 : \qfree }
        { \Gamma \vdash C_1; C_2 : \qfree }    
        
        \inferrule*[right=QFree-QIF]
        { \Gamma \vdash C_1 : \qfree \\ \Gamma \vdash C_0 : \qfree }
        { \Gamma \vdash \qif{x}{C_1}{C_0} : \qfree }
            
        \inferrule*[right=QFree-BORROW CLEAN]
        { \Gamma \vdash C_s : \spform{\olq}{a} \\ \Gamma \vdash C_u : \spform{\olq, a}{\olt} \\ \Gamma \vdash C_s : \qfree \\ \Gamma \vdash C_u : \qfree }
        { \Gamma \vdash \borrowc{a}{C_s}{C_u} : \qfree }
            
        \inferrule*[right=QFree-BORROW DIRTY]
        { \Gamma \vdash C_s : \spform{\olq}{a} \\ \Gamma \vdash C_u : \cqbf{\olq}{a}{\olt} \\ \Gamma \vdash C_s : \qfree \\ \Gamma \vdash C_u : \qfree }
        { \Gamma \vdash \borrowd{a}{C_s}{C_u} : \qfree }
    \end{mathpar}
    \caption{\qfree rules}
    \label{fig:qfree-rule}
\end{figure}

\begin{figure}[]
    \centering
    \begin{mathpar}
        \inferrule*[right=QBF-SKIP]
        { }
        { \Gamma \vdash \bskip : \qbf }
        
        \inferrule*[right=QBF-BASIC]
        { U = I \lor U = X \lor U = Y \lor U = Z \lor U = H }
        { \Gamma \vdash U[q] : \qbf }
        
        \inferrule*[right=QBF-CONJUGATE]
        { \Gamma \vdash C_1 : \qbf }
        { \Gamma \vdash C_2 ; C_1 ; C_2^\dagger  : \qbf }
        
        \inferrule*[right=QBF-TENSOR]
        { \Gamma \vdash C_1 : \qbf \\ \Gamma \vdash C_2 : \qbf \\ qv(C_1) \cap qv(C_2) = \emptyset }
        { \Gamma \vdash C_1 ; C_2 : \qbf }
        
        \inferrule*[right=QBF-SEQ]
        { \Gamma \vdash C_1 : \qbf \\ \Gamma \vdash C_2 : \qbf \\ \Gamma \vdash [C_1, C_2] }
        { \Gamma \vdash C_1; C_2 : \qbf }
        
        \inferrule*[right=QBF-QIF]
        { \Gamma \vdash C_1 : \qbf \\ \Gamma \vdash C_0 : \qbf }
        { \Gamma \vdash \qif{x}{C_1}{C_0} : \qbf }
        
        \inferrule*[right=QBF-BORROW CLEAN]
        { \Gamma \vdash C_s : \spform{\olq}{a} \\ \Gamma \vdash C_u : \spform{\olq,a}{\olt} \\ \Gamma \vdash C_s : \qfree \\ \Gamma \vdash C_u : \qbf }
        { \Gamma \vdash \borrowc{a}{C_s}{C_u} : \qbf }
        
        \inferrule*[right=QBF-BORROW DIRTY]
        { \Gamma \vdash C_s : \spform{\olq}{a} \\ \Gamma \vdash C_u : \cqbf{\olq}{a}{\olt} \\ \Gamma \vdash C_s : \qfree \\ \Gamma \vdash C_u : \qbf }
        { \Gamma \vdash \borrowd{a}{C_s}{C_u} : \qbf }

        \inferrule*[right=QBF-CQBF]{
            \Gamma \vdash C : \cqbf{\olp}{\olq}{\olt}
        }{
            \Gamma \vdash C : \qbf
        }
    
    \end{mathpar}
    \caption{QBF rules}
    \label{fig:qbf-rule}
\end{figure}

\begin{figure}[]
    \centering
    \begin{mathpar}
        
        \inferrule*[right=CQBF-SKIP]
        { }
        { \Gamma \vdash \bskip : \cqbf{}{}{} }
        
        \inferrule*[right=CQBF-BASIC]
        { U = X \lor U = Y \lor U = Z \lor U = H }
        { \Gamma \vdash U[q] : \cqbf{}{}{q} }
        
        \inferrule*[right=CQBF-IDENTITY]
        { U = I }
        { \Gamma \vdash U[q] : \cqbf{}{q}{} }

        \inferrule*[right=CQBF-QIF]{
            \Gamma \vdash C_1 : \cqbf{\olp}{\olq}{\olt} \land C_0 = \bskip 
        }{ 
            \Gamma \vdash \qif{x}{C_1}{C_0} : \cqbf{\olp}{x, \overline{q}}{\olt} 
        }

        \inferrule*[right=CQBF-BORROW CLEAN]{ 
            \Gamma \vdash C_s : \spform{\olq}{a} \\ 
            \Gamma \vdash C_u : \spform{\olq, a}{\olt} \\ 
            \Gamma \vdash C_s : \qfree \\
            \Gamma \vdash C_s : \cqbf{A, B, C}{D, E, F}{a} \\
            \Gamma \vdash C_u : \cqbf{a, A, D, G}{B, E, H}{\olt} \\
            A, B, C, D, E, F, G, H \text{ are disjoint quantum registers} \\
            \overline{q} = A \cup B \cup C \cup D \cup E \cup F \cup G \cup H
        }{ 
            \Gamma \vdash \borrowc{a}{C_s}{C_u} : \cqbf{A, C, D, F, G}{B, E, H}{\olt} 
        }
        
        \inferrule*[right=CQBF-BORROW DIRTY]{ 
            \Gamma \vdash C_s : \spform{\olq}{a} \\ 
            \Gamma \vdash C_s : \qfree \\
            \Gamma \vdash C_s : \cqbf{A, B, C}{D, E, F}{a} \\
            \Gamma \vdash C_u : \cqbf{A, D, G}{B, E, H, a}{\olt} \\
            A, B, C, D, E, F, G, H \text{ are disjoint quantum registers} \\
            \overline{q} = A \cup B \cup C \cup D \cup E \cup F \cup G \cup H 
        }{ 
            \Gamma \vdash \borrowd{a}{C_s}{C_u} : \cqbf{A, C, G}{B, D, E, F, H}{\olt}
        }

        \inferrule*[right=CQBF-NEW SQUARE]{
            \Gamma \vdash C : \cqbf{\olp}{\olq}{\olt} \\
            a \notin \olp \cup \olq \cup \olt
        }{ 
            \Gamma \vdash C : \cqbf{a, \olp}{\olq}{\olt} 
        }

        \inferrule*[right=CQBF-MOVE SQUARE]{
            \Gamma \vdash C : \cqbf{\olp}{\olq}{\olt}
        }{ 
            \Gamma \vdash C : \cqbf{\olp \setminus x}{\olq}{\olt, x} 
        }

        \inferrule*[right=CQBF-MOVE CURLY]{
            \Gamma \vdash C : \cqbf{\olp}{\olq}{\olt}
        }{ 
            \Gamma \vdash C : \cqbf{\olp, x}{\olq \setminus x}{\olt} 
        }

    \end{mathpar}
    \caption{CQBF rules}
    \label{fig:cqbf-rule}
\end{figure}

\begin{figure}[]
    \centering
    \begin{mathpar}
        \inferrule*[right=COMM-SELF]
        { }
        { \Gamma \vdash [C, C] }
        
        \inferrule*[right=COMM-SKIP]
        { }
        { \Gamma \vdash [\bskip, C] }
        
        \inferrule*[right=COMM-CONST]{
            \Gamma \vdash C_1 : \spform{\overline{q_1}}{\overline{t_1}} \\ 
            \Gamma \vdash C_2 : \spform{\overline{q_2}}{\overline{t_2}} \\
            \overline{t_1} \cap \overline{t_2} = \emptyset \land (\overline{q_1} \cup \overline{q_2}) \cap (\overline{t_1} \cup \overline{t_2}) = \emptyset 
        }{ 
            \Gamma \vdash [C_1, C_2] 
        }
        
        \inferrule*[right=COMM-QBF]
        { \Gamma \vdash C_1 : \qbf \\ \Gamma \vdash C_2 : \qbf \\ \Gamma \vdash C_1 ; C_2 : \qbf }
        { \Gamma \vdash [C_1, C_2] }
    \end{mathpar}
    \caption{Commutativity rules}
    \label{fig:comm-rule}
\end{figure}

\subsection{Synthesis from Templates}

We give the complete definitions of the synthesis procedure \TpUnx for the static system.

\begin{definition}\label{def:tpsynth}
The transformation 
\[
\Tpsynth : \QCa \to \QCa
\]
takes as input a circuit $G \in \QCa$ and recursively synthesizes its subcircuits as follows:
\begin{align*}
\Tpsynth(\bskip) &\triangleq \bskip, \\
\Tpsynth(U[\olq]) &\triangleq U[\olq], \\
\Tpsynth(C_1; C_2) &\triangleq \Tpsynth(C_1); \Tpsynth(C_2), \\
\Tpsynth(\qif{q}{C_1}{C_0}) &\triangleq \qif{q}{\Tpsynth(C_1)}{\Tpsynth(C_0)}, \\
\Tpsynth(\borrowc{a}{C_s}{C_u}) &\triangleq \Tpsynth(C_s); \Tpsynth(C_u); \Tpsynth(C_s)^\dagger, \\
\Tpsynth(\borrowd{a}{C_s}{C_u}) &\triangleq \Tpsynth(C_u); \Tpsynth(C_s); \Tpsynth(C_u);\\
& \quad \quad \Tpsynth(C_s)^\dagger.
\end{align*}
\end{definition}

\begin{definition}\label{def:tpun}
The transformation 
\[
\TpUnx: \QCa \to \QCa
\]
operates on a circuit $G \in \QCa$ as follows:
it first applies the static reasoning system to derive a \const property of $G$.
If the reasoning fails, \TpUnx returns the original circuit $G$.
Otherwise, it outputs the synthesized circuit $\Tpsynth(G)$.
\end{definition}

\begin{theorem}[Synthesis]
\label{thm:tpun-soundness-full}
Let $G$ be a quantum circuit with clean ancillas $\ola$, dirty ancillas $\overline{d}$, and working qubits $\olq$.
If $G$ is assigned a \const property by the static reasoning system, then $\TpUnx(G)$ outputs a circuit containing no $\bborrow$ statement, and
\[
    \sem{\TpUnx(G)}\ket{0}_{\ola}\ket{x}_{\overline{d}}\ket{\varphi}_{\olq}
    = \ket{0}_{\ola}\ket{x}_{\overline{d}}(\sem{G}\ket{\varphi}_{\olq}),
\]
for all $\ket{x}$ and $\ket{\varphi}$.
\end{theorem}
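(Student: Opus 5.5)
The plan is a structural induction on $G$ that proves a strengthened invariant. The invariant applies to every subprogram $C$ that the static reasoning system assigns a \const property. Let $\ola_C$ and $\overline{d}_C$ be the clean and dirty ancillas borrowed inside $C$. We claim that $\Tpsynth(C)$ contains no $\bborrow$ and that, for all $\ket{x}$ and $\ket{\varphi}$,
\[
\sem{\Tpsynth(C)}\ket{0}_{\ola_C}\ket{x}_{\overline{d}_C}\ket{\varphi} = \ket{0}_{\ola_C}\ket{x}_{\overline{d}_C}(\sem{C}\ket{\varphi}).
\]
By \cref{thm:reasoning-soundness}, $\sem{C}\ne\bot$, so the right-hand side makes sense. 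The absence of $\bborrow$ is immediate from \cref{def:tpsynth}, since every clause removes a borrow or recurses.

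The invariant is not quite enough, because the borrow cases need more than the action on $\ket{0}$ or a fixed $\ket{x}$. Each derived property of the original subprogram (\const, \qfree, \qbf, \bcqbf, commutation) must also transfer to its synthesized version restricted to the ancilla-fixed subspace. I would therefore prove, simultaneously, that $\sem{\Tpsynth(C)} = I_{\overline{d}_C}\otimes W$ on the dirty part, with $W$ acting as $\sem{C}$ when the clean ancillas are in $\ket{0}$. A cleaner way to say this is that, conjugated by the clean-ancilla projection, the synthesized unitary equals $\sem{C}$ tensored with the identity on the dirty ancillas. This makes it legitimate to apply \cref{prop:cleantempuncomp} and \cref{prop:dirtytempuncomp} to the synthesized bodies in place of the original ones.

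The non-borrow cases are routine. $\bskip$ and $U[\olq]$ are immediate. For $C_1;C_2$, apply the invariant twice: the ancillas return to $\ket{0}$ and $\ket{x}$ after $\Tpsynth(C_1)$, so the invariant for $C_2$ applies, using fresh ancilla names and the tensoring conventions of \cref{def:semantics}. For $\bqif$, use linearity on the $\ket{0}_q$ and $\ket{1}_q$ branches, noting $q$ is not an ancilla.

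For the clean borrow $\borrowc{a}{C_s}{C_u}$, the derivation supplies the premises of CONST-BORROW CLEAN. By induction, $\Tpsynth(C_s)$ and $\Tpsynth(C_u)$ agree with $\sem{C_s}$ and $\sem{C_u}$ on the inner-ancilla-fixed subspace. Hence $\Tpsynth(C_s);\Tpsynth(C_u);\Tpsynth(C_s)^\dagger$ acts there as $\sem{C_s;C_u;C_s^\dagger}$, where $\Tpsynth(C_s)^\dagger$ also restores the inner ancillas because inverses preserve the invariant. \cref{prop:cleantempuncomp} then gives membership in $\Uncomp(\sem{C_s;C_u},a)$. This yields the output $\ket{0}_a(\func(\sem{C_s;C_u},a)\ket{\varphi}) = \ket{0}_a\sem{\borrowc{a}{C_s}{C_u}}\ket{\varphi}$. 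The dirty borrow is handled in the same way using \cref{prop:dirtytempuncomp}: membership in $\Uncomp^*$ gives restoration of an arbitrary $\ket{x}_a$ together with the functionality $\func$, which is exactly the dirty clause.

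The main obstacle is exactly this transfer in the borrow cases. The template propositions are stated for $\sem{C_s}$ and $\sem{C_u}$, but they are applied to synthesized circuits that act on extra inner ancillas. A clean inner ancilla is only guaranteed to be restored from $\ket{0}$, so on its nonzero sector the synthesized circuit may be arbitrary; dirty inner ancillas are restored from any state. The remedy is to work throughout on the invariant subspace $\ket{0}_{\ola_C}\otimes\cH$. I would first show that this subspace is invariant under $\sem{\Tpsynth(C)}$ and its inverse, and then check that each template proof only evaluates $C_s$, $C_u$, and $C_s^\dagger$ on that subspace. Finally, I would conclude with the statement for $G$ itself.
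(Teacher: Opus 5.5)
Your proposal is correct and follows essentially the same route as the paper: a structural induction along the clauses of \Tpsynth, where the induction hypothesis replaces each synthesized piece by its original semantics on the ancilla-fixed subspace. This includes $\Tpsynth(C_s)^\dagger$, which the paper uses silently and you justify via subspace invariance. \cref{prop:cleantempuncomp} and \cref{prop:dirtytempuncomp} are then applied to the original $\sem{C_s}$ and $\sem{C_u}$. Your worry about transferring \const, \qfree, \qbf and \bcqbf to the synthesized circuits is unnecessary: the borrow cases only use agreement on $\ket{0}_{\ola_C}\otimes\cH$, which is just the linear extension of your basic invariant.
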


\section{Deferred Formalization}\label{app:formalization}

\subsection{Synthesis from Normal Form}\label{app:rwun-full}

\paragraph{\qfree Circuits.}
We first consider \qfree circuits, for which the synthesis of clean and dirty uncomputation is almost identical and straightforward. Given a \qfree circuit $G$ in the normal form,

\begin{itemize}
    \item Removing all $\tMCX{i}$ gates targeting ancillas in $G$ yields a circuit $\cG$ such that $\sem{\cG} \in \Uncomp(\sem{G}, \ola)$.
    \item Further removing all $\tMCX{i}$ gates that use ancillas as controls yields a circuit $\cG'$ such that $\sem{\cG'} \in \Uncomp^*(\sem{G}, \ola)$.
\end{itemize}

\paragraph{Quantum circuits}

Once $G$ is normalized into $G'$, we can directly synthesize a circuit $\cG$ satisfying $\sem{\cG} \in \Uncomp(\sem{G}, \ola)$ by reversing all gates that target ancillas, in reverse order:

\begin{itemize}
\item For $\tX$ and $\tMCX{i}$ gates targeting an ancilla, append the same gate;
\item For $\tZ$, $\tS$, and $\tT$ gates targeting an ancilla, append nothing.
\end{itemize}

For dirty ancillas, we propose an algorithm in Figure~\ref{alg:rewrite-phasereorderring} to further reorder the tail gates targetting ancillas so that all phase-type gates precede the $\tX$ and $\tMCX{i}$ gates.
To make it feasible, we expend the gate set for output circuits by introducing controlled phase gates, resulting in
$\{\tX, \tZ, \tH, \tS, \tT, \tMCX{i}, \tCZ, \tCS, \tCT\}$.
We further give the following rewriting rules for the transformation for $V \in \{\tZ,\tS, \tT \}$.
\begin{align*}\setlength{\itemsep}{1pt}
    & \tX[p]; V[p] \equiv V^\dagger[p]; \tX[p] \quad\quad\quad\quad\quad\quad\quad\quad\quad
    \tCNOT[p, t]; V[p] \equiv V[p]; \tCNOT[p, t] \\
    & \tCNOT[p, t]; V[t] \equiv V[p]; V[t]; \mathtt{CV}^\dagger[p, t]; \mathtt{CV}^\dagger[p, t] \tCNOT[p, t] \\
    & \tX[p]; \mathtt{CV}[p, t] \equiv \mathtt{CV}[p, t]^\dagger; V[t]; \tX[p] 
    \quad\quad\quad\quad\quad
    \tX[t]; \mathtt{CV}[p, t] \equiv \mathtt{CV}[p, t]^\dagger; V[p]; \tX[p] \\
    & \tCNOT[p, a]; \mathtt{CV}[p, t] \equiv \mathtt{CV}[p, t]; \tCNOT[p, a]
    \quad\quad\quad
    \tCNOT[t, a]; \mathtt{CV}[p, t] \equiv \mathtt{CV}[p, t]; \tCNOT[t, a] \\
    & \tCNOT[p, t]; \mathtt{CV}[p, t] \equiv \mathtt{CV}^\dagger[p, t]; V[p]; \tCNOT[p, t] \quad
    \tCNOT[t, p]; \mathtt{CV}[p, t] \equiv \mathtt{CV}^\dagger[p, t]; V[t]; \tCNOT[t, p]
\end{align*}

When such transformation succeeds, removing all gates on dirty ancillas exactly generates a circuit $\cG$ such that $\sem{\cG} \in \Uncomp^*(\sem{G}, \ola)$.

\begin{figure*}[htbp]
    \centering
    \begin{minipage}[t]{0.80\textwidth}
        \begin{algorithm}[H]
        \caption{Rewrite-based phase reordering.}
        \label{alg:rewrite-phasereorderring}
        \begin{algorithmic}[1]
        \Require A quantum circuit $G$ in its normal form, acting on an ancillas $\ola$ and qubits $\olq$.
        \State $G' \gets G.\text{copy}()$
        \While{\textbf{true}}
            \State Find the first $\tMCX{i}$ gate targeting an ancilla followed by a gate $\in \{\tZ, \tS, \tT, \tCZ, \tCS, \tCT\}$ in $G'$.
            \If{Not found}
                \State \Return $G'$
            \Else
                \If{Rewriting rules applicable}
                    \State Rewrite in $G'$.
                \Else
                    \State \textbf{error}("Fail.")
                \EndIf
            \EndIf
        \EndWhile
        \end{algorithmic}
        \end{algorithm}
    \end{minipage}
    \caption{Rewrite-based phase reordering.}
\end{figure*}

In the above text, we gave a high-level presentation of the rewrite-based uncomputation procedure \RwUnx and its soundness result. 
Here we provide the complete formal definitions of all intermediate transformations, including \NWsynth, \Rwsynth, and \RwUnx.

\begin{definition}[Normalization-based Synthesis]
\label{def:NWsynth}
The transformation 
\[
\NWsynth : (\QCa,\, a[^*] : qubit) \to \QCa \cup \{\false\}
\]
takes as input a circuit $G \in \QCa$ that is both rewrite-regular and template-regular, together with a qubit $a \in qv(G)$. 
It proceeds as follows:
\begin{enumerate}
    \item \textbf{Normalization.} Normalize $G$ with respect to $a$ according to Algorithm~\ref{alg:rewrite-normalization}. 
    If normalization fails, output $\false$; otherwise, obtain the normalized circuit $G'$.

    \item \textbf{Phase forwarding (for $a^*$ only).} 
    If the input is marked $a^*$, reorder $G'$  with respect to $a$ according to Algorithm~\ref{alg:rewrite-phasereorderring}.
    If this step fails, output $\false$.

    \item \textbf{Inverse synthesis on ancillas.}
    For each gate in $G'$ that targets an ancilla, append its inverse according to the following rules:
    \begin{itemize}
        \item For $\tX$ and $\tMCX{i}$ gates targeting an ancilla, append the same gate;
        \item For $\tZ$, $\tS$, and $\tT$ gates targeting an ancilla, append nothing.
    \end{itemize}

    \item \textbf{Cancellation (for $a^*$ only).}
    If the input is $a^*$, additionally cancel all gates that use $a$ in $G'$.
\end{enumerate}
The final result $G'$ is returned as the output.
\end{definition}

\begin{definition}[Recursive Rewrite Synthesis]
\label{def:Rwsynth}
The transformation 
\[
\Rwsynth : \QCa \to \QCa \cup \{\false\}
\]
recursively synthesizes the subcircuits of a rewrite-regular circuit $G \in \QCa$:
\begin{align*}
\Rwsynth(\bskip) &\triangleq \bskip, \\
\Rwsynth(U[\olq]) &\triangleq U[\olq], \\
\Rwsynth(C_1; C_2) &\triangleq \Rwsynth(C_1); \Rwsynth(C_2), \\
\Rwsynth(\qif{q}{C_1}{C_0}) &\triangleq \qif{q}{\Rwsynth(C_1)}{\Rwsynth(C_0)}, \\
\Rwsynth(\borrowcc{a}{G}) &\triangleq \NWsynth(\Rwsynth(G), a), \\
\Rwsynth(\borrowdd{a}{G}) &\triangleq \NWsynth(\Rwsynth(G), a^*).
\end{align*}
\end{definition}

\begin{definition}[Rewrite-based Uncomputation]
\label{def:rwun}
The transformation 
\[
\RwUnx : \QCa \to \QCa
\]
operates on a rewrite-regular circuit $G \in \QCa$ as follows:
\begin{enumerate}
    \item \textbf{Gate validity check.} 
    Verify that all component gates of $G$ lie within the allowed set 
    \[
    \{\tX, \tZ, \tH, \tS, \tT, \tMCX{i}\}.
    \]
    If not, output the original circuit $G$.

    \item \textbf{Synthesis.}
    Compute $\Rwsynth(G)$. 
    If the result is $\false$, output $G$; otherwise, output the synthesized circuit $\Rwsynth(G)$.
\end{enumerate}
\end{definition}

\section{Examples of circuits with ancillas written with \templateborrow}\label{app:examples}

In this section, we refer to the sugared statement $\borrowcd{a}{C_s}{C_u}$ as \templateborrow statement.

\subsection{Treating Dirty Ancillas as Clean Ones for Multi-controlled-H Implementations}

The circuit in Figure~\ref{fig:mch-circ} demonstrates a multi-controlled-$\tH$ gate, $\mathtt{MCH}[A, B, C, D, E, T]$, implemented with two dirty ancillas, $a_0$ and $a_2$, and one clean ancilla $a_1$ without uncomputation.

\begin{figure}[!htbp]
\centering
\begin{minipage}{0.48\textwidth}
\centering
\begin{quantikz}[row sep=small, column sep=small]
    \lstick{$A$}               & \qw      & \ctrl{2} & \qw      & \qw       & \qw       & \qw      & \qw     \\
    \lstick{$B$}               & \qw      & \ctrl{1} & \qw      & \qw       & \qw       & \qw      & \qw     \\
    \lstick{$a_0$}             & \ctrl{2} & \targ{}  & \ctrl{2} & \qw       & \qw       & \qw      & \qw     \\
    \lstick{$C$}               & \ctrl{1} & \qw      & \ctrl{1} & \qw       & \qw       & \qw      & \qw     \\
    \lstick{$\ket{0}_{a_1}$}  & \targ{}  & \qw      & \targ{}  & \qw       & \ctrl{2}  & \qw      & \qw     \\
    \lstick{$D$}               & \qw      & \qw      & \qw      & \qw       & \ctrl{1}  & \qw      & \qw     \\
    \lstick{$a_2$}             & \qw      & \qw      & \qw      & \ctrl{2}  & \targ{}   & \ctrl{2} & \qw     \\
    \lstick{$E$}               & \qw      & \qw      & \qw      & \ctrl{1}  & \qw       & \ctrl{1} & \qw     \\
    \lstick{$T$}               & \qw      & \qw      & \qw      & \gate{H}  & \qw       & \gate{H} & \qw
\end{quantikz}
\end{minipage}%
\hfill
\begin{minipage}{0.50\textwidth}
\small
\[
\begin{aligned}
&\ket{A,B}\ket{a_0}\ket{C}\ket{0}\ket{D}\ket{a_2}\ket{E}\ket{T} \\[4pt]
\xmapsto{\tToffoli[a_0,C,a_1]}\;&
\ket{A,B}\ket{a_0}\ket{C}\ket{a_0\land C}\ket{D}\ket{a_2}\ket{E}\ket{T} \\[4pt]
\xmapsto{\tToffoli[A,B,a_0]\ }\;&
\ket{A,B}\ket{a_0\oplus (A\land B)}\ket{C}\ket{a_0\land C} \\ 
& \ket{D}\ket{a_2}\ket{E}\ket{T} \\[4pt]
\xmapsto{\tToffoli[a_0,C,a_1]}\;&
\ket{A,B}\ket{a_0\oplus (A\land B)}\ket{C}\ket{A\land B\land C} \\ 
&\ket{D}\ket{a_2}\ket{E}\ket{T} \\[6pt]
\xmapsto{\mathtt{CCH}[a_2,E,T]\ \ \ \ }\;&
\ket{A,B}\ket{a_0\oplus (A\land B)}\ket{C}\ket{A\land B\land C} \\ 
&\ket{D}\ket{a_2}\ket{E}\;(H^{a_2\land E}\ket{T}) \\[6pt]
\xmapsto{\tToffoli[a_1,D,a_2]}\;&
\ket{A,B}\ket{a_0\oplus (A\land B)}\ket{C}\ket{A\land B\land C} \\ 
&\ket{D}\ket{a_2\oplus (A\land B\land C)\land D}\ket{E}\;(H^{a_2\land E}\ket{T}) \\[6pt]
\xmapsto{\mathtt{CCH}[a_2,E,T]\ \ \ \ }\;&
\ket{A,B}\ket{a_0\oplus (A\land B)}\ket{C}\ket{A\land B\land C} \\ 
&\ket{D}\ket{a_2\oplus A\land B\land C\land D}\ket{E} \\
&(H^{(a_2\oplus A\land B\land C\land D)\land E}H^{a_2\land E}\ket{T}) \\[6pt]
=\;&
\ket{A,B}\ket{a_0\oplus (A\land B)}\ket{C}\ket{A\land B\land C} \\ 
&\ket{D}\ket{a_2\oplus A\land B\land C\land D}\ket{E} \\
&(H^{(A\land B\land C\land D)\land E}\ket{T})
\end{aligned}
\]
\normalsize
\vspace{2mm}

\end{minipage}
\caption{The circuit realizes an operation that applies $H$ to target $T$ iff $A\land B\land C\land D\land E$ holds, using dirty ancillas $a_0, a_2$, clean ancilla $a_1$, $\tToffoli$ gates and controlled-controlled-$\tH$ (CCH) gates.}
\label{fig:mch-circ}
\end{figure}

Figure~\ref{fig:mch-qca} illustrates the implementation of this circuit in $\QCa$, where panel (c) utilizes \templateborrow statements with automatic uncomputation. We now demonstrate how to express this construction step by step. Notably, we can treat dirty ancillas as if they were clean ancillas in this case.

The implementation proceeds in four logical phases:

\textbf{Phase 1:} Store the intermediate result $A \land B$ in dirty ancilla $a_0$:

\begin{lstlisting}[language=QCa, xleftmargin=0.2\linewidth, framexrightmargin=0.2\linewidth]
borrow a0 store {
    TOFFOLI[A, B, a0]
} use {
    // a0 now contains A && B
}
\end{lstlisting}

\textbf{Phase 2:} Compute $A \land B \land C$ and store in clean ancilla $a_1$:
\begin{lstlisting}[language=QCa, xleftmargin=0.2\linewidth, framexrightmargin=0.2\linewidth]
borrow a0 store {
    TOFFOLI[A, B, a0]
} use {
    borrow a1 := |0> store {
        TOFFOLI[a0, C, a1]
    } use {
        // a1 now contains A && B && C
    }
}
\end{lstlisting}

\textbf{Phase 3:} Compute $A \land B \land C \land D$ and store in dirty ancilla $a_2$:
\begin{lstlisting}[language=QCa, xleftmargin=0.2\linewidth, framexrightmargin=0.2\linewidth]
borrow a0 store {
    TOFFOLI[A, B, a0]
} use {
    borrow a1 := |0> store {
        TOFFOLI[a0, C, a1]
    } use {
        borrow a2 store {
            TOFFOLI[a1, D, a2]    
        } use {
            // a2 now contains A && B && C && D
        }
    }
}
\end{lstlisting}

\textbf{Phase 4:} Apply the controlled-Hadamard operation using the accumulated condition:
\begin{lstlisting}[language=QCa, xleftmargin=0.2\linewidth, framexrightmargin=0.2\linewidth]
borrow a0 store {
    TOFFOLI[A, B, a0]
} use {
    borrow a1 := |0> store {
        TOFFOLI[a0, C, a1]
    } use {
        borrow a2 store {
            TOFFOLI[a1, D, a2]    
        } use {
            CCH[a2, E, T]  // Apply H to T if A && B && C && D && E
        }
    }
}
\end{lstlisting}

This nested $\bborrow$ structure ensures that all ancilla qubits are properly uncomputed through the automatic uncomputation mechanism of \templateborrow. The final operation applies the Hadamard gate to target qubit $T$ if and only if the conjunction $A \land B \land C \land D \land E$ is satisfied, achieving the desired multi-controlled-Hadamard functionality.

\begin{figure}[ht]
\centering

\begin{minipage}{0.47\linewidth}
    \begin{subfigure}{\linewidth}
    \centering
    \begin{lstlisting}[language=QCa, xleftmargin=0.2\linewidth, framexrightmargin=0.2\linewidth]
TOFFOLI[a0, C, a1];
TOFFOLI[A, B, a0];
TOFFOLI[a0, C, a1];
CCH[a2, E, T];
TOFFOLI[a1, D, a2];
CCH[a2, E, T]
    \end{lstlisting}
    \caption{Without uncomputation}
    \end{subfigure}

    \vspace{1em} 

    \begin{subfigure}{\linewidth}
    \centering
    \begin{lstlisting}[language=QCa, xleftmargin=0.2\linewidth, framexrightmargin=0.2\linewidth]
    TOFFOLI[a0, C, a1];
    TOFFOLI[A, B, a0];
    TOFFOLI[a0, C, a1];
    CCH[a2, E, T];
    TOFFOLI[a1, D, a2];
    CCH[a2, E, T];
    TOFFOLI[a1, D, a2];
    TOFFOLI[a0, C, a1];
    TOFFOLI[A, B, a0];
    TOFFOLI[a0, C, a1]
    \end{lstlisting}
    \caption{With manual uncomputation}
    \end{subfigure}

\end{minipage}
\hfill
\begin{minipage}{0.47\linewidth}
    \begin{subfigure}{\linewidth}
    \centering
    \begin{lstlisting}[language=QCa]
    borrow a0 store {
        TOFFOLI[A, B, a0]
    } use {
        borrow a1 := |0> store {
            TOFFOLI[a0, C, a1]
        } use {
            borrow a2 store {
                TOFFOLI[a1, D, a2]
            } use {
                CCH[a2, E, T]
            }
        }
    }
    \end{lstlisting}
    \caption{With \templateborrow}
    \end{subfigure}
\end{minipage}

\caption{Multi-controlled-$\tH$ written in $\QCa$ using clean and dirty ancillas: (a) direct implementation without uncomputation, (b) manual uncomputation expansion, and (c) with \templateborrow statements.}
\label{fig:mch-qca}
\end{figure}

\subsection{Incrementer implemented with Dirty Ancillas}

Figure~\ref{fig:inc-qca} illustrates an incrementer with dirty ancillas written using \templateborrow statements, with automatic uncomputation on the left and manual uncomputation on the right.

\begin{figure}[ht]
\centering
\begin{minipage}{0.47\linewidth}
\centering
\begin{subfigure}{0.98\linewidth}
    \centering
    \begin{lstlisting}[language=QCa, xleftmargin=0.2\linewidth, framexrightmargin=0.2\linewidth]
borrow a0 store {
    TOFFOLI[q0, q1, a0]
} use {
    borrow a1 store {
        TOFFOLI[a0, q2, a1]
    } use {
        CNOT[a1, q3]
    }
};
borrow a0 store {
    TOFFOLI[q0, q1, a0]
} use {
    CNOT[a0, q2]
};
CNOT[q0, q1];
X[q0]
    \end{lstlisting}
    \caption{Incrementer with dirty ancillas in $\QCa$ with \templateborrow.}
\end{subfigure}
\end{minipage}
\hfill
\begin{minipage}{0.47\linewidth}
\centering
\begin{subfigure}{0.98\linewidth}
    \centering
    \begin{lstlisting}[language=QCa, xleftmargin=0.2\linewidth, framexrightmargin=0.2\linewidth]
CNOT[a1, q3];
TOFFOLI[a0, q2, a1]
TOFFOLI[q0, q1, a0];
TOFFOLI[a0, q2, a1];
CNOT[a1, q3];
TOFFOLI[a0, q2, a1];
TOFFOLI[q0, q1, a0];
TOFFOLI[a0, q2, a1];
CNOT[a0, q2];
TOFFOLI[q0, q1, a0];
CNOT[a0, q2];
TOFFOLI[q0, q1, a0];
CNOT[q0, q1];
X[q0]
    \end{lstlisting}
    \caption{Incrementer with dirty ancillas in $\QCa$ with manual uncomputation}
\end{subfigure}
\end{minipage}
\caption{Comparison of incrementer implementations with automatic vs manual uncomputation}
\label{fig:inc-qca}
\end{figure}

\subsection{From Natural Circuits to Structured Templates: The Constant Adder Case Study}

For cases that do not trivially fit into the template, it may be challenging to express them naturally with \templateborrow statements.

Let's examine the circuit in Fig.~\ref{fig:hbadder-circ-full}, which uses dirty ancillas to compute the most significant bit (MSB) $x_3$ of the integer $x_0 x_1 x_2 x_3$ (with $x_0$ being the least significant bit) after adding the constant $(1010)_2$. This circuit corresponds to an instance of HighestBitConstAdder~\cite{factoring2n+2} in our Benchmark~1.  

\begin{figure}[ht]
    \centering
    \begin{quantikz}[row sep = 0.3cm]
        \lstick{$x_0$}   & \qw       & \qw      & \qw     & \qw      & \qw      & \ctrl{1} & \qw      & \qw      & \qw       & \qw        \\
        \lstick{$a_0$}   & \qw       & \qw      & \qw     & \qw      & \ctrl{2} & \targ{}  & \ctrl{2} & \qw      & \qw       & \qw        \\
        \lstick{$x_1$}   & \qw       & \qw      & \qw     & \qw      & \ctrl{1} & \qw      & \ctrl{1} & \qw      & \qw       & \qw        \\
        \lstick{$a_1$}   & \qw       & \qw      & \qw     & \ctrl{2} & \targ{}  & \qw      & \targ{}  & \ctrl{2} & \qw       & \qw        \\
        \lstick{$x_2$}   & \qw       & \ctrl{1} & \targ{} & \ctrl{1} & \qw      & \qw      & \qw      & \ctrl{1} & \targ{}   & \qw        \\
        \lstick{$a_2$}   & \ctrl{1}  & \targ{}  & \qw     & \targ{}  & \qw      & \qw      & \qw      & \targ{}  & \ctrl{1}  & \qw        \\
        \lstick{$x_3$}   & \targ{}   & \qw      & \qw     & \qw      & \qw      & \qw      & \qw      & \qw      & \targ{}   & \qw      
    \end{quantikz}
    \caption{Compute the highest bit of the input integer $x_0 x_1 x_2 x_3$, with $x_0$ as the lowest bit, after adding a constant $(1010)_2$, where $a_0, a_1, a_2$ are dirty ancillas without uncomputation.}
    \label{fig:hbadder-circ-full}
\end{figure}

At first glance, understanding this circuit is challenging: its functionality, construction principles, and correctness are far from obvious. This is because the presented circuit is already an optimized one, where almost all intermediate structure has been removed. Rather than attempting to reverse-engineer it, we directly express an equivalent circuit naturally using $\QCa$.  

Our goal is to compute $x_3$, the highest bit of $x_0 x_1 x_2 x_3 + (1010)_2$. We construct the circuit step by step.

\textbf{Step 1: Using the carry bit (assuming it is available).}  
Suppose we already have a dirty ancilla $a_1$ storing the carry bit from the addition $x_0 x_1 + (10)_2$. Since the third bit of $(1010)_2$ is $1$, $x_3$ must be flipped if either $x_2$ or the carry bit is $1$. This can be written as:

\begin{lstlisting}[language=QCa, xleftmargin=0.2\linewidth, framexrightmargin=0.2\linewidth]
CX[x2, x3];
borrow a1 store {
    
} use {
    qif a1 then {
        qif not x2 then {
            X[x3]
        }
    }
}
\end{lstlisting}

\textbf{Step 2: Computing the carry from $x_0 + 1$.}
Next, we compute the carry bit from $x_0 + 1$ and store it in a temporary dirty ancilla $a_0$:

\begin{lstlisting}[language=QCa, xleftmargin=0.2\linewidth, framexrightmargin=0.2\linewidth]
CX[x2, x3];
borrow a1 store {
    borrow a0 store {
        CX[x0, a0]
    } use {
        
    }
} use {
    qif a1 then {
        qif not x2 then {
            X[x3]
        }
    }
}
\end{lstlisting}

\textbf{Step 3: Computing the carry from $x_0 x_1 + (10)_2$.}  
We now compute the carry bit from $x_0 x_1 + (10)_2$ and store it in $a_1$.  
Since the second bit of $(10)_2$ is $0$, this carry is $1$ if and only if both $x_1 = 1$ and $a_0 = 1$. Therefore, we write:

\begin{lstlisting}[language=QCa, xleftmargin=0.2\linewidth, framexrightmargin=0.2\linewidth]
CX[x2, x3];
borrow a1 store {
    borrow a0 store {
        CX[x0, a0]
    } use {
        TOFFOLI[a0, x1, a1]
    }
} use {
    qif a1 then {
        qif not x2 then {
            X[x3]
        }
    }
}
\end{lstlisting}

This completes the construction.
In Fig.~\ref{fig:hbadder-circ-full-compile}, we present a possible compilation of the above program without uncomputation.
The resulting circuit closely resembles the target circuit in Fig.~\ref{fig:hbadder-circ-full}, although the two are not identical—both nonetheless implement the same intended functionality.
For completeness, Fig.~\ref{fig:hbadder-circ-full-compile-uncomp} shows one possible compilation with uncomputation.
As discussed earlier, such a compilation is not unique, and multiple equivalent implementations may exist.

\begin{figure}[ht]
    \centering
    \begin{minipage}{0.45\linewidth}
        \begin{subfigure}{\linewidth}
            \begin{quantikz}[row sep=0.3cm]
                \lstick{$x_0$} & \qw & \qw & \qw & \ctrl{1} & \qw & \qw & \qw \\
                \lstick{$a_0$} & \qw & \qw & \ctrl{2} & \targ{} & \ctrl{2} & \qw & \qw \\
                \lstick{$x_1$} & \qw & \qw & \ctrl{1} & \qw & \ctrl{1} & \qw & \qw \\
                \lstick{$a_1$} & \qw & \ctrl{2} & \targ{} & \qw & \targ{} & \ctrl{2} & \qw \\
                \lstick{$x_2$} & \ctrl{1} & \octrl{1} & \qw & \qw & \qw & \octrl{1} & \qw \\
                \lstick{$x_3$} & \targ{} & \targ{} & \qw & \qw & \qw & \targ{} & \qw
            \end{quantikz}
            \caption{Compiled circuit without uncomputation.}
        \end{subfigure}
    \end{minipage}
    \hfill
    \begin{minipage}{0.5\linewidth}
        \begin{subfigure}{\linewidth}
            \resizebox{\linewidth}{!}{%
            \begin{quantikz}[row sep=0.3cm]
                \lstick{$x_0$} & \qw & \qw & \qw & \qw & \ctrl{1} & \qw & \qw & \qw & \qw \\
                \lstick{$a_0$} & \qw & \qw & \qw & \ctrl{2} & \targ{} & \ctrl{2} & \qw & \qw & \qw \\
                \lstick{$x_1$} & \qw & \qw & \qw & \ctrl{1} & \qw & \ctrl{1} & \qw & \qw & \qw \\
                \lstick{$a_1$} & \qw & \qw & \ctrl{2} & \targ{} & \qw & \targ{} & \ctrl{2} & \qw & \qw \\
                \lstick{$x_2$} & \ctrl{1} & \targ{} & \ctrl{1} & \qw & \qw & \qw & \ctrl{1} & \targ{} & \qw \\
                \lstick{$x_3$} & \targ{} & \qw & \targ{} & \qw & \qw & \qw & \targ{} & \qw & \qw
            \end{quantikz}
            }%
            \caption{Further unfolded representation of the left circuit.}
        \end{subfigure}
    \end{minipage}
    \caption{A compilation result of the HighestBitConstAdder into a circuit without uncomputation (left), and its further unfolded representation (right).}
    \label{fig:hbadder-circ-full-compile}
\end{figure}

\begin{figure}[ht]
    \centering
    \begin{quantikz}[row sep = 0.3cm]
        \lstick{$x_0$}  & \qw       & \qw      & \qw       & \qw      & \ctrl{1} & \qw      & \ctrl{1} & \qw       & \qw     & \qw      & \ctrl{1} & \qw      & \ctrl{1}  & \qw        \\
        \lstick{$a_0$}  & \qw       & \qw      & \qw       & \ctrl{2} & \targ{}  & \ctrl{2} & \targ{}  & \qw       & \qw     & \ctrl{2} & \targ{}  & \ctrl{2} & \targ{}   & \qw        \\
        \lstick{$x_1$}  & \qw       & \qw      & \qw       & \ctrl{1} & \qw      & \ctrl{1} & \qw      & \qw       & \qw     & \ctrl{1} & \qw      & \ctrl{1} & \qw       & \qw        \\
        \lstick{$a_1$}  & \qw       & \qw      & \ctrl{2}  & \targ{}  & \qw      & \targ{}  & \qw      & \ctrl{2}  & \qw     & \targ{}  & \qw      & \targ{}  & \qw       & \qw        \\
        \lstick{$x_2$}  & \ctrl{1}  & \targ{}  & \ctrl{1}  & \qw      & \qw      & \qw      & \qw      & \ctrl{1}  & \targ{} & \qw      & \qw      & \qw      & \qw       & \qw        \\
        \lstick{$x_3$}  & \targ{}   & \qw      & \targ{}   & \qw      & \qw      & \qw      & \qw      & \targ{}   & \qw     & \qw      & \qw      & \qw      & \qw       & \qw      
    \end{quantikz}
    \caption{A possible compilation with uncomputation result of the HighestBitConstAdder into a circuit.}
    \label{fig:hbadder-circ-full-compile-uncomp}
\end{figure}

\section{Deferred Proofs}\label{app:proofs}

\subsection{Proof of Lemma~\ref{lem:jordancorollary}}\label{appendix:proof-lem:jordancorollary}

Lemma~\ref{lem:jordancorollary} follows as a corollary in Jordan's proof in~\cite{strongequivalence}. For completeness, we sketch the argument, following exactly the same construction in~\cite{strongequivalence}.

\begin{definition}
    A length-$l$ width-5 branching program taking $n$ bits of input is a sequence of $l$ triples, each of the form ($i, \alpha, \beta$), where $i \in \{1, \dots, n\}$, and $\alpha, \beta$ are permutations from $S_5$. The triple is interpreted as an instruction to apply permutation $\alpha$ if the $i^{th}$ input bit is zero, or apply $\beta$ if the $i^{th}$ input bit is one. The final permutation obtained by the composition of the $l$ permutations is the output of the branching program.
\end{definition}

\begin{proposition}[Barrington's theorem \cite{boundedwidthpolysizebranching}]\label{prop:Barrington}
    Given any depth $d$, fan-in 2 boolean formula $f$, and any 5-cycle $\alpha \in S_5$, one can in poly($4^d$) time construct a width-5 branching program of length at most $4^d$ such that the branching program evaluates to $\alpha$ if $f$ is $\true$ and to $I$ otherwise.
\end{proposition}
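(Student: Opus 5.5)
We would prove the statement by structural induction on the depth of the formula $f$, maintaining the invariant that for \emph{every} 5-cycle $\alpha\in S_5$ we can build a width-5 branching program $P_f^{\alpha}$ of length at most $4^{d}$ that evaluates to $\alpha$ when $f$ is $\true$ and to $I$ otherwise. We assume the formula uses gates $\wedge$, $\vee$ and $\neg$, with $\vee$ rewritten by De Morgan, and that negations are absorbed so they do not increase the length. The base case is a single input literal $x_i$, handled by the one-instruction program $(i, I, \alpha)$. Constants are handled by a trivial instruction with both permutations equal to $I$ or to $\alpha$.

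We would first establish two length-preserving closure facts.
\begin{enumerate}
\item \emph{Change of cycle.} Any two 5-cycles are conjugate, say $\beta=\gamma\alpha\gamma^{-1}$. Conjugating every permutation in every triple of $P_f^{\alpha}$ by $\gamma$ gives $P_f^{\beta}$ of the same length, because conjugation is a homomorphism applied to the whole product. So it suffices to build the program for one fixed cycle.
\item \emph{Negation.} Take $P_f^{\alpha^{-1}}$ (note $\alpha^{-1}$ is a 5-cycle) and multiply both permutations of its last instruction on the right by $\alpha$. The resulting program evaluates to $\alpha^{-1}\alpha=I$ when $f$ holds and to $\alpha$ otherwise. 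This is $P_{\neg f}^{\alpha}$, with unchanged length.
\end{enumerate}

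The inductive step for $f=g\wedge h$ uses a commutator. We pick 5-cycles $\sigma,\tau$ whose commutator $\sigma\tau\sigma^{-1}\tau^{-1}$ is again a 5-cycle; the standard choice is $\sigma=(12345)$ and $\tau=(13542)$. We then concatenate
\[
P_g^{\sigma}\,P_h^{\tau}\,P_g^{\sigma^{-1}}\,P_h^{\tau^{-1}} .
\]
The four blocks are obtained from the induction hypothesis together with the change-of-cycle fact. If either $g$ or $h$ is false, the corresponding blocks evaluate to $I$ and the remaining ones cancel pairwise, giving $I$. If both are true, the product is the commutator, which is a 5-cycle. By change of cycle once more we can make this equal any prescribed $\alpha$. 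Each sub-formula has depth at most $d-1$, so the length is at most $4\cdot 4^{d-1}=4^d$.

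Running time is polynomial in $4^d$, since each step only copies, conjugates, or relabels instructions, and conjugating elements of $S_5$ costs constant time. The one genuinely non-formal step, and the one we expect to be the main obstacle, is exhibiting the pair $\sigma,\tau$ whose commutator is a 5-cycle. We would verify $(12345)$ and $(13542)$ by direct computation of the product, and then conjugate the resulting 5-cycle to the target $\alpha$. Everything else is routine bookkeeping of lengths and of which cycle each block is built for.
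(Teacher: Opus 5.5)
Your proposal is correct. It is Barrington's original argument: a literal as the base case, conjugation and the last-instruction negation trick as length-preserving closures, and the AND block $P_g^{\sigma}P_h^{\tau}P_g^{\sigma^{-1}}P_h^{\tau^{-1}}$ with $\sigma=(12345)$, $\tau=(13542)$, whose commutator, composed left to right, is the 5-cycle $(13254)$. You keep the $4^d$ bound by letting $\neg$, and hence the De Morgan treatment of $\vee$, cost no length.

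The paper gives no proof of this proposition and only cites Barrington, so your reconstruction follows the route the citation relies on.
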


\begin{lemma}[\cite{strongequivalence}]\label{lem:jordan}
    Let $P$ be a length-$l$ width-5 branching program on $n$ input bits. Given $P$, one can construct a circuit of $O(l)$ Fredkin gates acting on $n+6$ bits that permutes five ancilla bits according to the output of $P$, provided that the sixth ancilla bit is initialized to $1$. Furthermore, the sixth ancilla bit is always left unmodified by this circuit.
\end{lemma}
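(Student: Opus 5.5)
The statement to prove is Jordan's lemma: given a length-$l$ width-5 branching program $P$, build $O(l)$ Fredkin gates on $n+6$ bits that permute five ancilla bits by the output of $P$, with a sixth ancilla fixed at $1$ and left unchanged. The plan is to simulate $P$ instruction by instruction. We represent the current element of $S_5$ as a permutation of the contents of five ancilla wires $r_1,\dots,r_5$. The sixth ancilla $s$, initialized to $1$, serves as a constant-true control. Since the composition of the $l$ instructions is the output of $P$, it suffices to implement each triple $(i,\alpha,\beta)$ by $O(1)$ Fredkin gates. That block must permute $r_1,\dots,r_5$ by $\alpha$ when input bit $x_i=0$ and by $\beta$ when $x_i=1$, and must leave $x_i$ and $s$ unchanged. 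Concatenating the $l$ blocks then gives the claim.

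For a single triple, I would first factor the instruction as an unconditional permutation followed by a conditional one. Applying $\alpha$ unconditionally and then $\alpha^{-1}\beta$ exactly when $x_i=1$ gives $\alpha$ when $x_i=0$ and $\alpha\alpha^{-1}\beta=\beta$ when $x_i=1$, under a fixed composition convention (the order may need flipping depending on left versus right action, which is routine). Next, write each permutation in $S_5$ as a product of at most $4$ transpositions; this is a constant bound.

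\begin{itemize}
\item An unconditional transposition $(r_j\,r_k)$ is a Fredkin gate controlled by $s$. This works because $s=1$ and Fredkin gates never modify their control.
\item A transposition conditioned on $x_i$ is a Fredkin gate controlled by $x_i$. It swaps when $x_i=1$, and $x_i$ is untouched.
\end{itemize}

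Each instruction therefore costs at most $8$ Fredkin gates, so the whole circuit has $O(l)$ gates acting on the $n$ input bits, the five $r$'s and $s$. The required invariance properties are immediate: Fredkin gates only ever modify their two target wires, and neither $s$ nor any input bit is ever used as a target.

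The main point needing care is the semantic meaning of ``permutes five ancilla bits according to the output.'' The circuit implements a permutation of wire contents, so it acts on the five bits via the induced action of $S_5$ on positions. I need the composition order of wire permutations to match the branching program's composition convention. I would fix this by induction on the prefix length, maintaining the invariant that after the first $j$ blocks the wire contents equal the initial contents permuted by the product of the first $j$ selected permutations. If the convention comes out reversed, replace each permutation by its inverse, or reverse the order of the transposition factors within each block.

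Correctness on arbitrary Boolean inputs follows because every gate is a classical bijection, and the computation is linear in $l$.
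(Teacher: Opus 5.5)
Your construction is correct and is essentially the argument the paper relies on. The paper only cites this lemma from Jordan, but its proof of Corollary~\ref{corollary:A1}, which it says follows Jordan's, likewise simulates each triple by $O(1)$ controlled swaps using the fact that transpositions generate $S_5$. The only difference is cosmetic: the paper splits $(i,\alpha,\beta)$ into $(i,\alpha,\mathbf{1})$ and $(i,\mathbf{1},\beta)$ and needs negatively controlled swaps for the first part, whereas you apply $\alpha$ unconditionally via the constant-$1$ bit and then $\alpha^{-1}\beta$ controlled on $x_i$, which avoids negative controls entirely — and that is exactly the role of the sixth ancilla.
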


We now observe that the additional sixth ancilla bit in Lemma~\ref{lem:jordan} becomes unnecessary if we allow the use of NOT gates, yielding the following corollary:

\begin{corollary}\label{corollary:A1}
    Let $P$ be a length-$l$ width-5 branching program on $n$ input bits. Given $P$, one can construct a circuit of $O(l)$ gates over $\{Fredkin, NOT\}$ gates acting on n+5 bits, such that the five designated bits are permuted according to the output of $P$.
\end{corollary}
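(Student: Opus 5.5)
We need to prove Corollary A1: with Fredkin and NOT gates on $n+5$ bits, the five designated bits are permuted according to the output of $P$, with no sixth ancilla.

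The plan is to take the circuit from Lemma~\ref{lem:jordan} and remove the sixth ancilla bit $s$, which is initialized to $1$ and never modified. Every gate in that circuit is a Fredkin gate. Each Fredkin gate either does not touch $s$, or uses $s$ as its control, or uses $s$ as one of its two swapped targets. Since $s$ is constantly $1$ throughout (by the lemma's guarantee that it is left unmodified, and inductively at every intermediate step), we can simplify each gate by partially evaluating it at $s=1$.

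The cases are as follows.
\begin{itemize}
\item A gate not touching $s$ is kept unchanged.
\item A Fredkin gate controlled by $s$ acts as an unconditional SWAP on its two targets. An unconditional swap of bits $x,y$ is realizable with NOT and Fredkin gates via a fixed short gadget. Alternatively, and more simply, it can be eliminated by relabeling wires: a SWAP is just a permutation of wire names, so we can push it through the rest of the circuit by renaming, and fix the final order with a wire relabeling of the five designated bits, or append explicit swaps.
\item A Fredkin gate with $s$ as a target and control $c$ would exchange $s$ with some bit $y$. Since $s$ must remain $1$, such a gate can only occur in a context where the swap is harmless. In Jordan's construction, $s$ serves only as a control (it supplies the ``apply $\alpha$ when $x_i=0$'' branch), so this case does not arise. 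We verify this against the construction.
\end{itemize}

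The key use of $s$ in Jordan's construction is to implement a negatively controlled Fredkin gate: controlled on $x_i=0$, apply the swaps realizing $\alpha$. With NOT available, this is simply $\tX[x_i]$; Fredkin$[x_i,\cdot,\cdot]$; $\tX[x_i]$. So the cleanest route is to re-derive the construction directly. For each triple $(i,\alpha,\beta)$, write $\beta$ and $\alpha$ as products of $O(1)$ transpositions on the five designated bits. Apply Fredkin gates controlled by $x_i$ for the transpositions of $\beta$. Then conjugate by $\tX[x_i]$ the Fredkin gates for $\alpha$.

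Correctness follows by induction over the triples. After processing a prefix, the five bits have been permuted by the composition of the selected permutations. The input bits are restored after each step, because the NOTs come in pairs and Fredkin gates never alter their controls. Each triple costs $O(1)$ gates, so the total is $O(l)$.

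The main obstacle is bookkeeping that Jordan's circuit uses the extra bit only as a positive control for what is semantically a negated control. I would sidestep this by giving the direct construction above rather than transforming Jordan's circuit, which makes the corollary self-contained.
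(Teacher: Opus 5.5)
Your direct construction is exactly the paper's proof, so the proposal is correct. You realize $\beta$ by Fredkin gates controlled by $x_i$ (as a product of $O(1)$ transpositions), do the same for $\alpha$ with the control conjugated by $\tX[x_i]$, compose the two per triple, and chain all $l$ triples; the earlier partial-evaluation discussion of Jordan's circuit, including its unverified claim about how the sixth bit is used, is unnecessary, but since you explicitly bypass it nothing depends on it.
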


\begin{proof}
    Our proof follows the proof of Lemma~\ref{lem:jordan} given in~\cite{strongequivalence}.

    Since SWAP gates generate $S_5$, for any triple $(i, \mathbf{1}, \beta)$ we can construct a sequence of $O(1)$ Fredkin gates that applies permutation $\beta$ to the five bits if the $i^{th}$ input bit is $1$, and does nothing otherwise. 
    
    Similarly, we can implement a ``negatively controlled'' Fredkin gate---one that swaps if the control bit is $0$ and does nothing otherwise---by surrounding the control line with two NOT gates. Therefore, for any triple $(i, \alpha, \mathbf{1})$ we obtain an $O(1)$-gate construction over $\{Fredkin, NOT\}$ that applies $\alpha$ conditioned on the $i^{\text{th}}$ bit being $0$.  
    
    By composing the constructions for $(i, \mathbf{1}, \beta)$ and $(i, \alpha, \mathbf{1})$, we can realize an arbitrary triple $(i, \alpha, \beta)$.  
    Simulating all $l$ triples in sequence reproduces the action of $P$.
\end{proof}

We now prove Lemma~\ref{lem:jordancorollary} by adapting the argument in~\cite{strongequivalence}.

\begin{proof}[Proof of \cref{lem:jordancorollary}]
    Any polynomial-size CNF formula $f$ can be expressed as a logarithmic-depth, fan-in-2 boolean circuit.  
    By Proposition~\ref{prop:Barrington}, we can construct, in polynomial time, a width-5 branching program $P$ of polynomial length that evaluates to SWAP if $f$ is satisfied and to the identity permutation otherwise.  
    Then, by Corollary~\ref{corollary:A1}, we can implement $P$ with a polynomial-size $\{Fredkin, NOT\}$ circuit acting on $n+5$ bits. Since each Fredkin gate can be decomposed into three $\tToffoli$ gates, the same result holds for the $\{\textsc{Toffoli}, \textsc{NOT}\}$ gate set.
\end{proof}

\subsection{Proof of \cref{prop:hardercleansafety}}

\begin{proof}[Proof of \cref{prop:hardercleansafety}]
Let $G$ be a quantum circuit acting on a clean ancilla register $\ola$ and a quantum register $\olq$. 
Construct a new circuit $G'$, where we use an open multi-controlled \textsc{SWAP} gate in the middle that fires when the controls are $\ket{\overline{0}}_{\ola}$, as follows:
\[
\begin{quantikz}
\lstick{$p$}                        & \qw             & \qw            & \swap{1}   & \swap{1}    & \qw  \\
\lstick{$anc:=\ket{0}$}             & \qw             & \qw            & \targX{}   & \targX{}    & \qw  \\
\lstick{$\ola=\ket{\overline{0}}$}  & \qwbundle[3]{h} & \gate[2][2]{G} & \octrl{-1} & \qw         & \qw  \\
\lstick{$\olq$}                     & \qwbundle[3]{m} & \qw            & \qw        & \qw         & \qw
\end{quantikz}
\]
We claim that $G$ is safe on $\ola$ if and only if $\mathcal{G'}_{anc, \ola}$ exists.

\paragraph{Case 1: $G$ is safe on $\ola$.}
If $G$ is safe, then for any input $\ket{\varphi}_{\olq}$,
\[
    G \bigl(\ket{\overline{0}}_{\ola} \ket{\varphi}_{\olq} \bigr)
    = \ket{\overline{0}}_{\ola} \ket{\varphi'}_{\olq}.
\]
Hence the control line $\ola$ is always $\ket{\overline{0}}_{\ola}$, the open-controlled \textsc{SWAP} acts non-trivially and cancels the final \textsc{SWAP} gate, which implies the joint state of $(anc,\ola)$ is always $\ket{0}_{anc}\ket{\overline{0}}_{\ola}$. Therefore $\mathcal{G'}_{anc, \ola}$ exists.

\paragraph{Case 2: $G$ is \emph{not} safe on $\ola$.}
Let $\{\ket{e_i}_{\ola} \}$ denote the computational basis of $\ola$, with $\ket{e_0}_{\ola} = \ket{\overline{0}}_{\ola}$. Then there exists a state $\ket{\varphi}_{\olq}$ such that
\[
    G \bigl(\ket{\overline{0}}_{\ola} \ket{\varphi}_{\olq} \bigr)
    = \ket{\overline{0}}_{\ola} \ket{\varphi_0}_{\olq} + 
    \sum_{i \neq 0} \ket{e_i}_{\ola} \ket{\varphi_i}_{\olq},
    \qquad
    \sum_{i \neq 0} \ket{e_i}_{\ola} \ket{\varphi_i}_{\olq} \neq 0.
\]
After the controlled-\textsc{SWAP}, we have
\[
\ket{p}_p\ket{0}_{anc}
\Bigl(
    \ket{\overline{0}}_{\ola} \ket{\varphi_0}_{\olq} + 
    \sum_{i \neq 0} \ket{e_i}_{\ola} \ket{\varphi_i}_{\olq}
\Bigr)
\longmapsto
\ket{0}_p\ket{p}_{anc}\ket{\overline{0}}_{\ola} \ket{\varphi_0}_{\olq} +
\ket{p}_p\ket{0}_{anc}\sum_{i \neq 0} \ket{e_i}_{\ola} \ket{\varphi_i}_{\olq}.
\]
After the trailing \textsc{SWAP} gate, we obtain
\[
\ket{p}_p \ket{0}_{anc} \ket{\overline{0}}_{\ola} \ket{\varphi_0}_{\olq}
+ \ket{0}_p \ket{p}_{anc} \sum_{i \neq 0} \ket{e_i}_{\ola} \ket{\varphi_i}_{\olq}.
\]

If $(anc,\ola)$ \emph{admitted} uncomputation under $G'$, then the reduced states
\[
\ket{\phi_p}
:= \ket{p}_p\ket{\varphi_0}_{\olq} + 
\ket{0}_p \sum_{i \neq 0} \ket{\varphi_i}_{\olq},
\qquad p \in \{0,1\},
\]
would both be valid unit vectors. In particular,
\begin{align}
    &\braket{\varphi_0}{\varphi_0} 
    + \braket{\varphi_0}{\sum_{i\neq0} \varphi_i} 
    + \braket{\sum_{i\neq0}\varphi_i}{\varphi_0} 
    + \braket{\sum_{i\neq0}\varphi_i}{\sum_{i\neq0}\varphi_i} = 1, \label{eq:norm1} \\
    &\braket{\varphi_0}{\varphi_0} 
    + \braket{\sum_{i\neq0}\varphi_i}{\sum_{i\neq0}\varphi_i} = 1. \label{eq:norm2}
\end{align}
Subtracting Eq.~\eqref{eq:norm2} from Eq.~\eqref{eq:norm1} yields
\begin{align}
    \braket{\varphi_0}{\sum_{i\neq0}\varphi_i} 
    + \braket{\sum_{i\neq0}\varphi_i}{\varphi_0} = 0. \label{eq:cross}
\end{align}
Moreover, since $\ket{\phi_0}$ and $\ket{\phi_1}$ are orthogonal, we have
\begin{align}
  \bigl(\bra{0}_p\bra{\varphi_0}_{\olq} + \bra{0}_p\sum_{i\ne0}\bra{\varphi_i}_{\olq}\bigr)
  \bigl(\ket{1}_p\ket{\varphi_0}_{\olq} + \ket{0}_p\sum_{i\ne0}\ket{\varphi_i}_{\olq}\bigr) = 0,
\end{align}
which simplifies to
\begin{align}
  \bigl(\bra{\varphi_0} + \sum_{i\ne0}\bra{\varphi_i}\bigr)\sum_{i\ne0}\ket{\varphi_i} = 0. \label{eq:orth}
\end{align}
Combining Eq.~\eqref{eq:cross} and Eq.~\eqref{eq:orth} gives
\begin{align}
    \Bigl(\sum_{i\ne0}\bra{\varphi_i}\Bigr)\Bigl(\sum_{i\ne0}\ket{\varphi_i}\Bigr) = 0,
\end{align}
which contradicts the assumption $\sum_{i \neq 0} \ket{e_i}_{\ola}\ket{\varphi_i}_{\olq} \neq 0$ and hence $\sum_{i\ne0}\ket{\varphi_i}\neq 0$. Therefore $(anc,\ola)$ cannot be uncomputed under $G'$.

Combining both cases, we conclude that
$G$ is safe on $\ola$ if and only if $\mathcal{G'}_{anc, \ola}$ exists.
\end{proof}

\subsection{Proof of \cref{prop:cleantempuncomp}}\label{appendix:proof-propcleantempuncomp}

\begin{proof}[Proof of \cref{prop:cleantempuncomp}]
Consider arbitrary computational basis states $|i\rangle_{\overline{q}}$ and $|j\rangle_{\overline{t}}$. We have
\begin{equation}\label{eq:map VU}
    \sem{C_u} \sem{C_s} \, |i\rangle_{\overline{q}} \, |0\rangle_{\overline{a}} \, |j\rangle_{\overline{t}}
    = |i\rangle_{\overline{q}} \, |\sem{C_s}_i(0)\rangle_{\overline{a}} \, \sem{C_u}_{i, \sem{C_s}_i(0)} \, |j\rangle_{\overline{t}},
\end{equation}

where $\sem{C_s}_i(0)$ denotes the ancilla output of $\sem{C_s}$ on input $|i\rangle_{\overline{q}}$ with ancilla initialized to $\ket{0}$, and $\sem{C_u}_{i,\sem{C_s}_i(0)}$ denotes the action of $\sem{C_u}$ on the target register $\overline{t}$ conditioned on $\ket{i}_{\olq}$ and $\ket{\sem{C_s}_i(0)\rangle_{\overline{a}}}$.

To show that $\sem{C_s; C_u}$ is uncomputable, it suffices to show that
\[
    \{ |i\rangle_{\overline{q}} \, \sem{C_u}_{i,\sem{C_s}_i(0)} |j\rangle_{\overline{t}} \}_{i,j}
\]
forms a basis of $\mathcal{H}_{\overline{q}} \otimes \mathcal{H}_{\overline{t}}$.

Since $|i\rangle_{\overline{q}}$ are basis states, it further suffices to show that for each fixed $|i\rangle_{\overline{q}}$,
\[
    \{ \sem{C_u}_{i,\sem{C_s}_i(0)} |j\rangle_{\overline{t}} \}_j
\]
forms a basis of $\mathcal{H}_{\overline{t}}$.

This is immediate because $\sem{C_u}_{i,\sem{C_s}_i(0)}$ is unitary. Hence, $\sem{C_s; C_u}$ is uncomputable.

Applying $\sem{C_s}^\dagger$ to uncompute the ancilla in \cref{eq:map VU} yields
\begin{equation}\label{eq:map UVU}
\sem{C_s}^\dagger \sem{C_u} \sem{C_s} \, |i\rangle_{\overline{q}} \, |0\rangle_{\overline{a}} \, |j\rangle_{\overline{t}}
= |i\rangle_{\overline{q}} \, |0\rangle_{\overline{a}} \, \sem{C_u}_{i,\sem{C_s}_i(0)} \, |j\rangle_{\overline{t}}.
\end{equation}

Hence, $\sem{C_s}^\dagger$ restores all ancilla bits to $\ket{0}_{\overline{a}}$ while leaving the working qubits unchanged, and therefore it satisfies Definition~\ref{def:cleanuncomp}.
\end{proof}

\subsection{Proof of \cref{prop:cleantempspform}}

\begin{proof}[Proof of \cref{prop:cleantempspform}]
    By Definition~\ref{def:spform} and \cref{eq:map UVU}.
\end{proof}

\subsection{Proof of \cref{prop:dirtytempuncomp}}

\begin{proof}[Proof of \cref{prop:dirtytempuncomp}]
Similar to Proof~\ref{appendix:proof-propcleantempuncomp},  $\sem{C_s; C_u}$ is uncomputable.

Consider the following equation:

\begin{equation}
    \sem{C_u} \sem{C_s} \, \ket{i}_{\overline{q}} \, \ket{0}_{a} \, \ket{j}_{\overline{t}}
    = \ket{i}_{\overline{q}} \, \ket{\sem{C_s}_i(0)}_{a} \, \sem{C_u}_{i,\sem{C_s}_i(0)} \, \ket{j}_{\overline{t}},
\end{equation}

Since $\sem{C_u} : \cqbf{\overline{q}}{a}{\overline{t}}$, we have:

1. $\sem{C_s}_i = I$: then $\ket{i}_{\overline{q}} \, \ket{\sem{C_s}_i(0)}_{a} \, \sem{C_u}_{i,\sem{C_s}_i(0)} \, \ket{j}_{\overline{t}} = \ket{i}_{\overline{q}} \, \ket{\sem{C_s}_i(0)}_{a} \, \ket{j}_{\overline{t}}$.  

2. $\sem{C_s}_i = X$: then $\ket{i}_{\overline{q}} \, \ket{\sem{C_s}_i(0)}_{a} \, \sem{C_u}_{i,\sem{C_s}_i(0)} \, \ket{j}_{\overline{t}} = \ket{i}_{\overline{q}} \, \ket{\sem{C_s}_i(0)}_{a} \, \sem{C_u}_{i,1} \ket{j}_{\overline{t}}$. 

\begin{equation}\label{eq:map UVUV}
    \sem{C_s}^\dagger \sem{C_u} \sem{C_s} \sem{C_u} \, \ket{i}_{\overline{q}} \, \ket{a}_{a} \, \ket{j}_{\overline{t}}
    = \ket{i}_{\overline{q}} \, \ket{a}_{a} \, \sem{C_u}_{i,\sem{C_s}_i(a)} \sem{C_u}_{i,a} \, \ket{j}_{\overline{t}},
\end{equation}

Since $\sem{C_u} : \cqbf{\overline{q}}{a}{\overline{t}}$, we have:

1. $\sem{C_s}_i = I$: then $\ket{i}_{\overline{q}} \, \ket{a} \sem{C_u}_{i,\sem{C_s}_i(a)} \sem{C_u}_{i,a} \ket{j}_{\overline{t}} = \ket{i}_{\overline{q}} \, \ket{a}_{a} \, \sem{C_u}_{i,a}^2 \ket{j}_{\overline{t}} = \ket{i}_{\overline{q}} \, \ket{a}_{a} \, \ket{j}_{\overline{t}}$.  

2. $\sem{C_s}_i = X$: then $\ket{i}_{\overline{q}} \, \ket{a} \sem{C_u}_{i,\sem{C_s}_i(a)} \sem{C_u}_{i,a} \ket{j}_{\overline{t}} = \ket{i}_{\overline{q}} \, \ket{a}_{a} \, \sem{C_u}_{i, 1} \ket{j}_{\overline{t}}$.

By Definition~\ref{def:dirtyuncomp}, we have
\[
    \sem{C_u; C_s; C_u; C_s^\dagger} \in \Uncomp^*(\sem{C_s; C_u}, \overline{a}).
\]

\end{proof}

\subsection{Proof of \cref{prop:dirtytempspform}}

\begin{proof}[Proof of \cref{prop:dirtytempspform}]
    By Definition~\ref{def:spform} and \cref{eq:map UVUV}.
\end{proof}

\subsection{Proof of \cref{thm:reasoning-soundness}}\label{app:proof-soundreasoning}

\subsubsection{Proof of Soundness of \const Rules}

\begin{proposition}[Soundness of \const rules]
    If $\Gamma \vdash C : \spform{\olq}{\ola}$, then $\sem{C}: \spform{\olq}{\ola}$.
\end{proposition}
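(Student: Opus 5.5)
We argue by induction on the height of the derivation of $\Gamma \vdash C : \spform{\olq}{\ola}$, with a case split on the last rule of \cref{fig:sp-rule}. Since \textsc{Const-Cqbf} relies on \bcqbf judgments, and the borrow rules rely on \qfree and \bcqbf premises, we prove this proposition simultaneously with the soundness of the \qfree, \qbf, \bcqbf and commutativity rules. For the current statement, the only thing we use about those other judgments is their semantic reading, so the induction hypothesis is applied to every premise. Throughout, we work with the equivalent form $U=\sum_i \ketbra{e_i}{e_i}_{\olq}\otimes U_i$ from Definition~\ref{def:spform}. We also use the convention that operators are padded with identities on the qubits they do not touch.

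The axioms come first. \textsc{Const-Skip} holds because $I$ is trivially of this form. \textsc{Const-Unitary} holds with an empty $\olq$. \textsc{Const-Identity} holds with all $U_i=I$.

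The structural rules come next. For \textsc{Const-New Square}, the qubit $a$ lies outside the support, so we can write $U\otimes I_a=\sum_{i,b}\ketbra{e_i}{e_i}_{\olq}\ketbra{b}{b}_a\otimes U_i$. For \textsc{Const-Move Square}, we regroup $\sum_{i,b}\ketbra{e_i}{e_i}\ketbra{b}{b}_a\otimes U_{i,b}$ as $\sum_i\ketbra{e_i}{e_i}\otimes(\sum_b\ketbra{b}{b}_a\otimes U_{i,b})$. The inner sum is unitary because it is block diagonal with unitary blocks.

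For \textsc{Const-Seq}, the set $\olq$ consists of qubits that each of $C_1,C_2$ keeps constant or does not touch. We first use New Square and Move Square to bring both factors to a common register split $(\olq,\olt)$. Then we multiply: $(\sum_i P_i\otimes V_i)(\sum_j P_j\otimes W_j)=\sum_i P_i\otimes V_iW_i$, using orthogonality of the projectors $P_i$. \textsc{Const-Qif} is analogous. Its semantics is $\ketbra11_x\otimes\sem{C_1}+\ketbra00_x\otimes\sem{C_0}$. After aligning both branches to a common $(\olq,\olt)$, this becomes $\sum_{b,i}\ketbra bb_x\ketbra{e_i}{e_i}_{\olq}\otimes U^{(b)}_i$. \textsc{Const-Cqbf} is immediate from the first clause of the \bcqbf definition, given soundness of the \bcqbf judgment.

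The borrow rules are the main obstacle. Here we must show that $\sem{\borrowcd{a}{C_s}{C_u}}=\func(\sem{C_s;C_u},a)$ is a genuine unitary, not $\bot$, and that it is \const on $\olq$. By the induction hypothesis, the premises give $\sem{C_s}:\spform{\olq}{a}$ with $\sem{C_s}$ \qfree. For the clean rule they also give $\sem{C_u}:\spform{\olq,a}{\olt}$; for the dirty rule they give $\sem{C_u}:\cqbf{\olq}{a}{\olt}$. These are exactly the conditions of Definition~\ref{def:cleantemplate} and Definition~\ref{def:dirtytemplate}. So \cref{prop:cleantempuncomp} or \cref{prop:dirtytempuncomp} gives $\Uncomp(\sem{C_s;C_u},a)\neq\emptyset$, and by \cref{prop:eq-valid-uncomputation} the denotation is the unitary $\func$.

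To obtain the \const form, we read off \cref{eq:map VU}. On $\ket{e_i}_{\olq}\ket{0}_a\ket{j}_{\olt}$, the output is $\ket{e_i}\ket{\sem{C_s}_i(0)}_a\,\sem{C_u}_{i,\sem{C_s}_i(0)}\ket j$. Here the \qfree property makes $\sem{C_s}_i(0)$ a single classical bit. Hence $\func$ acts as $\sum_i\ketbra{e_i}{e_i}\otimes\sem{C_u}_{i,\sem{C_s}_i(0)}$, which is the required form. The dirty case gives the same expression via $\func$. The delicate point is the bookkeeping of which registers are in $\olq$ versus $\olt$ and the use of the \qfree premise. Without the \qfree premise, the ancilla could be left in superposition, and the \const decomposition would fail.
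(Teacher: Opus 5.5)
Your proposal is correct and follows the same route as the paper. Both proofs run an induction over the rules of \cref{fig:sp-rule}, read the structural cases off \cref{def:spform}, and discharge the borrow cases with \cref{prop:cleantempuncomp} and \cref{prop:dirtytempuncomp}. You differ only in two refinements: you make explicit the mutual induction with the \qfree, \bcqbf and commutativity judgments, which the paper leaves implicit, and you read $\func(\sem{C_s;C_u},a)$ directly off \cref{eq:map VU} rather than taking it as the $\ketbra{0}{0}_a$ block of $\sem{C_s}^\dagger\sem{C_u}\sem{C_s}$ via \cref{prop:cleantempspform}.
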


\begin{proof}
The proof proceeds by structural induction.

\paragraph{Base Cases.}
\textbf{CONST-SKIP}, \textbf{CONST-UNITARY}, \textbf{CONST-IDENTITY} are directly from definition.

\paragraph{\textbf{CONST-SEQ}}

Assume the induction hypothesis for $C_1, C_2$:
\[
\Gamma \vdash C_1 : \spform{\overline{q_1}}{\overline{t_1}}, \qquad
\Gamma \vdash C_2 : \spform{\overline{q_2}}{\overline{t_2}}.
\]

Let $\olq = \overline{q_1} \cup \overline{q_2} \setminus
(\overline{t_1} \cup \overline{t_2})$ and
$\olt = (\overline{q_1} \cup \overline{q_2} \cup
\overline{t_1} \cup \overline{t_2}) \setminus \olq$. Since $\sem{C_1;C_2} = \sem{C_2}  \sem{C_1}$ and the qubits constant in both $C_1$ and $C_2$
remain constant after composition, the conclusion $\Gamma \vdash C_1;C_2 : \spform{\olq}{\olt}$ holds.

\paragraph{\textbf{CONST-QIF}}

Assume IH for $C_1,C_0$ and let $x \notin
\overline{q_1} \cup \overline{q_2} \cup
\overline{t_1} \cup \overline{t_2}$.
The guardian qubit $a$ is not modified by either branch. Only qubits constant for both branches are guaranteed to be constant. Thus $\Gamma \vdash \qif{x}{C_1}{C_0} : \spform{x,\olq}{\olt}$ holds.

\paragraph{\textbf{CONST-BORROW CLEAN}}

Given $\Gamma \vdash C_s : \spform{\olq}{a}$,
$\Gamma \vdash C_u : \spform{\olq,a}{\olt}$,
and $\Gamma \vdash C_s : \qfree$.
By Proposition~\ref{prop:cleantempuncomp} and Proposition~\ref{prop:cleantempspform}, $\sem{C_s}^\dagger  \sem{C_u}  \sem{C_s} = \ket{0}_a\bra{0} \otimes W_{qv(\bborrow) \setminus a} + \ket{1}_a\bra{1} \otimes V_{qv(\bborrow) \setminus a}$ has \const property $\spform{\olq, a}{\olt}$.
Since $\sem{\borrowc{a}{C_s}{C_u}} = W$, then $W : \spform{\olq}{\olt}$.

\paragraph{\textbf{CONST-BORROW DIRTY}}

The same with above.

\paragraph{Other Rules}
\textbf{CONST-CQBF}, \textbf{CONST-NEW SQUARE}, \textbf{CONST-MOVE SQUARE} are directly from the definition.

\end{proof}

\subsubsection{Proof of Soundness of \qfree Rules}

\begin{proposition}[Soundness of \qfree rules]
    If $\Gamma \vdash C : \qfree$, then $\sem{C}$ is \qfree.
\end{proposition}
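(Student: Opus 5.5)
We prove the statement by structural induction on the derivation of $\Gamma \vdash C : \qfree$, in the same style as the soundness proof for the \const rules above. Recall that \qfree means that $\sem{C}$ is a permutation matrix on the computational basis, equivalently a reversible Boolean function lifted to a unitary. The base cases are immediate. For QFree-SKIP, $\sem{\bskip}=I$ is the identity permutation. For QFree-BASIC, $I$ and $\tX$ both permute $\{\ket{0},\ket{1}\}$.

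The compositional cases reduce to closure properties of permutation matrices. For QFree-SEQ, the semantics $\sem{C_1;C_2}$ is the product of $\sem{C_1}$ and $\sem{C_2}$, each padded with identities on the missing qubits. A tensor product of permutation matrices is a permutation matrix, and so is a product of permutation matrices, so the induction hypothesis closes the case. For QFree-QIF, $\sem{\qif{x}{C_1}{C_0}}$ is block diagonal with blocks indexed by the basis value of $x$. It maps $\ket{b}_x\ket{e_j}$ to $\ket{b}_x\,\pi_b(\ket{e_j})$, where $\pi_b$ is the permutation given by the induction hypothesis for $C_b$. This map is a bijection on basis states, so the whole operator is a permutation.

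The borrow cases are the main obstacle, because $\sem{\borrowc{a}{C_s}{C_u}}=\func(\sem{C_s;C_u},a)$ is not built by composition but by extracting the $\ket{0}_a$ branch. Take the clean case first. From the \const premises and Proposition~\ref{prop:cleantempuncomp}, together with \eqref{eq:map UVU}, we get
\[
\sem{C_s}^\dagger\sem{C_u}\sem{C_s}\ket{i}_{\olq}\ket{0}_a\ket{j}_{\olt}=\ket{i}_{\olq}\ket{0}_a\,\sem{C_u}_{i,\sem{C_s}_i(0)}\ket{j}_{\olt}.
\]
Hence $\func$ sends $\ket{i}\ket{j}$ to $\ket{i}\,\sem{C_u}_{i,c_i}\ket{j}$, where $c_i=\sem{C_s}_i(0)$ is a classical bit because $C_s$ is \qfree and \const on $\olq$. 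Since $C_u$ is \qfree by hypothesis, each $\sem{C_u}_{i,c}$ is a permutation of the basis of $\olt$. The resulting map is therefore a bijection of basis states of $\olq\cup\olt$, i.e.\ \qfree.

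The dirty case follows the same route using \eqref{eq:map UVUV} and Proposition~\ref{prop:dirtytempuncomp}. Because the uncomputation is unique by Proposition~\ref{prop:eq-valid-uncomputation}, the functionality again sends $\ket{i}\ket{j}$ to $\ket{i}\,\sem{C_u}_{i,c_i}\ket{j}$, which is a permutation since $\sem{C_u}$ is \qfree. The subtle point to verify is that the restriction of a permutation to a block is itself a permutation rather than merely a partial map. This holds because $\func(\sem{C_s;C_u},a)$ equals $W$ in the block decomposition $\ketbra{0}{0}_a\otimes W+\ketbra{1}{1}_a\otimes V$ of $\sem{C_s}^\dagger\sem{C_u}\sem{C_s}$, which is \const on $a$. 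A permutation matrix that is block diagonal in $a$ has permutation blocks, so $W$ is a permutation.
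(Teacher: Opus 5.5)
Your proof is correct and follows the paper's own argument: structural induction with immediate base, \textbf{QFree-SEQ} and \textbf{QFree-QIF} cases, and borrow cases settled by combining \eqref{eq:map UVU}, the \qfree premise on $C_u$, and the fact that $\sem{\bborrow}$ is the $\ket{0}_a$ block $W$ of $\sem{C_s}^\dagger\sem{C_u}\sem{C_s}=\ketbra{0}{0}_a\otimes W+\ketbra{1}{1}_a\otimes V$ (the same decomposition the paper uses for \textbf{CONST-BORROW CLEAN}), which you spell out in more detail than the paper does. The appeal to uniqueness of the dirty uncomputation is unnecessary, and that uniqueness is remarked after \cref{def:dirtyuncomp} rather than stated in \cref{prop:eq-valid-uncomputation}: the dirty borrow has the same denotation $\func(\sem{C_s;C_u},a)$, and $\cqbf{\olq}{a}{\olt}$ already includes $\spform{\olq,a}{\olt}$, so your clean-case computation applies verbatim.
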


\begin{proof}
We prove soundness by structural induction.

\paragraph{Base cases.}
\textbf{QFree-SKIP}, \textbf{QFree-BASIC} and \textbf{QFree-SEQ} are directly from definition.

\paragraph{\textbf{QFree-QIF}}
Assume $\Gamma\vdash C_1:\qfree$ and $\Gamma\vdash C_0:\qfree$.
Both branches are \qfree and the guardian qubit is constant in $\bqif$ together guarantee that the whole $\bqif$ statement is exactly \qfree.

\paragraph{\textbf{QFree-BORROW CLEAN}}
Assume
\[
\Gamma \vdash C_s : \spform{\olq}{a},\quad
\Gamma \vdash C_u : \spform{\olq,a}{\olt},\quad
\Gamma \vdash C_s : \qfree,\quad
\Gamma \vdash C_u : \qfree.
\]
The first three assumptions ensure that $\sem{C_s}$ and $\sem{C_u}$ conform to the clean-ancilla template. Combining the \qfree property of $\sem{C_u}$ with \cref{eq:map UVU} and the denotational semantics of $\bborrow$, we conclude that $\sem{\borrowc{a}{C_s}{C_u}}$ is \qfree.

\paragraph{\textbf{QFree-BORROW DIRTY}}
The same with \textbf{QFree-BORROW CLEAN}.

\end{proof}

\subsubsection{Proof of Soundness of \qbf Rules}

\begin{proposition}[Soundness of \qbf rules]
    If $\Gamma \vdash C : \qbf$, then $\sem{C}$ is a quantum boolean function (\qbf).
\end{proposition}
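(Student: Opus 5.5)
We argue by structural induction on the derivation of $\Gamma \vdash C : \qbf$, mirroring the soundness proofs for the \const and \qfree rules. For each rule we show that $\sem{C}^2 = I$, using the induction hypothesis on the premises. We also use the already-established soundness of the \const and \qfree rules, and we additionally need soundness of the commutativity judgment $[C_1,C_2]$ and of the \bcqbf rules. Because \textbf{QBF-CQBF} and \textbf{COMM-QBF} refer back to \qbf, we would run the induction simultaneously over the \qbf, \bcqbf and commutativity judgments, ordered by derivation height.

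The base cases are direct. \textbf{QBF-SKIP} gives $I^2=I$, and \textbf{QBF-BASIC} uses $X^2=Y^2=Z^2=H^2=I$. The structural rules are short algebra:
\begin{itemize}
\item \textbf{QBF-CONJUGATE}: $(\sem{C_2}^\dagger \sem{C_1}\sem{C_2})^2 = \sem{C_2}^\dagger \sem{C_1}^2 \sem{C_2} = I$. The paper writes $C_2; C_1; C_2^\dagger$, so the semantics is $\sem{C_2}^\dagger\sem{C_1}\sem{C_2}$.
\item \textbf{QBF-TENSOR}: disjoint supports give $\sem{C_1;C_2} = \sem{C_1}\otimes\sem{C_2}$, whose square is $I\otimes I$.
\item \textbf{QBF-SEQ}: the commutation premise gives $(\sem{C_2}\sem{C_1})^2 = \sem{C_2}^2\sem{C_1}^2 = I$.
\item \textbf{QBF-QIF}: the semantics is $\ketbra{1}{1}\otimes U_1 + \ketbra{0}{0}\otimes U_0$, whose square is $\ketbra{1}{1}\otimes U_1^2 + \ketbra{0}{0}\otimes U_0^2 = I$. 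The identity padding in Definition~\ref{def:semantics} does not affect this.
\item \textbf{QBF-CQBF}: this is immediate from the definition of \bcqbf, which includes being a \qbf.
\end{itemize}

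The borrow rules are the substantive cases. For \textbf{QBF-BORROW CLEAN}, the template premises let us invoke Proposition~\ref{prop:cleantempuncomp} and \eqref{eq:map UVU}. These show that $\sem{\borrowc{a}{C_s}{C_u}}$ acts on a basis state $\ket{i}_{\olq}\ket{j}_{\olt}$ as $\ket{i}_{\olq}\,\sem{C_u}_{i,\sem{C_s}_i(0)}\ket{j}_{\olt}$. The borrow semantics is therefore the block-diagonal operator $\sum_i \ketbra{i}{i}_{\olq}\otimes \sem{C_u}_{i,c_i}$ with $c_i=\sem{C_s}_i(0)$, which is exactly the restriction of $\sem{C_u}$ to the ancilla value $c_i$.

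Since $\sem{C_u}$ is \const on $\olq,a$ and is a \qbf, it decomposes as $\sum_{i,c}\ketbra{i}{i}\otimes\ketbra{c}{c}\otimes \sem{C_u}_{i,c}$. Squaring gives $\sem{C_u}_{i,c}^2 = I$ for every block, so the borrow semantics squares to $I$. For \textbf{QBF-BORROW DIRTY} we use Proposition~\ref{prop:dirtytempuncomp} and \eqref{eq:map UVUV} in the same way. The semantics is again block diagonal in $\olq$, with blocks $\sem{C_u}_{i,c}$ for some fixed $c$ (either the identity block or $\sem{C_u}_{i,1}$), and each block squares to $I$ by the \bcqbf/\qbf premise on $C_u$.

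We expect the main obstacle to be bookkeeping rather than new mathematics. First, the borrow semantics $\func(\cdot,a)$ must be identified with a block-diagonal operator; this needs both $\sem{C_s}$ being \qfree, so that $c_i$ is a classical bit, and the \const premises, so that $\olq$ is untouched. Second, the mutual recursion with \textbf{COMM-QBF} and \textbf{QBF-CQBF} must be made well founded, which is why we order the simultaneous induction by derivation height.
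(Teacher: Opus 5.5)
Your proposal is correct and is essentially the paper's proof: the same case-by-case induction over the \qbf rules. The borrow cases are reduced, via the template computations in the proofs of Propositions~\ref{prop:cleantempuncomp} and~\ref{prop:dirtytempuncomp}, to the fact that each block $\sem{C_u}_{i,c}$ squares to the identity, and the remaining cases use the same short algebra. You are more explicit than the paper in writing the borrow semantics as $\sum_i \ketbra{i}{i}_{\olq}\otimes\sem{C_u}_{i,c_i}$ and in inducting on derivation height rather than program structure, which is necessary since \textbf{QBF-CQBF} and \textbf{COMM-QBF} do not shrink the program; one refinement is that the \const and \qfree judgments should also join that simultaneous induction, because \textbf{CONST-CQBF} and the \bcqbf borrow rules make their soundness mutually dependent with that of \bcqbf.
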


\begin{proof}
The proof is by structural induction.

\paragraph{Base cases.}
\textbf{QBF-SKIP} and \textbf{QBF-BASIC} are directly from definition.

\paragraph{Inductive steps for composition / structural rules}

\textbf{QBF-CONJUGATE.}  

Assume $U^2 = I$, then for any unitary $V$, we have

\[
    (VUV^\dagger)(VUV^\dagger) = I.
\]

\textbf{QBF-TENSOR.}  
Assume $\Gamma\vdash C_1:\qbf$ and $\Gamma\vdash C_2:\qbf$ and $qv(C_1)\cap qv(C_2)=\emptyset$.
Because the two circuits act on disjoint sets of qubits, their parallel composition is given by the tensor product $\sem{C_1; C_2}=\sem{C_1}\otimes\sem{C_2}$. Then it is obvious that $(\sem{C_1} \otimes \sem{C_2})^2 = I$.

\textbf{QBF-SEQ.}
Assume $\Gamma\vdash C_1:\qbf$, $\Gamma\vdash C_2:\qbf$ and $\Gamma\vdash [C_1,C_2]$ ($\sem{C_1}$ and $\sem{C_2}$ commute). Then this rule holds by the fact that the composition of two QBFs remains to be a \qbf if and only if they commute.

\textbf{QBF-QIF.}  
Assume $\Gamma \vdash C_1 : \qbf$ and $\Gamma \vdash C_0 : \qbf$.
Executing $\bqif$ twice applies each branch twice, conditioned on the control qubit $x$.
Since both branches are QBF, their double execution is semantically the identity, hence
$\bqif$ itself is a \qbf.

\textbf{QBF-BORROW CLEAN}

Assume
\[
\Gamma \vdash C_s : \spform{\olq}{a},\quad
\Gamma \vdash C_u : \spform{\olq,a}{\olt},\quad
\Gamma \vdash C_s : \qfree,\quad
\Gamma \vdash C_u : \qbf.
\]
The first three assumptions guarantee that $\sem{C_s}$ and $\sem{C_u}$ adhere to the clean-ancilla template.
By combining the QBF property of $\sem{C_u}$ with \cref{eq:map UVU} and the denotational semantics of $\bborrow$,
we see that executing this statement twice yields the identity.
Hence, $\sem{\borrowc{a}{C_s}{C_u}}$ is a \qbf.

\textbf{QBF-BORROW DIRTY}

The same with \textbf{QBF-BORROW CLEAN}.

\textbf{QBF-CQBF}.

Follows the definition of CQBF.

\end{proof}

\subsubsection{Proof of Soundness of \bcqbf Rules}

\begin{proposition}[Soundness of \bcqbf rules]
    If $\Gamma \vdash C : \cqbf{}{\olq}{\olt}$, then $\sem{C}$ is a $\cqbf{}{\olq}{\olt}$.
\end{proposition}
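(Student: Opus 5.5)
We argue by structural induction on the derivation of $\Gamma \vdash C : \cqbf{\olp}{\olq}{\olt}$ (the statement is the special case $\olp=\emptyset$; the general case is needed for the induction). For each rule we check two facts about $\sem{C}$: the \qbf-and-\const part, $\sem{C}^2=I$ and $\sem{C}:\spform{\olp,\olq}{\olt}$; and the controlledness part, $\sem{C}$ acts as the identity on $\olt$ whenever some qubit of $\olq$ is in $\ket{0}$. Equivalently, we show
\[
\sem{C}=\sum_{i,k}\ketbra{e_i}{e_i}_{\olp}\otimes\ketbra{e_k}{e_k}_{\olq}\otimes W_{i,k},
\]
where $W_{i,k}^2=I$ for all $i,k$ and $W_{i,k}=I$ whenever $e_k\neq\overline{1}$. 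This normal form is our working invariant.

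\textbf{Base cases.} CQBF-SKIP is trivial. CQBF-BASIC holds because $\tX,\tY,\tZ,\tH$ are involutions and the control set is empty. CQBF-IDENTITY holds because the identity is trivially constant and controlled on $q$.

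\textbf{Structural rules.} CQBF-NEW SQUARE follows by tensoring with $\ketbra{0}{0}+\ketbra{1}{1}$ on the fresh qubit, which leaves the invariant intact. CQBF-MOVE SQUARE only weakens the \const claim: a qubit constant in $\olp$ is reinterpreted as a target, and the blocks $W_{i,k}$ are regrouped into a block-diagonal operator on $x,\olt$ that is still an involution and still trivial off $\overline{1}$. CQBF-MOVE CURLY drops a control into $\olp$; the controlledness condition is now required only for the remaining controls, which is weaker. CQBF-QIF with $C_0=\bskip$ gives
\[
\ketbra{1}{1}_x\otimes\sem{C_1}+\ketbra{0}{0}_x\otimes I ,
\]
which is trivial whenever $x=0$ and otherwise inherits the invariant from $C_1$, so $x$ can be added to the controls.

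\textbf{Borrow rules (main obstacle).} For CQBF-BORROW CLEAN, Proposition~\ref{prop:cleantempuncomp} gives the semantics explicitly:
\[
\sem{\borrowc{a}{C_s}{C_u}}\ket{i}_{\olq}\ket{j}_{\olt}=\ket{i}_{\olq}\,\sem{C_u}_{i,s(i)}\ket{j}_{\olt},
\]
where $s(i)=\sem{C_s}_i(0)\in\{0,1\}$. By the induction hypothesis for $C_s$, $s(i)=1$ only if the registers $D,E,F$ are all $1$. By the induction hypothesis for $C_u$, $\sem{C_u}_{i,b}$ is a \qbf and equals $I$ unless $B,E,H$ are all $1$. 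Hence each resulting block is an involution and is trivial unless $B,E,H$ are all $1$, which matches the conclusion with controls $B,E,H$. The \const part follows from Proposition~\ref{prop:cleantempspform}.

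For CQBF-BORROW DIRTY, we use Proposition~\ref{prop:dirtytempuncomp}. The block on input $i$ is $\sem{C_u}_{i,1}$ if $s(i)=1$, and $I$ otherwise. By the hypothesis on $C_s$, $s(i)=1$ forces $D,E,F$ to be $1$. Since $a$ is a control of $C_u$, $\sem{C_u}_{i,1}$ is nontrivial only if $B,E,H$ are $1$. The controls therefore accumulate to $B,D,E,F,H$, as stated.

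The delicate point in both borrow cases is the bookkeeping of the eight disjoint registers. We must confirm that $s(i)$ depends only on $\olq$, which is the \const-on-$\olq$ hypothesis for $C_s$, and that registers listed as constant in the conclusion, such as $A,C,G$, are truly untouched. This holds because each of them is constant in both $C_s$ and $C_u$, and the ancilla is uncomputed by $C_s^\dagger$.
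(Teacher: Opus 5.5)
Your proposal is correct and follows essentially the same route as the paper: induction over the \bcqbf rules, with the borrow cases handled by writing the template semantics blockwise over the computational basis of $\olq$ and reading involutivity and controlledness off the premises on $C_s$ and $C_u$. The only difference is in CQBF-BORROW DIRTY, where you evaluate $\func(\sem{C_s;C_u},a)$ at $\ket{0}_a$, so each block is simply $I$ or $\sem{C_u}_{i,1}$; this is slightly cleaner than the paper, which expands $\sem{C_s}^\dagger\sem{C_u}\sem{C_s}\sem{C_u}$ on an arbitrary ancilla value and then needs a commutation argument to get the involution property.
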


\begin{proof}
Proof is by structural induction.

\paragraph{Base cases.}

\textbf{CQBF-SKIP}, \textbf{CQBF-BASIC} and \textbf{CQBF-IDENTITY} are directly from the definition.

\textbf{CQBF-QIF}

Assume $\Gamma \vdash C_1 : \cqbf{\olp}{\olq}{\olt}$ and $C_0 = \bskip$. By rule \textbf{QBF-QIF} we obtain $\Gamma \vdash \qif{x}{C_1}{C_0} : \qbf$. By definition of CQBF, $\sem{C_1}$ acts non-trivially on $\olt$ only when $\olq = \ket{\overline{1}}_{\olq}$. Then, $\sem{\bqif}$ acts non-trivially on $\olt$ only when $x = \ket{1}_x$ and $\olq = \ket{\overline{1}}_{\olq}$. Thus $\sem{\bqif}$ is a $\cqbf{\olp}{x, \olq}{\olt}$.

\paragraph{\textbf{CQBF-BORROW CLEAN}}

Assume
\begin{align*}
& \Gamma \vdash C_s : \spform{\olq}{a}, &
& \Gamma \vdash C_u : \spform{\olq,a}{\olt}, &
& \Gamma \vdash C_s : \qfree, \\
& \Gamma \vdash C_s : \cqbf{A, B, C}{D, E, F}{a}, &
& \Gamma \vdash C_u : \cqbf{a, A, D, G}{B, E, H}{\olt} \\
& A, B, C, D, E, F, G, H \text{ are disjoint} &
& \olq = A \cup B \cup C \cup D \cup E \cup F \cup G \cup H.
\end{align*}

We have

\begin{align*}
& \sem{C_s}^\dagger  \sem{C_u}  \sem{C_s} 
    \ket{\olq} \ket{0} \ket{\olt} \\
&= \sem{C_s}^\dagger  \sem{C_u} 
    \ket{\olq} 
    \bigl(\sem{C_s}_{\ket{\olq}} \ket{0}\bigr) \ket{\olt} \\
&= \sem{C_s}^\dagger 
    \ket{\olq} 
    \bigl(\sem{C_s}_{\ket{\olq}} \ket{0}\bigr)
    \sem{C_u}_{\ket{\olq} \ket{\sem{C_s}_{\ket{\olq}}(0) }} \ket{\olt} \\
&= \ket{\olq} \ket{0}
    \sem{C_u}_{\ket{\olq} \ket{\sem{C_s}_{\ket{\olq}}(0) }} \ket{\olt} \\
&= \ket{\olq} \ket{0}
    \sem{C_u}_{\ket{AG} \ket{D} \ket{E} \ket{BH} \ket{\sem{C_s}_{\ket{AC} \ket{DF} \ket{E} \ket{B}}(0)}} \ket{\olt}.
\end{align*}

By assumption that $\sem{C_u}$ is a \bcqbf, then $\sem{C_u}_{\ket{\olq} \ket{\sem{C_s}_{\ket{\olq}}(0) }}$ is a \qbf and acts non-trivially on $\olt$ only when $EBH = \ket{\overline{1}}$.
Thus $\sem{\bborrow}$ is a $\cqbf{A, C, D, F, G}{E, B, H}{\olt}$ by definition.

\paragraph{\textbf{CQBF-BORROW DIRTY}}

Assume
\begin{align*}
& \Gamma \vdash C_s : \spform{\olq}{a}, &
& \Gamma \vdash C_s : \qfree, \\
& \Gamma \vdash C_s : \cqbf{A, B, C}{D, E, F}{a}, &
& \Gamma \vdash C_u : \cqbf{A, D, G}{B, E, H, a}{\olt} \\
& A, B, C, D, E, F, G, H \text{ are disjoint} &
& \olq = A \cup B \cup C \cup D \cup E \cup F \cup G \cup H.
\end{align*}

We have

\begin{align*}
& \sem{C_s}^\dagger  \sem{C_u}  \sem{C_s}  \sem{C_u} 
    \ket{\olq} \ket{a} \ket{\olt} \\
&= \sem{C_s}^\dagger  \sem{C_u}  \sem{C_s} 
    \ket{\olq} \ket{a}
    \bigl(\sem{C_u}_{\ket{\olq} \ket{a}} \ket{\olt}\bigr) \\
&= \sem{C_s}^\dagger  \sem{C_u} 
    \ket{\olq} 
    \bigl(\sem{C_s}_{\ket{\olq}} \ket{a}\bigr) 
    \bigl(\sem{C_u}_{\ket{\olq} \ket{a}} \ket{\olt}\bigr) \\
&= \sem{C_s}^\dagger
    \ket{\olq} 
    \bigl(\sem{C_s}_{\ket{\olq}} \ket{a}\bigr) 
    \bigl(\sem{C_u}_{\ket{\olq} \ket{\sem{C_s}_{\ket{\olq}}(a) }} \sem{C_u}_{\ket{\olq} \ket{a}} \ket{\olt}\bigr) \\
&= \ket{\olq} \ket{a}
    \bigl(\sem{C_u}_{\ket{\olq} \ket{\sem{C_s}_{\ket{\olq}}(a) }} \sem{C_u}_{\ket{\olq} \ket{a}} \ket{\olt}\bigr) \\
&= \ket{\olq} \ket{a}
    \bigl(\sem{C_u}_{\ket{AG} \ket{D} \ket{E} \ket{BH} \ket{\sem{C_s}_{\ket{AC} \ket{DF} \ket{E} \ket{B} }(a) }} \sem{C_u}_{\ket{AG} \ket{D} \ket{E} \ket{BH} \ket{a}} \ket{\olt}\bigr).
\end{align*}

By assumption that $\sem{C_u}$ is a \bcqbf, then $\sem{\bborrow}$ acts non-trivially on $\olt$ only when $EBH = \ket{\overline{1}}$. 

By assumption that $\sem{C_s}$ is a \bcqbf, When $DFE \ne \ket{\overline{1}}$, since $\sem{C_u}_{\ket{\olq} \ket{a}}$ is a \qbf, we have

\begin{align*}
& \sem{C_u}_{\ket{\olq} \ket{\sem{C_s}_{\ket{\olq}}(a) }} \sem{C_u}_{\ket{\olq} \ket{a}} \\
&=  \sem{C_u}_{\ket{\olq} \ket{a}} \sem{C_u}_{\ket{\olq} \ket{a}} \\
&= I.
\end{align*}

Thus $\sem{\bborrow}$ acts non-trivially on $\olt$ only when $DFE = \ket{\overline{1}}$.

When $\sem{C_s}_{\ket{\olq}} = I$, we have $\sem{\bborrow}_{\ket{\olq} \ket{a}} = I$ so that $\sem{\bborrow}_{\ket{\olq} \ket{a}}^2 = I$.

When $\sem{C_s}_{\ket{\olq}} = X$, by assumption that $\sem{C_u} : \cqbf{\olq}{a}{\olt}$, we have 

\begin{align*}
& \sem{C_u}_{\ket{\olq} \ket{\sem{C_s}_{\ket{\olq}}(a) }} \sem{C_u}_{\ket{\olq} \ket{a}} \\
&=  \sem{C_u}_{\ket{\olq} \ket{a \oplus 1}} \sem{C_u}_{\ket{\olq} \ket{a}} \\
&=  \sem{C_u}_{\ket{\olq} \ket{a}} \sem{C_u}_{\ket{\olq} \ket{a \oplus 1}} \\
\end{align*}

Then $\sem{\bborrow}_{\ket{\olq} \ket{a}} = \sem{C_u}_{\ket{\olq} \ket{a \oplus 1}} \sem{C_u}_{\ket{\olq} \ket{a}}$ and $\sem{\bborrow}_{\ket{\olq} \ket{a}}^2 = I$.

Therefore, $\sem{\bborrow}$ is a \qbf and acts non-trivially on $\olt$ only when $EBH = \ket{\overline{1}}$ and $DFE = \ket{\overline{1}}$. And $\sem{\bborrow}$ is a $\cqbf{A, C, G}{B, D, E, F, H}{\olt}$ by definition.

\paragraph{Other rules.} 
\textbf{CQBF-NEW SQUARE}, \textbf{CQBF-MOVE SQUARE} and \textbf{CQBF-MOVE CURLY} are directly from the definition.

\end{proof}

\begin{proof}[Proof of \cref{thm:reasoning-soundness}]
For every rule in \cref{fig:sp-rule,fig:qfree-rule,fig:qbf-rule,fig:cqbf-rule,fig:comm-rule} whose conclusion is a borrow statement of the form $\borrowcd{a}{C_s}{C_u}$, the premises explicitly require the semantic constraints in \cref{def:cleantemplate} (for the clean-ancilla case) or \cref{def:dirtytemplate} (for the dirty-ancilla case), respectively. 
By \cref{prop:cleantempuncomp,prop:dirtytempuncomp}, any such derivation entails that the corresponding borrow statement is semantically valid (i.e., its denotation is not $\bot$).

Moreover, the sequential \const rules propagate \const from subcircuits to larger circuits: any derivation of $\Gamma \vdash G : \spform{\olq}{\olt}$ necessarily contains derivations of \const judgments for all relevant subcircuits of $G$. 
Hence, if the whole circuit $G$ is derivable as \const, then every borrow substatement appearing in $G$ is semantically valid. 
It follows from the denotational semantics in \cref{def:semantics} that $\sem{G}\neq \bot$.

Finally, it remains to justify the above argument by establishing soundness of each rule; this is proved above.
\end{proof}

\subsection{Proof of \cref{thm:tpun-soundness-full}}

\begin{proof}[Proof of \cref{thm:tpun-soundness-full}]
\TpUnx outputs the synthesized circuit $\Tpsynth(G)$ when the circuit $G$ is checked by our system, under which the circuit $G$ and all its subcircuits are valid (\cref{thm:reasoning-soundness}). Then the remaining goal is to prove $\sem{\Tpsynth(G)} \ket{x}_{\old} \ket{0}_{\ola} \ket{\varphi}_{\olq} = \ket{x}_{\old} \ket{0}_{\ola} \sem{G}\ket{\varphi}_{\olq}$.
We prove the theorem by structural induction on the program $G$, following the recursive definition of \Tpsynth.

\textbf{Base cases.}
For the trivial programs $\bskip$ and a single gate $U[\olq]$ the statement is immediate.

\textbf{Inductive step for composition.}
If $G = C_1; C_2$, then by the non-overlapping structure of \templateborrow statements, the sets of ancillas used in $C_1$ and $C_2$ are disjoint.  
Let $\overline{d}$ and $\ola$ denote the $m_1$-qubit dirty ancilla register and $n_1$-qubit clean ancilla register used in $C_1$, and let $\overline{e}$ and $\overline{b}$ denote the $m_2$-qubit dirty and $n_2$-qubit clean ancilla registers used in $C_2$.  
Let $\olq$ be the set of working qubits that are shared by both subcircuits. 
By inductive hypothesis, let $\Tpsynth(C_1; C_2) = \cC_1; \cC_2$, where $\cC_1$ denotes $\Tpsynth(C_1)$ and $\cC_2$ denotes $\Tpsynth(C_2)$. Then we have: 
\begin{align*}
\sem{\cC_1; \cC_2} \ket{x}_{\overline{d}} \ket{0}_{\ola} \ket{y}_{\overline{e}} \ket{0}_{\overline{b}} \ket{\varphi}_{\olq} 
&= \sem{\cC_2} \sem{\cC_1} \ket{x}_{\overline{d}} \ket{0}_{\overline{a}} \ket{y}_{\overline{e}} \ket{0}_{\overline{b}} \ket{\varphi}_{\olq}    \\
&=  \sem{\cC_2} \ket{x}_{\overline{d}} \ket{0}_{\overline{a}} \ket{y}_{\overline{e}} \ket{0}_{\overline{b}} \sem{C_1} \ket{\varphi}_{\olq} \\
&=  \ket{x}_{\overline{d}} \ket{0}_{\overline{a}} \ket{y}_{\overline{e}} \ket{0}_{\overline{b}} \sem{C_2} \sem{C_1} \ket{\varphi}_{\olq} \\
&=  \ket{x}_{\overline{d}} \ket{0}_{\overline{a}} \ket{y}_{\overline{e}} \ket{0}_{\overline{b}} \sem{C_1; C_2} \ket{\varphi}_{\olq}
\end{align*}

\textbf{Inductive step for conditional.}
Simliar with above.

\textbf{Case of a template borrow clean.}
By induction hypothesis, let $\Tpsynth(\borrowc{b}{C}{G}) = \cC; \cG; \cC^\dagger$, where $\cC$ denotes $\Tpsynth(C)$ and $\cG$ denotes $\Tpsynth(G)$. Then we have:
\begin{align*}
& \sem{\cC; \cG; \cC^\dagger} \ket{x}_{\overline{d}} \ket{0}_{\ola} \ket{0}_b \ket{\varphi}_{\olq} \\
&= \sem{\cC}^\dagger \sem{\cG} \ket{x}_{\overline{d}} \ket{0}_{\overline{a}} \sem{C} \ket{0}_b \ket{\varphi}_{\olq}\tag*{(IH)}    \\
&= \sem{\cC}^\dagger \ket{x}_{\overline{d}} \ket{0}_{\overline{a}} \sem{G} \sem{C} \ket{0}_b \ket{\varphi}_{\olq}\tag*{(IH)}    \\
&= \ket{x}_{\overline{d}} \ket{0}_{\overline{a}} \sem{C}^\dagger \sem{G} \sem{C} \ket{0}_b \ket{\varphi}_{\olq}\tag*{(IH)}    \\
&= \ket{x}_{\overline{d}} \ket{0}_{\overline{a}} \sem{C; G; C^\dagger} \ket{0}_b \ket{\varphi}_{\olq}\tag*{(IH)}    \\
&= \ket{x}_{\overline{d}} \ket{0}_{\overline{a}} \ket{0}_b \sem{\borrowc{b}{C}{G}} \ket{\varphi}_{\olq}\tag*{(\cref{thm:reasoning-soundness,prop:cleantempuncomp})}
\end{align*}

\textbf{Case of a template borrow dirty.}
Similar with above.
\end{proof}

\subsection{Proof of \cref{prop:complete-qfreenormal}}\label{appendix:proof-prop-completeqfreenormal}

We prove the following key Proposition first and then prove \cref{prop:complete-qfreenormal} based on it.

\begin{proposition}\label{prop:complete-safe-qfreenormal}
A circuit $G$ uses clean ancillas $\ola$ safely if and only if there exists a \qfree circuit $G'$ over the same qubits such that:
\begin{enumerate}
  \item $\sem{G'} = \sem{G}$; and
  \item for every $\tMCX{i}$ gate in $G'$ whose target is an ancilla $b \in \ola$, $i \ge 1$ and the gate uses at least one other ancilla $c \in \ola$ as a control qubit.
\end{enumerate}
\end{proposition}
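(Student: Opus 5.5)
The plan is to prove the two directions separately, working throughout in the reversible-Boolean (\qfree) picture. There $\sem{G}$ is a permutation $f$ of $\{0,1\}^{m+n}$ on the ancillas $\ola$ and working qubits $\olq$, and safe use of the clean ancillas (\cref{eq:safe-clean}) means $f(0^m,x)=(0^m,g(x))$ for some map $g$. Since $f$ is injective, $g$ is a bijection of $\{0,1\}^n$.

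\emph{($\Leftarrow$).} Suppose $G'$ satisfies (1) and (2), and feed it a basis input $\ket{0}_{\ola}\ket{x}_{\olq}$. I argue by induction along the gate sequence of $G'$ that the ancilla register stays at $0^m$ before every gate. Any gate targeting an ancilla $b$ has, by (2), some other ancilla $c$ among its controls. That control is $0$, so the gate acts trivially. Every other gate targets a working qubit and therefore leaves the ancillas untouched. Hence $\sem{G}\ket{0}\ket{x}=\sem{G'}\ket{0}\ket{x}=\ket{0}\ket{g(x)}$, and linearity extends this to arbitrary $\ket{\phi}$, which is safety.

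\emph{($\Rightarrow$).} I factor $f=h\circ(\mathrm{id}_{\ola}\times g)$ with $h\triangleq f\circ(\mathrm{id}\times g)^{-1}$.
\begin{enumerate}
\item The permutation $\mathrm{id}\times g$ acts only on $\olq$. Since the gate set $S_c$ contains $\tMCX{i}$ of every arity, the full symmetric group on $\{0,1\}^n$ is generated: fully-controlled NOTs, conjugated by $\tX$ gates on working qubits, realize every transposition of adjacent strings. So I synthesize a circuit $W$ for $\mathrm{id}\times g$ that uses only working qubits; it trivially satisfies (2).
\item By construction $h$ fixes every string with ancilla part $0^m$. Thus $h$ is a permutation of the set $N=\{(u,x): u\neq 0^m\}$ and can be written as a product of transpositions of elements of $N$.
\end{enumerate}
It then remains to realize each such transposition by a circuit in which every ancilla-targeting gate is controlled by another ancilla. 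The witness is $G'\triangleq W;H$, where $H$ concatenates these circuits.

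\emph{Main obstacle.} Realizing the transpositions of $N$ is the hard step. The usual Gray-code construction needs negative controls, implemented by surrounding $\tX$ gates, and an $\tX$ on an ancilla violates (2). I will instead route every transposition through strings in which one fixed ancilla $c$ equals $1$, handling three cases.
\begin{itemize}
\item \emph{Inside the region $\{c=1\}$.} Every needed operation can carry $c$ as an extra positive control. In particular, a bare $\tX[b]$ on another ancilla becomes $\tCNOT[c,b]$, which satisfies (2). Conjugating by these replaces negative controls.
\item \emph{Flipping $c$ itself.} This is needed to move between regions, e.g.\ $(u,x)$ with $u_c=1$ to $u_c=0$ while $u\neq 0^m$ is kept. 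I use a gate targeting $c$ controlled by some other ancilla $b$ that equals $1$ in both strings. First arrange $b=1$ via $\tCNOT[c,b]$ while still in the $c=1$ region.
\item \emph{Single-ancilla case ($m=1$).} Here $N=\{(1,x)\}$, so $c=a$ never needs to change. The working-qubit gates are then simply all controlled by $a$, and every transposition within $N$ is a $W$-style construction controlled on $a$.
\end{itemize}
In every case the added gates either target working qubits, and are made to act trivially on $a=0^m$ by an ancilla control, or target ancillas with an ancilla control. Hence (2) holds, and $\sem{G'}=\sem{G}$ because $G'$ implements $h\circ(\mathrm{id}\times g)=f$.

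I expect the careful bookkeeping of this routing argument to be the bulk of the work: showing that a path between any two elements of $N$ exists staying in $N$ while only ever flipping a bit under a positive ancilla control equal to $1$. Once it is in place, \cref{prop:complete-qfreenormal} should follow by applying this proposition after appending the inverse of the ancilla-writing tail.
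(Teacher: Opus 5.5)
Your proof is correct. It follows a different route from the paper's, and it is more elementary and self-contained.

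\textbf{The paper's route.} The paper fixes an interleaved Hamiltonian path of the $(m+n)$-cube whose first block is $\{0^m\}\times\{0,1\}^n$. It then invokes Feng and Li's unique canonical form: cyclic shifts of adjacent-swap MPMCX gates along that path. Safety forces this canonical circuit to split into two parts:
\begin{itemize}
\item a tail that permutes only the first block, made of working-targeting gates with every ancilla as a negative control;
\item a head that permutes only $N$.
\end{itemize}
The single path edge from $0^m$ to $b_1$ is never used. Hence every ancilla-targeting head gate carries a positive ancilla control, and the MPMCX gates are finally rewritten as positive-control MCX gates that keep that control.

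\textbf{Your route.} You factor $f=h\circ(\mathrm{id}\times g)$ by hand and show that the admissible gates generate $\mathrm{Sym}(N)$. The underlying fact is the same: hypercube edges inside $N$ always share an ancilla equal to $1$, and no edge between $N$ and the $0^m$ block is ever needed. You prove it by a generation argument instead of a canonical-form theorem.

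\textbf{Trade-offs.}
\begin{itemize}
\item Your approach needs no external uniqueness result.
\item Your prefix $W$ never touches an ancilla. The paper's tail gates, by contrast, have all ancillas as negative controls and must be expanded.
\item The paper's approach yields an explicit, canonical witness.
\end{itemize}

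\textbf{Your ``main obstacle'' is lighter than you expect.} You need neither a fixed ancilla $c$ nor a three-case analysis, for three reasons.
\begin{itemize}
\item Take nonzero $u\neq u'$ differing only in ancilla bit $b$. Some other ancilla is $1$ in both, since otherwise one of them would be $0^m$.
\item If the two strings differ in a working bit, $u$ itself contains a $1$.
\item The subgraph of the cube induced on $N$ is connected. For $m\ge 2$, $Q_m\setminus\{0^m\}$ is connected; for $m=1$, $N=\{1\}\times\{0,1\}^n$.
\end{itemize}
So every edge transposition of $N$ is an MPMCX gate with some positive ancilla control $c$. You can remove its negative controls in either of two ways:
\begin{itemize}
\item by $\tCNOT[c,d]$ conjugation, as you propose;
\item by the identity $\bar d\wedge P = P\oplus(d\wedge P)$, which keeps $c$ in every term.
\end{itemize}
Edge transpositions of a connected graph generate the full symmetric group on its vertices, which closes the argument.
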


\paragraph{Background.}
To prove the above proposition, we recall several concepts and results from~\cite{completeReversibleCircuitRules}, mainly from Section~4. For completeness, we briefly restate the necessary definitions.

\paragraph{Hypercube and Hamiltonian paths.}
The $n$-hypercube graph is an undirected graph whose vertices are $\{0,1\}^n$, with an edge between two vertices if and only if their bit strings differ in exactly one position. All hypercube graphs are Hamiltonian. Let
\[
  \mathbb{H} = (a_0, a_1, \dots, a_{2^n-1})
\]
be a Hamiltonian path of an $n$-hypercube, where $a_i$ and $a_{i+1}$ differ in exactly one bit. Then any $n$-ary reversible function induces a permutation
\[
  \binom{a_0, a_1, \dots, a_{2^n-1}}{a'_0, a'_1, \dots, a'_{2^n-1}},
\]
which maps $a_i$ to $a'_i$. We also denote this permutation by
\[
  (a'_0, a'_1, \dots, a'_{2^n-1})_{\mathbb{H}}.
\]

\paragraph{MPMCX gates.}
Mixed polarity multiple-controlled Toffoli (MPMCX) gates extend MCX gates by allowing negative control bits. An MPMCX gate is specified by a set $P$ of positive controls, a set $N$ of negative controls, and a target bit. For example,
\[
\begin{quantikz}[row sep = 0.2cm]
  \lstick{$q_0$} & \octrl{2} &   \\
  \lstick{$q_1$} & \ctrl{1}  &   \\
  \lstick{$t$}   & \targ{}   &  
\end{quantikz}
\]
has one negative and one positive control. An MCX gate is precisely an MPMCX gate with $N = \emptyset$.

Given a Hamiltonian path $\mathbb{H}$, define
\[
  \Delta_{\mathbb{H}} = \{M_0, M_1, \dots, M_{2^n-2}\}
\]
to be the set of $n$-bit MPMCX gates such that, for each $0 \le k \le 2^n-2$, the gate $M_k$ exchanges $a_k$ and $a_{k+1}$. Concretely, $M_k$ targets the unique bit position where $a_k$ and $a_{k+1}$ differ, uses positive controls for positions where both bits are $1$, and negative controls for positions where both bits are $0$.

Note that $\tilde{C} = M_i M_{i+1} \dots M_{i+k}$ defines the permutation of cycle shift by 1 position
\[
  (a_0, \dots, a_{i-1}, \red{a_{i+k+1}}, \blue{a_i, a_{i+1}, \dots, a_{i+k}}, a_{i+k+2}, \dots, a_{2^n-1})_{\mathbb{H}}.
\]

For any $C$ that defines the permutation $(b_1, \dots, b_{2^n-1})_{\mathbb{H}}$, $\tilde{C} C$ defines the permutation obtained by cycle shifting 1 position from the permutation defined by $C$

\[
  (b_0, \dots, b_{i-1}, \red{b_{i+k+1}}, \blue{b_i, b_{i+1}, \dots, b_{i+k}}, b_{i+k+2}, \dots, b_{2^n-1})_{\mathbb{H}}.
\]

\begin{definition}[Canonical form~\cite{completeReversibleCircuitRules}]
An $n$-bit reversible circuit is in canonical form based on $\mathbb{H}$ if it has the form
\[
  \mathbf{C}_m \mathbf{C}_{m-1} \dots \mathbf{C}_1 \mathbf{C}_0
\]
such that:
\begin{enumerate}
  \item each subcircuit
  \[
    \mathbf{C}_i = M_x M_{x+1} \dots M_{x+k}
  \]
  is a sequence of gates from $\Delta_{\mathbb{H}}$ with $0 \le x \le x+k \le 2^n-2$; and
  \item for $\mathbf{C}_i = M_x \dots M_{x+k}$ and $\mathbf{C}_j = M_y \dots M_{y+\ell}$, if $i < j$ then $x < y$.
\end{enumerate}
\end{definition}

Intuitively, the canonical form above states that any permutation $(b_0,\dots,b_{2^n-1})_{\mathbb{H}}$ can be realized by first applying $C_0$ to cyclically shift $b_0$ into the first position, then applying $C_1$ to cyclically shift $b_1$ into the second position, and so on.
Each cyclic shift satisfies the first condition for constructing a cyclic shift.
Moreover, the sequence of cyclic shifts is uniquely determined by the permutation, as specified by the second condition above, which implies the uniqueness of the canonical form.

\begin{proposition}[\cite{completeReversibleCircuitRules}]
Every $n$-ary reversible function can be implemented by a unique $n$-bit reversible circuit in canonical form based on $\mathbb{H}$.
\end{proposition}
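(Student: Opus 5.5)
The plan is to argue directly in the permutation notation $(b_0,\dots,b_{2^n-1})_{\mathbb{H}}$. The only tool is the cyclic-shift fact recorded just before the definition: if $C$ realizes $(b_0,\dots,b_{2^n-1})_{\mathbb{H}}$, then $\tilde C C$ with $\tilde C = M_x M_{x+1}\cdots M_{x+k}$ realizes the list obtained by moving $b_{x+k+1}$ to position $x$ and shifting $b_x,\dots,b_{x+k}$ one place to the right. All other positions are untouched. In particular, positions $<x$ and $>x+k+1$ are unchanged. Existence and uniqueness then both reduce to bookkeeping on which positions a block can still change.

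\emph{Existence.} Fix the target reversible function, written as $(b_0,\dots,b_{2^n-1})_{\mathbb{H}}$. Start from the empty circuit, which realizes the identity $(a_0,\dots,a_{2^n-1})_{\mathbb{H}}$. Maintain the invariant that after the blocks built so far, positions $0,\dots,j-1$ already carry $b_0,\dots,b_{j-1}$.
\begin{itemize}
\item At stage $j$, the value $b_j$ must sit at some position $p\ge j$, since the first $j$ entries are already the $b$'s and the list is a permutation.
\item If $p=j$, do nothing.
\item If $p>j$, prepend (in the left-composition sense of the paper) the block $M_j M_{j+1}\cdots M_{p-1}$. This brings $b_j$ to position $j$ and leaves positions $<j$ intact, so the invariant is maintained.
\end{itemize}
After stage $2^n-2$ the last position is forced, so the circuit realizes the target. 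The blocks are created with strictly increasing starting indices $j$. Labelling them $\mathbf C_0,\mathbf C_1,\dots$ in order of creation (so $\mathbf C_0$ is applied first) gives condition (2). Each block is a contiguous run of gates from $\Delta_{\mathbb{H}}$ with indices in $[0,2^n-2]$, giving condition (1).

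\emph{Uniqueness.} Take any canonical circuit $\mathbf C_m\cdots\mathbf C_0$ with start indices $x_0<x_1<\dots<x_m$ realizing the target. The key observation is that block $\mathbf C_i$ only touches positions $\ge x_i$, and it changes position $x_i$ non-trivially, because its length is at least one. Since every later block starts strictly to the right, the entry at positions $\le x_i$ is frozen once $\mathbf C_i$ is applied. Induction on $i$ then pins everything down:
\begin{itemize}
\item Before $\mathbf C_0$, the list is the identity.
\item $x_0$ must be the first position where the identity and $(b)$ differ. Earlier positions can never be changed again, and position $x_0$ must actually change.
\item The length $k_0$ is forced by where $b_{x_0}$ currently sits, namely at $x_0+k_0+1$.
\item Repeating the argument on the resulting list determines $x_1,k_1$, and so on.
\end{itemize}
Hence the sequence of blocks, and therefore the gate sequence, coincides with the greedy one from the existence part. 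Two canonical circuits for the same function are thus identical.

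The main obstacle I expect is purely the orientation bookkeeping. One must check that the paper's convention, that $\tilde C C$ acts as the stated cyclic shift of the output list, matches the order in which the $M_k$ inside a block and the blocks $\mathbf C_i$ are composed. One must also check that the shift really moves the element from position $x+k+1$ to $x$ rather than in the opposite direction. I would settle this first on a tiny example with $n=2$, and then state the shift lemma in the exact form needed, before running the two inductions above.
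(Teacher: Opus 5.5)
Your proposal is correct and follows the same route the paper sketches. The paper only cites \cite{completeReversibleCircuitRules} and gives the intuition: successive blocks $\mathbf{C}_i$ cyclically shift the next required value into place, and uniqueness follows because the strictly increasing start indices freeze all positions up to $x_i$ once $\mathbf{C}_i$ is applied. Your version is more careful than that sketch in two ways: it lets positions that are already correct be skipped, so $\mathbf{C}_i$ need not handle position $i$, and it makes explicit that each block changes its start position nontrivially, which is what forces $x_i$ and $k_i$.
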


\begin{proof}[Proof of~\cref{prop:complete-safe-qfreenormal}]
Recall that a \qfree circuit $G$ uses clean ancillas $\ola$ safely if
\[
  \sem{G} \ket{e_i}_{\olq} \ket{0}_{\ola}
  = \ket{e_j}_{\olq} \ket{0}_{\ola}
\]
for every computational basis state $\ket{e_i}_{\olq}$.

\paragraph{($\Rightarrow$).}
If here exists a circuit $G'$ satisfying conditions (1) and (2), then condition~(2) immediately implies that $G'$ uses $\ola$ safely. By condition~(1), $\sem{G'} = \sem{G}$, hence $G$ also uses $\ola$ safely.

\paragraph{($\Leftarrow$).}
Assume that $G$ uses clean ancillas $\ola$ safely. Suppose $G$ acts on $m \ge 1$ working qubits $\olq$ and $n \ge 1$ clean ancillas $\ola$. Let $\ket{e_i}_{\olq}$, $i = 0, \dots, 2^m-1$, and $\ket{f_j}_{\ola}$, $j = 0, \dots, 2^n-1$, denote the computational basis states. We also $e_i$ or $f_j$ to denote the bit string of $i$ or $j$ and write $e_i f_j$ for the concatenation of the corresponding bit strings.

Let
\[
  \mathbb{H}_{\olq} = (a_0, a_1, \dots, a_{2^m-1})
\]
and
\[
  \mathbb{H}_{\ola} = (b_0 = f_0, b_1, \dots, b_{2^n-1})
\]
be two Hamiltonian paths of the $m$- and $n$-hypercube graphs, respectively. Then the interleaved sequence
\begin{align*}
  \mathbb{H}_{\olq,\ola} = (& a_0 f_0, \dots, a_{2^m-1} f_0, \\
  & a_{2^m-1} b_1, \dots, a_0 b_1, \\
  & a_0 b_2, \dots, a_{2^m-1} b_2, \\
  & \vdots \\
  & a_{2^m-1} b_{2^n-1}, \dots, a_0 b_{2^n-1})
\end{align*}
is a Hamiltonian path of the $(m+n)$-hypercube.

Let $\Delta_{\mathbb{H}_{\olq,\ola}} = \{M_0, M_1, \dots, M_{2^{m+n}-2}\}$. Since $G$ uses $\ola$ safely, the permutation induced by $G$ along $\mathbb{H}_{\olq,\ola}$ has the form
\[
  (a'_0 f_0, \dots, a'_{2^m-1} f_0, \dots)_{\mathbb{H}_{\olq,\ola}},
\]
where all remaining basis states have ancilla components different from $f_0$.

By~\cite{completeReversibleCircuitRules}, there exists a unique circuit $G'$ in canonical form based on $\mathbb{H}_{\olq,\ola}$ such that $\sem{G'} = \sem{G}$.

It remains to show that $G'$ satisfies condition~(2). Write $G'$ as
\[
  C_r C_{r-1} \dots C_{s+1} C_s \dots C_1 C_0,
\]
where the \emph{tail part} $C_s \dots C_0$ consists of gates exchanging only states in the first block (with ancilla state $f_0$), and the \emph{head part} $C_r \dots C_{s+1}$ contains no such gates and no gate exchanging the last element of the first block with the first element of the second block.

All gates in the tail part are MPMCX gates targeting working qubits with $\ola \subseteq N$. These gates can be implemented only using MCX gates that target working qubits, and hence satisfy condition~(2).

For the head part, MPMCX gates targeting working qubits are handled similarly. Consider an MPMCX gate in the head part whose target is an ancilla. We claim that at least one other ancilla must appear as a positive control, which enables the gate to be implemented only using MCX gates controlled by this other ancilla. Otherwise, if all other ancillas appear only as negative controls, the gate would exchange the last element of the first line with the first element of the second line, contradicting the condition of the head part.
\end{proof}

Now it suffices to prove \cref{prop:complete-qfreenormal} based on \cref{prop:complete-safe-qfreenormal}.

\begin{proposition}
A circuit $G$ is uncomputable if and only if there exists a \qfree circuit $G'$ over the same set of quantum variables such that:
\begin{enumerate}
\item $\sem{G} = \sem{G'}$.
\item For any $\tMCX{i}$ gate in $G'$ whose target is an ancilla in $\ola$ and uses no ancilla as a control qubit, every gate appearing to its right is also a $\tMCX{i}$ gate targeting (possibly different) ancillas in $\ola$.
\end{enumerate}
\end{proposition}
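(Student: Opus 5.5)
The plan is to derive this from \cref{prop:complete-safe-qfreenormal} via a reduction that turns ``uncomputable'' into ``safe'' after post-composing with a suitable ancilla-only circuit.

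\emph{($\Leftarrow$).} Assume $G'$ satisfies (1) and (2). Let $g$ be the first $\tMCX{i}$ gate in $G'$ that targets an ancilla and has no ancilla control. This splits $G' = P; S$, where $S$ starts at $g$ and, by (2), consists only of ancilla-targeting $\tMCX{i}$ gates.
\begin{itemize}
\item Every ancilla-targeting gate in $P$ has at least one ancilla control. Starting from $\ket{0}_{\ola}$, an induction over the gates of $P$ shows that no such gate ever fires. Hence $\sem{P}\ket{e_i}_{\olq}\ket{0}_{\ola}=\ket{\pi(e_i)}_{\olq}\ket{0}_{\ola}$ for some permutation $\pi$.
\item Every gate of $S$ is self-inverse and targets only ancillas. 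Therefore $S^\dagger$ restores the ancillas while leaving the working qubits fixed.
\end{itemize}
It follows that $\sem{G'; S^\dagger}=\sem{P}$ lies in $\Uncomp(\sem{G'},\ola)=\Uncomp(\sem{G},\ola)$. Alternatively, one can note directly that $\func(\sem{G},\ola)=\pi$ is unitary and invoke \cref{prop:eq-valid-uncomputation}.

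\emph{($\Rightarrow$).} Assume $G$ is uncomputable. For a \qfree $G$, \cref{prop:juequalrkf} gives that $f|_k$ is a bijection, so $\sem{G}\ket{e_i}\ket{0}=\ket{\pi(e_i)}\ket{h(e_i)}$ with $\pi$ a permutation. The step needed here is to find a \qfree circuit $T$, acting on $\olq\cup\ola$ and targeting only ancillas, with $T\ket{\pi(e_i)}\ket{h(e_i)}=\ket{\pi(e_i)}\ket{0}$. The plan is to take $T=\prod_j$ of gates flipping ancilla $a_j$ conditioned on the working register, which implements $\ket{w}\ket{y}\mapsto\ket{w}\ket{y\oplus h(\pi^{-1}(w))}$. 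Each such gate is a classical XOR by a Boolean function of $w$ alone. It can be expanded into MCX gates controlled only by working qubits, for example through the algebraic normal form (Reed--Muller expansion) of $w\mapsto h_j(\pi^{-1}(w))$. These gates target ancillas, use only working qubits as controls, and commute with one another.

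With $T$ in hand, $H:=G;T$ uses $\ola$ safely, since $\sem{H}\ket{e_i}\ket{0}=\ket{\pi(e_i)}\ket{0}$. Apply \cref{prop:complete-safe-qfreenormal} to get $H'$ with $\sem{H'}=\sem{H}$, in which every ancilla-targeting gate has at least one ancilla control. Then set $G':=H';T^{\dagger}$, where $T^\dagger=T$ reversed, again consisting only of ancilla-targeting MCX gates. This gives $\sem{G'}=\sem{G}$, so (1) holds.

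For (2), the gates of $H'$ that target an ancilla all carry an ancilla control, so they are exempt from the premise of (2). Consequently the only gates to which (2) applies are those in the $T^\dagger$ tail, and everything to their right is also ancilla-targeting. So (2) holds.

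The main obstacle is the construction of $T$ and the check that it is expressible purely with MCX gates whose controls are working qubits (no negative controls). The Reed--Muller/ANF expansion is the tool I would use first; the alternative is negative controls sandwiched by $\tX$ on working qubits, but those $\tX$ gates would target working qubits and break (2), so ANF is the safer choice. A secondary point to check is that \cref{prop:complete-safe-qfreenormal} is applied to $H$ over exactly the same qubit set, which it is, so $G'$ acts on the same qubits as $G$.
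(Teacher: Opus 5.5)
Your proposal is correct and takes essentially the same route as the paper. You append an ancilla-targeting tail that resets the ancillas, apply \cref{prop:complete-safe-qfreenormal} to the now-safe circuit, and re-append the inverse tail; the converse comes from ancilla-controlled ancilla writes being dormant on $\ket{0}_{\ola}$. You also spell out two things the paper only asserts: the explicit ANF construction of the tail (strictly, condition~(2) only needs every tail gate to target an ancilla), and the induction showing the prefix keeps the ancillas in $\ket{0}$, which is needed because, unlike in \cref{prop:qfreenormal}, the prefix may contain ancilla-controlled ancilla-targeting gates.
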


\begin{proof}
  ($\Rightarrow$).
  Similar to~\cref{appendix:proof-prop-qfreenormal}, condition~(2) implies that $G'$ is uncomputable.
  By condition~(1), we have $\sem{G'} = \sem{G}$, and hence $G$ is also uncomputable.

  ($\Leftarrow$).
  Assume that $G$ is uncomputable.
  One possible uncomputation procedure is to append, to the right of $G$, a circuit $C$
  consisting solely of MCX gates targeting ancilla qubits, yielding a circuit $G'$
  that uses the clean ancillas $\ola$ safely.

  By~\cref{prop:complete-safe-qfreenormal}, there exists a circuit $G''$ such that
  $\sem{G''} = \sem{G'}$ and $G''$ satisfies condition~(2) in~\cref{prop:complete-safe-qfreenormal}.
  Appending the inverse circuit $C^\dagger$ to the right of $G''$, we obtain
  the circuit $G''; C^\dagger$, which satisfies $\sem{G''; C^\dagger} = \sem{G}$. Moreover, $G''; C^\dagger$ satisfies condition~(2) here.
\end{proof}

\subsection{Proof of \cref{prop:qfreenormal}}\label{appendix:proof-prop-qfreenormal}

\begin{proof}[Proof of \cref{prop:qfreenormal}]
Note that the gate family ${\tMCX{i}}_{i \ge 0}$ (multi-controlled $X$ gates with arbitrary numbers of controls) forms a universal gate set without the need for ancilla bits for \qfree circuits. We prove the two directions separately.

\medskip\noindent\textbf{($\Leftarrow$).} 
Assume a circuit $G'$ satisfying (1) and (2) exists. By (2), we can decompose $G'$ as a composition
\[
    G' \equiv G'_1 \; ; \; G'_2,
\]
where $G'_1$ contains no $\tMCX{i}$ gates whose \emph{targets} lie in $\overline{a}$, and $G'_2$ consists precisely of the $\tMCX{i}$ gates that \emph{do} target bits in $\overline{a}$. (Gates in $G'_1$ may still use ancilla bits in $\overline{a}$ as control lines, but they do not flip those ancilla bits.)

Now consider the action of $G'$ on a basis state $\ket{0}_{\overline{a}}\ket{e_i}_{\overline{q}}$. Since all ancilla bits in $\overline{a}$ are initialized to $0$, any multi-controlled-$X$ gate in $G'_1$ that uses an ancilla bit as a \emph{control} acts trivially (because that control is $0$). Thus we may remove (i.e. cancel) those gates from $G'_1$ when evaluating on inputs with $\overline{a}=\ket{0}$. Denote by $F$ the resulting circuit acting only on the data register $\overline{q}$ (obtained from $G'_1$ after deleting the now-trivial controlled gates). Hence for every basis state $\ket{e_i}_{\overline{q}}$ we have
\[
    G'\ket{0}_{\overline{a}}\ket{e_i}_{\overline{q}}
    \;=\; G'_2 \bigl(\ket{0}_{\overline{a}}\otimes F\ket{e_i}_{\overline{q}}\bigr).
\]
Because $G'_2$ only targets bits in $\overline{a}$, it does not change the content of $\overline{q}$; therefore the final state has the form
\[
    \ket{a'}_{\ola} \otimes F\ket{e_i}_{\overline{q}},
\]
which shows that the action of $G'$ on $\overline{q}$ is given by the (reversible) permutation $F$. In particular, $G'$ is uncomputable and the ancillas $\ola$ can be uncomputed by reversing $G'_2$. By condition (1) the original circuit $G$ has the same input–output behaviour on states with ancillas $\ket{0}_{\ola}$, hence $G$ is uncomputable as well.

\medskip\noindent\textbf{($\Rightarrow$).} 
Now assume $G$ is uncomputable. Work in the computational-basis picture: the reversible boolean function implemented by $G$ maps classical input $(a,x)\in\{0,1\}^m\times\{0,1\}^n$ to some output $(f(a,x),\,h(a,x))$, where $f:\{0,1\}^{m+n}\to\{0,1\}^m$ describes the final ancilla contents and $h:\{0,1\}^{m+n}\to\{0,1\}^n$ describes the final data-register contents. In particular, when the ancilla are initialized to $\ket{0}_{\ola}$ we have
\[
    G\ket{0}_{\overline{a}}\ket{e_i}_{\overline{q}}
    \;=\; \ket{f(0,e_i)}_{\overline{a}}\; \ket{h(0,e_i)}_{\overline{q}}.
\]

Uncomputability of $G$ means precisely that the map $e_i \mapsto h(0,e_i)$ on $\{0,1\}^n$ is a permutation (i.e. a reversible function on the data register) — otherwise one could not restore the ancilla to $\ket{0}$ without disturbing $\overline{q}$. Hence the fixed function $g|_{\ola} := h(\overline{0},\cdot)$ is a bijection on $n$-bit strings and therefore can be synthesized by some \qfree circuit $H$ that acts only on $\overline{q}$.

Next, the function $f(0,x)$ can be synthesized into a \qfree circuit $F$ using only $\tMCX{i}$ gates whose targets lie in $\overline{a}$.

Finally, set
\[
    G' := H \; ; \; F .
\]
By construction, condition (2) is satisfied. On inputs with ancilla $\ket{0}$ and data basis state $\ket{e_i}$ we have
\[
    G'\ket{0}_{\overline{a}}\ket{e_i}_{\overline{q}}
    \;=\; F\bigl(\ket{0}_{\overline{a}}\otimes H\ket{e_i}_{\overline{q}}\bigr)
    \;=\; \ket{f(0, e_i)}_{\overline{a}}\; \ket{h(0, e_i)}_{\overline{q}},
\]
which equals $G\ket{0}_{\overline{a}}\ket{e_i}_{\overline{q}}$. Hence condition (1) is satisfied.
\end{proof}

\subsection{Proof of \cref{prop:quantumnormal}}

\begin{proof}[Proof of \cref{prop:quantumnormal}]\label{appendix:proof-prop-quantumnormal}
Assume there exists a circuit $G'$ satisfying conditions (1) and (2). It suffices to prove $G'$ is uncomputable.
Decompose $G'$ into two sequential components by (2):
\[
    G' = G'_1 \; ; \; G'_2,
\]
where
\begin{itemize}
    \item $G'_1$ contains all gates that \emph{do not target ancilla bits} in $\overline{a}$ (they may still act on $\overline{q}$ and use ancillas as controls),
    \item $G'_2$ contains all gates targeting ancilla bits in $\overline{a}$.
\end{itemize}

Consider the action of $G'_1$ on an arbitrary initial state $\ket{0}_{\overline{a}} \ket{\varphi}_{\overline{q}}$.
Any multi-controlled gate in $G'_1$ that uses ancilla qubits as controls acts trivially, since all ancillas are initialized to $\ket{0}$.  
Hence,
\[
G'_1 \ket{0}_{\overline{a}} \ket{\varphi}_{\overline{q}} = \ket{0}_{\overline{a}} \sum_j \lambda_j \ket{e_j}_{\overline{q}}, \sum_{j} \lambda_j^* \lambda_j = 1.
\]

Next, $G'_2$ targets only ancilla qubits, so it does not affect the data register $\overline{q}$.  
By condition (2), no Hadamard gate acts on ancillas, so $G'_2$ consists solely of classical or diagonal operations on $\overline{a}$.  
Therefore, we can write
\[
G'_2 \left( \ket{0}_{\overline{a}} \sum_j \lambda_j \ket{e_j}_{\overline{q}} \right) 
= \sum_j \lambda_j \, e^{i \theta_j} \ket{a_j}_{\overline{a}} \ket{e_j}_{\overline{q}},
\]
where each $\ket{a_j}_{\overline{a}}$ is a computational basis state of $\overline{a}$ and $\theta_j \in \mathbb{R}$.

By Definition~\ref{def:cleanuncomp}:
\[
\sum_j \lambda_j \, e^{i \theta_j} \ket{a_j}_{\overline{a}} \ket{e_j}_{\overline{q}} \xmapsto{\cG'} \ket{0}_{\overline{a}} \sum_j \lambda_j \, e^{i \theta_j} \ket{e_j}_{\overline{q}}.
\]

Since 
\[
\sum_i | \lambda_i e^{i \theta_i} |^2 = \sum_i |\lambda_i|^2 = 1,
\]
the resulting state is always valid, implying the existence of $\cG'$.
\end{proof}

\subsection{Proof of \cref{rule:R-1,rule:R-2,rule:R-3}}

\begin{proof}[Proof of Rewrite Rules]
\cref{rule:R-1} is obvious and \cref{rule:R-3} is inverse of \cref{rule:R-2}. It suffices to prove \cref{rule:R-2}.

Partition the control sets as
\[
\olq = \olq_1 \cup \olq_2, \quad 
\olp = \olq_2 \cup \{a\} \cup \olq_3,
\]
where $\olq_2 = \olq \cap \olp$, $\olq_1 = \olq \setminus \olp$, $\olq_3 = \olp \setminus (\olq \cup \{a\})$.  

Define
\[
x := \bigwedge_{i \in \olq_1} x_i, \quad 
y := \bigwedge_{i \in \olq_2} y_i, \quad 
u := \bigwedge_{i \in \olq_3} u_i
\]
for $x, y, z$ appears together with $\land$ for simplicity.

Let the initial computational basis state be
\[
\ket{\overline{x}}_{\olq_1} \ket{\overline{y}}_{\olq_2} \ket{z}_a \ket{\overline{u}}_{\olq_3} \ket{v}_t
\]

1. Apply $\tMCX{m}[\olq, a]$: flips $a$ if all qubits in $\olq_1 \cup \olq_2$ are 1:
\[
a' = a \oplus (x \land y)
\]

2. Apply $\tMCX{n}[\olp, t]$: flips $t$ if all controls in $\olq_2 \cup \{a\} \cup \olq_3$ are 1:
\[
t_\text{LHS} = t \oplus (y \land a' \land u) = t \oplus (y \land (a \oplus x \land y) \land u)
\]

1. Apply $\tMCX{n}[\olp, t]$ first: flips $t$ if all controls in $\olq_2 \cup \{a\} \cup \olq_3$ are 1:
\[
t_1 = t \oplus (y \land a \land u)
\]

2. Apply $\mathtt{C^h NOT}[(\olq \cup \olp)/a, t]$, controlled on $\olq_1 \cup \olq_2 \cup \olq_3$:
\[
t_{RHS} = t_2 = t_1 \oplus (x \land y \land u) = t \oplus (y \land a \land u) \oplus (x \land y \land u)
\]

3. Apply $\tMCX{m}[\olq, a]$: flips $a$ if $\olq_1 \cup \olq_2$ are all 1, $a_\text{final} = a \oplus x \land y$.

Then the final $t$ value is
\[
t_\text{LHS} = t \oplus (y \land a \land u) \oplus (x \land y \land u) = t_\text{RHS}.
\]

Hence LHS and RHS produce identical outputs for all computational basis states, proving \cref{rule:R-2}.

\end{proof}

\subsection{Proof of \cref{thm:normalize-soundness}}

The termination argument for \cref{alg:rewrite-normalization} is based on three observations.
First, the algorithm always processes the first violating pair, so the global execution can be decomposed into a sequence of local subproblems.
Second, newly introduced gate in \cref{rule:R-2} targets the same working qubit, and the newly introduced gate in \cref{rule:R-3} targets the same ancilla.
This is the key in the inductive proof of \cref{lem:termination-one-pair}.
Third, the process is symmetric based on its rules: informally, processing an ancilla-targeting gate followed by a working-targeting gate is dual to processing the reversed pattern after exchanging the roles of ancilla and working qubits.

\begin{lemma}\label{lem:termination-one-pair}
Given a quantum circuit $G$ acting on ancillas $\ola$ and working qubits $\olq$,
if $G$ contains exactly one violating pair and that pair is located at the end of $G$,
then \cref{alg:rewrite-normalization} always terminates.
\end{lemma}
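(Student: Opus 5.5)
The plan is to make the shape of $G$ explicit and then argue by induction on the number of ancilla-targeting gates the offending working gate still has to cross.

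\textbf{Shape of $G$.} Since $G$ has exactly one violating pair and it sits at the end, $G$ can be written as $W; A_1;\dots;A_k; g$. Here $W$ consists only of working-targeting gates, each $A_j$ targets an ancilla, and $g$ targets a working qubit $t$. Indeed, a violating pair inside the prefix $W;A_1;\dots;A_k$ would contradict uniqueness. Moreover, $A_k;g$ is the unique violating pair.

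\textbf{Local behaviour of one rewrite.} Because \cref{alg:rewrite-normalization} always rewrites the leftmost violating pair, the run amounts to bubbling working-targeting gates leftward through the ancilla block. When $g$ meets $A_k$, exactly one of the following happens.
\begin{itemize}
\item A rule fails. Then the algorithm stops with failure, which is termination.
\item Rule \cref{rule:R-1} applies. This yields $g;A_k$, so one working gate must cross $k-1$ ancilla gates.
\item Rule \cref{rule:R-3} applies. This yields $g;h';A_k$, where the inserted $h'$ targets the same ancilla as $A_k$. No violating pair appears to the right of $g$, and again one working gate must cross $k-1$ ancilla gates.
\item Rule \cref{rule:R-2} applies. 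This yields $g;h;A_k$, where the inserted $h$ targets the same working qubit $t$. Now two adjacent working gates must cross $A_1,\dots,A_{k-1}$.
\end{itemize}
The key structural fact, noted before the lemma, is that an inserted gate always targets the same kind of qubit as the gate that produced it. Consequently, a rewrite never creates a violating pair to the right of the current frontier.

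\textbf{Induction.} I would prove a stronger claim by induction on $k$. The claim: for any finite block $g_1;\dots;g_r$ of working-targeting gates placed immediately after $A_1;\dots;A_k$, the algorithm terminates. It ends either in failure or with all working gates to the left of an ancilla-only suffix.

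For the base case $k=0$ there is no violating pair. For the step, I process the leftmost working gate $g_1$ first, since it is the one involved in the first violating pair. After it crosses $A_k$, the IH for $k-1$ moves it, together with any \cref{rule:R-2}-generated companion, fully left. The remaining $g_2,\dots$ then face the ancilla block again.

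\textbf{Main obstacle.} That ancilla block may have grown. The \cref{rule:R-3} insertions performed while $g_1$ crossed it sit exactly where $g_2$ must later pass. So a naive induction on $k$ does not close, and I expect this to be the main difficulty.

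My first attempt at bounding this growth is a lexicographic measure: the multiset, over working gates in the middle region, of the number of ancilla gates to their left, compared in the multiset order. I would try to show that every \cref{rule:R-1}/\cref{rule:R-3} step strictly decreases one entry. For \cref{rule:R-2} steps, I would try to show that the new entries are strictly smaller than the one removed.

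If this fails because \cref{rule:R-3} adds ancilla gates, I would fall back on the duality described before the lemma. Exchanging the roles of ancilla and working qubits turns \cref{rule:R-3} into \cref{rule:R-2}. The aim would then be a simultaneous induction on the pair (number of ancilla gates, number of working gates), using the fact that each inserted gate has one fewer qubit in common with its partner: the control $a$ of $h$ in \cref{rule:R-2}, and $t$ for $h'$ in \cref{rule:R-3}, have been removed. This should bound the generation depth, since the inserted gate's interaction pattern with the remaining gates is simpler.
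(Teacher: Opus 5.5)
Your proposal correctly isolates the obstacle: \cref{rule:R-3} insertions enlarge the ancilla block that working gates still to the right (in particular \cref{rule:R-2} companions) must later cross. It does not resolve that obstacle, and termination is the entire content of the lemma.

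\emph{The multiset measure fails exactly there.} An \cref{rule:R-3} step on $A;g$ places $h'$ to the left of every working gate still to the right of $A$, which raises their entries. Concretely, let $A_2$ target $a_2$ without $t$ among its controls, let $g$ have $a_2$ but not $a_1$ as a control, and let $A_1$ target $a_1$ with $t$ as a control. Then $A_1;A_2;g$ becomes $A_1;g;h;A_2$ by \cref{rule:R-2}, and then $g;h';A_1;h;A_2$ by \cref{rule:R-3}. The multiset goes from $\{1,1\}$ to $\{0,2\}$, which is not a decrease.

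\emph{The fallback does not help either.} It rests on ``the inserted gate has one fewer qubit in common with its partner'', but that does not make the gate simpler. The \cref{rule:R-2} companion has controls $(\olq\cup\olp)\setminus\{a\}$, so it inherits every ancilla control of $A$. It can then fire \cref{rule:R-2} again against gates targeting those ancillas, producing companions that inherit still more controls. Neither the number of ancilla gates nor the number of working gates decreases, so your simultaneous induction has no decreasing parameter.

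Your strengthened hypothesis, with an arbitrary block of working gates that may have different targets, also throws away the one invariant that makes the problem tractable. You also omit \cref{rule:R-4,rule:R-5,rule:R-6}, but these behave like \cref{rule:R-1} and are harmless.

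\emph{The missing idea is to split on $|\ola|$, which is how the paper argues.}

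With a single ancilla $a$:
\begin{itemize}
\item An \cref{rule:R-3} insertion targets $a$, so it joins $A_k$ in the suffix that is never touched again. The block in front of the current working gate therefore never grows.
\item In the \cref{rule:R-2} case, the companion $h$ has $a$ removed from its controls, so \cref{rule:R-2} can never fire on it.
\item \cref{rule:R-3} cannot fire on $h$ either. The gate $g$, which does have $a$ as a control, already crossed every ancilla gate that $h$ still has to cross without hitting the mutual-control pattern. Hence none of those gates has $t$ as a control.
\item So $h$ only commutes through a finite block, and induction on $k$ closes.
\end{itemize}

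With several ancillas:
\begin{itemize}
\item Every working-targeting gate that ever appears targets the same qubit $t$, because only \cref{rule:R-2} creates working gates and it keeps the target.
\item Exchanging the roles of $t$ and $\ola$ and reversing the circuit therefore gives an instance with a single ``ancilla'' $t$, in which \cref{rule:R-2} and \cref{rule:R-3} swap roles.
\item The paper then feeds in the dual working gates $A_k, A_{k-1},\dots,A_1$ one at a time. Each stage is a single-ancilla instance with one violating pair at the end, so the single-ancilla case applies.
\end{itemize}

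You mention duality, but without the single-target observation and the single-ancilla argument it gives no bound.
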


\begin{proof}
If some $\tH$ gate acts on an ancilla, then the algorithm immediately reports failure,
so termination is trivial.

Assume therefore that no $\tH$ gate acts on any ancilla initially.
Since $G$ contains exactly one violating pair and that pair is at the end of the circuit,
we may write
\[
G = \overline{W}; A_0; A_1; \dots; A_m; W,
\]
where:
\begin{itemize}
    \item $\overline{W}$ consists only of working-targeting gates,
    \item each $A_i$ is an ancilla-targeting gate, and
    \item $A_m;W$ is the unique violating pair.
\end{itemize}
By definition of \cref{alg:rewrite-normalization}, the prefix $\overline{W}$
is never touched during the procedure. Hence it suffices to consider the suffix
\[
A_0; A_1; \dots; A_m; W.
\]

We proceed by cases on $|\ola|$.

\paragraph{Case 1: $|\ola|=0$.}
This is trivial.

\paragraph{Case 2: $|\ola|=1$.}
Let $a$ denote the unique ancilla.
We prove termination by induction on $m$.

\smallskip
\noindent
\emph{Base case: $m=0$.}
This is trivial.

\smallskip
\noindent
\emph{Base case: $m=1$.}
If some rewriting rule is applicable to $A_0;W$, since the right-hand side of any rule
contains no violating pair, so the procedure terminates immediately.
If no rule is applicable, then the procedure reports failure immediately.
Hence the process terminates.

\smallskip
\noindent
\emph{Inductive step.}
Assume the claim holds for $m$, and consider
\[
A_0;\dots;A_{m-1};A_m;W.
\]

\begin{itemize}
    \item If one of \cref{rule:R-1,rule:R-4,rule:R-5,rule:R-6} applies, then
    \[
    A_0;\dots;A_{m-1};A_m;W
    \;\rightsquigarrow\;
    A_0;\dots;A_{m-1};W;A_m.
    \]
    The gate $A_m$ becomes a suffix that will never again be touched by the procedure.
    Thus it remains to process
    \[
    A_0;\dots;A_{m-1};W,
    \]
    which terminates by the induction hypothesis.

    \item If \cref{rule:R-3} applies, then
    \[
    A_0;\dots;A_{m-1};A_m;W
    \;\rightsquigarrow\;
    A_0;\dots;A_{m-1};W;A';A_m,
    \]
    where both $A'$ and $A_m$ target the unique ancilla $a$.
    Hence $A';A_m$ forms a suffix that will never again be touched,
    and it remains to process
    \[
    A_0;\dots;A_{m-1};W,
    \]
    which terminates by the induction hypothesis.

    \item If \cref{rule:R-2} applies, then
    \[
    A_0;\dots;A_{m-1};A_m;W
    \;\rightsquigarrow\;
    A_0;\dots;A_{m-1};W;W';A_m.
    \]
    Again, $A_m$ becomes part of a suffix that will never again be touched.
    So we first process
    \[
    A_0;\dots;A_{m-1};W;W'.
    \]

    Let $t$ be the target of $W$. Then $W'$ is also a working-targeting gate on $t$.
    The gate $W'$ will not be touched until the prefix
    \[
    A_0;\dots;A_{m-1};W
    \]
    has already been transformed into normal form, say $\overline{W};\overline{A'}$,
    which terminates by the induction hypothesis.

    If that process terminates without failure, then $W'$ is processed next.
    By \cref{rule:R-2}, the newly introduced gate $W'$ does not use the ancilla $a$.
    Hence \cref{rule:R-2} will never be applied to $W'$.
    Moreover, \cref{rule:R-3} cannot be applied to $W'$ either;
    otherwise, the mutual-control pattern would already have arisen while processing
    \[
    A_0;\dots;A_{m-1};W.
    \]
    Therefore, only \cref{rule:R-1,rule:R-4,rule:R-5,rule:R-6}
    can apply during processing $\overline{A'};W'$.
    Since $\overline{A'}$ is finite, this process terminates.
\end{itemize}

Hence the claim holds when $|\ola|=1$.

\paragraph{Case 3: $|\ola|>1$.}
In the circuit
\[
A_0;\dots;A_m;W,
\]
there is only one working-targeting gate, namely $W$, so there is only one working qubit being targeted.

Now observe that the process on
\[
A_0;\dots;A_m;W,
\]
where $\ola$ are ancillas and $t$ is the unique working qubit, is dual to the process on
\[
W;A_m;\dots;A_0,
\]
where $t$ is regarded as the unique ancilla and the qubits in $\ola$ are regarded as working qubits.
This duality follows from the definition of \cref{alg:rewrite-normalization}
and symmetry of all rules.

More specifically, let the dual configuration of
\[
A_0;\dots;A_{m-1};A_m;W
\]
be
\[
W;A_m;A_{m-1};\dots;A_0,
\]
where the roles of the unique working qubit $t$ and the ancilla qubits in $\ola$
are exchanged.

Initially, the first violating pair in the first process is $A_m;W$,
while the first violating pair in the dual process is $W;A_m$.
Moreover, this correspondence is preserved by rewriting:
if two intermediate configurations are dual in this sense, then after one rewrite step
they remain dual.

Indeed, any application of
\cref{rule:R-1,rule:R-4,rule:R-5,rule:R-6}
in the first process corresponds to an application of the same rule in the dual process.
On the other hand, any application of \cref{rule:R-2} in the first process
corresponds to an application of \cref{rule:R-3} in the dual process, and vice versa.
For example,
\[
A_0;\dots;A_{m-1};A_m;W
\;\rightsquigarrow\;
A_0;\dots;A_{m-1};W;W';A_m
\]
by \cref{rule:R-2}
corresponds dually to
\[
W;A_m;A_{m-1};\dots;A_0
\;\rightsquigarrow\;
A_m;W';W;A_{m-1};\dots;A_0
\]
by \cref{rule:R-3}.
Therefore the two executions simulate each other step by step, and termination of one
is equivalent to termination of the other.

The latter process certainly terminates: when there is only one ancilla $t$,
the pair $W;A_m$ terminates first; suppose it rewrites to $\overline{W_m}$.
Then $\overline{W_m};A_{m-1}$ also terminates since $\overline{W_m}$ is finite.
Proceeding in this way, we eventually reach $\overline{W_1};A_0$, which also terminates.
Therefore, the original process on
\[
A_0;\dots;A_m;W
\]
terminates as well.
\end{proof}

\begin{proof}[Proof of \cref{thm:normalize-soundness}]
  If some $\tH$ gate acts on an ancilla in the input circuit, then the algorithm immediately terminates and reports failure.

  Assume now that no $\tH$ gate acts on any ancilla initially.
  Then no rewriting rule introduces an $\tH$ gate acting on an ancilla,
  so this property remains invariant throughout the process.

  Since the success condition, absence of violating pairs, is exactly condition~(2) of \cref{prop:quantumnormal}, we only need to prove termination here.

  By definition of the procedure, the termination of the rewriting procedure then follows from \cref{lem:termination-one-pair}.
\end{proof}

\subsection{Proof of \cref{thm:rwun-soundness}}

We can thus establish the correctness of our synthesis method from the normal form for both clean and dirty ancillas in quantum circuits.

\begin{proposition}\label{prop:normal-form-synthesis}
Given a circuit $G'$ satisfying 
the conditions in Proposition~\ref{prop:quantumnormal}:

\textit{Clean uncomputation synthesis}: A circuit $\cG$ satisfying $\sem{\cG} \in \Uncomp(\sem{G}, \ola)$ can be directly synthesized by reversing all gates that target ancillas, in reverse order:
\begin{itemize}
\item For $\tX$ and $\tMCX{i}$ gates targeting an ancilla, append the same gate;
\item For $\tZ$, $\tS$, and $\tT$ gates targeting an ancilla, append nothing.
\end{itemize}

\textit{Dirty uncomputation synthesis}: If Algorithm~\ref{alg:rewrite-phasereorderring} succeeds on $G'$, then removing all gates acting on ancillas yields a circuit $\cG^*$ such that $\sem{\cG^*} \in \Uncomp^*(\sem{G}, \ola)$.
\end{proposition}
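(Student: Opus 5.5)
We prove the two claims separately, working with $G'$ decomposed, as in the proof of \cref{prop:quantumnormal}, as $G' = G'_1; G'_2$. Here $G'_1$ contains no gate targeting an ancilla, and $G'_2$ consists only of gates from $\{\tX,\tMCX{i},\tZ,\tS,\tT\}$ targeting ancillas. Because $\sem{G}$ and $\sem{G'}$ agree on every $\ket{0}_{\ola}\ket{e_i}_{\olq}$, both sides are linear, and $\Uncomp$ depends only on the action on $\ket{0}_{\ola}\otimes\mathcal H_{\olq}$, \cref{prop:uncompeqwithnormal} lets us replace $G$ by $G'$ throughout. For the dirty case, where the action on arbitrary $\ket{x}_{\ola}$ matters, we take the output to be built from $G'$ itself, so $\sem{G'}=\sem{G}$ as guaranteed by \cref{thm:normalize-soundness}.

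\emph{Clean case.} On input $\ket{0}_{\ola}\ket{\varphi}$, every gate of $G'_1$ that has an ancilla control acts trivially. So $G'_1$ maps the input to $\ket{0}_{\ola}F\ket{\varphi}$, where $F$ is a unitary on $\olq$. Expanding $F\ket{\varphi}=\sum_j\lambda_j\ket{e_j}$, each basis branch of $G'_2$ produces $e^{i\theta_j}\ket{a_j}_{\ola}\ket{e_j}$. This holds because $\tX$ and $\tMCX{i}$ permute basis states, the phase gates only multiply by a phase, and no gate targets $\olq$.

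We then append, in reverse order, the inverse of each $\tX$ or $\tMCX{i}$ gate (which is the gate itself), and drop the phase gates. Along each branch, this walks the ancilla basis state back through exactly the sequence of flips performed by $G'_2$ in that branch. A short induction over the reversed list of $\tX$/$\tMCX{i}$ gates shows this: the phase gates never change basis labels, so the controls evaluate identically in both directions. The result is $\ket{0}_{\ola}\sum_j\lambda_je^{i\theta_j}\ket{e_j}$, which is exactly the functionality $\sum_i\ket{\varphi_i}$ of \cref{def:cleanuncomp}. The phases stay attached because they were already in the $\olq$-branches before the ancilla returned to $0$. Hence the synthesized circuit lies in $\Uncomp(\sem{G},\ola)$.

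\emph{Dirty case.} Once \cref{alg:rewrite-phasereorderring} succeeds, we have $G'\equiv G'_1;P;X$. Here $P$ consists of phase-type gates, possibly controlled ($\tCZ,\tCS,\tCT$), acting on ancillas and working qubits, and $X$ consists of $\tX$/$\tMCX{i}$ gates targeting ancillas. This uses soundness of the listed reordering rules, each checked on basis states as in the proof of \cref{rule:R-2}. Removing every gate that touches an ancilla leaves $F'$, the subcircuit of $G'_1$ whose gates avoid ancillas entirely, acting as $I_{\ola}\otimes F'$.

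It remains to show $I_{\ola}\otimes F'$ is the unique $\cG^*$ of \cref{def:dirtyuncomp}, which amounts to checking that $F'$ implements $\func(\sem{G},\ola)$. With ancillas set to $0$, the removed gates of $G'_1$ are exactly those controlled by ancillas, so they act trivially and $F=F'$. The gates of $X$ only relabel the ancilla, so they do not affect the sum $\sum_i\ket{\varphi_i}$. The gates of $P$ are diagonal and touch ancillas; on the all-zero ancilla they must contribute no phase, which needs a separate argument.

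\emph{Main obstacle.} The hard step is verifying that the phase gates in $P$, after the forwarding rewrites, act trivially on the $\ket{0}_{\ola}$ branch. Otherwise $\func$ would carry an extra phase that removal silently deletes. I would first try to show that the rewrite rules preserve an invariant: every phase gate in $P$ has an ancilla as its target or as one of its controls. Then on $\ket{0}_{\ola}$ each such gate contributes phase $1$, since $\tZ,\tS,\tT$ and their controlled versions are trivial on $\ket{0}$. If the invariant fails for some rule, such as one emitting $V[p]$ on a working qubit, those gates touch no ancilla, so they are not removed and appear in $F'$ anyway; the argument then only requires that removal deletes nothing that is nontrivial on the zero branch.
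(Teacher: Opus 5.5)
Your clean half is correct and is the paper's argument made explicit. The paper writes the tail $G'_2$ as $PD=D'P$ and inverts the permutation while the phases stay on the working branches; your branch-by-branch relabeling is the same idea. The dirty half also follows the paper's skeleton: reorder the tail into phases-then-flips, then delete every ancilla gate. However, two of its sentences are false, and your ``main obstacle'' rests on them.

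\begin{itemize}
\item Deleting the ancilla-touching gates does not leave only the ancilla-free part of $G'_1$. The reordering rules plant phase gates on working qubits inside $P$, e.g.\ $\tCNOT[p,t];V[t]\equiv V[p];V[t];\mathtt{CV}^\dagger[p,t];\mathtt{CV}^\dagger[p,t];\tCNOT[p,t]$ with $t$ the ancilla and $p$ a working qubit.
\item Such gates are in general not trivial on the zero branch. For $G'=\tCNOT[p,a];\tZ[a]$ one has $\func(\sem{G'},a)=\sem{\tZ[p]}$, and after reordering this phase is carried entirely by the emitted $\tZ[p]$.
\end{itemize}
So ``on the all-zero ancilla [the gates of $P$] must contribute no phase'' is false. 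The invariant you hoped to prove fails for exactly the rule you suspected.

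No invariant about the rules is needed. On input $\ket{0}_{\ola}\ket{\varphi}$, $G'_1$ outputs $\ket{0}_{\ola}\otimes F\ket{\varphi}$, and $P$ is diagonal. Hence, in every computational branch, each ancilla is still $0$ at every gate of $P$. Any gate of $\{\tZ,\tS,\tT,\tCZ,\tCS,\tCT\}$ (or an inverse) that has an ancilla among its qubits is therefore the identity at its position. It follows that:
\begin{itemize}
\item deleting the ancilla-touching gates of $G'_1;P$ is harmless;
\item the ancilla-free ones (such as $\tZ[p]$ above) survive;
\item the flip block only relabels ancillas.
\end{itemize}
So $\cG^*=I_{\ola}\otimes W$, with $W$ the ancilla-free gates of $G'_1;P$, implements $\func(\sem{G},\ola)$. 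This is the paper's ``all ancillas remain in $\ket{0}$'' step, and your closing sentence already points at it. Make it the proof and redefine $F'$ as $W$.

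Two smaller remarks.
\begin{itemize}
\item $\Uncomp^*(\sem{G},\ola)$, like $\Uncomp(\sem{G},\ola)$, depends only on $\sem{G}$ restricted to $\ket{0}_{\ola}\otimes\mathcal{H}_{\olq}$ (\cref{prop:uncompeqwithnormal}). So you neither need nor are given $\sem{G'}=\sem{G}$ via \cref{thm:normalize-soundness}.
\item If you do check the reordering rules on basis states as you plan, $\tX[p];V[p]\equiv V^\dagger[p];\tX[p]$ holds only up to the global phase $e^{i\phi}$; for instance, $\sem{\tX[a];\tZ[a]}=-\sem{\tZ[a];\tX[a]}$. Concretely, for $G'=\tX[a];\tZ[a]$ we get $\func(\sem{G'},a)=-I$, while the synthesized $\cG^*$ is empty. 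Your argument, like the paper's, therefore delivers the dirty claim only up to a global phase.
\end{itemize}
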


\begin{proof}
1. As shown in Proof~\ref{appendix:proof-prop-quantumnormal}, the circuit $G'$ can be decomposed as 
$G' \equiv G'_1 ; G'_2$, 
where $G'_2$ consists solely of classical or diagonal operations on the ancillas.  
The overall matrix representation of $G'_2$ can therefore be expressed as a product $PD$, 
where $P$ is a permutation matrix that does not affect $\olq$, 
and $D$ is a diagonal phase matrix.  
Since permutation matrices commute with diagonal phase matrices up to a reordering of phases, 
we have $PD = D'P$ with $D'$ still diagonal.  
Hence, the classical permutation component of $G'_2$ is preserved.

Reversing the gates in $G'_2$ in reverse order correctly inverts the permutation while keeping the diagonal phase unchanged, which produces a valid uncomputation circuit $\cG$.

2. The procedure in Algorithm~\ref{alg:rewrite-phasereorderring} effectively rewrites $G'_2$ into the form $D'; P$ described above.  
By cancelling $P$—that is, removing all $\tMCX{i}$ gates targeting ancillas—we eliminate all modifications on ancillas while preserving the states of all other working qubits.  
Further removing $\tMCX{i}$ gates that use ancillas as controls, or any phase-type gates involving ancillas, does not alter the circuit’s effect since all ancillas remain in $\ket{0}$.  
Each step preserves the state of the non-ancilla qubits, and the resulting circuit contains no gate acting on ancillas.  
Therefore, the resulting circuit realizes a valid dirty uncomputation.
\end{proof}

\begin{proof}[Proof of \cref{thm:rwun-soundness}]
By~\cref{thm:normalize-soundness}, if the output circuit $\RwUnx(G)$ contains no $\bborrow$ statements, this indicates that the normalization of the original input circuit in \NWsynth has succeeded.
Consequently, we have $\sem{G} \ne \bot$.

Then our goal is to prove $\sem{\Rwsynth(G)} \ket{x}_{\old} \ket{0}_{\ola} \ket{\varphi}_{\olq} = \ket{x}_{\old} \ket{0}_{\ola} \sem{G}\ket{\varphi}_{\olq}$ by structural induction on the program $G$.

\textbf{Base cases.}
For the trivial programs $\bskip$ and a single gate $U[\olq]$ the statement is immediate.

\textbf{Inductive step for composition.}
If $G = C_1; C_2$, then by the non-overlapping structure of \rewriteborrow statements, the sets of ancillas used in $C_1$ and $C_2$ are disjoint.  
Let $\overline{d}$ and $\ola$ denote the $m_1$-qubit dirty ancilla register and $n_1$-qubit clean ancilla register used in $C_1$, and let $\overline{e}$ and $\overline{b}$ denote the $m_2$-qubit dirty and $n_2$-qubit clean ancilla registers used in $C_2$.  
Let $\olq$ be the set of working qubits that are shared by both subcircuits. 
By inductive hypothesis, let $\Rwsynth(C_1; C_2) = \cC_1; \cC_2$, where $\cC_1$ denotes $\Rwsynth(C_1)$ and $\cC_2$ denotes $\Rwsynth(C_2)$. Then we have: 
\begin{align*}
\sem{\cC_1; \cC_2} \ket{x}_{\overline{d}} \ket{0}_{\ola} \ket{y}_{\overline{e}} \ket{0}_{\overline{b}} \ket{\varphi}_{\olq} 
&= \sem{\cC_2} \sem{\cC_1} \ket{x}_{\overline{d}} \ket{0}_{\overline{a}} \ket{y}_{\overline{e}} \ket{0}_{\overline{b}} \ket{\varphi}_{\olq}    \\
&=  \sem{\cC_2} \ket{x}_{\overline{d}} \ket{0}_{\overline{a}} \ket{y}_{\overline{e}} \ket{0}_{\overline{b}} \sem{C_1} \ket{\varphi}_{\olq} \\
&=  \ket{x}_{\overline{d}} \ket{0}_{\overline{a}} \ket{y}_{\overline{e}} \ket{0}_{\overline{b}} \sem{C_2} \sem{C_1} \ket{\varphi}_{\olq} \\
&=  \ket{x}_{\overline{d}} \ket{0}_{\overline{a}} \ket{y}_{\overline{e}} \ket{0}_{\overline{b}} \sem{C_1; C_2} \ket{\varphi}_{\olq}
\end{align*}

\textbf{Inductive step for quantum conditional.}
Simliar with above.

\textbf{Case of a borrow clean.}
By induction hypothesis, let $\Rwsynth(\borrowcc{a}{C}) = \NWsynth(\cC, a)$, where $\cC$ denotes $\Rwsynth(C)$. Then we have:
\begin{align*}
\sem{\cC} \ket{x}_{\overline{d}} \ket{0}_{\ola} \ket{0}_b \ket{\varphi}_{\olq} 
&= \ket{x}_{\overline{d}} \ket{0}_{\overline{a}} \sem{C} \ket{0}_b \ket{\varphi}_{\olq}
\end{align*}
Then by \cref{prop:normal-form-synthesis},
\begin{align*}
\NWsynth(\cC, a) \ket{x}_{\overline{d}} \ket{0}_{\ola} \ket{0}_b \ket{\varphi}_{\olq} 
&= \cG \ket{x}_{\overline{d}} \ket{0}_{\overline{a}}  \ket{0}_b \ket{\varphi}_{\olq} \\
&= \ket{x}_{\overline{d}} \ket{0}_{\overline{a}}  \ket{0}_b \sem{\borrowcc{a}{C}} \ket{\varphi}_{\olq},
\end{align*}
where $\cG$ is an arbitrary circuit such that $\sem{\cG} \in \Uncomp(\cC, a)$.

\textbf{Case of a borrow dirty.}
Similar with above.
\end{proof}

\subsection{Proof of~\cref{prop:sufficient_rwun_success}}\label{appendix:proof-prop-sufficient_rwun_success}

\begin{lemma}\label{lem:rule-donot-introduce-cycle}
  Given a \qfree circuit $C$ and its corresponding acyclic circuit graph $G$,
  suppose that all target edges of working qubits in $G$ are modified to be bidirected.
  If $G$ contains no aw-cycle, then none of the three rewriting rules introduces any aw-cycle.
\end{lemma}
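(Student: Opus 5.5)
We argue rule by rule, comparing the graph $G^\leftrightarrow$ before and after each rewrite. The key tool is a \emph{reachability-preservation} argument. A rewrite replaces a local fragment, and it can create an aw-cycle only if the new fragment adds a directed path, between nodes of the original graph, that did not exist before. So for each rule we will show the following: every new control or target edge introduced by the right-hand side, composed with the rest of $G^\leftrightarrow$, can be simulated by a path already present in the old graph. Then any cycle in the new graph maps to a closed walk in the old graph, and that closed walk still visits both an ancilla node and a working node. A closed walk through both kinds of nodes contains an aw-cycle, which contradicts the hypothesis.

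\cref{rule:R-1} only swaps two gates with disjoint target/control interaction ($t\notin\olq$, $a\notin\olp$). In the circuit graph of~\cite{unqomp}, nodes are qubit versions and edges come from gates. Swapping commuting gates that share only control qubits changes the order of nodes on shared control wires at most. It adds no control/target path from $a$'s versions to $t$'s versions or vice versa, so no aw-cycle appears.

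For \cref{rule:R-2} ($a\in\olp$, $t\notin\olq$), the original fragment already contains the path $\olq \ctlarrow a_{\mathrm{new}} \ctlarrow t$, which passes through $a$. The inserted gate $\mathtt{C^h NOT}[(\olq\cup\olp)\setminus\{a\},t]$ only adds control edges from $\olq\cup\olp\setminus\{a\}$ into $t$. Each edge $q\ctlarrow t$ with $q\in\olq$ is simulated by $q\ctlarrow a\ctlarrow t$ in the old graph. Each edge from $\olp$ already existed. The reordered $\tMCX{m}[\olq,a]$ now comes after $t$ is updated, but it neither reads nor writes $t$, so no edge from $t$ to $a$ is created. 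The bidirected target edges on $t$ join consecutive versions of $t$ that already exist, and the new intermediate versions sit on the same wire. Hence any new cycle projects to an old closed walk through $a$ and $t$.

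\cref{rule:R-3} is dual ($t\in\olq$, $a\notin\olp$): the inserted gate targets $a$ with controls $\olq\cup\olp\setminus\{t\}$, and its edges from $\olp$ are simulated through $\olp\ctlarrow t\ctlarrow a$. This dual path exists only because $t$'s target edges are bidirected, so we can traverse from the new version of $t$ back to the old one. This is exactly why the bidirection is built into the hypothesis.

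The main obstacle is making the ``simulation'' precise. Version nodes shift under reordering, so we must fix an explicit map from nodes of the new graph to nodes of the old graph that sends each edge to a path and preserves the ancilla/working labeling. The delicate case is \cref{rule:R-3}. There the new version of $a$ created by the inserted gate must be mapped onto the version produced by $\tMCX{m}[\olq,a]$. We also need to check that the simulating path uses only control and target edges, not anti-dependency edges. I would first try defining the map wire by wire, sending each new version to the nearest old version on the same wire, and then verify the edge cases for the two rules by finite case analysis.
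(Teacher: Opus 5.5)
Your proposal is correct. It rests on the same fact as the paper's proof: every edge a rewrite creates can be replaced by a control/target path already present in the original $G^\leftrightarrow$. You package this differently.

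The paper splits \cref{rule:R-2} into elementary graph edits and treats \cref{rule:R-3} as analogous. It drops the $a$-control of the second gate, swaps via Case~1, and re-attaches that control as an edge $e_1$ out of the pre-update version $a_0$. It then inserts the compensating node $b_1'$ and adds its control edges group by group, checking each edit against the preceding intermediate graph.

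You instead build one qubit-preserving map from the new graph to the old one, sending edges to paths, and pull a new aw-cycle back to a closed walk. Such a walk does contain an aw-cycle: an edge from an ancilla node to a working node lies on one of the simple cycles into which the walk decomposes.

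Comparing directly with the original graph is the more robust choice. For the compensating gate's controls from $\olq\setminus\olp$, the paper says they are ``already connected'' to $b_1'$. The witnessing path $q\ctlarrow a_1\ctlarrow t_1$, however, exists only in the original graph, not in the intermediate graph at that step. That path is exactly your simulation. Your treatment also shows that \cref{rule:R-3} is where the bidirected working target edges are genuinely needed ($p\ctlarrow t_1\to t_0\ctlarrow a_1$), which ``analogous'' hides.

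There is one omission to repair in your \cref{rule:R-2} paragraph. You account for the inserted gate and the moved $\tMCX{m}[\olq,a]$, but not for the moved $\tMCX{n}[\olp,t]$, which now reads $a$ at its old version $a_0$. The resulting edge from $a_0$ into the first new version $t_1'$ of $t$ did not exist before; it is the paper's $e_1$, the crux of its argument. So ``each edge from $\olp$ already existed'' does not cover it. It is simulated by $a_0\to a_1\ctlarrow t_1$.

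With the map fixed as follows, every new edge goes to an old control/target path and the argument is complete:
\begin{itemize}
\item In \cref{rule:R-2}, send both new versions of $t$ to $t_1$.
\item In \cref{rule:R-3}, send both new versions of $a$ to $a_1$.
\item Send every other node to itself.
\end{itemize}
This includes the edge from $t_1'$ into the final version of $a$ in \cref{rule:R-3}, simulated by $t_1\to t_0\ctlarrow a_1$, which your remark about traversing back along $t$ already covers.
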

\begin{proof}[Proof]
    We consider each rewriting rule separately.

    \paragraph{Case~1.}
    $\tMCX{m}[\olq, a]; \tMCX{n}[\olp, t]
    \equiv
    \tMCX{n}[\olp, t]; \tMCX{m}[\olq, a]$,
    where $t \notin \olq$ and $a \notin \olp$.

    Swapping two gates that share only common control qubits does not change the circuit graph. This has already been shown in Unqomp~\cite{unqomp}.

    \paragraph{Case~2.}
    $\tMCX{m}[\olq, a]; \tMCX{n}[\olp, t]
    \equiv
    \tMCX{n}[\olp, t];
    \mathtt{C^hNOT}[(\olq \cup \olp)\setminus\{a\}, t];
    \tMCX{m}[\olq, a]$,
    where $a \in \olp$ and $t \notin \olq$.

    We instantiate the left-hand side of the rule as shown below, where
    $a$ is an ancilla qubit and $t$ is a working qubit.
    The ancilla $a$ is used as a control in the second gate.

    We partition the control qubits as follows:
    $\overline{q_1}$ are controls appearing only in the first gate,
    $\overline{q_2}$ are common controls,
    and $\overline{q_3}$ are controls appearing only in the second gate.

    \[
      \centering
    \begin{minipage}[t]{0.50\textwidth}
    \vspace{0pt}
    \centering
    \begin{quantikz}[row sep = 0.2cm]
      \lstick{$\overline{q_1} = \olq \setminus \olp$} & \qwbundle{}  & \ctrl{2} & \qw       &   \\
      \lstick{$\overline{q_2} = \olq \cap \olp$} & \qwbundle{}  & \ctrl{1} & \ctrl{3}  &   \\
      \lstick{$a$}            & \qw          & \targ{}  & \ctrl{2}  &   \\
      \lstick{$\overline{q_3} = \olp \setminus \olq$} & \qwbundle{}  & \qw      & \ctrl{1}  &   \\
      \lstick{$b$}            & \qw          & \qw      & \targ{}   &   
    \end{quantikz}
    \end{minipage}\hfill
    \begin{minipage}[t]{0.42\textwidth}
    \vspace{0pt}
    \centering
    \includegraphics[height=2.5cm]{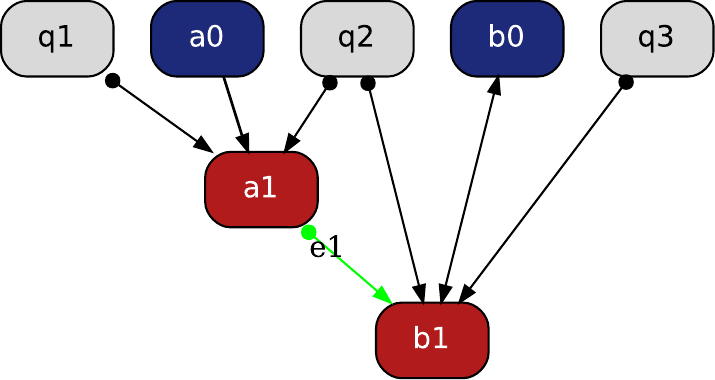}
    \end{minipage}
    \]

    In the circuit graph, we denote by $a_0$ and $a_1$ the states of $a$ before and after the first gate, and similarly for $b_0$ and $b_1$.
    All target edges of working qubits are taken to be bidirected. We call the above circuit graph the \emph{original} one, which will be mentioned in the following proof.

    We first remove the ancilla $a$ from the control set of the second gate,
    and accordingly remove the corresponding control edge in the circuit graph.
    The resulting circuit and circuit graph are shown below.

    \[
    \centering
    \begin{minipage}[t]{0.50\textwidth}
    \vspace{0pt}
    \centering
    \begin{quantikz}[row sep = 0.2cm]
      \lstick{$\overline{q_1} = \olq \setminus \olp$} & \qwbundle{}  & \ctrl{2} & \qw       &   \\
      \lstick{$\overline{q_2} = \olq \cap \olp$} & \qwbundle{}  & \ctrl{1} & \ctrl{3}  &   \\
      \lstick{$a$}            & \qw          & \targ{}  & \qw  &   \\
      \lstick{$\overline{q_3} = \olp \setminus \olq$} & \qwbundle{}  & \qw      & \ctrl{1}  &   \\
      \lstick{$b$}            & \qw          & \qw      & \targ{}   &   
    \end{quantikz}
    \end{minipage}\hfill
    \begin{minipage}[t]{0.50\textwidth}
    \vspace{0pt}
    \centering
    \includegraphics[height=2.0cm]{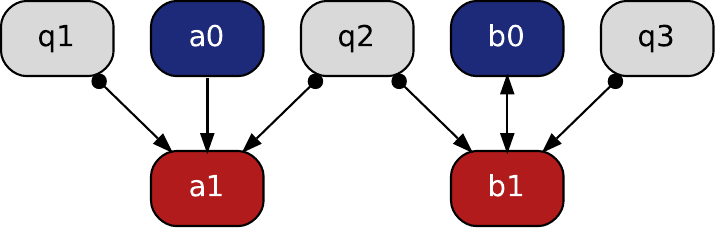}
    \end{minipage}
    \]

    At this point, the two gates share only common control qubits.
Hence, by Case~1, they can be swapped without changing the circuit graph.

    \[
      \centering
    \begin{minipage}[t]{0.50\textwidth}
    \vspace{0pt}
    \centering
    \begin{quantikz}[row sep = 0.2cm]
      \lstick{$\overline{q_1} = \olq \setminus \olp$} & \qwbundle{}   & \qw       & \ctrl{2} &   \\
      \lstick{$\overline{q_2} = \olq \cap \olp$}      & \qwbundle{}   & \ctrl{3}       & \ctrl{1} &   \\
      \lstick{$a$}                                    & \qw           & \qw       & \targ{}  &    \\
      \lstick{$\overline{q_3} = \olp \setminus \olq$} & \qwbundle{}   & \ctrl{1}       & \qw      &   \\
      \lstick{$b$}                                    & \qw           & \targ{}        & \qw      &   
    \end{quantikz}
    \end{minipage}\hfill
    \begin{minipage}[t]{0.48\textwidth}
    \vspace{0pt}
    \centering
    \includegraphics[height=2.0cm]{fig/no_intro_cycle/fig2.pdf}
    \end{minipage}
    \]

    We then reintroduce the control on ancilla $a$ to the first gate,
and add a corresponding control edge $e_1$ to the circuit graph.

    \[
      \centering
    \begin{minipage}[t]{0.50\textwidth}
    \vspace{0pt}
    \centering
    \begin{quantikz}[row sep = 0.2cm]
      \lstick{$\overline{q_1} = \olq \setminus \olp$} & \qwbundle{}   & \qw       & \ctrl{2} &   \\
      \lstick{$\overline{q_2} = \olq \cap \olp$}      & \qwbundle{}   & \ctrl{3}       & \ctrl{1} &   \\
      \lstick{$a$}                                    & \qw           & \ctrl{2}       & \targ{}  &    \\
      \lstick{$\overline{q_3} = \olp \setminus \olq$} & \qwbundle{}   & \ctrl{1}       & \qw      &   \\
      \lstick{$b$}                                    & \qw           & \targ{}        & \qw      &   
    \end{quantikz}
    \end{minipage}\hfill
    \begin{minipage}[t]{0.42\textwidth}
    \vspace{0pt}
    \centering
    \includegraphics[height=3.0cm]{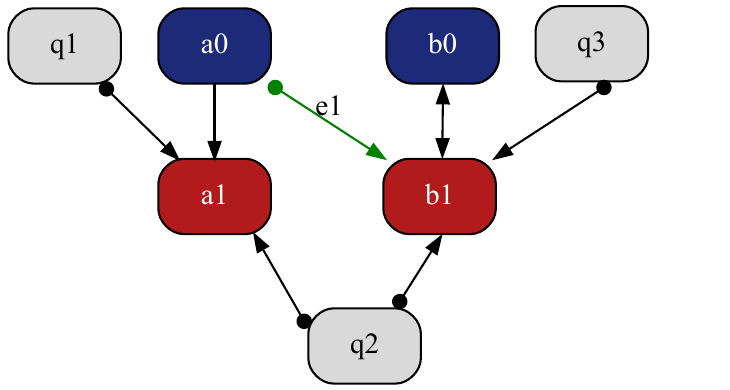}
    \end{minipage}
    \]

    If the addition of $e_1$ introduced an aw-cycle, then $e_1$ must lie on such a cycle. This would imply the existence of a directed path from $b_1$ to $a_0$.
    Since $a_0$ is reachable to $a_1$ by a directed target edge, this yields a directed path from $b_1$ to $a_1$.
    Equivalently, there is a directed path from $b_1$ to $a_1$ in the original circuit graph. Together with the green control edge $e_1$ from $a_1$ to $b_1$ in the original circuit graph, this forms an aw-cycle in the original circuit graph, contradicting the assumption.

    Next, we insert an $\tX$ gate on $b$ in the middle of the two gates.
    In the circuit graph, this corresponds to inserting a new node $b_1'$
    after $b_1$ as shown below (when there is a target node after $b_1$ originally, $b_1'$ is inserted in their middle).

    \[
      \centering
    \begin{minipage}[t]{0.50\textwidth}
    \vspace{0pt}
    \centering
    \begin{quantikz}[row sep = 0.2cm]
      \lstick{$\overline{q_1} = \olq \setminus \olp$} & \qwbundle{} & \qw         &  \qw                   & \ctrl{2} &   \\
      \lstick{$\overline{q_2} = \olq \cap \olp$}      & \qwbundle{} & \ctrl{3}    &  \qw                   & \ctrl{1} &   \\
      \lstick{$a$}                                    & \qw         & \ctrl{2}    &  \qw                   & \targ{}  &    \\
      \lstick{$\overline{q_3} = \olp \setminus \olq$} & \qwbundle{} & \ctrl{1}    &  \qw                   & \qw      &   \\
      \lstick{$b$}                                    & \qw         & \targ{}     &  \targ{}               & \qw      &   
    \end{quantikz}
    \end{minipage}\hfill
    \begin{minipage}[t]{0.42\textwidth}
    \vspace{0pt}
    \centering
    \includegraphics[height=3.0cm]{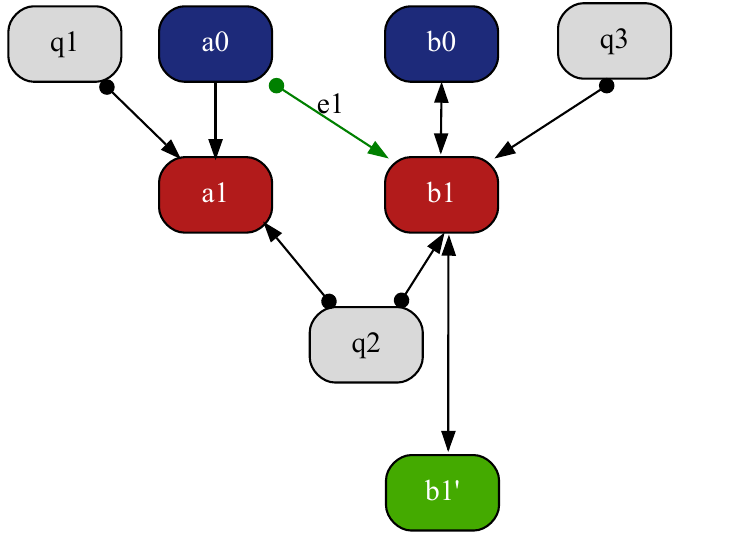}
    \end{minipage}
    \]

    When there is a target node after $b_1$ before insertion, this insertion does not introduce any aw-cycle if there is no aw-cycle before insertion. When there is no target node after $b_1$ before insertion, if this insertion introduce an aw-cycle, then $b_1'$ must be in the aw-cycle. Since $b_1'$ is only connected with $b_1$, this suggests that there exists an aw-cycle involving $b_1$ before insertion, contradicting the assumption.

    We then let $\overline{q_3}$ control the $\tX$ gate and add a control edge $e_2$ from $\overline{q_3}$ to $b_1'$ in the circuit graph.

    \[
      \centering
    \begin{minipage}[t]{0.50\textwidth}
    \vspace{0pt} 
    \centering
    \begin{quantikz}[row sep = 0.2cm]
      \lstick{$\overline{q_1} = \olq \setminus \olp$} & \qwbundle{}  & \qw      & \qw             & \ctrl{2} &   \\
      \lstick{$\overline{q_2} = \olq \cap \olp$}      & \qwbundle{}  & \ctrl{3} & \qw             & \ctrl{1} &   \\
      \lstick{$a$}                                    & \qw          & \ctrl{2} & \qw             & \targ{}  &    \\
      \lstick{$\overline{q_3} = \olp \setminus \olq$} & \qwbundle{}  & \ctrl{1} & \ctrl{1}        & \qw      &   \\
      \lstick{$b$}                                    & \qw          & \targ{}  & \targ{}         & \qw      &   
    \end{quantikz}
    \end{minipage}\hfill
    \begin{minipage}[t]{0.42\textwidth}
    \vspace{0pt}
    \centering
    \includegraphics[height=3.0cm]{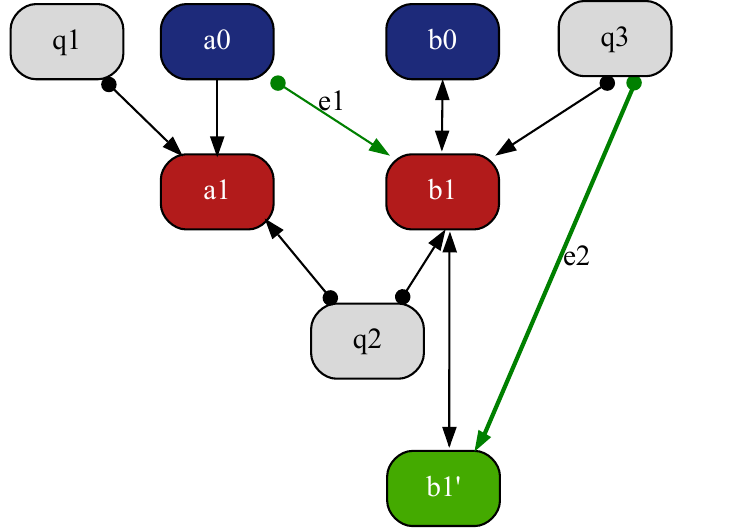}
    \end{minipage}
    \]

    If the addition of $e_2$ introduced an aw-cycle, then $e_2$ must be in that cycle. Replacing $e_2$ by the path from $\overline{q_3}$ to $b_1'$ yields an aw-cycle before addition, again contradicting the assumption.

    Finally, we let $\overline{q_2}$ and $\overline{q_1}$ control the $\tX$ gate, adding edges $e_3$ and $e_4$ to the circuit graph.

    \[
    \centering
    \begin{minipage}[t]{0.50\textwidth}
    \vspace{0pt} 
    \centering
    \begin{quantikz}[row sep = 0.2cm]
      \lstick{$\overline{q_1} = \olq \setminus \olp$} & \qwbundle{}   & \qw        & \ctrl{4}       & \ctrl{2} &   \\
      \lstick{$\overline{q_2} = \olq \cap \olp$}      & \qwbundle{}   & \ctrl{3}   & \ctrl{3}       & \ctrl{1} &   \\
      \lstick{$a$}                                    & \qw           & \ctrl{2}   & \qw            & \targ{}  &    \\
      \lstick{$\overline{q_3} = \olp \setminus \olq$} & \qwbundle{}   & \ctrl{1}   & \ctrl{1}       & \qw      &   \\
      \lstick{$b$}                                    & \qw           & \targ{}    & \targ{}        & \qw      &   
    \end{quantikz}
    \end{minipage}\hfill
    \begin{minipage}[t]{0.42\textwidth}
    \vspace{0pt}
    \centering
    \includegraphics[height=3.0cm]{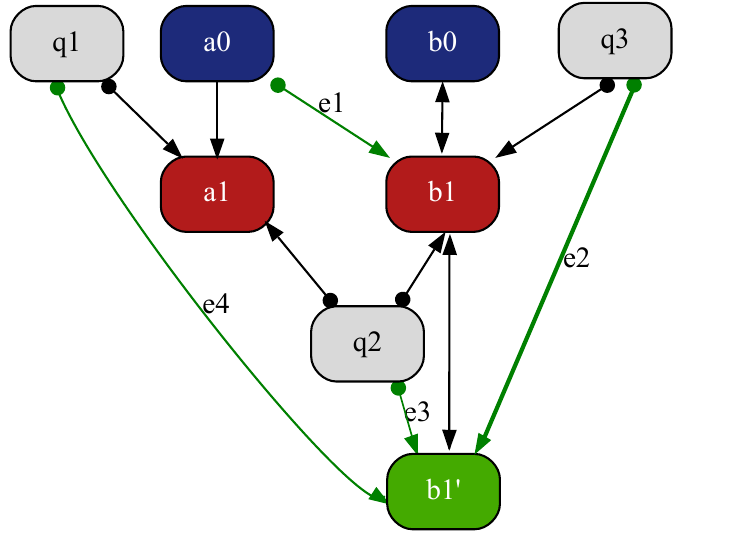}
    \end{minipage}
    \]

    Since both $\overline{q_1}$ and $\overline{q_2}$ are already connected to $b_1'$, these additions cannot introduce any aw-cycle for the same reason.

    All operations---including removing or adding control edges and inserting intermediate nodes---are local. They neither depend on nor affect other parts of the circuit graph not depicted in the figures. Hence, the above reasoning applies to the entire circuit.
    
    \paragraph{Case~3.}
    This case is analogous to Case~2 and can be proved in the same manner.

\end{proof}

\begin{proof}[Proof of \cref{prop:sufficient_rwun_success}]
  At each step, \RwUn selects two consecutive gates $(g_1,g_2)$ such that
  $g_1$ is a $\tMCX{m}$ gate targeting an ancilla qubit and $g_2$ is a $\tMCX{n}$ gate
  targeting a working qubit.
  The only case in which none of the three rewriting rules applies is when the
  subgraph induced by this local configuration forms an aw-cycle in the
  bidirected-target circuit graph.

  By assumption, the initial circuit graph contains no aw-cycle.
  Moreover, by~\cref{lem:rule-donot-introduce-cycle}, applying any of the three rewriting rules
  does not introduce an aw-cycle.
  Therefore, after each rewriting step, the current circuit graph still contains
  no aw-cycle, and hence \RwUn can always apply a rule and never gets stuck.

  Finally, \RwUn terminates and reaches a normal form.
  Hence, \RwUn succeeds on $C$.
\end{proof}

\section{Detailed Evaluation Setting}\label{app:setting-detail}

In this section, we describe  \bAwDep in detail and also provid the parameter of \cref{tab:practical-circuits} in \cref{tab:practical-circuits-parameter} and \cref{tab:practical-circuits-size-depth}.

\paragraph{Base propagation score via alternating levels.}
We first define a base propagation score that quantifies how modifications on dependency sources further fan out in the circuit graph.
Let $G$ be the circuit graph of a circuit, whose nodes are gate instances and whose edges are of two kinds:
target edges $(\to)$ link consecutive gate nodes on the same qubit wire, and control edges $(\ctlarrow)$ represent control dependencies (from a control wire node to a target wire node).

Fix an ancilla wire $a$ and a gate node $n$ on $a$ (i.e., $n$ is a gate instance targeting $a$).
We define the dependency complexity of $n$ by alternating between two kinds of expansions:
(1) following target edges to enumerate downstream modifications on a wire, and
(2) counting incoming control edges to capture how those modifications are further depended upon by other wires.
Formally, we compute a sequence of node sets $\mathcal{S}_0,\mathcal{S}_1,\dots$ as follows.

\begin{itemize}
\item \textbf{Level 0 (sources).} $\mathcal{S}_0$ contains all source nodes of incoming control edges to $n$ in $G$.
That is, for each control edge $m \ctlarrow n$, we add the source node $m$ to $\mathcal{S}_0$.
\item \textbf{Odd levels (downstream modifications).} For odd $\ell=2k+1$, $\mathcal{S}_\ell$ is obtained by following target edges forward from $\mathcal{S}_{\ell-1}$.
Intuitively, $\mathcal{S}_\ell$ collects the gate nodes that modify the wires we depended on.
\item \textbf{Even levels (fan-out by being depended).} For even $\ell=2k\ge 2$, for each node $u\in \mathcal{S}_{\ell-1}$ we count the number of incoming control edges to $u$,
add this count to the complexity score, and collect the source nodes of those incoming control edges into $\mathcal{S}_{\ell}$.
\end{itemize}

The total dependency complexity of $n$ is the sum of the contributions at all even levels.
In other words, odd levels only \emph{enumerate} modifications on depended wires, while even levels quantify how much those modifications are themselves depended on.
\cref{fig:dep_comp_eg} illustrates the procedure on a concrete example.
For gate node \texttt{a1(CX)}, level~0 collects a single node \texttt{q0(init)}.
At level~1, we follow target edges and collect the forward-reachable gate node \texttt{q1(CX)}.
At level~2, we count the incoming control edges to \texttt{q1(CX)} (one in total), and we collect their sources \texttt{r0(init)}.

For large circuits, this alternation can expand substantially.
We therefore bound the maximum depth by a hyperparameter $L$ (\textbf{L=10} used in \cref{sec:evaluation}), and compute the score up to level $L$.

\begin{remark}[Why odd-level modifications alone should not count.]
A key subtlety is that not every downstream modification on a depended qubit truly increases uncomputation difficulty.
If a modification occurs in a region that is independent of the ancilla, then it can be ignored: uncomputation can be performed by inserting the inverse at an earlier point (before the modification), without having to ``track'' it through the rest of the circuit.

Consider the following circuit fragment:
\[
\begin{quantikz}
    \lstick{$r$}            & \qw       & \qw       & \qw      & \qw \\
    \lstick{$q$}            & \ctrl{1}  & \qw       & \targ{}  & \qw \\
    \lstick{$a:=\ket{0}$}   & \targ{}   & \ctrl{1}  & \qw      & \qw \\
    \lstick{$p$}            & \qw       & \targ{}   & \qw      & \qw
\end{quantikz}
\]
Here the ancilla $a$ depends on $q$, and $q$ is later modified by an $\tX[q]$-like operation (the third column on wire $q$).
However, this modification lies in a tail region that never interacts with $a$ again; hence it does not affect the possibility of uncomputing $a$.
Indeed, we can uncompute $a$ by inverting the ancilla-writing part, ignoring the tail on $q$ entirely.
Therefore, such modifications should contribute \emph{zero} to the dependency complexity.

Now compare with a case where the modification on $q$ becomes controlled:
\[
\begin{quantikz}
    \lstick{$r$}            & \qw       & \qw       & \ctrl{1}  &  \qw    \\
    \lstick{$q$}            & \ctrl{1}  & \qw       & \targ{}   &  \qw    \\
    \lstick{$a:=\ket{0}$}   & \targ{}   & \ctrl{1}  & \qw       &  \qw    \\
    \lstick{$p$}            & \qw       & \targ{}   & \qw       &  \qw   
\end{quantikz}
\]
Even here, if the controlled modification on $q$ is still in a tail region that does not interact with $a$, it remains irrelevant to uncomputing $a$.
What makes a downstream modification \emph{effectively relevant} is the existence of a causal ``return path'' back to the ancilla.
For instance, adding a later $\tCNOT[q,a]$ makes the modification on $q$ no longer ignorable:
\[
\begin{quantikz}
    \lstick{$r$}            & \qw       & \qw       & \ctrl{1}  &  \qw       \\
    \lstick{$q$}            & \ctrl{1}  & \qw       & \targ{}   &  \ctrl{1}  \\
    \lstick{$a:=\ket{0}$}   & \targ{}   & \ctrl{1}  & \qw       &  \targ{}   \\
    \lstick{$p$}            & \qw       & \targ{}   & \qw       &  \qw   
\end{quantikz}
\]
Now the modification on $q$ can influence the later interaction with $a$, and it indeed increases uncomputation difficulty.
\end{remark}

\paragraph{Cycle-gated source collection.}
Motivated by the above, \bAwDep should only count modifications that can propagate back to the ancilla.
We enforce this by refining Level~0: instead of collecting \emph{all} sources $m$ of incoming control edges $m\ctlarrow n$, we only collect those sources whose downstream effects can return to the ancilla.

Concretely, for each ancilla gate node $n$, we build a modified graph $G_n$ from the circuit graph $G$ as follows:
(i) reverse every control edge incident to $n$ (both edges into $n$ and edges out of $n$);
(ii) make every target edge on ancilla wire $a$ that occurs \emph{after} $n$ bidirected.
We then say that a dependency source node $m$ is \emph{effective} for $n$ if $G_n$ contains a directed cycle that includes both $m$ and $n$
(equivalently, $m$ and $n$ lie in the same nontrivial strongly connected component in $G_n$).
Intuitively, this condition witnesses that changes originating from $m$ can, after our local rewiring at $n$, propagate forward and eventually return to affect the ancilla again.
Only such sources are included in $\mathcal{S}_0$ for $n$. Take \cref{fig:dep_comp_eg2} as an example.

Finally, the dependency complexity of an ancilla wire is defined as the sum of the (cycle-gated) dependency complexities of all its gate nodes,
and \bAwDep of a circuit is the maximum complexity among all ancilla wires.

\begin{figure}[t]
  \centering
  \begin{minipage}[t]{0.48\textwidth}
    \centering
    \includegraphics[width=\linewidth]{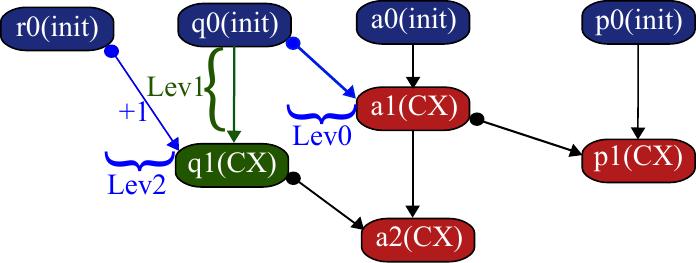}
    \caption{The dependency-complexity of node \texttt{a1(CX)} is $1$, computed by alternating positive \emph{even} (blue) levels from shallow to deep.}
    \label{fig:dep_comp_eg}
  \end{minipage}\hfill
  \begin{minipage}[t]{0.48\textwidth}
    \centering
    \includegraphics[width=\linewidth]{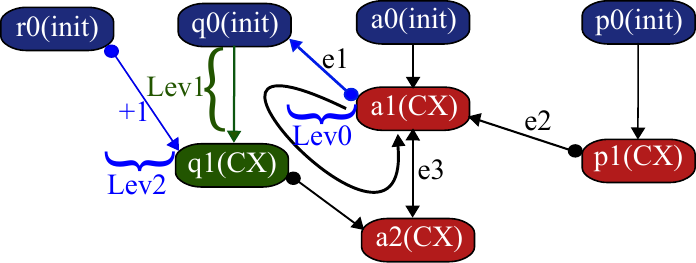}
    \caption{The circuit graph adapted from \cref{fig:dep_comp_eg}, with control edges $e_1, e_2$ obtained by reversing, and target edge $e_3$ being bidirected. The black cycle demonstrates that the modification of \texttt{q1(CX)} propagates back to \texttt{a1(CX)}.}
    \label{fig:dep_comp_eg2}
  \end{minipage}
\end{figure}

\begin{table}[tbp]
    \centering
    \small
    \begin{tabular}{l p{0.75\linewidth}}
        \toprule
        \textbf{Circuit} & \textbf{Scale Parameter ($n$)} \\
        \midrule
        Clean IntegerComparator &
         $n$ controls $+\,1$ target $+\,(n-1)$ ancillas; comparing to the $n$-bit string $(101010\dots)_2$ \\
        Clean MCX &
         $n$ controls $+\,1$ target $+\,(n-2)$ ancillas \\
        MCRY($\pi/2$) &
         $n$ controls $+\,1$ target $+\,1$ ancilla \\
        Clean Incrementer &
         $n$ working qubits $+\,(n-2)$ ancillas \\
        Deutsch--Jozsa &
         $n$ control qubits $+\,1$ target $+\,1$ ancilla; oracle MCX; returns true iff the value is $11\dots 11$ \\
        Grover's algorithm &
         $n$ control qubits $+\,1$ target $+\,(n-2)\cdot \mathrm{Iter}$ ancillas; oracle MCX; returns true iff the value is $11\dots 11$ \\
        Dirty MCX &
         $n$ controls $+\,1$ target $+\,(n-2)$ ancillas \\
        Clean Adder &
         $2n$ working qubits $+\,n$ ancillas \\
        RiseConditionalCleanMCS &
         $n$ controls $+\,1$ target $+\,2$ ancilla \\
        ConditionalCleanMCS &
         $n$ controls $+\,1$ target $+\,2$ ancilla \\
        RiseConditionalDirtyMCS &
         $n$ controls $+\,1$ target $+\,2$ ancilla \\
        ConditionalDirtyMCS &
         $n$ controls $+\,1$ target $+\,2$ ancilla \\   
        Dirty IntegerComparator &
         $n$ controls $+\,1$ target $+\,n$ ancillas; comparing to the $n$-bit string $(101010\dots)_2$ \\
        HighestBitConstAdder &
         $n$ working qubits $+\,(n-1)$ ancillas; adding the constant $2^n-1$ \\
        Gidney's Incrementer &
         $n$ working qubits $+\,n$ ancillas \\
        Dirty Adder &
         $2n$ working qubits $+\,(n-1)$ ancillas \\
        Dirty Incrementer &
         $n$ working qubits $+\,(n-2)$ ancillas \\
        \bottomrule
    \end{tabular}
    \caption{Scale parameter (n) of \cref{tab:practical-circuits}. `Iter' is the iteration count. }
    \label{tab:practical-circuits-parameter}
    \vspace{-0.6cm}
\end{table}

\newcommand{\SD}[2]{\ensuremath{#1/#2}} 

\begin{table}[tbp]
    \centering
    \small
    \setlength{\tabcolsep}{4pt}
    \renewcommand{\arraystretch}{1.15}
    \begin{tabular}{lcccccc}
        \toprule
        \multirow{2}{*}{\textbf{Circuit}} &
        \multirow{2}{*}{\textbf{\bAwDep: S/D }} &
        \multicolumn{4}{c}{\textbf{\RwUnx}: input size/depth \(\mathbf{S}/\mathbf{D}\)} &
        \multirow{2}{*}{\textbf{\Reqompx}: \(\mathbf{S}/\mathbf{D}\)} \\
        \cmidrule(lr){3-6}
        & & \textsc{Sequential} & \textsc{Reverse} & \textsc{Jointly} & \textsc{Lifetime} \\
        \midrule
        Clean IntegerComparator & $\SD{200}{200}$ &
          $\SD{10}{10}$ &
          $\SD{100}{100}$ &
          $\SD{20}{20}$ &
          $\SD{20}{20}$ &
          $\SD{170}{170}$ \\
        Clean MCX & $\SD{200}{200}$ &
          $\SD{29}{29}$ &
          $\SD{459}{459}$ &
          $\SD{209}{209}$ &
          $\SD{1049}{1049}$ &
          $\SD{169}{169}$ \\
        MCRY($\pi/2$) & $\SD{6}{5}$ &
          $\SD{6}{5}$ &
          $\SD{6}{5}$ &
          $\SD{6}{5}$ &
          $\SD{6}{5}$ &
          $\SD{6}{5}$ \\
        Clean Incrementer & $\SD{200}{100}$ &
          $\SD{12}{6}$ &
          $\SD{178}{89}$ &
          $\SD{30}{15}$ &
          $\SD{1498}{749}$ &
          $\SD{378}{189}$ \\
        Deutsch--Jozsa & $\SD{199}{67}$ &
          \ding{55} & \ding{55} & \ding{55} &
          $\SD{3001}{1001}$ &
          $\SD{511}{171}$ \\
        Grover's algorithm & $\SD{200}{68}$ &
          \ding{55} & \ding{55} & \ding{55} &
          $\SD{10951}{2842}$ &
          $\SD{2008}{562}$ \\
        Dirty MCX & $\SD{199}{199}$ &
          $\SD{137}{137}$ &
          $\SD{137}{137}$ &
          $\SD{197}{197}$ &
          $\SD{2097}{2097}$ &
          $\SD{77}{77}$ \\
        Clean Adder & $\SD{199}{100}$ &
          $\SD{11}{6}$ &
          $\SD{139}{70}$ &
          $\SD{17}{9}$ &
          $\SD{11}{6}$ &
          $\SD{319}{160}$ \\
        Dirty IntegerComparator & $\SD{200}{172}$ &
          $\SD{47}{40}$ &
          $\SD{47}{40}$ &
          $\SD{68}{58}$ &
          $\SD{54}{46}$ &
          $\SD{106}{91}$ \\
        HighestBitConstAdder & $\SD{199}{160}$ &
          $\SD{39}{32}$ &
          $\SD{34}{28}$ &
          $\SD{54}{44}$ &
          $\SD{39}{32}$ &
          $\SD{94}{76}$ \\
        RiseConditionalCleanMCS & $\SD{199}{100}$ &
          $\SD{34}{20}$ &
          $\SD{34}{20}$ &
          $\SD{34}{20}$ &
          $\SD{34}{20}$ &
          \ding{55} \\
        ConditionalCleanMCS & $\SD{19}{8}$ &
          $\SD{19}{8}$ &
          $\SD{19}{8}$ &
          $\SD{19}{8}$ &
          $\SD{19}{8}$ &
          \ding{55} \\
        RiseConditionalDirtyMCS & $\SD{200}{103}$ &
          $\SD{67}{39}$ &
          $\SD{67}{39}$ &
          $\SD{67}{39}$ &
          $\SD{67}{39}$ &
          \ding{55} \\
        ConditionalDirtyMCS & $\SD{32}{13}$ &
          $\SD{27}{11}$ &
          $\SD{27}{11}$ &
          $\SD{27}{11}$ &
          $\SD{27}{11}$ &
          \ding{55} \\
        Gidney's Incrementer & $\SD{195}{147}$ &
          \ding{55} & \ding{55} &
          $\SD{307}{231}$ &
          \ding{55} &
          \ding{55} \\
        Dirty Adder & $\SD{173}{137}$ &
          $\SD{71}{56}$ &
          $\SD{47}{37}$ &
          $\SD{47}{37}$ &
          $\SD{173}{137}$ &
          \ding{55} \\
        Dirty Incrementer & $\SD{197}{184}$ &
          $\SD{65}{58}$ &
          $\SD{65}{58}$ &
          $\SD{65}{58}$ &
          $\SD{2402}{2354}$ &
          \ding{55} \\
        \bottomrule
    \end{tabular}
    \caption{Input-circuit size/depth (\textbf{S/D}) at the largest solvable scale for each circuit family reported in~\cref{tab:practical-circuits}.
    The \bAwDep column indicates the scale at which \bAwDep is computed; we choose the largest~$n$ such that the circuit size is at most~200.
    For MCRY$(\pi/2)$, ConditionalCleanMCS, and ConditionalDirtyMCS, the circuit size and depth are constant with respect to~$n$: $n$ only affects the number of controls of $\tMCX{f(n)}$, but not the overall circuit size/depth.
    \ding{55} denotes failure.}
    \label{tab:practical-circuits-size-depth}
    \vspace{-0.6cm}
\end{table}

\end{document}